
\documentclass[10pt]{article}%
\usepackage{amscd}
\usepackage{amsmath}
\usepackage{amsfonts}
\usepackage{amssymb}
\usepackage[margin=0.5in]{geometry}%
\setcounter{MaxMatrixCols}{30}
\providecommand{\U}[1]{\protect\rule{.1in}{.1in}}

\setlength{\textheight}{240mm}
\numberwithin{equation}{section}
\newtheorem{theorem}{Theorem}[section]
\newtheorem{corollary}[theorem]{Corollary}

\newtheorem{lemma}[theorem]{Lemma}
\newtheorem{proposition}[theorem]{Proposition}
\newtheorem{condition}[theorem]{Condition}
\newtheorem{rem}[theorem]{Remark}
\newtheorem{defin}[theorem]{Definition}
\newenvironment{proof}[1][Proof]{\noindent\textbf{#1.} }{\ \rule{0.5em}{0.5em}}

\def\p2{\mathcal A_{\Phi,2\pi}(B)}
\def\0p2{\mathcal A_{\Phi,2\pi}(0)}
\def\sp2{\mathcal A_{\Phi,2\pi,\hbox{\rm SR}}(B)}
\def\beq{\begin{equation}}
\def\ene{\end{equation}}

\def\qed{\ifhmode\unskip\nobreak\fi\ifmmode\ifinner
\else\hskip5pt\fi\fi\hbox{\hskip5pt\vrule width4pt height6pt
depth1.5pt\hskip1pt}}

\def\+out{x^{\rm out}}
\begin{document}

\title{High-energy and smoothness asymptotic expansion of the scattering amplitude
for the Dirac equation and applications\thanks{ Research partially supported
by CONACYT under Project CB-2008-01-99100.}}
\author{Ivan Naumkin\thanks{ Electronic Mail: ivannaumkinkaikin@gmail.com} and Ricardo
Weder\thanks{Fellow, Sistema Nacional de Investigadores. Electronic mail:
weder@unam.mx }\\Departamento de F\'{\i}sica Matem\'{a}tica,\\Instituto de Investigaciones en Matem\'aticas Aplicadas y en Sistemas. \\Universidad Nacional Aut\'onoma de M\'exico.\\Apartado Postal 20-726, M\'exico DF 01000, M\'exico.}
\date{}
\maketitle

\qquad


\vspace{.5cm} \centerline{{\bf Abstract}} \bigskip

\noindent We obtain an explicit formula for the diagonal singularities of the
scattering amplitude for the Dirac equation with short-range electromagnetic
potentials. Using this expansion we uniquely reconstruct an electric potential
and magnetic field from the high-energy limit of the scattering amplitude.
Moreover, supposing that the electric potential and magnetic field are
asymptotic sums of homogeneous terms we give the unique reconstruction
procedure for these asymptotics from the scattering amplitude, known for some
energy $E.$ Furthermore, we prove that the set of the averaged scattering
solutions to the Dirac equation is dense in the set of all solutions to the
Dirac equation that are in $L^{2}\left(  \Omega\right)  ,$ where $\Omega$ is
any connected bounded open set in $\mathbb{R}^{3}$ with smooth boundary, and
we show that if we know an electric potential and a magnetic field for
$\mathbb{R}^{3}\diagdown\Omega,$ then the scattering amplitude, given for some
energy $E$, uniquely determines these electric potential and magnetic field
everywhere in $\mathbb{R}^{3}$. Combining this uniqueness result with the
reconstruction procedure for the asymptotics of the electric potential and the
magnetic field we show that the scattering amplitude, known for some $E$,
uniquely determines a electric potential and a magnetic field, that are
asymptotic sums of homogeneous terms, which converges to the electric
potential and the magnetic field respectively. Moreover, we discuss the
symmetries of the kernel of the scattering matrix, which follow from the
parity, charge-conjugation and time-reversal transformations for the Dirac operator.

\section{Introduction.}

The free Dirac operator $H_{0}$ is given by
\begin{equation}
H_{0}=-i\alpha\cdot\nabla+m\alpha_{4}, \label{basicnotions1}%
\end{equation}
where $m$ is the mass of the particle, $\alpha=(\alpha_{1},\alpha_{2}%
,\alpha_{3})$ and $\alpha_{j},$ $j=1,2,3,4,$ are $4\times4$ Hermitian matrices
that satisfy the relation:%
\begin{equation}
\alpha_{j}\alpha_{k}+\alpha_{k}\alpha_{j}=2\delta_{jk},\text{ }j,k=1,2,3,4,
\label{basicnotions2}%
\end{equation}
where $\delta_{jk}$ denotes the Kronecker symbol. The standard choice of
$\alpha_{j}$ is (\cite{16}):
\[
\alpha_{j}=%
\begin{pmatrix}
0 & \sigma_{j}\\
\sigma_{j} & 0
\end{pmatrix}
,\text{ \ }j\leq3,\text{ \ \ \ \ }\alpha_{4}=%
\begin{pmatrix}
I_{2} & 0\\
0 & -I_{2}%
\end{pmatrix}
=\beta,
\]
where $I_{n}$ is $n\times n$ unit matrix and
\begin{equation}
\sigma_{1}=%
\begin{pmatrix}
0 & 1\\
1 & 0
\end{pmatrix}
,\sigma_{2}=%
\begin{pmatrix}
0 & -i\\
i & 0
\end{pmatrix}
,\sigma_{3}=%
\begin{pmatrix}
1 & 0\\
0 & -1
\end{pmatrix}
\label{basicnotions22}%
\end{equation}
are the Pauli matrices.

The perturbed Dirac operator is defined by
\begin{equation}
H=H_{0}+\mathbf{V}. \label{basicnotions6}%
\end{equation}
Here the potential $\mathbf{V}\left(  x\right)  $ is an Hermitian $4\times4$
matrix valued function defined for $x\in\mathbb{R}^{3}$.

In this paper we consider the scattering theory for the Dirac operator with
electromagnetic potential. This problem has been extensively studied. Results
on existence and completeness of the wave operators have been proved for a
wide class of perturbations, including the long-range case, by the stationary
method in (\cite{60},\cite{61},\cite{9},\cite{33}, and the references
therein), and by the time-dependent method in \cite{49} and \cite{16}. Also,
in \cite{33}, radiation estimates and asymptotics for large time of solutions
of the time-dependent Dirac equation are obtained. The study of the point
spectrum was made in (\cite{12},\cite{54},\cite{50},\cite{11}, and the
references quoted there). The limiting absorption principle was proved in
\cite{51},\cite{14} and \cite{13}. Meromorphic extensions and resonances of
the scattering matrix were treated in \cite{63} and \cite{14}. The high-energy
behavior of the resolvent and the scattering amplitude was studied by the
stationary method in \cite{15}, and by the time-dependent method of \cite{62},
in \cite{52} and \cite{57}. High-energy and low-energy behavior of the
solutions of the Dirac equation, as well as Levinson theorem, were obtained in
\cite{18} for spherically symmetric potentials. Holomorphy of the scattering
matrix at fixed energy with respect to $c^{-2}$ for abstract Dirac operators
was studied in \cite{59}. Finally, a detailed study of the Dirac equation is
made in the monographs of Thaller \cite{16} and, Balinsky and Evans \cite{46}.

The inverse scattering problem consists in establishing a relation between the
scattering data (for example the scattering amplitude) and the potential. The
most complete and difficult problem is to find a one-to-one correspondence
between the scattering amplitude and the potential; that is to say, to give
necessary and sufficient conditions on a scattering amplitude, such that it is
associated to the scattering matrix $S\left(  E\right)  $ of a unique
potential $V(x)$ in a given class. This is the characterization problem (see
\cite{53},\cite{56} and \cite{55} for a discussion of this problem for the
Schr\"{o}dinger equation).

Another problem is uniqueness and reconstruction of the potential from the
scattering data.\ We note that this problem is not well defined in the
electromagnetic case, because the scattering amplitude is invariant under
gauge transformation of the magnetic potential and thus, the problem cannot be
solved in a unique way. Nevertheless, we can ask about uniqueness and
reconstruction of the electric potential $V\left(  x\right)  $ and the
magnetic field $B\left(  x\right)  =\operatorname*{rot}A\left(  x\right)  $,
associated to a magnetic potential $A\left(  x\right)  $.

The uniqueness and reconstruction problem has different settings. One of them
is uniqueness and reconstruction from the high-energy limit of the scattering
amplitude. Using the stationary approach Ito \cite{15} solved this problem for
electromagnetic potentials of the form%
\begin{equation}
\mathbf{V}\left(  x\right)  =%
\begin{pmatrix}
V & \sigma\cdot A\\
\sigma\cdot A & V
\end{pmatrix}
, \label{intro4}%
\end{equation}
decaying faster then $\left\vert x\right\vert ^{-3}$ at infinity. Using an
adaptation to Dirac operators of the Green function of Faddeev (\cite{53}) and
the Green function introduced by Eskin and Ralston (\cite{64}) in the inverse
scattering theory for Schr\"{o}dinger operators, the same problem was solved
by Isozaki \cite{22}, for electric potentials of the form%
\begin{equation}
\mathbf{V}\left(  x\right)  =%
\begin{pmatrix}
V_{+} & 0\\
0 & V_{-}%
\end{pmatrix}
, \label{intro1}%
\end{equation}
decaying faster than $\left\vert x\right\vert ^{-2}$. The time-dependent
method (\cite{62}) was used by Jung \cite{52} to solve the uniqueness and
reconstruction problem for continuous, Hermitian matrix valued potentials
$\mathbf{V}\left(  x\right)  ,$ with integrable decay, which satisfy
\begin{equation}
\lbrack\left(  \mathbf{V}\left(  x_{1}\right)  +\left(  \alpha\cdot
\omega\right)  \mathbf{V}\left(  x_{1}\right)  \left(  \alpha\cdot
\omega\right)  \right)  ,\left(  \mathbf{V}\left(  x_{2}\right)  +\left(
\alpha\cdot\omega\right)  \mathbf{V}\left(  x_{2}\right)  \left(  \alpha
\cdot\omega\right)  \right)  ]=0, \label{intro7}%
\end{equation}
for all $x_{1},x_{2}\in\mathbb{R}^{n}.$ Also using the time-dependent method
Ito \cite{57} gave conditions on a time-dependent potential of the form
(\ref{intro4}), so that it can be reconstructed from the scattering operator.
In particular, if a short-range potential is time-independent, then the
electric potential $V\left(  x\right)  $ and the magnetic field $B\left(
x\right)  $ can be completely reconstructed from the scattering operator.

A similar problem is uniqueness and reconstruction of the electric potential
and the magnetic field from the scattering amplitude at a fixed energy
(ISPFE). Isozaki \cite{22} solved the ISPFE problem for potentials of the form
(\ref{intro1}) and decaying exponentially. A problem related to the ISPFE is
the inverse boundary value problem (IBVP), where the Dirac operator at a fixed
energy is considered in a bounded domain $\Omega$ in $\mathbb{R}^{3},$ and the
uniqueness and reconstruction of the electric potential and the magnetic field
from the Dirichlet-to-Dirichlet (D-D) map (see (\ref{completeness9}), Section
8) is studied. Tsuchida \cite{20} solved the IBVP problem for potentials
$\mathbf{V}$ of the form%
\begin{equation}
\mathbf{V}\left(  x\right)  =%
\begin{pmatrix}
V_{+} & \sigma\cdot A\\
\sigma\cdot A & V_{-}%
\end{pmatrix}
, \label{intro3}%
\end{equation}
and showed that the D-D map determines uniquely small potentials $V_{+},V_{-}$
and $B=\operatorname*{rot}A.$ Nakamura and Tsuchida \cite{19} solved the IBVP
for potentials $\mathbf{V}\in C^{\infty}\left(  \Omega\right)  ,$ of the form
(\ref{intro3}). Moreover, establishing a relation between the D-D map and the
scattering amplitude, they showed the uniqueness of the ISPFE problem for
potentials $\mathbf{V}\in C^{\infty}\left(  \mathbb{R}^{3}\right)  $ of the
form (\ref{intro3}) with compact support. Goto \cite{23} solved the
ISPFE\ problem for exponentially-decaying potentials of the form
(\ref{intro4}), and Li \cite{28} considered both\ the IBVP and the ISPFE
problems for potentials that are general Hermitian matrices with compact support.

We note that in case of the Schr\"{o}dinger equation with general short-range
potentials the scattering matrix at a fixed positive energy does not uniquely
determine the potential. Indeed, Chadan and Sabatier (see pag 207 of
\cite{24}) give examples of non-trivial radial oscillating potentials decaying
as $\left\vert x\right\vert ^{-3/2}$ at infinity, such that the corresponding
scattering amplitude is identically zero at some positive energy. Thus, for
the Schr\"{o}dinger equation with general short-range potentials, it is
necessary to impose extra conditions for the uniqueness of the ISPFE\ problem.
Weder and Yafaev \cite{25} considered the ISPFE problem for the
Schr\"{o}dinger equation with general short-range potentials (see also
\cite{65} for the case of long-range potentials) which are asymptotic sums of
homogeneous terms (see (\ref{basicnotions16}) and (\ref{basicnotions17})).
They used an explicit formula for the singularities of the scattering
amplitude, obtained by Yafaev in \cite{27} and \cite{30}, to show that the
asymptotics of the electric potential and the magnetic field can be recovered
from the diagonal singularities of the scattering amplitude (see also
\cite{68}-\cite{67} and \cite{69}). Moreover, Weder in \cite{58} and \cite{26}
proved that if two short-range electromagnetic potentials $\left(  V_{1}%
,A_{1}\right)  $ and $\left(  V_{2},A_{2}\right)  $ in $\mathbb{R}^{n}$,
$n\geq3$ have the same scattering matrix at a fixed positive energy and if the
electric potentials $V_{j}$ and the magnetic fields $B_{j}=\operatorname*{rot}%
A_{j,}$ $j=1,2,$ coincide outside of some ball they necessarily coincide
everywhere. The combination of this uniqueness result and the result of Weder
and Yafaev \cite{25} implies that the scattering matrix at a fixed positive
energy uniquely determines electric potentials and magnetic fields that are a
finite sum of homogeneous terms at infinity, or more generally, that are
asymptotic sums of homogeneous terms that actually converge, respectively, to
the electric potential and to the magnetic field (\cite{26}). We proceed
similarly in the Dirac case.

We follow the methods of the works of Weder \cite{58}, \cite{26}, Yafaev
\cite{27},\cite{30}, and Weder and Yafaev \cite{25}, \cite{65} for the
Schr\"{o}dinger equation. For the Schr\"{o}dinger equation with short-range
potentials, the approximate solutions are given by $u_{N}\left(  x,\xi\right)
=e^{i\left\langle x,\xi\right\rangle }+e^{i\left\langle x,\xi\right\rangle
}a_{N}\left(  x,\xi\right)  ,$ where $a$ solves the \textquotedblleft
transport\textquotedblright\ equation (see \cite{27}). These solutions are
constructed in such a way that outside some conical neighborhood of the
direction $x=\pm\xi,$ the \textquotedblleft remainder\textquotedblright%
\ satisfies the estimate
\begin{equation}
\left\vert \partial_{x}^{\alpha}\partial_{\xi}^{\beta}r_{N}\left(
x,\xi\right)  \right\vert \leq C_{\alpha,\beta}\left\langle x\right\rangle
^{-p\left(  N\right)  }\left\langle \xi\right\rangle ^{-q\left(  N\right)
},\text{ where }p\left(  N\right)  ,q\left(  N\right)  \rightarrow
+\infty,\text{ as }N\rightarrow+\infty. \label{intro5}%
\end{equation}
In the case of the Schr\"{o}dinger equation with long-range potentials, the
approximate solutions are of the form $u_{N}\left(  x,\xi\right)
=e^{i\left\langle x,\xi\right\rangle +i\phi\left(  x,\xi\right)  }\left(
1+a_{N}\left(  x,\xi\right)  \right)  ,$ where $\phi\left(  x,\xi\right)  $
solves the \textquotedblleft eikonal\textquotedblright\ equation and $a\left(
x,\xi\right)  $ is the solution of the \textquotedblleft
transport\textquotedblright\ equation (\cite{30}). Again, $u_{N}\left(
x,\xi\right)  $ is constructed so that the \textquotedblleft
remainder\textquotedblright\ satisfies the estimate (\ref{intro5}).

We observe that G\^{a}tel and Yafaev \cite{33} constructed approximate
solutions for the Dirac equation with long-range potentials of the form
(\ref{intro4}) satisfying for all $\alpha$ the estimate
\begin{equation}
\left\vert \partial_{x}^{\alpha}V\left(  x\right)  \right\vert +\left\vert
\partial_{x}^{\alpha}A\left(  x\right)  \right\vert \leq C_{\alpha
}\left\langle x\right\rangle ^{-\rho-\left\vert \alpha\right\vert },\text{ for
some }\rho\in(0,1). \label{intro6}%
\end{equation}
These solutions are given by $e^{i\left\langle x,\xi\right\rangle
+i\Phi\left(  x,\xi;E\right)  }p\left(  x,\xi;E\right)  ,$ where the phase
$\Phi\left(  x,\xi;E\right)  $ solves the \textquotedblleft
eikonal\textquotedblright\ equation and satisfies, for all $\alpha$ and
$\beta,$ the estimate $\left\vert \partial_{x}^{\alpha}\partial_{\xi}^{\beta
}\Phi\left(  x,\xi;E\right)  \right\vert \leq C_{\alpha,\beta}\left\langle
x\right\rangle ^{1-\rho-\left\vert \alpha\right\vert }$. The function
$p\left(  x,\xi;E\right)  $ satisfies a transport equation and it results to
be explicit and exact. It satisfies $\left\vert \partial_{x}^{\alpha}%
\partial_{\xi}^{\beta}\left(  p\left(  x,\xi;E\right)  -P_{\omega}\left(
E\right)  \right)  \right\vert \leq C_{\alpha,\beta}\left\langle
x\right\rangle ^{-\rho-\left\vert \alpha\right\vert },$ for all $\alpha$ and
$\beta,$ where $P_{\omega}\left(  E\right)  $ is the amplitude in the plane
wave solution $P_{\omega}\left(  E\right)  e^{i\sqrt{E^{2}-m^{2}}\left\langle
\omega,x\right\rangle }$ of energy $E$ and momentum in direction $\omega$\ and
it is defined by (\ref{basicnotions43}) below. Then the \textquotedblleft
remainder\textquotedblright\ satisfies, outside a conical neighborhood of the
direction $x=\pm\xi,$ the estimate $\left\vert \partial_{x}^{\alpha}%
\partial_{\xi}^{\beta}r_{N}\left(  x,\xi;E\right)  \right\vert \leq
C_{\alpha,\beta}\left\langle x\right\rangle ^{-1-\varepsilon},$ for some
$\varepsilon>0,$ for all $\alpha$ and $\beta.$ Using these solutions, they
construct special identities $J_{\pm},$ in order to prove the existence and
completeness of the wave operators and to obtain the asymptotics for large
time of the solutions of the time-dependent Dirac equation.

Even in the case of the Dirac equation with short-range electric potentials,
it is not enough to consider only the \textquotedblleft
transport\textquotedblright\ equation, in order to obtain the estimate
(\ref{intro5}). Thus, we also need to consider the \textquotedblleft
eikonal\textquotedblright\ equation. It also turns out that we need to
decompose the \textquotedblleft transport\textquotedblright\ equation in two
parts, one for the positive energies and another for the negative energies.

We consider potentials of the form (\ref{intro4}) satisfying the estimate
(\ref{intro6}), with $\rho>1$. We take the approximate solutions of the form
$e^{i\left\langle x,\xi\right\rangle +i\Phi\left(  x,\xi;E\right)  }%
w_{N}\left(  x,\xi;E\right)  ,$ where $\Phi\left(  x,\xi;E\right)  $ solves
the \textquotedblleft eikonal\textquotedblright\ equation and satisfies the
estimate $\left\vert \partial_{x}^{\alpha}\partial_{\xi}^{\beta}\Phi\left(
x,\xi;E\right)  \right\vert \leq C_{\alpha,\beta}\left(  1+\left\vert
x\right\vert \right)  ^{-\left(  \rho-1\right)  -\left\vert \alpha\right\vert
}\left\vert \xi\right\vert ^{-\left\vert \beta\right\vert }.$ The function
$w_{N}$ decomposes in the sum $\left(  w_{1}\right)  _{N}+\left(
w_{2}\right)  _{N},$ where the functions $\left(  w_{1}\right)  _{N}$ $\ $and
$\left(  w_{2}\right)  _{N}$ satisfy two different \textquotedblleft
transport\textquotedblright\ equations and the estimate $\left\vert
\partial_{x}^{\alpha}\partial_{\xi}^{\beta}\left(  w_{j}\right)  _{N}\left(
x,\xi;E\right)  \right\vert \leq C_{\alpha,\beta}\left(  1+\left\vert
x\right\vert \right)  ^{-\left\vert \alpha\right\vert }\left\vert
\xi\right\vert ^{-\left\vert \beta\right\vert },$ for $j=1,2.$ This
construction of the approximate solutions to the Dirac equation assures that
the \textquotedblleft remainder\textquotedblright\ $r_{N}\left(
x,\xi;E\right)  $ satisfies the estimate (\ref{intro5}). We note that we do
not reduce the Dirac equation to the Schr\"{o}dinger type equation. We deal
directly with the Dirac equation to obtain the approximate solutions.

Following the idea of \cite{27} and \cite{30}, and using our approximate
solutions to the Dirac equation, we construct special identities $J_{\pm}%
\ $and use the stationary equation for the scattering matrix to find an
explicit formula for the singularities of the scattering amplitude in a
neighborhood of the diagonal for potentials satisfying the estimate
(\ref{intro6}) with $\rho>1$. Using this formula we express the leading
singularity of the scattering amplitude in terms of the Fourier transform of
the electric and the magnetic potentials. Furthermore, we obtain an error
bound for the difference between the scattering amplitude and the leading
singularity. We also show that this error bound is optimal in the case of
electric and magnetic potentials that are homogeneous outside a sphere. We use
the formula for the singularities of the scattering amplitude also to study
the high-energy limit for the scattering matrix. Moreover, we uniquely
reconstruct the electric potential and the magnetic field from the high-energy
limit of the scattering amplitude. We recall that this result was proved in
\cite{15}, by studying the high-energy behavior of the resolvent, for smooth
potentials which satisfy (\ref{intro6}), for $\left\vert \alpha\right\vert
\leq d,$ $d\geq2,$ with $\rho>3.$ Also we show that for potentials,
homogeneous and non-trivial outside of a sphere, satisfying the estimate
(\ref{intro6}) with $1<\rho\leq2,$ the total scattering cross-section is infinite.

If the scattering amplitude is given for some energy $E$, then, in particular
we know all its singularities for this energy. As in \cite{25}, assuming that
the electric potential $V$ and the magnetic field $B=\operatorname*{rot}A$
satisfy estimates (\ref{eig32}) and (\ref{basicnotions18}), for all $d,$
respectively, and they are asymptotic to a sum of homogeneous terms we
uniquely recover these asymptotics from the scattering amplitude singularities.

On the other hand, inspired by \cite{58},\cite{26}, we consider a special set
of solutions to the Dirac equation for fixed energy, called \textquotedblleft
averaged scattering solutions\textquotedblright\ and show that for potentials
$\mathbf{V,}$ satisfying Condition \ref{basicnotions26} and estimate
(\ref{completeness2}), for some connected bounded open set $\Omega,$ with
smooth boundary, this set of solutions is strongly dense in the set of all
solutions to the Dirac equation that are in $L^{2}\left(  \Omega\right)  $.
This fact allows us to prove that if $\mathbf{V}_{j},$ $j=1,2,$ are of the
form (\ref{intro3}), where $V_{\pm}^{\left(  j\right)  }\in C^{\infty}\left(
\mathbb{R}^{3}\right)  ,$ $j=1,2,$ satisfying (\ref{completeness2}) and
$B_{j}\in C^{\infty}\left(  \mathbb{R}^{3}\right)  ,$ $j=1,2,$ satisfying
(\ref{basicnotions18}), for $d=1,$ are such that $V_{\pm}^{\left(  1\right)
}=V_{\pm}^{\left(  2\right)  },$ and $B_{1}=B_{2},$ for $x$ outside some
connected bounded open set $\Omega^{\prime},$ with smooth boundary, and the
scattering amplitudes for $\mathbf{V}_{1}$ and $\mathbf{V}_{2}$ coincide for
some energy, then $V_{\pm}^{\left(  1\right)  }=V_{\pm}^{\left(  2\right)  }$
and $B_{1}=B_{2}$ for all $x$. 

Finally, if the asymptotic decomposition of the
electric potential $V\in C^{\infty}\left(  \mathbb{R}^{3}\right)  $ and the
magnetic field $B\in C^{\infty}\left(  \mathbb{R}^{3}\right)  $ actually
converge, respectively, to $V$ and $B,$ outside some bounded set, then
combining both results, we show that the scattering matrix given for some
fixed energy uniquely determines the electric potential $V$ and the magnetic
field $B$. \

The paper is organized as follows. In Section 2 we give some known results
about scattering theory for the Dirac operators. In Section 3, we define the
scattering solutions, we calculate their asymptotics for large $x,$ and we
give a relation between the coefficient of the term, decaying as $\frac
{1}{\left\vert x\right\vert },$ for $x\rightarrow\infty,$ in this asymptotic
expansion and the kernel of the scattering matrix. In Section 4, we present
symmetries of the kernel of the scattering matrix, that follow from the
time-reversal, parity and charge conjugation transformations of the Dirac
equation. These symmetries, interesting on their own, can be useful in a study
of the characterization problem. In Section 5, we construct approximate
generalized eigenfunctions for the Dirac equation that we use in Section 6 to
obtain an explicit formula for the singularities of the scattering amplitude.
Applications of the formula for the singularities of the kernel of the
scattering matrix are presented in Section 7. The result on the completeness
of the averaged scattering solutions is given in Section 8. In Section 9 we
present the results on the uniqueness of ISPFE problem.

\textbf{Acknowledgement} \textit{We thank Osanobu Yamada for informing us of
reference \cite{45}.}

\section{Basic notions.}

The free Dirac operator $H_{0}$ (\ref{basicnotions1}) is a self-adjoint
operator on $L^{2}\left(  \mathbb{R}^{3};\mathbb{C}^{4}\right)  $ with domain
$D\left(  H_{0}\right)  =\mathcal{H}^{1}\left(  \mathbb{R}^{3};\mathbb{C}%
^{4}\right)  ,$ the Sobolev space of order $1$ (\cite{16}). When there is no
place of confusion we will write $L^{2}$ and$\ \mathcal{H}^{1}$ to simplify
the notation. We can diagonalize $H_{0}$ by the Fourier transform
$\mathcal{F}$ given by $\left(  \mathcal{F}f\right)  \left(  x\right)
=\left(  2\pi\right)  ^{-3/2}\int_{\mathbb{R}^{3}}e^{-i\left\langle
x,\xi\right\rangle }f\left(  \xi\right)  d\xi.$ Actually, $\mathcal{F}%
H_{0}\mathcal{F}^{\ast}$ acts as multiplication by the matrix $h_{0}\left(
\xi\right)  =\alpha\cdot\xi+m\beta.$ This matrix has two eigenvalues
$E=\pm\sqrt{\xi^{2}+m^{2}}$ and each eigenspace $X^{\pm}\left(  \xi\right)  $
is a two-dimensional subspace of $\mathbb{C}^{4}.$ The orthogonal projections
onto these eigenspaces are given by (see \cite{16}, page 9) $P^{\pm}\left(
\xi\right)  :=\frac{1}{2}\left(  I_{4}\pm\left(  \xi^{2}+m^{2}\right)
^{-1/2}\left(  \alpha\cdot\xi+m\beta\right)  \right)  .$ The spectrum of
$H_{0}$ is purely absolutely continuous and it is given by $\sigma\left(
H_{0}\right)  =\sigma_{ac}\left(  H_{0}\right)  =(-\infty,-m]\cup\lbrack
m,\infty).$

Let us introduce the weighted $L^{2}$ spaces for $s\in\mathbb{R},$ $L_{s}%
^{2}:=\{f:\left\langle x\right\rangle ^{s}f\left(  x\right)  \in L^{2}\},$
$\left\Vert f\right\Vert _{L_{s}^{2}}:=\left\Vert \left\langle x\right\rangle
^{s}f\left(  x\right)  \right\Vert _{L^{2}},$ where $\left\langle
x\right\rangle =\left(  1+\left\vert x\right\vert ^{2}\right)  ^{1/2}.$
Moreover, for any $\alpha,s\in\mathbb{R}$ we define $\mathcal{H}^{\alpha
,s}:=\{f:\left\langle x\right\rangle ^{s}f\left(  x\right)  \in\mathcal{H}%
^{\alpha}\},$ $\left\Vert f\right\Vert _{\mathcal{H}^{\alpha,s}}:=\left\Vert
\left\langle x\right\rangle ^{s}f\left(  x\right)  \right\Vert _{\mathcal{H}%
^{\alpha}},$ where $\left\Vert f\left(  x\right)  \right\Vert _{\mathcal{H}%
^{\alpha}}=\left(  \int_{\mathbb{R}^{3}}\left\langle \xi\right\rangle
^{2\alpha}\left\vert \hat{f}\left(  \xi\right)  \right\vert ^{2}d\xi\right)
^{1/2}.$

Let us now consider the perturbed Dirac operator $H$, given by
(\ref{basicnotions6}). We make the following assumption on the Hermitian
$4\times4$ matrix valued potential $\mathbf{V,}$ defined for $x\in
\mathbb{R}^{3}$:

\begin{condition}
\label{basicnotions26}For some $s_{0}>1/2,$ $\left\langle x\right\rangle
^{2s_{0}}\mathbf{V}$ is a compact operator from $\mathcal{H}^{1}$ to $L^{2}.$
\end{condition}

The assumptions on a potential $\mathbf{V,}$ assuring Condition
\ref{basicnotions26} are well known (see, for example, \cite{81}). In
particular, Condition \ref{basicnotions26} for $\mathbf{V}$ holds, if for some
$\varepsilon>0,$ $\sup_{x\in\mathbb{R}^{3}}\int_{\left\vert x-y\right\vert
\leq1}\left\vert \left\langle y\right\rangle ^{2s_{0}}\mathbf{V}\left(
y\right)  \right\vert ^{3+\varepsilon}dy<\infty$ and $\int_{\left\vert
x-y\right\vert \leq1}\left\vert \left\langle y\right\rangle ^{2s_{0}%
}\mathbf{V}\left(  y\right)  \right\vert ^{3+\varepsilon}dy\rightarrow0,$ as
$\left\vert x\right\vert \rightarrow0$ (see Theorem 9.6, Chapter 6, of
\cite{81}). Of course, the last two relations are true if the following
estimate is valid
\begin{equation}
\left\vert \mathbf{V}\left(  x\right)  \right\vert \leq C\left\langle
x\right\rangle ^{-\rho},\text{ for some }\rho>1. \label{basicnotions46}%
\end{equation}

Since $\mathbf{V}$ is an Hermitian $4\times4$ matrix valued potential
$\mathbf{V,}$ Condition \ref{basicnotions26} implies assumptions (A$_{1}%
$)-(A$_{3}$) of \cite{14}. Thus, under Condition \ref{basicnotions26} $H$ is a
self-adjoint operator on $D\left(  H\right)  =\mathcal{H}^{1}$ and the
essential spectrum $\sigma_{ess}\left(  H\right)  =\sigma\left(  H_{0}\right)
$. The wave operators (WO), defined as the following strong limit%
\begin{equation}
W_{\pm}\left(  H,H_{0}\right)  :=s-\lim_{t\rightarrow\pm\infty}e^{iHt}%
e^{-iH_{0}t}, \label{basicnotions9}%
\end{equation}
exist and are complete, i.e., $\operatorname*{Range}W_{\pm}=\mathcal{H}_{ac}$
(the subspace of absolutely continuity of $H$) and the singular continuous
spectrum of $H$ is absent.

We recall that the study about the absence of eigenvalues embedded in the
absolutely continuous spectrum was made in \cite{12},\cite{54},\cite{50}%
,\cite{11}, and the references quoted there. For example, it follows from
Theorem 6 of \cite{12} that $\{(-\infty,-m)\cup(m,\infty)\}\cap\sigma
_{p}\left(  H\right)  =\varnothing$, if $\mathbf{V}\in L_{\operatorname*{loc}%
}^{5}\left(  \mathbb{R}^{3}\right)  \ $satisfies the following: for any
$\varepsilon>0,$ there exists $R\left(  \varepsilon\right)  >0,$ such that for
any $u\in\mathcal{H}_{\operatorname*{c}}^{1}\left(  \left\vert x\right\vert
>R\left(  \varepsilon\right)  \right)  $ (the set of functions from
$\mathcal{H}^{1}\left(  \left\vert x\right\vert >R\left(  \varepsilon\right)
\right)  $ with compact support in $\left\vert x\right\vert >R\left(
\varepsilon\right)  $)
\begin{equation}
\left\Vert \left\vert x\right\vert \mathbf{V}u\right\Vert _{L^{2}}%
\leq\varepsilon\left\Vert u\right\Vert _{\mathcal{H}^{1}},
\label{basicnotions47}%
\end{equation}
and, moreover,%
\begin{equation}
\left\Vert \left\vert x\right\vert ^{1/2}\mathbf{V}\right\Vert _{L^{\infty
}\left(  \left\vert x\right\vert \geq R\right)  }<\infty,\text{ for some }R>0.
\label{basicnotions48}%
\end{equation}
Note that $\mathbf{V}$ satisfies these relations, if the estimate
(\ref{basicnotions46}) holds.

From the existence of the WO it follows that $HW_{\pm}=W_{\pm}H_{0}$
(intertwining relations). The scattering operator, defined by%
\begin{equation}
\mathbf{S=S}\left(  H,H_{0}\right)  :=W_{+}^{\ast}W_{-},
\label{basicnotions10}%
\end{equation}
commutes with $H_{0}$ and it is unitary.

Let $H_{0S}:=-\triangle$ be the free Schr\"{o}dinger operator in $L^{2}\left(
\mathbb{R}^{3};\mathbb{C}^{4}\right)  .$ The limiting absorption principle
(LAP) is the following statement. For $z$ in the resolvent set of $H_{0S}$ let
$R_{0S}\left(  z\right)  :=\left(  H_{0S}-z\right)  ^{-1}$ be the resolvent.
The limits $R_{0S}\left(  \lambda\pm i0\right)  =\lim_{\varepsilon
\rightarrow+0}R_{0S}\left(  \lambda\pm i\varepsilon\right)  ,$ ($\varepsilon
\rightarrow+0$ means $\varepsilon\rightarrow0$ with $\varepsilon>0$) exist in
the uniform operator topology in $\mathcal{B}\left(  L_{s}^{2},\mathcal{H}%
^{\alpha,-s}\right)  ,$ $s>1/2,$ $\left\vert \alpha\right\vert \leq2$
(\cite{4},\cite{71},\cite{37},\cite{8}) and, moreover, $\left\Vert
R_{0S}\left(  \lambda\pm i0\right)  f\right\Vert _{\mathcal{H}^{\alpha,-s}%
}\leq C_{s,\delta}\lambda^{-\left(  1-\left\vert \alpha\right\vert \right)
/2}\left\Vert f\right\Vert _{L_{s}^{2}},$ for $\lambda\in\lbrack\delta
,\infty),$ $\delta>0$. Here for any pair of Banach spaces $X,Y,$
$\mathcal{B}\left(  X,Y\right)  $ denotes the Banach space of all bounded
operators from $X$ into $Y.$ The functions $R_{0S}^{\pm}\left(  \lambda
\right)  ,$ given by $R_{0S}\left(  \lambda\right)  $ if $\operatorname{Im}%
\lambda\neq0,$ and $R_{0S}\left(  \lambda\pm i0\right)  ,$ if $\lambda
\in(0,\infty),$ are defined for $\lambda\in\mathbb{C}^{\pm}\cup\left(
0,\infty\right)  $ ($\mathbb{C}^{\pm}$ denotes, respectively, the upper,
lower, open complex half-plane) with values in $\mathcal{B}\left(  L_{s}%
^{2},\mathcal{H}^{\alpha,-s}\right)  $ and they are analytic for
$\operatorname{Im}\lambda\neq0$ and locally H\"{o}lder continuous for
$\lambda\in(0,\infty)$ with exponent $\vartheta$ satisfying the estimates
$0<\vartheta\leq s-1/2$ and $\vartheta<1.$

For $z$ in the resolvent set of $H_{0}$ let $R_{0}\left(  z\right)  :=\left(
H_{0}-z\right)  ^{-1}$ be the resolvent. From the LAP for $H_{0S}$ it follows
that the limits (see Lemma 3.1 of \cite{14})%
\begin{equation}
\left.  R_{0}\left(  E\pm i0\right)  =\lim_{\varepsilon\rightarrow+0}%
R_{0}\left(  E\pm i\varepsilon\right)  =\left\{
\begin{array}
[c]{c}%
\left(  H_{0}+E\right)  R_{0S}\left(  \left(  E^{2}-m^{2}\right)  \pm
i0\right)  \text{ for }E>m\\
\left(  H_{0}+E\right)  R_{0S}\left(  \left(  E^{2}-m^{2}\right)  \mp
i0\right)  \text{ for }E<-m,
\end{array}
\right.  \right.  \label{basicnotions27}%
\end{equation}
exist for $E\in(-\infty,-m)\cup(m,\infty)$ in the uniform operator topology in
$\mathcal{B}\left(  L_{s}^{2},\mathcal{H}^{\alpha,-s}\right)  ,$ $s>1/2,$
$\alpha\leq1,$ and $\left\Vert R_{0}\left(  E\pm i0\right)  f\right\Vert
_{\mathcal{H}^{\alpha,-s}}$ $\leq C_{s,\delta}\left\vert E\right\vert
^{\left\vert \alpha\right\vert }\left\Vert f\right\Vert _{L_{s}^{2}},$ for
$\left\vert E\right\vert \in\lbrack m+\delta,\infty),$ $\delta>0$.
Furthermore, the functions, $R_{0}^{\pm}\left(  E\right)  ,$ given by
$R_{0}\left(  E\right)  ,$ if $\operatorname{Im}E\neq0,$ and by $R_{0}\left(
E\pm i0\right)  ,$ if $E\in(-\infty,-m)\cup(m,\infty),$ are defined for
$E\in\mathbb{C}^{\pm}\cup\left(  -\infty,-m\right)  \cup\left(  m,+\infty
\right)  $ with values in $\mathcal{B}\left(  L_{s}^{2},\mathcal{H}%
^{\alpha,-s}\right)  ,$ and moreover, they are analytic for $\operatorname{Im}%
E\neq0$ and locally H\"{o}lder continuous for $E\in(-\infty,-m)\cup(m,\infty)$
with exponent $\vartheta$ such that $0<\vartheta\leq s-1/2$ and $\vartheta<1.$

Next we consider the resolvent $R\left(  z\right)  :=\left(  H-z\right)
^{-1}$ for $z$ in the resolvent set of $H.$ The following limits exist for
$E\in\{(-\infty,-m)\cup(m,\infty)\}\backslash\sigma_{p}\left(  H\right)  $ in
the uniform operator topology in $\mathcal{B}\left(  L_{s}^{2},\mathcal{H}%
^{\alpha,-s}\right)  ,$ $s\in\left(  1/2,s_{0}\right]  ,$ $\left\vert
\alpha\right\vert \leq1,$ where $s_{0}$ is defined by Condition
\ref{basicnotions26} (see Theorem 3.9 of \cite{14})%
\begin{equation}
\left.  R\left(  E\pm i0\right)  =\lim_{\varepsilon\rightarrow+0}R\left(  E\pm
i\varepsilon\right)  =R_{0}\left(  E\pm i0\right)  \left(  1+\mathbf{V}%
R_{0}\left(  E\pm i0\right)  \right)  ^{-1}.\right.  \label{basicnotions13}%
\end{equation}
From this relation and the properties of $R_{0}^{\pm}\left(  E\right)  $ it
follows that the functions, $R^{\pm}\left(  E\right)  :=\{R\left(  E\right)  $
if $\operatorname{Im}E\neq0,$ and $R\left(  E\pm i0\right)  ,$ $E\in
\{(-\infty,-m)\cup(m,\infty)\}\backslash\sigma_{p}\left(  H\right)  \},$
defined for $E\in\mathbb{C}^{\pm}\cup\{(-\infty,-m)\cup(m,\infty
)\}\backslash\sigma_{p}\left(  H\right)  ,$ with values in $\mathcal{B}\left(
L_{s}^{2},\mathcal{H}^{\alpha,-s}\right)  $ are analytic for
$\operatorname{Im}E\neq0$ and locally H\"{o}lder continuous for $E\in
\{(-\infty,-m)\cup(m,\infty)\}\backslash\sigma_{p}\left(  H\right)  $ with
exponent $\vartheta$ such that $0<\vartheta\leq s-1/2,$ $s<\min\{s_{0},3/2\}.$

The Foldy-Wouthuysen (F-W) transform \cite{48}, that diagonalizes the free
Dirac operator, is defined as follows: Let $\hat{G}\left(  \xi\right)  $ be
the unitary $4\times4$ matrix defined by $\hat{G}\left(  \xi\right)
=\exp\{\beta\left(  \alpha\cdot\xi\right)  \theta\left(  \left\vert
\xi\right\vert \right)  \},$ where $\theta\left(  t\right)  =\frac{1}%
{2t}\arctan\frac{t}{m}$ for $t>0.$ Note that
\begin{equation}
\tilde{h}_{0}\left(  \xi\right)  :=\hat{G}\left(  \xi\right)  h_{0}\left(
\xi\right)  \hat{G}\left(  \xi\right)  ^{-1}=\left(  \xi^{2}+m^{2}\right)
^{1/2}\beta. \label{basicnotions30}%
\end{equation}

The F-W transform is the unitary operator $G$ on $L^{2}$ given by
$G=\mathcal{F}^{-1}\hat{G}\left(  \xi\right)  \mathcal{F}$. $G$ transforms
$H_{0}$ into $\tilde{H}_{0}:=GH_{0}G^{-1}=\left(  H_{0S}+m^{2}\right)
^{1/2}\beta.$ We use the F-W transform to define the trace operator
$T_{0}\left(  E\right)  $ for the free Dirac operator (\cite{14}). Let
$P_{+}=\left(
\begin{array}
[c]{cc}%
I_{2} & 0\\
0 & 0
\end{array}
\right)  $ and $P_{-}=\left(
\begin{array}
[c]{cc}%
0 & 0\\
0 & I_{2}%
\end{array}
\right)  $. We define $T_{0}^{\pm}\left(  E\right)  \in\mathcal{B}\left(
L_{s}^{2};L^{2}\left(  \mathbb{S}^{2};\mathbb{C}^{4}\right)  \right)  $ by
$\left(  T_{0}^{\pm}\left(  E\right)  f\right)  \left(  \omega\right)
=\left(  2\pi\right)  ^{-\frac{3}{2}}\upsilon\left(  E\right)
{\displaystyle\int_{\mathbb{R}^{3}}}
e^{-i\nu\left(  E\right)  \left\langle \omega,x\right\rangle }$ $\ \times
P_{\pm}Gf\left(  x\right)  dx,$ where $\upsilon\left(  E\right)  =\left(
E^{2}\left(  E^{2}-m^{2}\right)  \right)  ^{\frac{1}{4}}$ and $\nu\left(
E\right)  =\sqrt{E^{2}-m^{2}}.$ The trace operator $T_{0}\left(  E\right)  $
for the free Dirac operator is defined by $T_{0}\left(  E\right)  =T_{0}%
^{+}\left(  E\right)  ,$ for $E>m$, and $T_{0}\left(  E\right)  =T_{0}%
^{-}\left(  E\right)  $, for $E<-m$. The operator valued function
$T_{0}\left(  E\right)  :(-\infty,-m)\cup(m,\infty)\rightarrow\mathcal{B}%
\left(  L_{s}^{2};L^{2}\left(  \mathbb{S}^{2};\mathbb{C}^{4}\right)  \right)
$ is locally H\"{o}lder continuous with exponent $\vartheta$ satisfying
$0<\vartheta\leq s-1/2$ and $\vartheta<1$ (\cite{70}, \cite{80}, \cite{8}).
Moreover, the operator $\left(  \mathcal{\tilde{F}}_{0}f\right)  \left(
E,\omega\right)  :=\left(  T_{0}\left(  E\right)  f\right)  \left(
\omega\right)  $ extends to a unitary operator from $L^{2}$ onto
$\mathcal{\hat{H}}^{\prime}:=L^{2}\left(  \left(  -\infty,-m\right)
;L^{2}\left(  \mathbb{S}^{2};P_{-}\mathbb{C}^{4}\right)  \right)  \oplus
L^{2}\left(  \left(  m,+\infty\right)  ;L^{2}\left(  \mathbb{S}^{2}%
;P_{+}\mathbb{C}^{4}\right)  \right)  =\left(  \int_{\left(  -\infty
,-m\right)  }^{\oplus}L^{2}\left(  \mathbb{S}^{2};P_{-}\mathbb{C}^{4}\right)
dE\right)  $ \ $\oplus\left(  \int_{\left(  m,+\infty\right)  }^{\oplus}%
L^{2}\left(  \mathbb{S}^{2};P_{+}\mathbb{C}^{4}\right)  dE\right)  ,$ that
gives a spectral representation of $H_{0},$ i.e., $\mathcal{\tilde{F}}%
_{0}H_{0}\mathcal{\tilde{F}}_{0}^{\ast}=E,$ the operator of multiplication by
$E$ in $\mathcal{\hat{H}}^{\prime}$.

The perturbed trace operators are defined by, $T_{\pm}\left(  E\right)
:=T_{0}\left(  E\right)  \left(  I-\mathbf{V}R\left(  E\pm i0\right)  \right)
,$ for $E\in\{(-\infty,-m)\cup(m,\infty)\}\backslash\sigma_{p}\left(
H\right)  .$ They are bounded from $L_{s}^{2}$ into $L^{2}\left(
\mathbb{S}^{2};\mathbb{C}^{4}\right)  ,$ for $s\in\left(  1/2,s_{0}\right]  $.
Furthermore, the operator valued functions $E\rightarrow T_{\pm}\left(
E\right)  $ from $\{(-\infty,-m)\cup(m,\infty)\}\backslash\sigma_{p}\left(
H\right)  $ into $\mathcal{B}\left(  L_{s}^{2};L^{2}\left(  \mathbb{S}%
^{2};\mathbb{C}^{4}\right)  \right)  $ are locally H\"{o}lder continuous for
$E\in\{(-\infty,-m)\cup(m,\infty)\}\backslash\sigma_{p}\left(  H\right)  $
with exponent $\vartheta$ satisfying the estimate $0<\vartheta\leq s-1/2,$
$s<\min\{s_{0},3/2\}$. The operators, $\left(  \mathcal{\tilde{F}}_{\pm
}f\right)  \left(  E,\omega\right)  :=\left(  T_{\pm}\left(  E\right)
f\right)  \left(  \omega\right)  $ extend to unitary operators from
$\mathcal{H}_{ac}$ onto $\mathcal{\hat{H}}^{\prime}$ and they give a spectral
representations for the restriction of $H$ to $\mathcal{H}_{ac},$
$\mathcal{F}_{\pm}H\mathcal{F}_{\pm}^{\ast}=E,$ the operator of multiplication
by $E$ in $\mathcal{\hat{H}}^{\prime}$.

Since the scattering operator $\mathbf{S}$ commutes with $H_{0},$ the operator
$\mathcal{\tilde{F}}_{0}\mathbf{S}\mathcal{\tilde{F}}_{0}^{\ast}$ acts as a
multiplication by the operator valued function $\tilde{S}\left(  E\right)
:L^{2}\left(  \mathbb{S}^{2};P_{\pm}\mathbb{C}^{4}\right)  \rightarrow
L^{2}\left(  \mathbb{S}^{2};P_{\pm}\mathbb{C}^{4}\right)  ,$ $\pm E>m,$ called
the scattering matrix. The scattering matrix satisfies the equality (see
Theorem 4.2 of \cite{14} and also \cite{42},\cite{37},\cite{8})%
\begin{equation}
\tilde{S}\left(  E\right)  T_{-}\left(  E\right)  =T_{+}\left(  E\right)  ,
\label{basicnotions33}%
\end{equation}
and, moreover, it has the following stationary representation for
$E\in\{(-\infty,-m)\cup(m,\infty)\}\backslash\sigma_{p}\left(  H\right)  ,$%
\begin{equation}
\tilde{S}\left(  E\right)  =I-2\pi iT_{0}\left(  E\right)  \left(
\mathbf{V}-\mathbf{V}R\left(  E+i0\right)  \mathbf{V}\right)  T_{0}\left(
E\right)  ^{\ast}. \label{basicnotions31}%
\end{equation}

We consider now another spectral representation of $H_{0}$ that we find more
convenient for our purposes. Let us define (\cite{15})%

\begin{equation}
\left(  \Gamma_{0}\left(  E\right)  f\right)  \left(  \omega\right)  :=\left(
2\pi\right)  ^{-\frac{3}{2}}\upsilon\left(  E\right)  P_{\omega}\left(
E\right)  \int_{\mathbb{R}^{3}}e^{-i\nu\left(  E\right)  \left\langle
\omega,x\right\rangle }f\left(  x\right)  dx, \label{basicnotions11}%
\end{equation}
with%
\begin{equation}
P_{\omega}\left(  E\right)  :=\left\{
\begin{array}
[c]{c}%
P^{+}\left(  \nu\left(  E\right)  \omega\right)  ,\text{ \ }E>m,\\
P^{-}\left(  \nu\left(  E\right)  \omega\right)  ,\text{ \ }E<-m,
\end{array}
\right.  \label{basicnotions43}%
\end{equation}
that is bounded from $L_{s}^{2},$ $s>1/2,$ into $L^{2}\left(  \mathbb{S}%
^{2};\mathbb{C}^{4}\right)  .$ The adjoint operator $\Gamma_{0}^{\ast}\left(
E\right)  :L^{2}\left(  \mathbb{S}^{2};\mathbb{C}^{4}\right)  \rightarrow
L_{-s}^{2},$ $s>1/2,$ is given by
\begin{equation}
\left(  \Gamma_{0}^{\ast}\left(  E\right)  f\right)  \left(  \omega\right)
:=\left(  2\pi\right)  ^{-\frac{3}{2}}\upsilon\left(  E\right)  \int%
_{\mathbb{S}^{2}}e^{i\nu\left(  E\right)  \left\langle x,\omega\right\rangle
}P_{\omega}\left(  E\right)  f\left(  \omega\right)  d\omega.
\label{basicnotions39}%
\end{equation}
Using (\ref{basicnotions30}) we have $\hat{G}\left(  \nu\left(  E\right)
\omega\right)  P_{\omega}\left(  E\right)  =\frac{1}{2}\left(  I_{4}\pm
\beta\right)  \hat{G}\left(  \nu\left(  E\right)  \omega\right)  =P_{\pm}%
\hat{G}\left(  \nu\left(  E\right)  \omega\right)  ,$ for $\pm E>m.$ This
relation means that $\hat{G}\left(  \nu\left(  E\right)  \omega\right)  $
takes $X^{\pm}\left(  \nu\left(  E\right)  \omega\right)  $ onto $P_{\pm
}\mathbb{C}^{4},$ for $\pm E>m,$ and moreover, it implies that
\begin{equation}
\Gamma_{0}\left(  E\right)  =\hat{G}\left(  \nu\left(  E\right)
\omega\right)  ^{-1}T_{0}\left(  E\right)  . \label{basicnotions28}%
\end{equation}
As $\hat{G}\left(  \nu\left(  E\right)  \omega\right)  ^{-1}$ is
differentiable on $E$, then from the last relation it follows that the
operator valued function $\Gamma_{0}\left(  E\right)  $ is locally H\"{o}lder
continuous on $(-\infty,-m)\cup(m,\infty)$ with the same exponent as
$T_{0}\left(  E\right)  .$

Let us define the unitary operator $\mathcal{U}$ from $\mathcal{\hat{H}%
}^{\prime}$ onto $\mathcal{\hat{H}}:=\int_{\left(  -\infty,-m\right)
\cup\left(  m,+\infty\right)  }^{\oplus}\mathcal{H}\left(  E\right)  dE,$
where%
\begin{equation}
\mathcal{H}\left(  E\right)  :=\int_{\mathbb{S}^{2}}^{\oplus}X^{\pm}\left(
\nu\left(  E\right)  \omega\right)  d\omega,\text{ \ }\pm E>m,
\label{basicnotions34}%
\end{equation}
by $\left(  \mathcal{U}f\right)  \left(  E,\omega\right)  :=\hat{G}\left(
\nu\left(  E\right)  \omega\right)  ^{-1}f\left(  E,\omega\right)  .$ Then the
operator $\left(  \mathcal{F}_{0}f\right)  \left(  E,\omega\right)  :=\left(
\Gamma_{0}\left(  E\right)  f\right)  \left(  \omega\right)  )=\left(
\mathcal{U\tilde{F}}_{0}f\right)  \left(  E,\omega\right)  $ extends to
unitary operator from $L^{2}$ onto $\mathcal{\hat{H}}$, that gives a spectral
representation of $H_{0}$%
\begin{equation}
\mathcal{F}_{0}H_{0}\mathcal{F}_{0}^{\ast}=E, \label{basicnotions12}%
\end{equation}
the operator of multiplication by $E$ in $\mathcal{\hat{H}}$.\ Note that in
$\mathcal{\hat{H}}^{\prime}$ the fibers $L^{2}\left(  \mathbb{S}^{2};P_{\pm
}\mathbb{C}^{4}\right)  $ can be written as $L^{2}\left(  \mathbb{S}%
^{2};P_{\pm}\mathbb{C}^{4}\right)  =\int_{\mathbb{S}^{2}}^{\oplus}P_{\pm
}\mathbb{C}^{4}d\omega,$ \ for $\pm E>m,$ where $P_{\pm}\mathbb{C}^{4}$ is
independent of $\omega.$ However, in $\mathcal{\hat{H}}$ the fibers are given
by (\ref{basicnotions34}), where $X^{\pm}\left(  \nu\left(  E\right)
\omega\right)  $ depend on $\omega.$

Let us define
\begin{equation}
\Gamma_{\pm}\left(  E\right)  :=\Gamma_{0}\left(  E\right)  \left(
I-\mathbf{V}R\left(  E\pm i0\right)  \right)  , \label{basicnotions25}%
\end{equation}
for $E\in\{(-\infty,-m)\cup(m,\infty)\}\backslash\sigma_{p}\left(  H\right)
.$ From relation (\ref{basicnotions28}) it follows that%
\begin{equation}
\Gamma_{\pm}\left(  E\right)  =\hat{G}\left(  \nu\left(  E\right)
\omega\right)  ^{-1}T_{\pm}\left(  E\right)  . \label{basicnotions32}%
\end{equation}
Then, the operator valued functions $\Gamma_{\pm}\left(  E\right)  $ are
bounded from $L_{s}^{2}$ into $L^{2}\left(  \mathbb{S}^{2};\mathbb{C}%
^{4}\right)  ,$ for $s\in\left(  1/2,s_{0}\right]  ,$ and they are locally
H\"{o}lder continuous on $(-\infty,-m)\cup(m,\infty)$ with the same exponent
as $T_{\pm}\left(  E\right)  .$ The operators,
\begin{equation}
\left(  \mathcal{F}_{\pm}f\right)  \left(  E,\omega\right)  :=\left(
\Gamma_{\pm}\left(  E\right)  f\right)  \left(  \omega\right)  =\left(
\mathcal{U\tilde{F}}_{\pm}f\right)  \left(  E,\omega\right)
\label{basicnotions40}%
\end{equation}
extend to unitary operators from $\mathcal{H}_{ac}$ onto $\mathcal{\hat{H}}$
and they give a spectral representations for the restriction of $H$ to
$\mathcal{H}_{ac},$ $\mathcal{F}_{\pm}H\mathcal{F}_{\pm}^{\ast}=E,$ the
operator of multiplication by $E$ in $\mathcal{\hat{H}}$.

In the spectral representation (\ref{basicnotions12}) the scattering matrix
acts as a multiplication by the operator valued function $S\left(  E\right)
:\mathcal{H}\left(  E\right)  \rightarrow\mathcal{H}\left(  E\right)  .$ Note
that relation (\ref{basicnotions28}) implies%
\begin{equation}
S\left(  E\right)  =\hat{G}\left(  \nu\left(  E\right)  \omega\right)
^{-1}\tilde{S}\left(  E\right)  \hat{G}\left(  \nu\left(  E\right)
\omega\right)  . \label{basicnotions35}%
\end{equation}
Thus, the scattering matrices $S\left(  E\right)  $ and $\tilde{S}\left(
E\right)  $ are unitary equivalent. Moreover, from relation
(\ref{basicnotions35}), (\ref{basicnotions28}) and representation
(\ref{basicnotions31}) we obtain the following stationary formula for
$S\left(  E\right)  ,$
\begin{equation}
S\left(  E\right)  =I-2\pi i\Gamma_{0}\left(  E\right)  \left(  \mathbf{V}%
-\mathbf{V}R\left(  E+i0\right)  \mathbf{V}\right)  \Gamma_{0}\left(
E\right)  ^{\ast}, \label{basicnotions14}%
\end{equation}
for $E\in\{(-\infty,-m)\cup(m,\infty)\}\backslash\sigma_{p}\left(  H\right)
.$ Here $I$ is the identity operator on $\mathcal{H}\left(  E\right)  .$

By the Schwartz Theorem (see, for example, Theorem 5.2.1 of \cite{77}) for
every continuous and linear operator $T$ from $C^{\infty}\left(
\mathbb{S}^{2};\mathbb{C}^{4}\right)  $ to $D^{\prime}\left(  \mathbb{S}%
^{2};\mathbb{C}^{4}\right)  $ (the set of distributions in $\mathbb{S}^{2}$)
there is one, and only one distribution $t\left(  \omega,\theta\right)  $ on
$\mathbb{S}^{2}\times\mathbb{S}^{2}$ such that for all $f\in C^{\infty}\left(
\mathbb{S}^{2};\mathbb{C}^{4}\right)  ,$ $\left(  Tf\right)  \left(
\omega\right)  =\int_{\mathbb{S}^{2}}t\left(  \omega,\theta\right)  f\left(
\theta\right)  d\theta.$ The \textquotedblleft integral\textquotedblright\ in
the R.H.S. of the last equation represents the duality parenthesis between the
test functions and distributions on the variable $\theta.$ The distribution
$t\left(  \omega,\theta\right)  $ is named the kernel of the operator $T.$

We say that the operator $T$ is an integral operator, if its kernel $t\left(
\omega,\theta\right)  $ is actually a continuous function for $\omega
\neq\theta$ which satisfies the following estimate $\left\vert t\left(
\omega,\theta\right)  \right\vert \leq\frac{C}{\left\vert \omega
-\theta\right\vert ^{2-\varepsilon}},$ for some $\varepsilon>0.$ Note that in
this case $t\left(  \omega,\theta\right)  \in L^{1}\left(  \mathbb{S}%
^{2}\times\mathbb{S}^{2}\right)  $ and moreover, using the Young's inequality
we have $\left\Vert Tf\right\Vert _{L^{2}\left(  \mathbb{S}^{2}\right)  }%
^{2}\leq\int_{\mathbb{S}^{2}}\left\vert \int_{\mathbb{S}^{2}}\frac
{C}{\left\vert \omega-\theta\right\vert ^{2-\varepsilon}}f\left(
\theta\right)  d\theta\right\vert ^{2}d\omega\leq C\left\Vert f\right\Vert
_{L^{2}\left(  \mathbb{S}^{2}\right)  }^{2}.$ Thus, if $T$ is an integral
operator, then it is bounded in $L^{2}\left(  \mathbb{S}^{2};\mathbb{C}%
^{4}\right)  .$ Below we show (Theorem \ref{representation25}) that if a
magnetic potential $A$ and an electric potential $V$ satisfy estimates
(\ref{eig31}) and (\ref{eig32}) respectively, then $S\left(  E\right)  -I$ is
an integral operator. We call \textquotedblleft scattering
amplitude\textquotedblright\ to the integral kernel $s^{\operatorname{int}%
}\left(  \omega,\theta;E\right)  $ of $S\left(  E\right)  -I.$

From the unitary of $S\left(  E\right)  $ it follows $\left(  S\left(
E\right)  -I\right)  ^{\ast}\left(  S\left(  E\right)  -I\right)  =-\left(
S\left(  E\right)  -I\right)  -\left(  S\left(  E\right)  -I\right)  ^{\ast}.$
Then,%
\begin{equation}%
{\displaystyle\int_{\mathbb{S}^{2}}}
s^{\operatorname{int}}\left(  \eta,\omega;E\right)  ^{\ast}%
s^{\operatorname{int}}\left(  \eta,\theta;E\right)  d\eta
=-s^{\operatorname{int}}\left(  \omega,\theta;E\right)  -s^{\operatorname{int}%
}\left(  \theta,\omega;E\right)  ^{\ast}. \label{representation103}%
\end{equation}
This equality is known in the physics literature as Optical Theorem (see for
example \cite{24}).

We had already mentioned in the introduction that the scattering operator
$\mathbf{S}$ and the scattering matrix $S\left(  E\right)  $ are invariant
under the gauge transformation $A\rightarrow A+\nabla\psi,$ for $\psi\in
C^{\infty}\left(  \mathbb{R}^{3}\right)  $ such that $\partial^{\alpha}%
\psi=O\left(  \left\vert x\right\vert ^{-\rho-\left\vert \alpha\right\vert
}\right)  $ for $0\leq\left\vert \alpha\right\vert \leq1$ and some $\rho>0$ as
$\left\vert x\right\vert \rightarrow\infty$. Indeed, note that $\tilde
{H}:=e^{-i\psi}He^{i\psi}$ satisfies $\tilde{H}=H+\left(  \alpha\cdot
\nabla\psi\right)  .$ Then we get $W_{\pm}\left(  \tilde{H},H_{0}\right)
=s-\lim_{t\rightarrow\pm\infty}e^{i\tilde{H}t}e^{-itH_{0}}=s-\lim
_{t\rightarrow\pm\infty}e^{-i\psi}e^{iHt}e^{i\psi}e^{-itH_{0}}.$ Under the
assumptions on $\psi$, the operator of multiplication by the function
$e^{i\psi}-1$ is a compact operator from $\mathcal{H}^{1}$ to $L^{2}.$ Since
$e^{-iH_{0}t}$ converges weakly to $0,$ as $t\rightarrow\pm\infty,$ then
$s-\lim_{t\rightarrow\pm\infty}\left(  e^{-i\psi}e^{iHt}e^{i\psi}e^{-itH_{0}%
}\right)  =s-\lim_{t\rightarrow\pm\infty}\left(  e^{-i\psi}e^{iHt}\left(
e^{i\psi}-1\right)  e^{-itH_{0}}\right)  +s-\lim_{t\rightarrow\pm\infty
}\left(  e^{-i\psi}e^{iHt}e^{-itH_{0}}\right)  =s-\lim_{t\rightarrow\pm\infty
}\left(  e^{-i\psi}e^{iHt}e^{-itH_{0}}\right)  ,$ and then $W_{\pm}\left(
\tilde{H},H_{0}\right)  =e^{i\psi}W_{\pm}\left(  H,H_{0}\right)  .$ Thus, we
conclude that $\mathbf{S}\left(  \tilde{H},H_{0}\right)  =\mathbf{S}\left(
H,H_{0}\right)  .$

It results convenient for us to associate the scattering matrix $S\left(
E\right)  $ directly with the magnetic field $B\left(  x\right)
=\operatorname*{rot}A\left(  x\right)  .$ However, as $S\left(  E\right)  $
was defined in terms of the magnetic potential $A\left(  x\right)  ,$ we
recall the procedure given in \cite{25} and \cite{31} for the construction of
a short-range magnetic potential from an arbitrary magnetic field satisfying
the condition for some $d\geq1$
\begin{equation}
\left\vert \partial^{\alpha}B\left(  x\right)  \right\vert \leq C_{\alpha
}\left(  1+\left\vert x\right\vert \right)  ^{-r-\left\vert \alpha\right\vert
},\text{ }r>2,\text{ }0\leq\left\vert \alpha\right\vert \leq d.
\label{basicnotions18}%
\end{equation}
We note that the magnetic potential can be reconstructed from the magnetic
field $B\left(  x\right)  $ such that $\operatorname*{div}B(x)=0$ only up to
arbitrary gauge transformations. It is not convenient to work in the standard
transversal gauge $\left\langle x,A_{tr}\left(  x\right)  \right\rangle =0$
since even for magnetic fields of compact support the potential $A_{tr}\left(
x\right)  $ decays only as $\left\vert x\right\vert ^{-1}$ at infinity.

Let $B(x)=(B_{1}(x),B_{2}(x),B_{3}(x))$ be a magnetic field satisfying
estimates (\ref{basicnotions18}) such that $\operatorname*{div}B(x)=0$. Let us
define the following matrix
\[
F=\left(
\begin{array}
[c]{ccc}%
0 & B_{3} & -B_{2}\\
-B_{3} & 0 & B_{1}\\
B_{2} & -B_{1} & 0
\end{array}
\right)
\]
and introduce the auxiliary potentials
\begin{equation}
A_{i}^{\left(  reg\right)  }\left(  x\right)  =%
{\displaystyle\int_{1}^{\infty}}
s\sum_{j=1}^{3}F^{\left(  ij\right)  }\left(  sx\right)  x_{j}ds,\text{
\ }A_{i}^{\left(  \infty\right)  }\left(  x\right)  =-%
{\displaystyle\int_{0}^{\infty}}
s\sum_{j=1}^{3}F^{\left(  ij\right)  }\left(  sx\right)  x_{j}ds,
\label{basicnotions19}%
\end{equation}
where $F^{\left(  ij\right)  }$ is the $\left(  i,j\right)  $-th element of
$F$. Note that $A^{\left(  \infty\right)  }\left(  x\right)  $ is a
homogeneous function of degree $-1,$ $A^{\left(  reg\right)  }\left(
x\right)  =O\left(  \left\vert x\right\vert ^{-\rho}\right)  $ with $\rho=r-1$
as $\left\vert x\right\vert \rightarrow\infty,$ $\operatorname*{rot}A^{\left(
\infty\right)  }\left(  x\right)  =0$ for $x\neq0$ and $A_{tr}=A^{\left(
reg\right)  }\left(  x\right)  +A^{\left(  \infty\right)  }\left(  x\right)
.$ We define the function $U\left(  x\right)  $ for $x\neq0$ as a curvilinear
integral
\begin{equation}
U\left(  x\right)  =\int_{\Gamma_{x_{0},x}}\left\langle A^{\left(
\infty\right)  }\left(  y\right)  ,dy\right\rangle \label{basicnotions20}%
\end{equation}
taken between some fixed point $x_{0}\neq0$ and a variable point $x.$ If
$0\notin\Gamma_{x_{0},x},$ it follows from the Stokes theorem that the
function $U\left(  x\right)  $ does not depend on the choice of a contour
$\Gamma_{x_{0},x}$ and $\operatorname*{grad}U\left(  x\right)  =A_{\infty
}\left(  x\right)  .$ Now we define the magnetic potential as
\begin{equation}
A\left(  x\right)  =A_{tr}\left(  x\right)  -\operatorname*{grad}\left(
\eta\left(  x\right)  U\left(  x\right)  \right)  =A_{reg}\left(  x\right)
+\left(  1-\eta\left(  x\right)  \right)  A_{\infty}\left(  x\right)
-U\left(  x\right)  \operatorname*{grad}\eta\left(  x\right)  ,
\label{basicnotions21}%
\end{equation}
where $\eta\left(  x\right)  \in C^{\infty}\left(  \mathbb{R}^{3}\right)  ,$
$\eta\left(  x\right)  =0$ in a neighborhood of zero and $\eta\left(
x\right)  =1$ for $\left\vert x\right\vert \geq1.$ Note that
$\operatorname*{rot}A\left(  x\right)  =B\left(  x\right)  ,$ $A\in C^{\infty
}$ if $B\in C^{\infty}$ and $A\left(  x\right)  =A_{reg}\left(  x\right)  $
for $\left\vert x\right\vert \geq1.$ Moreover, it follows from the assumption
(\ref{basicnotions18}) that $A\left(  x\right)  $ satisfies the estimates
$\left\vert \partial^{\alpha}A\left(  x\right)  \right\vert \leq C_{\alpha
}\left(  1+\left\vert x\right\vert \right)  ^{-\rho-\left\vert \alpha
\right\vert },$ $\rho=r-1,$ for all $0\leq\left\vert \alpha\right\vert \leq
d.$

For a given magnetic field $B\left(  x\right)  $ we associate the magnetic
potential $A\left(  x\right)  $ by formulae (\ref{basicnotions19}%
)--(\ref{basicnotions21}) and then construct the scattering matrix $S\left(
E\right)  $ in terms of the Dirac operator (\ref{basicnotions6}). As we showed
above, if another magnetic potential $\tilde{A}\left(  x\right)  $ satisfies
$\operatorname*{rot}\tilde{A}\left(  x\right)  $ $=B(x)$, then the scattering
matrices corresponding to potentials $A$ and $\tilde{A}$ coincide. This allows
us to speak about the scattering matrix $S(E)$ corresponding to the magnetic
field $B$.

Now we introduce some notation. Let $\mathit{\dot{S}}^{-\rho}=\mathit{\dot{S}%
}^{-\rho}\left(  \mathbb{R}^{3}\right)  $ be the set of $C^{\infty}\left(
\mathbb{R}^{3}\backslash\{0\}\right)  $ functions $f\left(  x\right)  $ such
that $\partial^{\alpha}f\left(  x\right)  =O\left(  \left\vert x\right\vert
^{-\rho-\left\vert \alpha\right\vert }\right)  ,$ as $\left\vert x\right\vert
\rightarrow\infty,$ for all $\alpha.$ Define also a subspace of $\mathit{\dot
{S}}^{-\rho}$ by $\mathit{S}^{-\rho}:=\mathit{\dot{S}}^{-\rho}\cap C^{\infty
}\left(  \mathbb{R}^{3}\right)  .$ An example of functions from the class
$\mathit{\dot{S}}^{-\rho}$ are the homogeneous functions $f\in C^{\infty
}\left(  \mathbb{R}^{3}\backslash\{0\}\right)  $ of order $-\rho$, i.e. such
that $f\left(  tx\right)  =t^{-\rho}f\left(  x\right)  $ for all $x\neq0$ and
$t>0.$

Let functions $f_{j}\in\mathit{\dot{S}}^{-\rho_{j}}$ with $\rho_{j}%
\rightarrow\infty.$ The notation
\begin{equation}
f\left(  x\right)  \simeq\sum_{j=1}^{\infty}f_{j}\left(  x\right)
\label{basicnotions16}%
\end{equation}
means that, for any $N,$ the remainder\
\begin{equation}
f\left(  x\right)  -\sum_{j=1}^{N}f_{j}\left(  x\right)  \in\mathit{\dot{S}%
}^{-\rho},\text{ where }\rho=\min_{j\geq N+1}\rho_{j}. \label{basicnotions17}%
\end{equation}
Note that the function $f\in C^{\infty}$ is determined by its expansion
(\ref{basicnotions16}) only up to a term from the Schwartz class
$\mathit{S=S}^{-\infty}\mathit{.}$

Finally, let us recall some facts about pseudodifferential operators. We
define a pseudodifferential operator (PDO) by (\cite{32},\cite{36})
\begin{equation}
\left(  Af\right)  \left(  x\right)  =\left(  2\pi\right)  ^{-d}%
\int_{\mathbb{R}^{d}}\int_{\mathbb{R}^{d}}e^{i\left\langle \xi,x-\xi^{\prime
}\right\rangle }a\left(  x,\xi\right)  f\left(  \xi^{\prime}\right)
d\xi^{\prime}d\xi, \label{basicnotions42}%
\end{equation}
where $f\in\mathcal{S(}\mathbb{R}^{d};\mathbb{C}^{4}\mathcal{)}$ and $a\left(
x,\xi\right)  $ is a $\left(  4\times4\right)  -$matrix. Here $d$ is the
dimension (equals $2$ or $3$). We denote by $\mathcal{S}^{n,m}$ the class of
PDO, which symbols are of $C^{\infty}\left(  \mathbb{R}^{d}\times
\mathbb{R}^{d}\right)  $ class and for all $x,\xi$ and for all multi-indices
$\alpha,\beta$,%
\begin{equation}
\left\vert \partial_{x}^{\alpha}\partial_{\xi}^{\beta}a\left(  x,\xi\right)
\right\vert \leq C_{\alpha,\beta}\left\langle x\right\rangle ^{n-\left\vert
\alpha\right\vert }\left\langle \xi\right\rangle ^{m-\left\vert \beta
\right\vert }. \label{representation197}%
\end{equation}
Note that $\mathcal{S}^{n,m}\subset\Gamma_{1}^{2m_{1}},$ for $m_{1}%
=\max\{n,m\}$, where the clases $\Gamma_{\rho}^{m_{1}}$ are defined in
\cite{36}. Moreover we need a more special class $\mathcal{S}_{\pm}%
^{n,m}\subset$ $\mathcal{S}^{n,m}$ satisfying the additional property
$a\left(  x,\xi\right)  =0$ if $\mp\left\langle \hat{x},\hat{\xi}\right\rangle
\leq\varepsilon,$ $\varepsilon>0,$ and $a\left(  x,\xi\right)  =0$ if
$\left\vert x\right\vert \leq\varepsilon_{1}$ or $\left\vert \xi\right\vert
\leq\varepsilon_{1},$ $\varepsilon_{1}>0.$ (Here $\hat{x}=x/\left\vert
x\right\vert $ and $\hat{\xi}=\xi/\left\vert \xi\right\vert $).

It is convenient for us to consider a more general formula for the action of
the PDO's, determined by their amplitude. We define a PDO $\mathbf{A}$ by
\begin{equation}
\left(  \mathbf{A}f\right)  \left(  x\right)  =\left(  2\pi\right)  ^{-d}%
\int_{\mathbb{R}^{d}}\int_{\mathbb{R}^{d}}e^{i\left\langle \xi,x-\xi^{\prime
}\right\rangle }a\left(  x,\xi,\xi^{\prime}\right)  f\left(  \xi^{\prime
}\right)  d\xi^{\prime}d\xi, \label{basicnotions41}%
\end{equation}
where $a\left(  x,\xi,\xi^{\prime}\right)  $ is called the amplitude of
$\mathbf{A}.$ We say that $a\left(  x,\xi,\xi^{\prime}\right)  $ belongs to
the class $\mathcal{S}^{n}$ if for all indices $\alpha,\beta\,,\gamma$ the
following estimate holds
\begin{equation}
\left\vert \partial_{x}^{\alpha}\partial_{\xi}^{\beta}\partial_{\xi^{\prime}%
}^{\gamma}a\left(  x,\xi,\xi^{\prime}\right)  \right\vert \leq C_{\alpha
,\beta,\gamma}\left\langle \left(  x,\xi,\xi^{\prime}\right)  \right\rangle
^{n-\left\vert \alpha+\beta+\gamma\right\vert },\text{ }\left(  x,\xi
,\xi^{\prime}\right)  \in\mathbb{R}^{3d}. \label{representation198}%
\end{equation}
We note that $\mathcal{S}^{n}$ is contained in $\Pi_{1}^{n}$ ($\Pi_{\rho}^{n}%
$~are defined in \cite{36})). We can make the passage from the amplitude to
the correspondent (left) symbol by the relation
\begin{equation}
a_{\operatorname*{left}}\left(  x,\xi\right)  \simeq\sum_{\alpha}\frac
{1}{\alpha!}\left.  \partial_{\xi}^{\alpha}\left(  -i\partial_{\xi^{\prime}%
}^{\alpha}\right)  a\left(  x,\xi,\xi^{\prime}\right)  \right\vert
_{\xi^{\prime}=x}. \label{basicnotions45}%
\end{equation}

For arbitrary $n,$ the integrals in the R.H.S. of relations
(\ref{basicnotions42}) and (\ref{basicnotions41}) are understood as
oscillating integrals.\ Furthermore, we recall the following results from the
PDO calculus (see \cite{32} or \cite{36})

\begin{proposition}
\label{basicnotions24}If $a\left(  x,\xi\right)  \in\mathcal{S}^{n,m}$ with
$n\leq0$ and $m\leq0,$ then the PDO $A$ can be extended to a bounded operator
in $L^{2}.$ The $L^{2}$- norm of $A\left\langle x\right\rangle ^{-n}$ is
estimated by some constant $C,$ that depends only on the constants
$C_{\alpha,\beta}$, given by (\ref{representation197})$.$ Moreover, if
$a\left(  x,\xi\right)  \in\mathcal{S}^{n,m}$ with $n<0$ and $m<0,$ then $A$
can be extended to a compact operator in $L^{2}.$
\end{proposition}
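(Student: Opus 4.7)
The plan is to reduce both assertions to the standard Calderón--Vaillancourt boundedness theorem from the pseudodifferential calculus developed in \cite{32} and \cite{36}, exactly as the parenthetical reference in the statement suggests. Observe first that for any symbol $a\in\mathcal{S}^{n,m}$ with $n\le 0$, the rescaled symbol $b(x,\xi):=\langle x\rangle^{-n}a(x,\xi)$ satisfies, for every $\alpha,\beta$,
\[
|\partial_x^\alpha\partial_\xi^\beta b(x,\xi)|\le C'_{\alpha,\beta}\,\langle x\rangle^{-|\alpha|}\langle \xi\rangle^{m-|\beta|},
\]
and since $m\le 0$ this $b$ lies in the Hörmander class $S^{0}_{0,0}$ (the extra decay in $\xi$ only helps). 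The Calderón--Vaillancourt theorem then gives $L^2$-boundedness of the PDO associated with $b$, with operator norm controlled by finitely many of the seminorms $C_{\alpha,\beta}$ in (\ref{representation197}); reinterpreting $b$-quantization as $\langle x\rangle^{-n}A$ yields the first claim together with the claimed dependence of the constant.

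For the compactness statement, I would approximate $a$ by compactly supported symbols. Fix $\chi\in C_c^\infty(\mathbb{R}^d)$ with $\chi\equiv 1$ near the origin and set $a_R(x,\xi):=\chi(x/R)\chi(\xi/R)a(x,\xi)$. The associated PDO $A_R$ has Schwartz kernel
\[
K_R(x,y)=(2\pi)^{-d}\int_{\mathbb{R}^d} e^{i\langle\xi,x-y\rangle}a_R(x,\xi)\,d\xi,
\]
which is smooth and compactly supported in $x$ and, because $\xi\mapsto a_R(x,\xi)$ has compact support, Schwartz in $y$. Hence $K_R\in L^2(\mathbb{R}^d\times\mathbb{R}^d)$, so $A_R$ is Hilbert--Schmidt and in particular compact on $L^2$. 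The remainder $a-a_R$ belongs to $\mathcal{S}^{n,m}$ with $R$-uniform seminorms, and is supported in the region where $|x|\ge cR$ or $|\xi|\ge cR$. The strict inequalities $n<0$, $m<0$ therefore force all the seminorms entering the first step to shrink to $0$ as $R\to\infty$. Applying the quantitative bound of the first step to $a-a_R$ gives $\|A-A_R\|\to 0$, and $A$ is then compact as the operator-norm limit of compact operators.

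The only delicate step is the bookkeeping for the cut-off: one must check that the derivatives of $\chi(x/R)\chi(\xi/R)$ produce factors $R^{-|\alpha|}$ that help rather than spoil the symbol estimates for $a-a_R$, and that, combined with the support condition and the strict decay $n<0,\,m<0$, the relevant Calderón--Vaillancourt seminorm of $a-a_R$ indeed tends to zero. Once this routine estimate is in place the argument is entirely formal, and as remarked in the statement the conclusion is already recorded in the general $L^2$-boundedness/compactness theorems of \cite{32} and \cite{36}.
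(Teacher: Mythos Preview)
The paper does not prove this proposition at all: it is stated as a known fact from pseudodifferential calculus with a bare citation to H\"ormander \cite{32} and Shubin \cite{36}. Your sketch is exactly the standard argument one finds behind those citations --- Calder\'on--Vaillancourt for boundedness, then truncation in $(x,\xi)$ to Hilbert--Schmidt operators and a norm-limit argument for compactness --- so in spirit you are reproducing what the references contain rather than offering an alternative route.

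One small slip worth fixing: the symbol $b(x,\xi)=\langle x\rangle^{-n}a(x,\xi)$ is the left symbol of $\langle x\rangle^{-n}A$, not of $A\langle x\rangle^{-n}$, which is the operator appearing in the proposition. The statement as written concerns $A\langle x\rangle^{-n}$; to cover it you should either pass to amplitudes (write $A\langle x\rangle^{-n}$ with amplitude $a(x,\xi)\langle x'\rangle^{-n}$ and note this lies in $\mathcal{S}^{0}$, then invoke Proposition~\ref{representation76} or reduce to a left symbol via (\ref{basicnotions45})), or argue by duality using the adjoint. Either way the conclusion and the dependence of the constant on the $C_{\alpha,\beta}$ go through unchanged, but as written your identification of $b$ with the wrong composite is a gap in the bookkeeping.
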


\begin{proposition}
\label{representation196}Let $A_{j}$ be PDO with symbols $a_{j}\in
\mathcal{S}^{n_{j},m_{j}},$ for $j=1,2.$ Then the symbol $a$ of the product
$A_{1}A_{2}$ belongs to the class $\mathcal{S}^{n_{1}+n_{2},m_{1}+m_{2}}$ and
it admits the following asymptotic expansion $a\left(  x,\xi\right)
=\sum_{\left\vert \alpha\right\vert <N}\frac{\left(  -i\right)  ^{\left\vert
\alpha\right\vert }}{\alpha!}\partial_{\xi}^{\alpha}a_{1}\left(  x,\xi\right)
\partial_{x}^{\alpha}a_{2}\left(  x,\xi\right)  +r^{\left(  N\right)  }\left(
x,\xi\right)  ,$ where $r^{\left(  N\right)  }\in\mathcal{S}^{n_{1}%
+n_{2}-N,m_{1}+m_{2}-N}.$
\end{proposition}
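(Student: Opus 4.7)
The plan is to derive the composition formula by the standard route: first realize $A_1A_2$ as an amplitude-type PDO, then extract the asymptotic expansion by Taylor expansion and an integration-by-parts argument, and finally estimate the remainder using the anisotropic symbol estimates (\ref{representation197}). Since the excerpt already presents the conversion formula (\ref{basicnotions45}) between amplitudes and left symbols, the work reduces to applying it carefully while tracking both the $\langle x\rangle$ and $\langle\xi\rangle$ gains.

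First, substituting the definition of $A_2$ into $(A_1A_2 f)(x)$ and formally exchanging the order of integration, I obtain the oscillatory representation
\begin{equation*}
(A_1A_2 f)(x) = (2\pi)^{-2d}\int\!\!\int\!\!\int\!\!\int e^{i\langle\xi,x-y\rangle+i\langle\eta,y-z\rangle}\,a_1(x,\xi)\,a_2(y,\eta)\,f(z)\,dz\,d\eta\,dy\,d\xi.
\end{equation*}
After the change of variable $\eta\mapsto\eta+\xi$ and setting $\xi'=z$, this exhibits $A_1A_2$ as a PDO with amplitude $a(x,\eta,\xi')= (2\pi)^{-d}\int\!\int e^{i\langle\xi,x-y\rangle} a_1(x,\eta+\xi)a_2(y,\eta)\,dy\,d\xi$, or, after a routine reduction, as a PDO in amplitude form with $a(x,\xi,y)=a_1(x,\xi)a_2(y,\xi)$. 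By Proposition-type results in the PDO calculus of \cite{36} (which extend to the class $\mathcal{S}^{n,m}$ because $\mathcal{S}^{n,m}\subset\Gamma^{2m_1}_1$), this amplitude-form PDO is equivalent to one in standard form via the formula (\ref{basicnotions45}).

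Next, I expand $a_2(y,\xi)$ in Taylor series at $y=x$,
\begin{equation*}
a_2(y,\xi)=\sum_{|\alpha|<N}\frac{(y-x)^{\alpha}}{\alpha!}\partial_{x}^{\alpha}a_2(x,\xi)+R_N(x,y,\xi),
\end{equation*}
with the usual integral remainder. The key identity $(y-x)^{\alpha}e^{i\langle\xi,x-y\rangle}=(-1)^{|\alpha|}(-i\partial_{\xi})^{\alpha}e^{i\langle\xi,x-y\rangle}$ lets me integrate by parts in $\xi$ in each term of the sum; combined with the Fourier-inversion identity $(2\pi)^{-d}\int\!\int e^{i\langle\xi,x-y\rangle}g(\xi)\,dy\,d\xi=g$ (evaluated on the diagonal), each term contributes exactly $\frac{(-i)^{|\alpha|}}{\alpha!}\partial_{\xi}^{\alpha}a_1(x,\xi)\,\partial_{x}^{\alpha}a_2(x,\xi)$, reproducing the asserted main terms. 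The product rule together with the estimates $|\partial_{x}^{\beta}\partial_{\xi}^{\gamma}a_j|\le C\langle x\rangle^{n_j-|\beta|}\langle\xi\rangle^{m_j-|\gamma|}$ immediately places each summand in $\mathcal{S}^{n_1+n_2-|\alpha|,m_1+m_2-|\alpha|}$, which is contained in $\mathcal{S}^{n_1+n_2,m_1+m_2}$.

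The main obstacle, and the step that needs the most care, is showing $r^{(N)}\in\mathcal{S}^{n_1+n_2-N,m_1+m_2-N}$ with the correct simultaneous gain in both $\langle x\rangle$ and $\langle\xi\rangle$. The remainder from the Taylor expansion involves an oscillatory integral of the form $(2\pi)^{-d}\int\!\int e^{i\langle\xi-\eta,x-y\rangle}\,b_N(x,y,\eta,\xi)\,dy\,d\eta$, where $b_N$ carries $N$ factors of $(y-x)$ and one derivative $\partial_x^{N}a_2$ evaluated at a point between $x$ and $y$. The standard device is to introduce two integration-by-parts operators, one using $(1-\Delta_{\eta})^M$ applied to $e^{i\langle\xi-\eta,x-y\rangle}$ to produce $\langle x-y\rangle^{-2M}$ factors (giving the $\langle x\rangle^{-N}$ gain once combined with the explicit $(y-x)^{N}$ factor from the Taylor remainder), and one using $(1-\Delta_y)^{M'}$ to generate $\langle\xi-\eta\rangle^{-2M'}$ factors converting $\langle\eta\rangle$ decay of $a_2$ into $\langle\xi\rangle$ decay via the Peetre inequality. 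Checking that this procedure yields the stated bound on $\partial_x^{\alpha}\partial_{\xi}^{\beta}r^{(N)}$ for every $\alpha,\beta$ is the bookkeeping that occupies the bulk of any complete proof; I would execute it exactly as in the references \cite{32} and \cite{36}, whose arguments transfer verbatim since the extra anisotropy only strengthens the needed estimates.
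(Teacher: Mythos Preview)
The paper does not give its own proof of this proposition: it is stated as a recalled fact from the PDO calculus, with the parenthetical ``(see \cite{32} or \cite{36})'' preceding the block of Propositions \ref{basicnotions24}--\ref{representation76}. Your sketch follows exactly the standard route in those references---amplitude representation, Taylor expansion of $a_2(y,\xi)$ at $y=x$, integration by parts via $(y-x)^\alpha e^{i\langle\xi,x-y\rangle}=(-i\partial_\xi)^\alpha e^{i\langle\xi,x-y\rangle}$, and the two-parameter oscillatory-integral argument for the remainder---so there is nothing to compare: you are supplying the argument the paper merely cites, and the outline is correct.
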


Note that, in particular, Propositions \ref{basicnotions24} and
\ref{representation196} imply the following result

\begin{proposition}
\label{representation131}Let $A_{j}$ be PDO with symbols $a_{j}\in
\mathcal{S}^{0,0},$ $j=1,2,$ and let $A$ be the PDO with symbol $a_{1}\left(
x,\xi\right)  a_{2}\left(  x,\xi\right)  .$ Then, $A_{1}A_{2}-A$ can be
extended to a compact operator.
\end{proposition}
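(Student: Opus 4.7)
The plan is to derive this as an immediate consequence of Propositions \ref{basicnotions24} and \ref{representation196}. First I would apply Proposition \ref{representation196} to the product $A_1 A_2$ with $N = 1$. Since $a_1, a_2 \in \mathcal{S}^{0,0}$, so $n_1 = n_2 = m_1 = m_2 = 0$, the symbol $a(x,\xi)$ of $A_1 A_2$ lies in $\mathcal{S}^{0,0}$ and decomposes as
\begin{equation*}
a(x,\xi) = \sum_{|\alpha|<1} \frac{(-i)^{|\alpha|}}{\alpha!} \partial_\xi^\alpha a_1(x,\xi)\, \partial_x^\alpha a_2(x,\xi) + r^{(1)}(x,\xi) = a_1(x,\xi)\, a_2(x,\xi) + r^{(1)}(x,\xi),
\end{equation*}
where the only surviving term in the finite sum corresponds to $\alpha = 0$ and where $r^{(1)} \in \mathcal{S}^{-1,-1}$.

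Next I would observe that, by linearity of the correspondence between PDO's and their symbols, the operator $A_1 A_2 - A$ is precisely the PDO with symbol $r^{(1)}(x,\xi) \in \mathcal{S}^{-1,-1}$. Since both exponents $n = -1$ and $m = -1$ are strictly negative, Proposition \ref{basicnotions24} applies directly and asserts that such a PDO extends to a compact operator on $L^{2}$. This gives the conclusion.

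Since the statement is essentially a packaging of the $\alpha = 0$ term in the symbolic calculus plus the compactness criterion, I do not anticipate any serious obstacle; the only minor subtlety is noting that for $N = 1$ the explicit sum in Proposition \ref{representation196} reduces exactly to the product of the leading symbols, so that the entire discrepancy between $A_1 A_2$ and $A$ is absorbed into the remainder term $r^{(1)}$ of order $(-1,-1)$. Thus the argument consists of one application of each of the two preceding propositions.
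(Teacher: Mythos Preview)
Your proposal is correct and follows exactly the route indicated in the paper, which simply states that Proposition \ref{representation131} is an immediate consequence of Propositions \ref{basicnotions24} and \ref{representation196}. Your explicit application of the asymptotic expansion with $N=1$ and the subsequent use of the compactness criterion for symbols in $\mathcal{S}^{-1,-1}$ is precisely the intended argument.
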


For a PDO $A$ defined by its amplitude $a\left(  x,\xi,\xi^{\prime}\right)  $
we have

\begin{proposition}
\label{representation76}If $a\left(  x,\xi,\xi^{\prime}\right)  \in
\mathcal{S}^{n}$, then the PDO $A$ can be extended to a bounded operator in
$L^{2}$ for $n=0,$ and it operator norm is bounded by a constant $C$ depending
only on $C_{\alpha,\beta,\gamma}$ given by (\ref{representation198}).
Moreover, $A$ can be extended to a compact operator in $L^{2}$ for $n<0.$
\end{proposition}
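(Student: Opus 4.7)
The plan is to reduce the amplitude-defined operator $\mathbf{A}$ to a PDO given by a left symbol, and then to apply Proposition \ref{basicnotions24}. The key input is the passage formula (\ref{basicnotions45}): one looks for a left symbol $a_{\operatorname{left}}(x,\xi)$ whose associated PDO coincides with $\mathbf{A}$ modulo a smoothing (hence compact) correction. The leading term of the asymptotic expansion is $a(x,\xi,x)$, the restriction of the amplitude to the diagonal $\xi'=x$, and subsequent terms carry additional derivative losses of order $2$ each.

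First, I would prove that, term by term, the expansion (\ref{basicnotions45}) produces functions in the symbol class $\mathcal{S}^{n_\alpha,m_\alpha}$ with $n_\alpha+m_\alpha\le n-2|\alpha|$. Using the Leibniz rule one computes $\partial_x^\beta\partial_\xi^\gamma\bigl[\partial_\xi^\alpha(-i\partial_{\xi'})^\alpha a(x,\xi,\xi')\bigr]_{\xi'=x}$ as a finite sum of terms of the form $\partial_x^\sigma\partial_\xi^{\gamma+\alpha}\partial_{\xi'}^{\alpha+\beta-\sigma}a$ evaluated at $\xi'=x$, each of total derivative order $|\beta|+|\gamma|+2|\alpha|$. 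By (\ref{representation198}) these are bounded by $C\langle(x,\xi,x)\rangle^{n-|\beta|-|\gamma|-2|\alpha|}$, and the elementary estimate $\langle(x,\xi,x)\rangle\le\sqrt{2}(\langle x\rangle+\langle\xi\rangle)$ lets one split this as $C\langle x\rangle^{n_\alpha-|\beta|}\langle\xi\rangle^{m_\alpha-|\gamma|}$ provided the exponents are chosen nonpositive with $n_\alpha+m_\alpha\le n-2|\alpha|$. In particular the $\alpha=0$ piece $a(x,\xi,x)$ lies in $\mathcal{S}^{n,n}$, and by a Borel-summation argument (see \cite{32},\cite{36}) one can construct a genuine symbol $a_{\operatorname{left}}\in\mathcal{S}^{n,n}$ whose formal expansion matches (\ref{basicnotions45}).

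Second, I would justify that $\mathbf{A}$ equals the PDO $A_{\operatorname{left}}$ with symbol $a_{\operatorname{left}}$, up to a remainder $R$ whose amplitude lies in $\bigcap_{k\ge 0}\mathcal{S}^{-k}$. The standard calculus argument is a Taylor expansion of $a(x,\xi,\xi')$ in $\xi'$ around $\xi'=x$ with integral remainder, combined with the identity $(\xi'-x)^\alpha e^{i\langle\xi,x-\xi'\rangle}=(-\partial_\xi)^\alpha e^{i\langle\xi,x-\xi'\rangle}/(-i)^{|\alpha|}$ and integration by parts in $\xi$; the finite-order Taylor terms reproduce the partial sum in (\ref{basicnotions45}), and iterating on the remainder together with Borel summation yields an operator whose amplitude is of arbitrarily negative order. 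Since such a smoothing operator has a left symbol in $\mathcal{S}^{n',m'}$ with $n',m'<0$, Proposition \ref{basicnotions24} shows that $R$ is compact, in particular bounded. The quantitative control of the $L^2$-norm then follows because the seminorms of $a_{\operatorname{left}}$ and of the remainder amplitude are controlled by the $C_{\alpha,\beta,\gamma}$ of (\ref{representation198}).

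Third, the conclusion follows directly: for $n=0$, $a_{\operatorname{left}}\in\mathcal{S}^{0,0}$, so $A_{\operatorname{left}}$ is $L^2$-bounded by Proposition \ref{basicnotions24}, and $\mathbf{A}=A_{\operatorname{left}}+R$ with $R$ compact gives $L^2$-boundedness with the stated norm bound. For $n<0$, $a_{\operatorname{left}}\in\mathcal{S}^{n,n}$ has both orders strictly negative, so $A_{\operatorname{left}}$ is compact by the second part of Proposition \ref{basicnotions24}, and the sum with the compact $R$ is compact. The main obstacle is the bookkeeping in the first two steps, in particular the translation of the joint-variable estimate (\ref{representation198}) into the separated-variable estimates (\ref{representation197}) required by the symbol class $\mathcal{S}^{n,m}$, and the verification that the Taylor-plus-integration-by-parts procedure genuinely produces amplitudes of arbitrarily negative order; both are routine in the PDO calculus of \cite{32},\cite{36}.
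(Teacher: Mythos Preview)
The paper does not give a proof of this proposition; it is stated as a recalled fact from the PDO calculus in \cite{32} and \cite{36}, the point being the inclusion $\mathcal{S}^{n}\subset\Pi_{1}^{n}$ noted just above (\ref{basicnotions45}), after which the boundedness and compactness statements are read off from Shubin's theory for the amplitude classes $\Pi_{\rho}^{m}$. Your route via the left-symbol expansion (\ref{basicnotions45}) followed by Proposition \ref{basicnotions24} is a correct and standard way to argue, and is in fact how the reduction is carried out in those references.

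One small bookkeeping slip: the claim that the $\alpha=0$ term $a(x,\xi,x)$ lies in $\mathcal{S}^{n,n}$ is not right for $n<0$. From $\langle(x,\xi,x)\rangle^{\,n-|\beta|-|\gamma|}$ and $\langle(x,\xi,x)\rangle\ge c\max(\langle x\rangle,\langle\xi\rangle)$ you only get membership in $\mathcal{S}^{n',m'}$ for any splitting $n'+m'=n$ with $n',m'\le 0$, for instance $\mathcal{S}^{n/2,n/2}$; the estimate $\langle(x,\xi,x)\rangle^{n}\le C\langle x\rangle^{n}\langle\xi\rangle^{n}$ fails along $\xi=x$ when $n<0$. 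This does not affect your conclusion, since $\mathcal{S}^{n/2,n/2}$ already has both orders strictly negative and Proposition \ref{basicnotions24} still gives compactness.
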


\section{Scattering solutions.}

In the stationary approach to the scattering theory it is useful to consider
special solutions to the Dirac equation
\begin{equation}
\left(  H_{0}+\mathbf{V}\right)  u=Eu,\text{ for }x\in\mathbb{R}^{3},
\label{re7}%
\end{equation}
called scattering solutions, or generalized eigenfunctions of continuous
spectrum. Suppose that $\mathbf{V}$ satisfies Condition \ref{basicnotions26}
and $\mathbf{V}\in L_{s}^{2},$ for some $s>1/2.$ Then, for all $E\in
\{(-\infty,-m)\cup(m,\infty)\}\backslash\sigma_{p}\left(  H\right)  ,$ we
define the scattering solutions to equation (\ref{re7}) by%
\begin{equation}
u_{\pm}\left(  x,\theta;E\right)  =P_{\theta}\left(  E\right)  e^{i\nu\left(
E\right)  \left\langle x,\theta\right\rangle }-\left(  R\left(  E\pm
i0\right)  \mathbf{V}\left(  \cdot\right)  P_{\theta}\left(  E\right)
e^{i\nu\left(  E\right)  \left\langle \cdot,\theta\right\rangle }\right)
\left(  x\right)  . \label{representation29}%
\end{equation}
We observe that under suitable assumptions on the solutions $u$ to
(\ref{re7}),\ known as "radiation conditions", and on the potential
$\mathbf{V,}$ $u_{\pm}$ is characterized as the unique solution to
(\ref{re7}), satisfying these "radiation conditions". The problem of existence
and unicity of solutions to (\ref{re7}) was treated in \cite{41}, by studying
the formula (\ref{basicnotions27}), for $f\in L_{s}^{2},$ $s>1/2,$ with
radiation conditions $v_{\pm}\in L_{-s}^{2}\ $and$\ (\frac{\partial}{\partial
x_{j}}v_{\pm}\left(  x\right)  \mp iv_{\pm}\left(  E\right)  \frac{x_{j}%
}{\left\vert x\right\vert }u\left(  x\right)  )\in L_{s-1}^{2},$ $1/2<s\leq1,$
for $j=1,2,3.$ The radiation estimates, in the sense that the operators
$\left\langle x\right\rangle ^{-1/2}\left(  \frac{\partial}{\partial x_{j}%
}-\left\vert x\right\vert ^{-2}x_{j}\sum_{k=1}^{3}x_{k}\frac{\partial
}{\partial x_{k}}\right)  $, for $j=1,2,3,$ are $H-$smooth, for the Dirac
equation with long-range potentials, were obtained in \cite{33}.

We want to give an asymptotic formula for $u_{\pm}$, as $\left\vert
x\right\vert \rightarrow\infty,$ where the asymptotic is understood in an
appropriate sense. We denote by $\tilde{o}\left(  \left\vert x\right\vert
^{-1}\right)  $ a function $g\left(  x\right)  $ such that $\lim
_{r\rightarrow\infty}\left(  \frac{1}{r}\int_{\left\vert x\right\vert \leq
r}\left\vert g\left(  x\right)  \right\vert ^{2}dx\right)  =0.$ Of course a
$o\left(  \left\vert x\right\vert ^{-1}\right)  $ function is also a
$\tilde{o}\left(  \left\vert x\right\vert ^{-1}\right)  $ function. We prove
the following

\begin{theorem}
\label{re17}Suppose that $\mathbf{V}$ satisfies Condition \ref{basicnotions26}
and $\mathbf{V}\in L_{s}^{2},$ for some $s>1/2.$ Then, the scattering
solutions admit the asymptotic expansion%
\begin{equation}
u_{\pm}\left(  x,\theta;E\right)  =P_{\theta}\left(  E\right)  e^{i\nu\left(
E\right)  \left\langle x,\theta\right\rangle }+a_{\pm}\left(  \hat{x}%
,\theta;E\right)  \frac{e^{\pm i\left(  \operatorname*{sgn}E\right)
\nu\left(  E\right)  \left\vert x\right\vert }}{\left\vert x\right\vert
}+\tilde{o}\left(  \left\vert x\right\vert ^{-1}\right)  , \label{re16}%
\end{equation}
where the functions $a_{\pm}\left(  \hat{x},\theta;E\right)  :=-\left(
\operatorname*{sgn}E\right)  \left(  \frac{2\pi\left\vert E\right\vert }%
{\nu\left(  E\right)  }\right)  ^{1/2}\left(  \Gamma_{\pm}\left(  E\right)
\mathbf{V}\left(  \cdot\right)  P_{\theta}\left(  E\right)  e^{i\nu\left(
E\right)  \left\langle \cdot,\theta\right\rangle }\right)  \left(  \pm\left(
\operatorname*{sgn}E\right)  \hat{x}\right)  \ $can be recovered from $u_{\pm
}\left(  x,\theta;E\right)  $ by the formula
\begin{equation}
a_{\pm}\left(  \hat{x},\theta;E\right)  =-\left(  \operatorname*{sgn}E\right)
\left(  \frac{2\pi\left\vert E\right\vert }{\nu\left(  E\right)  }\right)
^{1/2}\left(  \Gamma_{0}\left(  E\right)  \mathbf{V}u_{\pm}\right)  \left(
\pm\left(  \operatorname*{sgn}E\right)  \hat{x}\right)  . \label{re18}%
\end{equation}
Moreover $a_{+}\left(  x,\theta;E\right)  $ is related to the scattering
amplitude $s^{\operatorname{int}}\left(  \hat{x},\theta;E\right)  $ by the
formula $a_{+}\left(  \hat{x},\theta;E\right)  =-i\left(  \operatorname*{sgn}%
E\right)  \left(  2\pi\right)  \nu\left(  E\right)  ^{-1}$ $\times
s^{\operatorname{int}}\left(  \left(  \operatorname*{sgn}E\right)  \hat
{x},\theta;E\right)  .$
\end{theorem}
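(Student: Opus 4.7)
The plan is to reduce the statement to the large-$|x|$ asymptotics of the free Dirac resolvent $R_0(E\pm i0)$ applied to an $L_s^2$ source, and then to identify the coefficient of the outgoing/incoming spherical wave with the three equivalent expressions appearing in the theorem.

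First, I would use the resolvent identity $R=R_0-R_0\mathbf V R$ in (\ref{representation29}) to rewrite the scattering solution in Lippmann--Schwinger form
\[
u_\pm(x,\theta;E)=P_\theta(E)e^{i\nu(E)\langle x,\theta\rangle}-\bigl(R_0(E\pm i0)\mathbf V u_\pm\bigr)(x),
\]
where Condition \ref{basicnotions26} together with the mapping properties of $R(E\pm i0)$ recalled in Section 2 ensure $\mathbf V u_\pm\in L_s^2$ for some $s\in(1/2,s_0]$. Using (\ref{basicnotions27}), $R_0(E\pm i0)$ factorises as $(H_0+E)R_{0S}(\nu(E)^2\pm i0)$ for $E>m$ and as $(H_0+E)R_{0S}(\nu(E)^2\mp i0)$ for $E<-m$, so in both cases the scalar Schr\"odinger resolvent in play is the one producing the wave $e^{\pm i(\operatorname{sgn}E)\nu(E)|x|}$. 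The classical $L_s^2$-asymptotic
\[
R_{0S}(\mu^2\pm i0)f(x)=\frac{e^{\pm i\mu|x|}}{4\pi|x|}\int_{\mathbb R^3}e^{\mp i\mu\langle\hat x,y\rangle}f(y)\,dy+\tilde o(|x|^{-1}),\qquad f\in L_s^2,\ s>1/2,
\]
applied with $\mu=\nu(E)$, combined with $\nabla(e^{\pm i\mu|x|}/|x|)=\pm i\mu\hat x\,e^{\pm i\mu|x|}/|x|+O(|x|^{-2})$, shows that $H_0+E$ reproduces the leading wave with an extra matrix factor $h_0(\pm\mu\hat x)+E$, while the $O(|x|^{-2})$ corrections and the pointwise multiplication of the $\tilde o(|x|^{-1})$ remainder by bounded matrices stay in the same averaged class.

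A short case analysis on $\operatorname{sgn}E$, based on the algebraic identities $h_0(\xi)+|E|=2|E|P^+(\xi)$ and $h_0(\xi)-|E|=-2|E|P^-(\xi)$ (with $|E|=\sqrt{\xi^2+m^2}$) together with (\ref{basicnotions43}), collapses both regimes to the uniform identity
\[
h_0\bigl(\pm(\operatorname{sgn}E)\nu(E)\hat x\bigr)+E=2E\,P_{\pm(\operatorname{sgn}E)\hat x}(E).
\]
Inserting this into the previous step, recognising in the remaining scalar integral the definition (\ref{basicnotions11}) of $\Gamma_0(E)$ evaluated at $\pm(\operatorname{sgn}E)\hat x$, and simplifying the global constant using $\upsilon(E)=\sqrt{|E|\nu(E)}$ (so that $\tfrac{2E}{4\pi}\cdot\tfrac{(2\pi)^{3/2}}{\upsilon(E)}=(\operatorname{sgn}E)\sqrt{2\pi|E|/\nu(E)}$) yields
\[
R_0(E\pm i0)f(x)=(\operatorname{sgn}E)\sqrt{\tfrac{2\pi|E|}{\nu(E)}}\,\frac{e^{\pm i(\operatorname{sgn}E)\nu(E)|x|}}{|x|}\bigl(\Gamma_0(E)f\bigr)\bigl(\pm(\operatorname{sgn}E)\hat x\bigr)+\tilde o(|x|^{-1}).
\]
Choosing $f=\mathbf V u_\pm$ in the Lippmann--Schwinger form and carrying the outer minus sign produces (\ref{re16}) together with the $\Gamma_0(E)$ form (\ref{re18}) of $a_\pm$. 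The $\Gamma_\pm(E)$ representation stated in the theorem then follows from $\mathbf V u_\pm=\mathbf V(I-R(E\pm i0)\mathbf V)P_\theta(E)e^{i\nu(E)\langle\cdot,\theta\rangle}$ combined with definition (\ref{basicnotions25}), giving $\Gamma_0(E)\mathbf V u_\pm=\Gamma_\pm(E)\mathbf V P_\theta(E)e^{i\nu(E)\langle\cdot,\theta\rangle}$.

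For the link with the scattering amplitude, I would rewrite the stationary formula (\ref{basicnotions14}) as $S(E)-I=-2\pi i\,\Gamma_+(E)\mathbf V\Gamma_0(E)^*$ by absorbing $(\mathbf V-\mathbf V R(E+i0)\mathbf V)$ into $\Gamma_+(E)\mathbf V$, read off its integral kernel via (\ref{basicnotions39}) to obtain
\[
s^{\operatorname{int}}(\omega,\theta;E)=-\tfrac{i\upsilon(E)}{\sqrt{2\pi}}\bigl(\Gamma_+(E)\mathbf V P_\theta(E)e^{i\nu(E)\langle\cdot,\theta\rangle}\bigr)(\omega),
\]
specialise $\omega=(\operatorname{sgn}E)\hat x$, and simplify using $\upsilon(E)/\sqrt{2\pi|E|/\nu(E)}=\nu(E)/\sqrt{2\pi}$ to produce the identity $a_+(\hat x,\theta;E)=-i(\operatorname{sgn}E)(2\pi)\nu(E)^{-1}s^{\operatorname{int}}((\operatorname{sgn}E)\hat x,\theta;E)$. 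The main obstacle is the $L_s^2$-level justification of the averaged $\tilde o(|x|^{-1})$ remainder for the Schr\"odinger-resolvent asymptotic and the verification that the $O(|x|^{-2})$ terms produced when $H_0+E$ is applied to $g(\hat x)e^{\pm i\mu|x|}/|x|$ are genuinely absorbed by $\tilde o(|x|^{-1})$ after integration on spherical shells; once that averaged remainder bound is secured, the remaining identifications are purely algebraic.
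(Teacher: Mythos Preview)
Your proposal is correct and follows essentially the same route as the paper: the paper packages the free-resolvent asymptotic you derive inline into Lemmas~\ref{re13} and~\ref{re10} (the latter handling the passage from compactly supported to $L_s^2$ sources, which is exactly the ``main obstacle'' you flag), then applies them after the same Lippmann--Schwinger rewriting via $R=R_0-R_0\mathbf V R$, and identifies $a_+$ with $s^{\operatorname{int}}$ through the stationary formula (\ref{basicnotions14}) in the same way you do.
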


Note that the coefficient of the leading term in the asymptotics (\ref{re16})
is explicit. Similar asymptotic was obtained in \cite{41}, but the expression
for the asymptotics (\ref{re16}) is not explicit there.

In order to prove Theorem \ref{re17} we need some results.

\begin{lemma}
\label{re13}For all $\left\vert E\right\vert >m$ and all functions $f\in
L_{2}$ with compact support\bigskip\ we have
\begin{equation}
R_{0}\left(  E\pm i0\right)  f\left(  x\right)  =\left(  \operatorname*{sgn}%
E\right)  \left(  \frac{2\pi\left\vert E\right\vert }{\nu\left(  E\right)
}\right)  ^{1/2}\left(  \Gamma_{0}\left(  E\right)  f\right)  \left(
\pm\left(  \operatorname*{sgn}E\right)  \hat{x}\right)  \frac{e^{\pm i\left(
\operatorname*{sgn}E\right)  \nu\left(  E\right)  \left\vert x\right\vert }%
}{\left\vert x\right\vert }+O\left(  \left\vert x\right\vert ^{-2}\right)  ,
\label{re1}%
\end{equation}
and
\begin{equation}
\left(  \partial_{\left\vert x\right\vert }R_{0}\left(  E\pm i0\right)
\right)  f\left(  x\right)  =\pm i\left(  2\pi\right)  ^{1/2}\upsilon\left(
E\right)  \left(  \Gamma_{0}\left(  E\right)  f\right)  \left(  \pm\left(
\operatorname*{sgn}E\right)  \hat{x}\right)  \frac{e^{\pm i\left(
\operatorname*{sgn}E\right)  \nu\left(  E\right)  \left\vert x\right\vert }%
}{\left\vert x\right\vert }+O\left(  \left\vert x\right\vert ^{-2}\right)  .
\label{re2}%
\end{equation}

\end{lemma}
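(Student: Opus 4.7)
The idea is to reduce \eqref{re1}--\eqref{re2} to an explicit computation with the free Schrödinger resolvent in $\mathbb{R}^{3}$. By \eqref{basicnotions27} one has, for $E>m$, $R_{0}(E\pm i0)=(H_{0}+E)R_{0S}((E^{2}-m^{2})\pm i0)$, and the signs of $i0$ are interchanged for $E<-m$. The integral kernel of $R_{0S}(\lambda\pm i0)$ is $(4\pi|x-y|)^{-1}e^{\pm i\sqrt{\lambda}|x-y|}$, so for $f\in L^{2}$ with compact support, the asymptotics of $R_{0S}((E^{2}-m^{2})\pm i0)f(x)$ as $|x|\to\infty$ can be read off by inserting the standard expansions $|x-y|=|x|-\langle\hat x,y\rangle+O(|x|^{-1})$ and $|x-y|^{-1}=|x|^{-1}+O(|x|^{-2})$. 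With $\sqrt{\lambda}=\nu(E)$, this yields the pointwise asymptotic
\begin{equation*}
(R_{0S}(\lambda\pm i0)f)(x)=\frac{e^{\pm i\nu(E)|x|}}{4\pi|x|}\int_{\mathbb{R}^{3}}e^{\mp i\nu(E)\langle\hat x,y\rangle}f(y)\,dy+O(|x|^{-2}),
\end{equation*}
uniformly on $|x|\geq R$ with $R$ depending on $\mathrm{supp}\,f$, and analogously for $E<-m$ after swapping the two $i0$ prescriptions.

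\textbf{Applying $H_{0}+E$.} Denote the leading term above by $h(x)=\frac{e^{\pm i\nu(E)|x|}}{4\pi|x|}\,\psi(\hat x)$ with $\psi(\omega)=\int e^{\mp i\nu(E)\langle\omega,y\rangle}f(y)\,dy$. Direct differentiation gives $\nabla h(x)=\pm i\nu(E)\hat x\,h(x)+O(|x|^{-2})$, since the two remaining contributions — from $\partial_{|x|}(1/|x|)$ and from the angular derivative of $\psi(\hat x)$, which carries a factor $|x|^{-1}$ — are each $O(|x|^{-2})$. Consequently,
\begin{equation*}
(H_{0}+E)h(x)=\bigl[\pm\nu(E)(\alpha\cdot\hat x)+m\beta+E\bigr]h(x)+O(|x|^{-2}).
\end{equation*}
The algebraic key step is to observe that, using $(\nu(E)^{2}+m^{2})^{1/2}=|E|$ and the definition of $P^{\pm}$,
\begin{equation*}
\pm\nu(E)(\alpha\cdot\hat x)+m\beta+E=2E\,P_{\pm(\operatorname*{sgn}E)\hat x}(E),
\end{equation*}
valid in both regimes $E>m$ (where $\operatorname*{sgn}E=+1$ pairs with $P^{+}$) and $E<-m$ (where $\operatorname*{sgn}E=-1$ pairs with $P^{-}$) via \eqref{basicnotions43}. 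Substituting this into the previous display, invoking the definition \eqref{basicnotions11} of $\Gamma_{0}(E)$, and simplifying the scalar prefactor by means of $\upsilon(E)^{2}=|E|\nu(E)$, one obtains precisely the right-hand side of \eqref{re1}.

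\textbf{Radial derivative and main obstacle.} Formula \eqref{re2} follows by the same reasoning: $\partial_{|x|}$ acting on $e^{\pm i\nu(E)|x|}/|x|$ produces the extra factor $\pm i\nu(E)$ plus an $O(|x|^{-2})$ remainder, and the coefficient $(\operatorname*{sgn}E)\sqrt{2\pi|E|/\nu(E)}\cdot\nu(E)$ collapses to $(2\pi)^{1/2}\upsilon(E)$ using $\sqrt{|E|\nu(E)}=\upsilon(E)$; the sign $\pm i(\operatorname*{sgn}E)\nu(E)$ combines with $\operatorname*{sgn}E$ to the $\pm i$ of \eqref{re2}. The computation is essentially routine once the Schrödinger kernel is made explicit; the two delicate points are (i) the algebraic collapse of the matrix prefactor $\pm\nu(E)(\alpha\cdot\hat x)+m\beta+E$ into the projector $P_{\pm(\operatorname*{sgn}E)\hat x}(E)$, where sign bookkeeping intertwines the $\pm$ of \eqref{re1}--\eqref{re2} with $\operatorname*{sgn}E$ coming from the two regimes of \eqref{basicnotions27}, and (ii) verifying that every subleading contribution — from $|x-y|^{-1}$, from the $O(|x|^{-1})$ correction in the phase $e^{\pm i\sqrt\lambda|x-y|}$, from $\partial_{|x|}(1/|x|)$, and from the tangential gradient of $\psi(\hat x)$ — remains $O(|x|^{-2})$ uniformly for $|x|\geq R$. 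Both points reduce to elementary estimates exploiting the compact support of $f$.
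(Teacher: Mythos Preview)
Your proof is correct and follows the same approach as the paper: reduce to the free Schr\"odinger resolvent via $R_{0}(z)=(H_{0}+z)R_{0S}(z^{2}-m^{2})$, use the explicit Green function $(4\pi|x-y|)^{-1}e^{i\sqrt{z^{2}-m^{2}}|x-y|}$, and expand for large $|x|$ with $y$ confined to the compact support of $f$. The paper derives the same resolvent identity and Green function and then defers the expansion to Lemma~2.6 of \cite{8}; you have simply written out those details, including the key algebraic identification $\pm\nu(E)(\alpha\cdot\hat x)+m\beta+E=2E\,P_{\pm(\operatorname*{sgn}E)\hat x}(E)$, which is exactly what is needed to match the definition \eqref{basicnotions11} of $\Gamma_{0}(E)$.
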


\begin{proof}
From the relation $H_{0}^{2}=-\Delta+m^{2}$ we get, for all $z\in\mathbb{C},$
$\left(  H_{0}-z\right)  \left(  H_{0}+z\right)  =-\Delta-\left(  z^{2}%
-m^{2}\right)  ,$ and hence $R_{0}\left(  z\right)  =\left(  H_{0}+z\right)
R_{0S}\left(  z^{2}-m^{2}\right)  .$ Recall that $R_{0S}\left(  z^{2}%
-m^{2}\right)  $ is an integral operator. Its kernel (Green function) is given
by $\left(  4\pi\left\vert x-y\right\vert \right)  ^{-1}e^{i\sqrt{z^{2}-m^{2}%
}\left\vert x-y\right\vert }.$ Therefore, $R_{0}\left(  z\right)  $ is an
integral operator with kernel $R_{0}\left(  x,y;z\right)  :=\left(
H_{0}+z\right)  (\left(  4\pi\left\vert x-y\right\vert \right)  ^{-1}$ $\times
e^{i\sqrt{z^{2}-m^{2}}\left\vert x-y\right\vert }).$ Proceeding similarly to
the proof of Lemma 2.6 of \cite{8} (page 79), we get (\ref{re1}) and
(\ref{re2}).
\end{proof}

From the LAP for the perturbed Dirac operator and Lemma \ref{re13}, we get

\begin{lemma}
\label{re10}For all $f\in L_{s}^{2},$ $s>1/2,$ asymptotics (\ref{re1}) and
(\ref{re2}) hold, with $O\left(  \left\vert x\right\vert ^{-2}\right)  $
replaced by $\tilde{o}\left(  \left\vert x\right\vert ^{-1}\right)  .$ Suppose
that $\mathbf{V}$ satisfies Condition \ref{basicnotions26}. Then, for all
$f\in L_{s}^{2},$ $s>1/2,$ and $E\in\{(-\infty,-m)\cup(m,\infty)\}\backslash
\sigma_{p}\left(  H\right)  \}$ we have%
\[
R\left(  E\pm i0\right)  f\left(  x\right)  =\left(  \operatorname*{sgn}%
E\right)  \left(  \frac{2\pi\left\vert E\right\vert }{\nu\left(  E\right)
}\right)  ^{1/2}\left(  \Gamma_{\pm}\left(  E\right)  f\right)  \left(
\pm\left(  \operatorname*{sgn}E\right)  \hat{x}\right)  \frac{e^{\pm i\left(
\operatorname*{sgn}E\right)  \nu\left(  E\right)  \left\vert x\right\vert }%
}{\left\vert x\right\vert }+\tilde{o}\left(  \left\vert x\right\vert
^{-1}\right)  ,
\]
and
\[
\left(  \partial_{\left\vert x\right\vert }R\left(  E\pm i0\right)  \right)
f\left(  x\right)  =\pm i\left(  2\pi\right)  ^{1/2}\upsilon\left(  E\right)
\left(  \Gamma_{\pm}\left(  E\right)  f\right)  \left(  \pm\left(
\operatorname*{sgn}E\right)  \hat{x}\right)  \frac{e^{\pm i\left(
\operatorname*{sgn}E\right)  \nu\left(  E\right)  \left\vert x\right\vert }%
}{\left\vert x\right\vert }+\tilde{o}\left(  \left\vert x\right\vert
^{-1}\right)  .
\]

\end{lemma}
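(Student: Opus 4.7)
The plan is to prove Lemma \ref{re10} in two stages: first extend the asymptotics (\ref{re1}) and (\ref{re2}) from compactly supported $L^2$ functions to all $f\in L_s^2$, $s>1/2$, and then bootstrap to the perturbed resolvent via the resolvent identity (\ref{basicnotions13}) and the definition (\ref{basicnotions25}) of $\Gamma_\pm(E)$.

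For the first stage, fix $f\in L_s^2$ and, given $\varepsilon>0$, choose a compactly supported $f_N\in L_s^2$ with $\|f-f_N\|_{L_s^2}<\varepsilon$; set $g_N:=f-f_N$. Abbreviate the leading term in (\ref{re1}) by
\begin{equation*}
\mathcal{L}_\pm(h)(x):=(\operatorname{sgn}E)\left(\frac{2\pi|E|}{\nu(E)}\right)^{1/2}(\Gamma_0(E)h)\bigl(\pm(\operatorname{sgn}E)\hat{x}\bigr)\,\frac{e^{\pm i(\operatorname{sgn}E)\nu(E)|x|}}{|x|}.
\end{equation*}
By linearity, $R_0(E\pm i0)f-\mathcal{L}_\pm(f)=\bigl(R_0(E\pm i0)f_N-\mathcal{L}_\pm(f_N)\bigr)+\bigl(R_0(E\pm i0)g_N-\mathcal{L}_\pm(g_N)\bigr)$. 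The first bracket is $O(|x|^{-2})$ by Lemma \ref{re13}, hence $\tilde o(|x|^{-1})$. For the second bracket I would rely on two uniform bounds: the Agmon--Hörmander radiation estimate $\sup_{r\geq1} r^{-1}\int_{|x|\leq r}|R_0(E\pm i0)g|^2\,dx\leq C\|g\|_{L_s^2}^2$, which is standard in the LAP framework of \cite{4,71,37,8} invoked in Section 2, and the boundedness $\Gamma_0(E):L_s^2\to L^2(\mathbb S^2)$, from which $r^{-1}\int_{|x|\leq r}|\mathcal{L}_\pm(g_N)|^2\,dx\leq C\|g_N\|_{L_s^2}^2$ uniformly in $r$. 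Combining these,
\begin{equation*}
\limsup_{r\to\infty}\frac{1}{r}\int_{|x|\leq r}\bigl|R_0(E\pm i0)f-\mathcal{L}_\pm(f)\bigr|^2\,dx\leq C\varepsilon^2,
\end{equation*}
and letting $\varepsilon\to0$ gives the desired $\tilde o(|x|^{-1})$ conclusion. The assertion for $\partial_{|x|}R_0(E\pm i0)f$ is obtained identically, using the analogue of the radiation estimate for the radial derivative.

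For the second stage, I would use the identity, equivalent to (\ref{basicnotions13}),
\begin{equation*}
R(E\pm i0)f=R_0(E\pm i0)\bigl(f-\mathbf{V}R(E\pm i0)f\bigr)=R_0(E\pm i0)\tilde f,\qquad \tilde f:=f-\mathbf{V}R(E\pm i0)f.
\end{equation*}
Condition \ref{basicnotions26} guarantees that $\mathbf{V}:\mathcal{H}^{1,-s}\to L_s^2$ is bounded for $s\in(1/2,s_0]$; combined with $R(E\pm i0):L_s^2\to\mathcal{H}^{1,-s}$ from Section 2, this gives $\tilde f\in L_s^2$. Applying the first stage to $\tilde f$ yields $R(E\pm i0)f=\mathcal{L}_\pm(\tilde f)+\tilde o(|x|^{-1})$, and the definition (\ref{basicnotions25}) of $\Gamma_\pm(E)$ gives $\Gamma_0(E)\tilde f=\Gamma_\pm(E)f$, so the leading coefficient is exactly as claimed. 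The radial derivative statement follows by the same argument applied to the corresponding formula in Lemma \ref{re13}.

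The main obstacle is the Agmon--Hörmander radiation estimate in the form $\sup_{r\geq1}r^{-1}\int_{|x|\leq r}|R_0(E\pm i0)g|^2\,dx\leq C\|g\|_{L_s^2}^2$; however, via the factorization $R_0(E)=(H_0+E)R_{0S}(E^2-m^2)$ used in the proof of Lemma \ref{re13}, this reduces to the well-known $B$-$B^*$ estimate for the Schrödinger resolvent and can therefore be cited rather than proved from scratch.
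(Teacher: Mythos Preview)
Your proposal is correct and is precisely the standard density/approximation argument combined with the resolvent identity that the paper invokes: the paper's own proof consists solely of the sentence ``The proof is the same as in the case of the Schr\"{o}dinger equation (see Proposition~4.3 and Theorem~4.4 of \cite{8}, page~240)'', and those results in Yafaev's book proceed exactly as you outline---approximate $f$ by compactly supported functions, control the tail via the $B^*$ (Agmon--H\"ormander) bound for $R_0(E\pm i0)$, and then reduce the perturbed case to the free one through $R=R_0-R_0\mathbf{V}R$ together with $\Gamma_0(E)(I-\mathbf{V}R(E\pm i0))=\Gamma_\pm(E)$.
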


\begin{proof}
The proof is the same as in the case of the Schr\"{o}dinger equation (see
Proposition 4.3 and Theorem 4.4 of \cite{8}, page 240).
\end{proof}

We now are able to prove Theorem \ref{re17}\textbf{.}

\textbf{Proof of Theorem \ref{re17}. \ }Note that $\mathbf{V}\left(  x\right)
P_{\theta}\left(  E\right)  e^{i\nu\left(  E\right)  \left\langle
x,\theta\right\rangle }\in L_{s}^{2}\mathbf{,}$ for some $s>1/2.$ The
asymptotic expansion (\ref{re16}) is consequence of (\ref{re18}). Multiplying
(\ref{representation29}) from the left side by $I+R_{0}\left(  E\pm i0\right)
\mathbf{V}$ and using the resolvent identity $R=R_{0}-R_{0}\mathbf{V}R,$ we
get that $u_{\pm}\left(  x,\theta;E\right)  =P_{\theta}\left(  E\right)
e^{i\nu\left(  E\right)  \left\langle x,\theta\right\rangle }-R_{0}\left(
E\pm i0\right)  \mathbf{V}u_{\pm}.$ Then relation (\ref{re18}) follows from
Lemma \ref{re10}. Using the representation (\ref{basicnotions14}) of the
scattering matrix $S\left(  E\right)  ,$ the definition (\ref{basicnotions25})
of $\Gamma_{+}\left(  E\right)  $ and the relation (\ref{basicnotions39}) for
$\Gamma_{0}^{\ast}\left(  E\right)  $ we get $s^{\operatorname{int}}\left(
\omega,\theta;E\right)  =-i\left(  2\pi\right)  ^{-\frac{1}{2}}\upsilon\left(
E\right)  (\Gamma_{+}\left(  E\right)  \mathbf{V}\left(  \cdot\right)
P_{\theta}\left(  E\right)  e^{i\nu\left(  E\right)  \left\langle \cdot
,\theta\right\rangle })\left(  \omega\right)  .$ Then, the relation between
$a_{+}$ and $s^{\operatorname{int}}$, follows from the definition of $a_{+}.$

\section{Symmetries of the kernel of the scattering matrix.}

In this Section we discuss the parity, charge-conjugation and time-reversal
transformations for the Dirac operator (see, for example \cite{48}). In
particular, we want to study the symmetries that these transformations imply
for the kernel $s\left(  \omega,\theta;E\right)  $ of the scattering matrix
$S\left(  E\right)  $. These symmetries can give necessary conditions when one
studies the characterization problem. We consider the Dirac operator with
potential $\mathbf{V}$ of the form (\ref{intro4}). Suppose that $\mathbf{V}$
satisfies Condition \ref{basicnotions26}. The latter assumption is made only
in order to guarantee the existence of the wave operators $W_{\pm}$ and the
scattering operator $\mathbf{S},$ and may be relaxed.

The parity transformation is defined as $\mathcal{P=}e^{i\phi}\beta\varkappa,$
where $\left(  \varkappa f\right)  \left(  x\right)  =f\left(  -x\right)  $ is
the space reflection operator and $\phi$ is a fixed phase. The transformation
$\mathcal{P}$ commutes with $H_{0}$. For the perturbed operator $H$ we have
$\mathcal{P}\left(  -i\alpha\nabla+m\beta+\alpha A\left(  x\right)  +V\left(
x\right)  \right)  =\left(  -i\alpha\nabla+m\beta-\alpha A\left(  -x\right)
+V\left(  -x\right)  \right)  \mathcal{P}.$ Therefore, for the operator $H$ of
the form (\ref{intro4}) with general electromagnetic potential $\mathbf{V}$
the parity transformation $\mathcal{P}$ is not a symmetry. If we consider an
even electric potential, $V\left(  x\right)  =V\left(  -x\right)  ,$ and an
odd magnetic potential, $A\left(  -x\right)  =-A\left(  x\right)  $, then
$\mathcal{P}H=H\mathcal{P}$. In this case it follows that the parity
transformation $\mathcal{P}$ commutes with the wave operators $W_{\pm}$ and
the scattering operator $\mathbf{S.}$

Noting that%
\begin{equation}
\beta\hat{\varkappa}\Gamma_{0}\left(  E\right)  =\Gamma_{0}\left(  E\right)
e^{-i\phi}\mathcal{P}, \label{sym1}%
\end{equation}
and%
\begin{equation}
e^{-i\phi}\mathcal{P}\Gamma_{0}^{\ast}\left(  E\right)  =\Gamma_{0}^{\ast
}\left(  E\right)  \beta\hat{\varkappa}, \label{sym7}%
\end{equation}
where $\left(  \hat{\varkappa}b\right)  \left(  \omega\right)  =b\left(
-\omega\right)  ,$ is the reflection operator on the unit sphere, we obtain
the equality $\beta\hat{\varkappa}S\left(  E\right)  =\beta\hat{\varkappa
}\Gamma_{0}\left(  E\right)  \mathbf{S}\Gamma_{0}^{\ast}\left(  E\right)
=\Gamma_{0}\left(  E\right)  \mathbf{S}\Gamma_{0}^{\ast}\left(  E\right)
\hat{\varkappa}\beta$ $=S\left(  E\right)  \hat{\varkappa}\beta.$ This means
that the kernel $s\left(  \omega,\theta;E\right)  $ of the scattering matrix
$S\left(  E\right)  $ satisfy the relation%
\begin{equation}
s\left(  \omega,\theta;E\right)  =\beta s\left(  -\omega,-\theta;E\right)
\beta. \label{sym4}%
\end{equation}

For the Dirac operator the charge-conjugation transformation is defined as
$\mathcal{C}=i\left(  \beta\alpha_{2}\right)  \mathbf{C,}$ where $\mathbf{C}$
is the complex conjugation.\ Note that $\mathcal{C}\left(  -i\alpha
\nabla+m\beta+\alpha A+V\right)  =i\left(  \beta\alpha_{2}\right)  \left(
i\overline{\alpha}\nabla+m\beta+\overline{\alpha}A+V\right)  \mathbf{C}%
=-\left(  -i\alpha\nabla+m\beta-\alpha A-V\right)  \mathcal{C}\mathbf{.}$ We
consider now an odd electric potential $V\left(  x\right)  $ and an even
magnetic potential $A\left(  x\right)  .$ Using the charge-conjugation
transformation $\mathcal{C}$ we see that $\mathcal{CP}H=-H\mathcal{CP}$,
$\mathcal{CP}W_{\pm}=W_{\pm}\mathcal{CP}$ and $\mathcal{CP}\mathbf{S=S}%
\mathcal{CP}$. Moreover, as%
\begin{equation}
\mathcal{C}\hat{\varkappa}\Gamma_{0}\left(  E\right)  =\Gamma_{0}\left(
-E\right)  \mathcal{C}, \label{representation199}%
\end{equation}
and%
\begin{equation}
\mathcal{C}\Gamma_{0}^{\ast}\left(  E\right)  =\Gamma_{0}^{\ast}\left(
-E\right)  \mathcal{C}\hat{\varkappa}, \label{representation200}%
\end{equation}
then using (\ref{sym1}) and (\ref{sym7}) we have $\mathcal{C}\beta\Gamma
_{0}\left(  E\right)  =\mathcal{C}\hat{\varkappa}\Gamma_{0}\left(  E\right)
e^{-i\phi}\mathcal{P}=\Gamma_{0}\left(  -E\right)  \mathcal{C}e^{-i\phi
}\mathcal{P}$, and $\mathcal{C}e^{-i\phi}\mathcal{P}\Gamma_{0}^{\ast}\left(
E\right)  =\mathcal{C}\Gamma_{0}^{\ast}\left(  E\right)  \beta\hat{\varkappa
}=\Gamma_{0}^{\ast}\left(  -E\right)  \mathcal{C}\beta$. \ The last two
equalities imply $\mathcal{C}\beta S\left(  E\right)  =\mathcal{C}\beta
\Gamma_{0}\left(  E\right)  \mathbf{S}\Gamma_{0}^{\ast}\left(  E\right)
=\Gamma_{0}\left(  -E\right)  \mathbf{S}\Gamma_{0}^{\ast}\left(  -E\right)
\mathcal{C}\beta=S\left(  -E\right)  \mathcal{C}\beta.$ Thus, we obtain the
following relation for the kernel $s\left(  \omega,\theta;E\right)  $ of the
scattering matrix $S\left(  E\right)  :$%
\begin{equation}
\overline{s\left(  \omega,\theta;E\right)  }=\alpha_{2}s\left(  \omega
,\theta;-E\right)  \alpha_{2}. \label{sym5}%
\end{equation}

Another symmetry of the free Dirac operator is the time-reversal
transformation $\mathcal{T}=-i\left(  \alpha_{1}\alpha_{3}\right)  \mathbf{C}%
$. Note that
\begin{equation}
\left.  \mathcal{T}\left(  -i\alpha\nabla+m\beta+\alpha A+V\right)  =-i\left(
\alpha_{1}\alpha_{3}\right)  \left(  i\overline{\alpha}\nabla+m\beta
+\overline{\alpha}A+V\right)  \mathbf{C}=\left(  -i\alpha\nabla+m\beta-\alpha
A+V\right)  \mathcal{T}\mathbf{.}\right.  \label{representation104}%
\end{equation}
If $A=0,$ then relation (\ref{representation104}) implies that $\mathcal{T}%
H=H\mathcal{T}$ and $\mathcal{T}e^{itH}=e^{-itH}\mathcal{T}$. Thus, we have
the following relations
\begin{equation}
\mathcal{T}W_{\pm}=W_{\mp}\mathcal{T}\text{ and }\mathcal{T}\mathbf{S=S}%
^{\ast}\mathcal{T}. \label{representation105}%
\end{equation}
Noting that%
\begin{equation}
\mathcal{T}\hat{\varkappa}\Gamma_{0}\left(  E\right)  =\Gamma_{0}\left(
E\right)  \mathcal{T}, \label{sym2}%
\end{equation}
and%
\begin{equation}
\mathcal{T}\Gamma_{0}^{\ast}\left(  E\right)  =\Gamma_{0}^{\ast}\left(
E\right)  \mathcal{T}\hat{\varkappa} \label{sym9}%
\end{equation}
we obtain $\mathcal{T}\hat{\varkappa}S\left(  E\right)  =\mathcal{T}%
\hat{\varkappa}\Gamma_{0}\left(  E\right)  \mathbf{S}\Gamma_{0}^{\ast}\left(
E\right)  =\Gamma_{0}\left(  E\right)  \mathbf{S}^{\ast}\Gamma_{0}^{\ast
}\left(  E\right)  \mathcal{T}\hat{\varkappa}=S\left(  E\right)  ^{\ast
}\mathcal{T}\hat{\varkappa}.$ The last equality for the scattering matrix
$S\left(  E\right)  $ leads to the following symmetry relation for the kernel
$s\left(  \omega,\theta;E\right)  :$%
\begin{equation}
\left(  \alpha_{1}\alpha_{3}\right)  \overline{s\left(  \omega,\theta
;E\right)  }=\left(  s\left(  -\theta,-\omega;E\right)  \right)  ^{\ast
}\left(  \alpha_{1}\alpha_{3}\right)  . \label{representation106}%
\end{equation}

If $A\neq0$, then relation (\ref{representation105}) is not satisfied. In this
case, in addition to $\mathcal{T}$, we need to apply some other transformation
to $H$ to get a relation similar to (\ref{representation105}). Note that the
parity transformation $\mathcal{P}$ changes the sign of the magnetic potential
$A.$ Therefore, in case of even potentials $\mathbf{V}$ we have $\mathcal{TP}%
H=H\mathcal{TP}$, and hence, $\mathcal{TP}W_{\pm}=W_{\mp}\mathcal{TP}$ and
$\mathcal{TP}\mathbf{S=S}^{\ast}\mathcal{TP}.$ Using relations (\ref{sym1})
and (\ref{sym2}) we get $\mathcal{T}\beta\Gamma_{0}\left(  E\right)
=\mathcal{T}\hat{\varkappa}\Gamma_{0}\left(  E\right)  e^{-i\phi}%
\mathcal{P}=\Gamma_{0}\left(  E\right)  \mathcal{T}e^{-i\phi}\mathcal{P},$
and, by using (\ref{sym7}) and (\ref{sym9}) we obtain $\mathcal{T}e^{-i\phi
}\mathcal{P}\Gamma_{0}^{\ast}\left(  E\right)  =\mathcal{T}\Gamma_{0}^{\ast
}\left(  E\right)  \beta\hat{\varkappa}=\Gamma_{0}^{\ast}\left(  E\right)
\mathcal{T}\beta.$ From the above equalities we have $\mathcal{T}\beta
S\left(  E\right)  =\mathcal{T}\beta\Gamma_{0}\left(  E\right)  \mathbf{S}%
\Gamma_{0}^{\ast}\left(  E\right)  =\Gamma_{0}\left(  E\right)  \mathbf{S}%
^{\ast}\Gamma_{0}^{\ast}\left(  E\right)  \mathcal{T}\beta=S\left(  E\right)
^{\ast}\mathcal{T}\beta,$ and, thus,%
\begin{equation}
\left(  \alpha_{1}\alpha_{3}\beta\right)  \overline{s\left(  \omega
,\theta;E\right)  }=\left(  s\left(  \theta,\omega;E\right)  \right)  ^{\ast
}\left(  \alpha_{1}\alpha_{3}\beta\right)  . \label{sym6}%
\end{equation}

Let us consider the case when the electric potential $V=0$ and the magnetic
potential $A$ is a general function of $x$. As the charge-conjugation
transformation changes the sign of the magnetic potential $A$, we get a
relation, similar to (\ref{representation105}) for the following
transformation $\Lambda=\mathcal{CT}.$ As $\Lambda\left(  iH\right)
=-iH\Lambda,$ then $\Lambda e^{itH}=e^{-itH}\Lambda,$ which implies that
$\Lambda W_{\pm}=W_{\mp}\Lambda$ and $\Lambda\mathbf{S=S}^{\ast}\Lambda.$ From
relations (\ref{representation199}) and (\ref{sym2}) it follows that
$\Lambda\Gamma_{0}\left(  E\right)  =\Gamma_{0}\left(  -E\right)  \Lambda
.\ $Moreover, using equalities (\ref{representation200}) and (\ref{sym9}) we
get $\Lambda\Gamma_{0}^{\ast}\left(  E\right)  =\Gamma_{0}^{\ast}\left(
-E\right)  \Lambda.$ Therefore, we obtain $\Lambda S\left(  E\right)
=\Lambda\Gamma_{0}\left(  E\right)  \mathbf{S}\Gamma_{0}^{\ast}\left(
E\right)  =\Gamma_{0}\left(  -E\right)  \mathbf{S}^{\ast}\Gamma_{0}^{\ast
}\left(  -E\right)  \Lambda=S\left(  -E\right)  ^{\ast}\Lambda.$ Therefore we
obtain the following symmetry relation%
\begin{equation}
s\left(  \omega,\theta;E\right)  =\gamma\left(  s\left(  \theta,\omega
;-E\right)  \right)  ^{\ast}\gamma, \label{representation107}%
\end{equation}
where $\gamma=\alpha_{1}\alpha_{2}\alpha_{3}\beta.$

Finally suppose that $\mathbf{V}\left(  x\right)  $ is an odd function. Then
the transformation $\Pi=\mathcal{CTP}$ satisfies the equality $\Pi H=-H\Pi$
and $\Pi e^{itH}=e^{-itH}\Pi.$ This implies that $\Pi W_{\pm}=W_{\mp}\Pi$ and
$\Pi\mathbf{S=S}^{\ast}\Pi.$ Moreover, as $\Lambda\beta\hat{\varkappa}%
\Gamma_{0}\left(  E\right)  =\Gamma_{0}\left(  -E\right)  e^{-i\phi}\Pi,$ and
$e^{-i\phi}\Pi\Gamma_{0}^{\ast}\left(  E\right)  =\Gamma_{0}^{\ast}\left(
-E\right)  \Lambda\beta\hat{\varkappa},$ then we have $\Lambda\beta
\hat{\varkappa}S\left(  E\right)  =\Lambda\beta\hat{\varkappa}\Gamma
_{0}\left(  E\right)  \mathbf{S}\Gamma_{0}^{\ast}\left(  E\right)  =\Gamma
_{0}\left(  -E\right)  \mathbf{S}^{\ast}\Gamma_{0}^{\ast}\left(  -E\right)
\Lambda\beta\hat{\varkappa}=S\left(  -E\right)  ^{\ast}\Lambda\beta
\hat{\varkappa},$ and%
\begin{equation}
s\left(  \omega,\theta;E\right)  =\beta\gamma\left(  s\left(  -\theta
,-\omega;-E\right)  \right)  ^{\ast}\gamma\beta. \label{sym10}%
\end{equation}

\section{Approximate solutions.}

In this Section we construct approximate generalized eigenfunctions for the
Dirac equation. For the Schr\"{o}dinger equation with short-range potentials,
the approximate solutions are given by $u\left(  x,\xi\right)
=e^{i\left\langle x,\xi\right\rangle }+e^{i\left\langle x,\xi\right\rangle
}a\left(  x,\xi\right)  ,$ where $a$ solves the \textquotedblleft
transport\textquotedblright\ equation (see \cite{27}). In the case of the
Schr\"{o}dinger equation with long-range potentials, the approximate solutions
are of the form $u\left(  x,\xi\right)  =e^{i\left\langle x,\xi\right\rangle
+i\phi}\left(  1+a\left(  x,\xi\right)  \right)  ,$ where $\phi$ solves the
\textquotedblleft eikonal\textquotedblright\ equation and $a\left(
x,\xi\right)  $ is the solution of the \textquotedblleft
transport\textquotedblright\ equation (\cite{30}).

For the Dirac equation with short-range potentials it is not enough to
consider only the \textquotedblleft transport\textquotedblright\ equation, in
order to obtain the desired estimates. Thus, we need to consider the
\textquotedblleft eikonal\textquotedblright\ equation too. It also turns out
that we need to decompose the \textquotedblleft transport\textquotedblright%
\ equation in two equations, one for the positive energies and another for the
negative energies, to obtain a smoothness and high-energy expansion of the
generalized eigenfunctions for the Dirac equation.

For an arbitrary $\xi\in\mathbb{R}^{3}$ let us consider the Dirac equation
\begin{equation}
Hu=\left(  \alpha\left(  -i\triangledown+A\right)  +m\beta+V\right)
u=Eu,\text{ }E=\pm\sqrt{\xi^{2}+m^{2}}, \label{eig1}%
\end{equation}
where $A=\left(  A_{1},A_{2},A_{3}\right)  $ is a magnetic potential
satisfying the estimate%
\begin{equation}
\left\vert \partial_{x}^{\alpha}A\left(  x\right)  \right\vert \leq
C_{\alpha,\beta}\left(  1+\left\vert x\right\vert \right)  ^{-\rho
_{m}-\left\vert \alpha\right\vert },\text{ }\rho_{m}>1, \label{eig31}%
\end{equation}
for all $\alpha,$ and $V$ is a scalar electric potential, which satisfies, for
all $\alpha,$ the estimate%
\begin{equation}
\left\vert \partial_{x}^{\alpha}V\left(  x\right)  \right\vert \leq
C_{\alpha,\beta}\left(  1+\left\vert x\right\vert \right)  ^{-\rho
_{e}-\left\vert \alpha\right\vert },\text{ }\rho_{e}>1. \label{eig32}%
\end{equation}

\begin{defin}
\label{eig40}Let $\omega=\xi/\left\vert \xi\right\vert ,$ $\hat{x}%
=x/\left\vert x\right\vert $ and $\Xi^{\pm}\left(  E\right)  :=\Xi^{\pm
}\left(  \varepsilon_{0},R;E\right)  \subset\mathbb{R}^{3}\times\mathbb{R}%
^{3}$ be the domain $\Xi^{\pm}\left(  E\right)  =[\left(  x,\xi\right)
\in\mathbb{R}^{3}\times\mathbb{R}^{3}\mid$ $\pm\left(  \operatorname*{sgn}%
E\right)  \left\langle \hat{x},\omega\right\rangle \geq-1+\varepsilon_{0}$ for
$\left\vert x\right\vert \geq R\},$ for some $0<\varepsilon_{0}<1$ and
$0<R<\infty.$
\end{defin}

We aim to construct $4\times4$ matrices $u_{N}^{\pm}\left(  x,\xi;E\right)  $
whose columns are approximate solutions to equation (\ref{eig1}) in such way
that the remainders
\begin{equation}
r_{N}^{\pm}\left(  x,\xi;E\right)  \left.  :=\right.  e^{-i\left\langle
x,\xi\right\rangle }\left(  H-E\right)  u_{N}^{\pm}\left(  x,\xi;E\right)  ,
\label{eig30}%
\end{equation}
satisfy the following estimates
\begin{equation}
\left\vert \partial_{x}^{\alpha}\partial_{\xi}^{\beta}r_{N}^{\pm}\left(
x,\xi;E\right)  \right\vert \leq C_{\alpha,\beta}\left(  1+\left\vert
x\right\vert \right)  ^{-\rho-N-\left\vert \alpha\right\vert }\left\vert
\xi\right\vert ^{-N-\left\vert \beta\right\vert },\text{ }N\geq0,
\label{eig28}%
\end{equation}
for $\rho=\min\{\rho_{e},\rho_{m}\},$ $\left(  x,\xi\right)  \in\Xi^{\pm
}\left(  E\right)  $ and all multi-indices $\alpha$ and $\beta.$ \

It is natural for us to seek the matrices $u_{N}^{\pm}\left(  x,\xi;E\right)
$ as%
\begin{equation}
u_{N}^{\pm}\left(  x,\xi;E\right)  =e^{i\phi^{\pm}\left(  x,\xi;E\right)
}w_{N}^{\pm}\left(  x,\xi;E\right)  , \label{eig2}%
\end{equation}
where $\phi^{\pm}\left(  x,\xi;E\right)  $ is a real-valued function
and$\ w_{N}^{\pm}\left(  x,\xi;E\right)  $ are $4\times4$ matrix-valued
functions. Introducing (\ref{eig2}) into equation (\ref{eig1}) and using
(\ref{eig30}) we get%
\begin{equation}
\left(  \alpha\left(  -i\triangledown+\triangledown\phi^{\pm}+A\right)
+m\beta+V-E\right)  w_{N}^{\pm}=e^{i\left\langle x,\xi\right\rangle
-i\phi^{\pm}}r_{N}^{\pm}. \label{eig3}%
\end{equation}
Let us write $\phi^{\pm}$ as%
\begin{equation}
\phi^{\pm}\left(  x,\xi;E\right)  =\left\langle x,\xi\right\rangle +\Phi^{\pm
}\left(  x,\xi;E\right)  , \label{eig25}%
\end{equation}
where $\Phi^{\pm}$ tends to $0$ as $\left\vert x\right\vert \rightarrow\infty$
for $\left(  x,\xi\right)  \in\Xi^{\pm}\left(  E\right)  .$ Then, (\ref{eig3})
takes the form%
\begin{equation}
\left(  \alpha\cdot\xi+m\beta-E+\alpha\left(  -i\triangledown+\triangledown
\Phi^{\pm}+A\right)  +V\right)  w_{N}^{\pm}=e^{-i\Phi^{\pm}}r_{N}^{\pm}.
\label{eig4}%
\end{equation}
Now let us decompose $w_{N}^{\pm}$ as
\begin{equation}
w_{N}^{\pm}=\left(  w_{1}\right)  _{N}^{\pm}+P_{\omega}\left(  E\right)
\left(  w_{2}\right)  _{N}^{\pm}. \label{eig42}%
\end{equation}
Then, we get $\left(  -2EP_{\omega}\left(  -E\right)  +\alpha\left(
-i\triangledown+\triangledown\Phi^{\pm}+A\right)  +V\right)  \left(
w_{1}\right)  _{N}^{\pm}+\left(  \alpha\left(  -i\triangledown+\triangledown
\Phi^{\pm}+A\right)  +V\right)  P_{\omega}\left(  E\right)  \left(
w_{2}\right)  _{N}^{\pm}=e^{-i\Phi^{\pm}}r_{N}^{\pm},$ where we used that
$\alpha\cdot\xi+m\beta-E=-2EP_{\omega}\left(  -E\right)  .$ Using the algebra
of the matrices $\alpha_{j}$ we get the equality $\alpha\left(
-i\triangledown+\triangledown\Phi^{\pm}+A\right)  \left(  \alpha\cdot
\xi\right)  $ $=2\left\langle \xi,\left(  -i\triangledown+\triangledown
\Phi^{\pm}+A\right)  \right\rangle -\left(  \alpha\cdot\xi\right)
\alpha\left(  -i\triangledown+\triangledown\Phi^{\pm}+A\right)  .$ This
relation and equality $P_{\omega}^{2}\left(  E\right)  =P_{\omega}\left(
E\right)  $ imply
\begin{equation}
\left.
\begin{array}
[c]{c}%
\left(  -2EP_{\omega}\left(  -E\right)  +\alpha\left(  -i\triangledown
+\triangledown\Phi^{\pm}+A\right)  +V\right)  \left(  w_{1}\right)  _{N}^{\pm
}+P_{\omega}\left(  -E\right)  \alpha\left(  -i\triangledown+\triangledown
\Phi^{\pm}+A\right)  P_{\omega}\left(  E\right)  \left(  w_{2}\right)
_{N}^{\pm}\\
+\frac{1}{E}\left\langle \xi,\left(  -i\triangledown+\triangledown\Phi^{\pm
}+A\right)  \right\rangle P_{\omega}\left(  E\right)  \left(  w_{2}\right)
_{N}^{\pm}+VP_{\omega}\left(  E\right)  \left(  w_{2}\right)  _{N}^{\pm
}=e^{-i\Phi^{\pm}}r_{N}^{\pm}.
\end{array}
\right.  \label{eig5}%
\end{equation}

Let the functions $\Phi^{\pm}$ satisfy the \textquotedblleft
eikonal\textquotedblright\ equation%
\begin{equation}
\left\langle \omega,\triangledown\Phi^{\pm}+A\right\rangle +\frac
{E}{\left\vert \xi\right\vert }V=0. \label{eig6}%
\end{equation}
Then, we need that the functions $\left(  w_{1}\right)  _{N}^{\pm}$ and
$\left(  w_{2}\right)  _{N}^{\pm}$ are approximate solutions for the
\textquotedblleft transport\textquotedblright\ equation%
\begin{equation}
\left.
\begin{array}
[c]{c}%
\left(  -2EP_{\omega}\left(  -E\right)  +\alpha\left(  -i\triangledown
+\triangledown\Phi^{\pm}+A\right)  +V\right)  \left(  w_{1}\right)  _{N}^{\pm
}\\
+P_{\omega}\left(  -E\right)  \alpha\left(  -i\triangledown+\triangledown
\Phi^{\pm}+A\right)  P_{\omega}\left(  E\right)  \left(  w_{2}\right)
_{N}^{\pm}+\frac{1}{E}\left\langle \xi,\left(  -i\triangledown\right)
\right\rangle P_{\omega}\left(  E\right)  \left(  w_{2}\right)  _{N}^{\pm
}=e^{-i\Phi^{\pm}}r_{N}^{\pm}.
\end{array}
\right.  \label{eig7}%
\end{equation}
We will search the functions $\left(  w_{1}\right)  _{N}^{\pm}$ and $\left(
w_{2}\right)  _{N}^{\pm}~$as
\begin{equation}
\left(  w_{1}\right)  _{N}^{\pm}=\sum_{j=1}^{N}\frac{1}{\left\vert
\xi\right\vert ^{j}}b_{j}^{\pm}\left(  x,\xi;E\right)  \label{eig8}%
\end{equation}
and%
\begin{equation}
\left(  w_{2}\right)  _{N}^{\pm}=\sum_{j=0}^{N}\frac{1}{\left\vert
\xi\right\vert ^{j}}c_{j}^{\pm}\left(  x,\xi;E\right)  . \label{eig9}%
\end{equation}
We want that the functions $w_{N}^{\pm}$ in relation (\ref{eig2}) tend to
$P_{\omega}\left(  E\right)  ,$ as $\left\vert x\right\vert \rightarrow
\infty.$ Therefore we set $b_{0}^{\pm}=0$ and $c_{0}^{\pm}=I$. Plugging
(\ref{eig8}) and (\ref{eig9}) in (\ref{eig7}) and multiplying the resulting
equation on the left-hand side by $P_{\omega}\left(  -E\right)  $ we get%
\begin{equation}
\left.
\begin{array}
[c]{c}%
\sum_{j=1}^{N}\frac{1}{\left\vert \xi\right\vert ^{j}}\left(  -2EP_{\omega
}\left(  -E\right)  +P_{\omega}\left(  -E\right)  \left(  \alpha\left(
-i\triangledown+\triangledown\Phi^{\pm}+A\right)  +V\right)  \right)
b_{j}^{\pm}\\
+\sum_{j=0}^{N}\frac{1}{\left\vert \xi\right\vert ^{j}}P_{\omega}\left(
-E\right)  \alpha\left(  -i\triangledown+\triangledown\Phi^{\pm}+A\right)
P_{\omega}\left(  E\right)  c_{j}^{\pm}=e^{-i\Phi^{\pm}}P_{\omega}\left(
-E\right)  r_{N}^{\pm},
\end{array}
\right.  \label{eig11}%
\end{equation}
and by multiplying by $P_{\omega}\left(  E\right)  $ we obtain%
\begin{equation}
\left.  \sum_{j=1}^{N}\frac{1}{\left\vert \xi\right\vert ^{j}}\left(  \left(
P_{\omega}\left(  E\right)  \left(  \alpha\left(  -i\triangledown
+\triangledown\Phi^{\pm}+A\right)  +V\right)  \right)  b_{j}^{\pm}%
+\frac{\left\vert \xi\right\vert }{E}\left\langle \omega,\left(
-i\triangledown\right)  \right\rangle P_{\omega}\left(  E\right)  c_{j}^{\pm
}\right)  =e^{-i\Phi^{\pm}}P_{\omega}\left(  E\right)  r_{N}^{\pm}.\right.
\label{eig10}%
\end{equation}
In order to get the desired estimates for $r_{N}^{\pm}$ we need that the terms
in (\ref{eig11}) and (\ref{eig10}), which contain powers of $\frac
{1}{\left\vert \xi\right\vert }$ smaller than $N,$ are equal to $0.$ Then,
comparing the terms of the same power of $\frac{1}{\left\vert \xi\right\vert
}$ in (\ref{eig11}) and (\ref{eig10}) we obtain the following equations ($E$
behaves like $\left(  \operatorname*{sgn}E\right)  \left\vert \xi\right\vert $
for large $\left\vert \xi\right\vert $)
\begin{equation}
\left.  b_{j+1}^{\pm}\left(  x,\xi;E\right)  =\frac{\left\vert \xi\right\vert
}{2E}P_{\omega}\left(  -E\right)  \left(  \alpha\left(  -i\triangledown
+\triangledown\Phi^{\pm}+A\right)  +V\right)  b_{j}^{\pm}+\frac{\left\vert
\xi\right\vert }{2E}P_{\omega}\left(  -E\right)  \alpha\left(  -i\triangledown
+\triangledown\Phi^{\pm}+A\right)  P_{\omega}\left(  E\right)  c_{j}^{\pm
},\right.  \label{eig12}%
\end{equation}
for $0\leq j\leq N-1$ and%
\begin{equation}
\left\langle \omega,\triangledown c_{j}^{\pm}\right\rangle =-i\frac
{E}{\left\vert \xi\right\vert }P_{\omega}\left(  E\right)  \left(
\alpha\left(  -i\triangledown+\triangledown\Phi^{\pm}+A\right)  +V\right)
b_{j}^{\pm}, \label{eig37}%
\end{equation}
for $0\leq j\leq N.$ It follows from (\ref{eig12}) that $b_{j}^{\pm}%
=P_{\omega}\left(  -E\right)  b_{j}^{\pm}.$ Then, the term $P_{\omega}\left(
E\right)  Vb_{j}^{\pm}$ in equation (\ref{eig37}) is equals to zero. Thus,
$c_{j}^{\pm}$ satisfies equation%
\begin{equation}
\left\langle \omega,\triangledown c_{j}^{\pm}\right\rangle =-i\frac
{E}{\left\vert \xi\right\vert }P_{\omega}\left(  E\right)  \left(
\alpha\left(  -i\triangledown+\triangledown\Phi^{\pm}+A\right)  \right)
b_{j}^{\pm}. \label{eig13}%
\end{equation}

Our problem is reduced now to solve equations (\ref{eig6}) and (\ref{eig13}).
Both of the equations are of the form%
\begin{equation}
\left\langle \omega,\triangledown d\right\rangle =F\left(  x,\xi;E\right)  ,
\label{eig39}%
\end{equation}
where $d$ and $F$ are either scalars or matrices. A simple substitution shows
that the functions
\begin{equation}
\Phi^{\pm}\left(  x,\xi;E\right)  =\left\{  \left.
\begin{array}
[c]{c}%
\pm\int\limits_{0}^{\infty}\left(  \frac{\left\vert E\right\vert }{\left\vert
\xi\right\vert }V\left(  x\pm t\omega\right)  +\left\langle \omega,A\left(
x\pm t\omega\right)  \right\rangle \right)  dt,\text{ }E>m,\\
\pm\int\limits_{0}^{\infty}\left(  \frac{\left\vert E\right\vert }{\left\vert
\xi\right\vert }V\left(  x\mp t\omega\right)  -\left\langle \omega,A\left(
x\mp t\omega\right)  \right\rangle \right)  dt,\text{ }-E>m,
\end{array}
\right.  \right.  \label{eig15}%
\end{equation}
formally satisfy equation (\ref{eig6}) and the matrices%
\begin{equation}
c_{j}^{\pm}\left(  x,\xi;E\right)  =\left\{  \left.
\begin{array}
[c]{c}%
\pm\int\limits_{0}^{\infty}F_{j}^{\pm}\left(  x\pm t\omega,\xi;E\right)
dt,\text{ }E>m,\\
\pm\int\limits_{0}^{\infty}F_{j}^{\pm}\left(  x\mp t\omega,\xi;E\right)
dt,\text{ }-E>m,
\end{array}
\right.  \right.  \label{eig16}%
\end{equation}
for $j\geq1,$ where%
\begin{equation}
F_{j}^{\pm}\left(  x,\xi;E\right)  =i\frac{\left\vert E\right\vert
}{\left\vert \xi\right\vert }P_{\omega}\left(  E\right)  \left(  \alpha\left(
-i\triangledown+\triangledown\Phi^{\pm}\left(  x,\xi;E\right)  +A\right)
\right)  b_{j}^{\pm}\left(  x,\xi;E\right)  , \label{eig41}%
\end{equation}
solve, at least formally, equation (\ref{eig13}).

Note that relations (\ref{eig12}) and (\ref{eig16}) imply, by induction that
\begin{equation}
b_{j}^{\pm}P_{\omega}\left(  E\right)  =b_{j}^{\pm}\text{ and }c_{j}^{\pm
}P_{\omega}\left(  E\right)  =c_{j}^{\pm},\text{ }j\geq1. \label{eig35}%
\end{equation}

We need the following result to give a precise sense to expressions
(\ref{eig15}) and (\ref{eig16}), and to get the desired estimates for the
functions $\Phi^{\pm},$ $b_{j}^{\pm}$ and $c_{j}^{\pm}$ (see Lemma 2.1,
\cite{30})

\begin{lemma}
\label{eig17}Suppose that the function (or matrix) $F$ satisfies the estimate
\begin{equation}
\left\vert \partial_{x}^{\alpha}\partial_{\xi}^{\beta}F\left(  x,\xi;E\right)
\right\vert \leq C_{\alpha,\beta}\left(  1+\left\vert x\right\vert \right)
^{-\rho-\left\vert \alpha\right\vert }\left\vert \xi\right\vert ^{-\left\vert
\beta\right\vert }\text{,} \label{eig34}%
\end{equation}
for $\left(  x,\xi\right)  \in\Xi^{\pm}\left(  E\right)  $ and some $\rho>1.$
Then the scalar (or a matrix-valued) functions $d^{\pm}\left(  x,\xi;E\right)
=\pm\int\limits_{0}^{\infty}F\left(  x\pm t\omega,\xi;E\right)  dt$ satisfy
equation (\ref{eig39}) and the estimates $\left\vert \partial_{x}^{\alpha
}\partial_{\xi}^{\beta}d^{\pm}\left(  x,\xi;E\right)  \right\vert \leq
C_{\alpha,\beta}\left(  1+\left\vert x\right\vert \right)  ^{-\left(
\rho-1\right)  -\left\vert \alpha\right\vert }\left\vert \xi\right\vert
^{-\left\vert \beta\right\vert },$ on $\Xi^{\pm}\left(  E\right)  $ for all
$\alpha$ and $\beta.$
\end{lemma}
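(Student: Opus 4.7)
The plan is to verify the differential equation first by the fundamental theorem of calculus, and then to establish the decay estimates by induction on the order of $\xi$-derivatives, with a single geometric fact doing most of the work. The key bound is: for every $(x,\xi)\in\Xi^{\pm}(E)$ with $|x|\ge R$ and all $t\ge 0$, there exists $c=c(\varepsilon_{0})>0$ such that
\begin{equation*}
|x\pm t\omega|\;\ge\;c\,\max\{|x|,\,t\}.
\end{equation*}
Expanding $|x\pm t\omega|^{2}=|x|^{2}\pm 2t\langle x,\omega\rangle+t^{2}$, this is immediate when $\pm\langle x,\omega\rangle\ge 0$; in the opposite case the constraint $\pm(\operatorname*{sgn}E)\langle\hat x,\omega\rangle\ge -1+\varepsilon_{0}$ lets one write $\pm\langle x,\omega\rangle=-s|x|$ with $0\le s\le 1-\varepsilon_{0}$ and complete the square to obtain $|x\pm t\omega|^{2}\ge(t-s|x|)^{2}+\varepsilon_{0}(2-\varepsilon_{0})|x|^{2}$. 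For $|x|<R$ all the relevant integrals are uniformly bounded because the ray still escapes to infinity. Splitting at $t=|x|$ then yields the workhorse inequality
\begin{equation*}
\int_{0}^{\infty} t^{k}(1+|x\pm t\omega|)^{-\rho-k}\,dt\;\le\;C_{k}(1+|x|)^{-(\rho-1)},\qquad \rho>1,\ k\ge 0.
\end{equation*}

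Taking $k=0$ and using the hypothesis $|F(y,\xi;E)|\le C(1+|y|)^{-\rho}$ gives absolute convergence of $d^{\pm}$ and the base case $|\alpha|=|\beta|=0$. Differentiation under the integral sign is legitimate because of the stronger bounds on $\partial_{x}^{\alpha}F$, and the identity $\langle\omega,\nabla d^{\pm}\rangle=F$ follows from $\langle\omega,\nabla_{x}F(x\pm t\omega,\xi;E)\rangle=\pm\partial_{t}F(x\pm t\omega,\xi;E)$ together with $F\to 0$ at infinity, up to the sign convention implicit in (\ref{eig15})--(\ref{eig16}). For $\beta=0$ and arbitrary $\alpha$, the integrand $\partial_{x}^{\alpha}F(x\pm t\omega,\xi;E)$ gains the extra factor $(1+|x\pm t\omega|)^{-|\alpha|}$, so the workhorse inequality with $\rho$ replaced by $\rho+|\alpha|$ delivers the required $(1+|x|)^{-(\rho-1)-|\alpha|}$ bound.

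The main obstacle is $\partial_{\xi}^{\beta}$: since $\omega=\xi/|\xi|$ depends on $\xi$, applying $\partial_{\xi_{j}}$ to $F(x\pm t\omega,\xi;E)$ produces, by the chain rule, a term $\pm t(\partial_{\xi_{j}}\omega_{k})\partial_{x_{k}}F$ carrying a troublesome factor of $t$. Using $\partial_{\xi_{j}}\omega_{k}=|\xi|^{-1}(\delta_{jk}-\omega_{j}\omega_{k})$, each such differentiation trades one power of $|\xi|^{-1}$ for one extra $t$ and one extra $x$-derivative on $F$. Proceeding by induction on $|\beta|$, I would show that $\partial_{x}^{\alpha}\partial_{\xi}^{\beta}d^{\pm}$ is a finite sum of integrals of the form
\begin{equation*}
|\xi|^{-|\beta|}\int_{0}^{\infty} t^{k}\,q(\omega)\,(\partial_{x}^{\alpha+\gamma}\partial_{\xi}^{\delta}F)(x\pm t\omega,\xi;E)\,dt,
\end{equation*}
with $|\gamma|=k$, $|\delta|+k=|\beta|$, and $q(\omega)$ a bounded polynomial in $\omega$. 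The hypothesis on $F$ bounds the integrand by $C_{\alpha,\beta}|\xi|^{-|\beta|}t^{k}(1+|x\pm t\omega|)^{-\rho-|\alpha|-k}$, and a final application of the workhorse inequality (with $\rho\mapsto\rho+|\alpha|$) yields the claimed estimate $C_{\alpha,\beta}(1+|x|)^{-(\rho-1)-|\alpha|}|\xi|^{-|\beta|}$, completing the proof.
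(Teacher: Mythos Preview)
The paper does not give its own proof of this lemma; it simply cites Lemma~2.1 of Yafaev~\cite{30}. Your argument is the standard one and is correct in outline: the geometric estimate $|x\pm t\omega|\ge c\max\{|x|,t\}$ on $\Xi^{\pm}(E)$, the resulting integral bound $\int_0^\infty t^k(1+|x\pm t\omega|)^{-\rho-k}\,dt\le C_k(1+|x|)^{-(\rho-1)}$, and the chain-rule handling of $\partial_\xi^\beta$ through $\omega=\xi/|\xi|$ are exactly the ingredients used in Yafaev's proof.

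One small inaccuracy in your inductive description of the $\xi$-derivatives: the claim that every term satisfies $|\delta|+k=|\beta|$ with $q(\omega)$ bounded overlooks terms where a $\xi$-derivative lands on the coefficient $q$ itself. For instance at second order the term $\pm t\,(\partial_{\xi_m}\partial_{\xi_j}\omega_l)(\partial_{x_l}F)$ has $k=1$, $|\delta|=0$, but $|\beta|=2$; its coefficient is of size $|\xi|^{-2}$, not $|\xi|^{-|\beta|}$ times a bounded $q(\omega)$. The correct bookkeeping is that each term carries a coefficient homogeneous in $\xi$ of degree $-(|\beta|-|\delta|)$, with $|\gamma|=k\le|\beta|-|\delta|$. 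This does not affect your conclusion: the hypothesis on $F$ contributes $|\xi|^{-|\delta|}$, the coefficient contributes $|\xi|^{-(|\beta|-|\delta|)}$, and the product is $|\xi|^{-|\beta|}$ as claimed; your workhorse inequality (with $\rho$ replaced by $\rho+|\alpha|$) then finishes the estimate exactly as you wrote.
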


It follows from Lemma \ref{eig17} that under assumptions (\ref{eig31}) and
(\ref{eig32}) the phase functions $\Phi^{\pm},$ defined by relation
(\ref{eig15}) are solutions to equation (\ref{eig6}) and satisfy on $\Xi^{\pm
}\left(  E\right)  $ the estimates%
\begin{equation}
\left\vert \partial_{x}^{\alpha}\partial_{\xi}^{\beta}\Phi^{\pm}\left(
x,\xi;E\right)  \right\vert \leq C_{\alpha,\beta}\left(  1+\left\vert
x\right\vert \right)  ^{-\left(  \rho-1\right)  -\left\vert \alpha\right\vert
}\left\vert \xi\right\vert ^{-\left\vert \beta\right\vert },\text{ \ }%
\rho=\min\{\rho_{e},\rho_{m}\}. \label{eig22}%
\end{equation}

Moreover, we obtain from Lemma \ref{eig17} the following

\begin{proposition}
\label{eig18}Suppose that the magnetic potential $A$ and the electric
potential $V$ satisfy the estimates (\ref{eig31}) and (\ref{eig32}),
respectively. Then, $c_{j}^{\pm}$ defined by (\ref{eig16}) solve equation
(\ref{eig13}) and the following estimates hold on $\Xi^{\pm}\left(  E\right)
$
\begin{equation}
\left\vert \partial_{x}^{\alpha}\partial_{\xi}^{\beta}b_{j}^{\pm}\left(
x,\xi;E\right)  \right\vert \leq C_{\alpha,\beta}\left(  1+\left\vert
x\right\vert \right)  ^{-\rho-j+1-\left\vert \alpha\right\vert }\left\vert
\xi\right\vert ^{-\left\vert \beta\right\vert },\text{ }j\geq1,\text{ }
\label{eig19}%
\end{equation}
and%
\begin{equation}
\left\vert \partial_{x}^{\alpha}\partial_{\xi}^{\beta}c_{j}^{\pm}\left(
x,\xi;E\right)  \right\vert \leq C_{\alpha,\beta}\left(  1+\left\vert
x\right\vert \right)  ^{-\rho-j+1-\left\vert \alpha\right\vert }\left\vert
\xi\right\vert ^{-\left\vert \beta\right\vert },\text{ }j\geq1.\text{ }
\label{eig20}%
\end{equation}

\end{proposition}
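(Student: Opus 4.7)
\textbf{Proof plan for Proposition \ref{eig18}.} The plan is to argue by induction on $j\geq 1$, using Lemma \ref{eig17} as the single analytic workhorse and the already-established estimates (\ref{eig22}) for the phase $\Phi^{\pm}$. The recursion (\ref{eig12}) expresses $b_{j+1}^{\pm}$ algebraically in terms of $b_{j}^{\pm}$ and $c_{j}^{\pm}$, while (\ref{eig16}) expresses $c_{j}^{\pm}$ as a ray integral of a quantity built from $b_{j}^{\pm}$, so each step of the induction alternates between reading off the bound for a $b$ and integrating along a ray for a $c$.

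\textbf{Base case $j=1$.} Since $c_{0}^{\pm}=I$ and $P_{\omega}(E)$ is $x$-independent, formula (\ref{eig12}) with $j=0$ collapses to $b_{1}^{\pm}=\tfrac{|\xi|}{2E}P_{\omega}(-E)\,\alpha(\nabla\Phi^{\pm}+A)\,P_{\omega}(E)$. Combining (\ref{eig22}) for $\nabla\Phi^{\pm}$, hypothesis (\ref{eig31}) for $A$, and the fact that $|\xi|/E$ is bounded for $|\xi|$ in a neighborhood of $+\infty$ and smooth in $\xi$ away from $\xi=0$, one reads off (\ref{eig19}) for $j=1$: the $x$-derivatives eat the $\langle x\rangle^{-\rho}$ factor, and the $\xi$-derivatives hit $\omega$, $E$, and $P_{\omega}(E)$, each differentiation costing a factor $|\xi|^{-1}$. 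Substituting this $b_{1}^{\pm}$ into (\ref{eig41}) gives $F_{1}^{\pm}$ whose leading piece $i\tfrac{|E|}{|\xi|}P_{\omega}(E)\alpha(-i\nabla b_{1}^{\pm})$ decays like $\langle x\rangle^{-\rho-1}$, while the $(\nabla\Phi^{\pm}+A)b_{1}^{\pm}$ piece decays like $\langle x\rangle^{-2\rho}$, which is faster since $\rho>1$. Hence $F_{1}^{\pm}$ satisfies the hypothesis of Lemma \ref{eig17} with exponent $\rho+1>1$, and the Lemma yields (\ref{eig20}) for $j=1$ (namely $\langle x\rangle^{-\rho}$).

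\textbf{Inductive step.} Assume (\ref{eig19}) and (\ref{eig20}) hold at level $j$. In (\ref{eig12}) for $b_{j+1}^{\pm}$, the term $-i\nabla b_{j}^{\pm}$ is of order $\langle x\rangle^{-\rho-j}$; the terms $(\nabla\Phi^{\pm}+A+V)b_{j}^{\pm}$ are of order $\langle x\rangle^{-2\rho-j+1}$; and the analogous terms acting on $P_{\omega}(E)c_{j}^{\pm}$ are no worse than $\langle x\rangle^{-\rho-j}$ (the $-i\nabla$ falls on $c_{j}^{\pm}$, since $P_{\omega}(E)$ is $x$-independent, and $\nabla c_{j}^{\pm}$ is one power better than $c_{j}^{\pm}$). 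Because $\rho>1$, the dominant rate is $\langle x\rangle^{-\rho-j}$, giving (\ref{eig19}) at level $j+1$; the $\xi$-derivative count is preserved by the boundedness and smoothness in $\xi$ of the prefactor $|\xi|/E$, of $P_{\omega}(\pm E)$, and by the inductive bounds. Plugging the new $b_{j+1}^{\pm}$ into (\ref{eig41}) produces $F_{j+1}^{\pm}$ whose leading term $-i\nabla b_{j+1}^{\pm}$ decays like $\langle x\rangle^{-\rho-j-1}$, so Lemma \ref{eig17} applied with effective exponent $\rho+j+1>1$ delivers $c_{j+1}^{\pm}$ of order $\langle x\rangle^{-\rho-j}$, which is (\ref{eig20}) at level $j+1$. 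The verification that $b_{j}^{\pm}=P_{\omega}(-E)b_{j}^{\pm}$ and $c_{j}^{\pm}P_{\omega}(E)=c_{j}^{\pm}$ needed to simplify the equations (cf.\ (\ref{eig35})) is also carried along the same induction.

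\textbf{Where the real work lies.} The algebra of recursion is straightforward; the delicate point is the uniformity of the estimates in $\xi$ and the fact that each new $\partial_{\xi}$ must produce exactly one extra factor $|\xi|^{-1}$, no more and no fewer. Derivatives $\partial_{\xi}$ land on $\omega=\xi/|\xi|$ (inside $\Phi^{\pm}$, inside the ray direction of (\ref{eig16}), and inside $P_{\omega}(E)$) and on $E=\pm\sqrt{\xi^{2}+m^{2}}$, each producing factors that behave like $|\xi|^{-1}$ times functions homogeneous of degree zero. The ray-integral structure of (\ref{eig15})--(\ref{eig16}) is what allows Lemma \ref{eig17} to absorb all of this uniformly, provided the integrand retains its decay on $\Xi^{\pm}(E)$; this in turn relies on the crucial property that $\Xi^{\pm}(E)$ is invariant under the shift $x\mapsto x\pm t\omega$ (respectively $x\mapsto x\mp t\omega$) for $t\geq 0$ that appears in (\ref{eig15})--(\ref{eig16}), so Lemma \ref{eig17} can be applied at every inductive stage without leaving the set on which the hypotheses hold. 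Once this geometric point is fixed, the induction closes mechanically.
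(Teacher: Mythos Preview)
Your proposal is correct and follows essentially the same approach as the paper: induction on $j$, reading off the estimate for $b_{j+1}^{\pm}$ algebraically from (\ref{eig12}) and then feeding the resulting bound on $F_{j+1}^{\pm}$ into Lemma \ref{eig17} to obtain the estimate for $c_{j+1}^{\pm}$. Your explicit identification of the dominant decay rate at each step (using $\rho>1$ to absorb the $(\nabla\Phi^{\pm}+A)$ contributions) and your remark on the shift-invariance of $\Xi^{\pm}(E)$ make the mechanism slightly more transparent than the paper's writeup, but the argument is the same.
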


\begin{proof}
We argue by induction in $j.$ Set $j=1.$ First note that
\begin{equation}
\left\vert \partial_{\xi}^{\beta}P_{\omega}\left(  \pm E\right)  \right\vert
\leq C\left\vert \xi\right\vert ^{-\left\vert \beta\right\vert }\text{ \ and
}\left\vert \partial_{\xi}^{\beta}\left(  \frac{\left\vert \xi\right\vert }%
{E}P_{\omega}\left(  \pm E\right)  \right)  \right\vert \leq C\left\vert
\xi\right\vert ^{-\left\vert \beta\right\vert }. \label{eig23}%
\end{equation}
Differentiating the relation (\ref{eig12}) we see that $\partial_{x}^{\alpha
}\partial_{\xi}^{\beta}b_{1}^{\pm}$ is a sum of terms of the form
$\partial_{\xi}^{\beta_{1}}\left(  \frac{\left\vert \xi\right\vert }%
{2E}P_{\omega}\left(  -E\right)  \right)  \left(  \partial_{x}^{\alpha
}\partial_{\xi}^{\beta_{2}}\left(  \alpha\left(  \triangledown\Phi^{\pm
}+A\right)  \right)  \right)  $ $\times\left(  \partial_{\xi}^{\beta_{3}%
}P_{\omega}\left(  E\right)  \right)  ,$ with $\sum_{j=1}^{3}\beta_{j}=\beta.$
Then, from inequalities (\ref{eig22}) and (\ref{eig23}) it follows the
estimate (\ref{eig19}) for $b_{1}^{\pm}.$ Using (\ref{eig22}),(\ref{eig23})
and (\ref{eig19}) with $j=1$ we see that $F_{1}^{\pm}$ in equality
(\ref{eig16}) satisfies the estimate $\left\vert \partial_{x}^{\alpha}%
\partial_{\xi}^{\beta}F_{1}^{\pm}\left(  x,\xi;E\right)  \right\vert \leq
C_{\alpha,\beta}\left(  1+\left\vert x\right\vert \right)  ^{-\rho
-1-\left\vert \alpha\right\vert }\left\vert \xi\right\vert ^{-\left\vert
\beta\right\vert }.$ Then, using Lemma \ref{eig17} it follow that $c_{1}^{\pm
}\left(  x,\xi;E\right)  $ solve equation (\ref{eig13}) and satisfy estimates
(\ref{eig20})$.$

By induction assume that (\ref{eig19}) and (\ref{eig20}) are true for $j=n-1.$
Differentiating (\ref{eig12}) it follows that $\partial_{x}^{\alpha}%
\partial_{\xi}^{\beta}b_{n}^{\pm}$ is a sum of terms of the form
\begin{equation}
\left.
\begin{array}
[c]{c}%
\partial_{\xi}^{\beta_{1}^{\prime}}\left(  \frac{\left\vert \xi\right\vert
}{2E}P_{\omega}\left(  -E\right)  \right)  \left(  \partial_{x}^{\alpha
_{1}^{\prime}}\partial_{\xi}^{\beta_{2}^{\prime}}\left(  \alpha\left(
-i\triangledown+\triangledown\Phi^{\pm}+A\right)  +V\right)  \right)
\partial_{x}^{\alpha_{2}^{\prime}}\partial_{\xi}^{\beta_{3}^{\prime}}b_{n-1}\\
+\partial_{\xi}^{\beta_{1}}\left(  \frac{\left\vert \xi\right\vert }%
{2E}P_{\omega}\left(  -E\right)  \right)  \left(  \partial_{x}^{\alpha_{1}%
}\partial_{\xi}^{\beta_{2}}\left(  \alpha\left(  -i\triangledown
+\triangledown\Phi^{\pm}+A\right)  \right)  \right)  \left(  \partial_{\xi
}^{\beta_{3}}P_{\omega}\left(  E\right)  \right)  \partial_{x}^{\alpha_{2}%
}\partial_{\xi}^{\beta_{4}}c_{n-1}%
\end{array}
\right.  \label{eig29}%
\end{equation}
with $\alpha_{1}+\alpha_{2}=\alpha,$ $\alpha_{1}^{\prime}+\alpha_{2}^{\prime
}=\alpha,$ $\sum_{j=1}^{4}\beta_{j}=\beta$ and $\sum_{j=1}^{3}\beta
_{j}^{\prime}=\beta.$ Therefore, from the hypothesis of induction, and
inequalities (\ref{eig22}) and (\ref{eig23}) we get estimates (\ref{eig19})
for $b_{n}^{\pm}.$ Similarly we see that $F_{n}$ satisfy the estimate
$\left\vert \partial_{x}^{\alpha}\partial_{\xi}^{\beta}F_{n}\left(
x,\xi;E\right)  \right\vert \leq C_{\alpha,\beta}\left(  1+\left\vert
x\right\vert \right)  ^{-\rho-n-\left\vert \alpha\right\vert }\left\vert
\xi\right\vert ^{-\left\vert \beta\right\vert }.$ Using Lemma \ref{eig17} we
conclude that $c_{n}^{\pm}\left(  x,\xi;E\right)  $ are solutions to equation
(\ref{eig13}) satisfying estimates (\ref{eig20}).
\end{proof}

Proposition \ref{eig18} implies that the solutions to equation (\ref{eig1})
are given by (\ref{eig2}). Let us define the functions%
\begin{equation}
a_{N}^{\pm}\left(  x,\xi;E\right)  :=e^{i\Phi^{\pm}\left(  x,\xi;E\right)
}w_{N}^{\pm}\left(  x,\xi;E\right)  . \label{eig33}%
\end{equation}
Note that
\begin{equation}
u_{N}^{\pm}\left(  x,\xi;E\right)  =e^{i\left\langle x,\xi\right\rangle }%
a_{N}^{\pm}\left(  x,\xi;E\right)  . \label{eig43}%
\end{equation}
Relation (\ref{eig35}) implies that
\begin{equation}
u_{N}^{\pm}P_{\omega}(E)=u_{N}^{\pm}\text{, }a_{N}^{\pm}P_{\omega}\left(
E\right)  =a_{N}^{\pm}\text{ and }r_{N}^{\pm}P_{\omega}\left(  E\right)
=r_{N}^{\pm}. \label{eig36}%
\end{equation}

We conclude this Section with the following result

\begin{theorem}
Suppose that the magnetic potential $A$ and the electric potential $V$ satisfy
the estimates (\ref{eig31}) and (\ref{eig32}), respectively. Then, for every
$\left(  x,\xi\right)  \in\Xi^{\pm}\left(  E\right)  $ the following estimates
hold%
\begin{equation}
\left\vert \partial_{x}^{\alpha}\partial_{\xi}^{\beta}w_{N}^{\pm}\left(
x,\xi;E\right)  \right\vert \leq C_{\alpha,\beta}\left(  1+\left\vert
x\right\vert \right)  ^{-\left\vert \alpha\right\vert }\left\vert
\xi\right\vert ^{-\left\vert \beta\right\vert }, \label{eig26}%
\end{equation}%
\begin{equation}
\left\vert \partial_{x}^{\alpha}\partial_{\xi}^{\beta}a_{N}^{\pm}\left(
x,\xi;E\right)  \right\vert \leq C_{\alpha,\beta}\left(  1+\left\vert
x\right\vert \right)  ^{-\left\vert \alpha\right\vert }\left\vert
\xi\right\vert ^{-\left\vert \beta\right\vert }. \label{eig27}%
\end{equation}
Moreover the remainder\ $r_{N}^{\pm}\left(  x,\xi;E\right)  $ satisfy estimate
(\ref{eig28}).
\end{theorem}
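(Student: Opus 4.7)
I plan to verify the three bounds in turn, exploiting (\ref{eig22}) for $\Phi^{\pm}$, Proposition \ref{eig18} for $b_j^{\pm},c_j^{\pm}$, and (\ref{eig23}) for $P_\omega(\pm E)$. For (\ref{eig26}), I will use the decomposition (\ref{eig42}) together with (\ref{eig8})--(\ref{eig9}) and differentiate term by term. For each $j\geq 1$, the Leibniz expansion of $\partial_x^\alpha\partial_\xi^\beta(|\xi|^{-j}b_j^{\pm})$ couples a factor $|\xi|^{-j-|\beta'|}$ (coming from the scalar) with the bound (\ref{eig19}), $C(1+|x|)^{-\rho-j+1-|\alpha|}|\xi|^{-|\beta''|}$; since $\rho>1$, $j\geq 1$, and we work in the regime where $|\xi|$ is bounded below, this is dominated by $C(1+|x|)^{-|\alpha|}|\xi|^{-|\beta|}$. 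The same argument treats $|\xi|^{-j}c_j^{\pm}$ via (\ref{eig20}), while the zeroth summand $P_\omega(E)c_0^{\pm}=P_\omega(E)$ is handled directly by (\ref{eig23}), since $\partial_x^\alpha P_\omega(E)=0$ for $|\alpha|>0$. Summing the finitely many contributions yields (\ref{eig26}).

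For (\ref{eig27}) I will apply Leibniz to $a_N^{\pm}=e^{i\Phi^{\pm}}w_N^{\pm}$. By Fa\`{a} di Bruno, $\partial_x^\alpha\partial_\xi^\beta e^{i\Phi^{\pm}}$ equals $e^{i\Phi^{\pm}}$ times a polynomial in partial derivatives of $\Phi^{\pm}$; each such derivative is bounded by (\ref{eig22}), and any product of them whose multi-indices sum to $(\alpha,\beta)$ is dominated by $C(1+|x|)^{-|\alpha|}|\xi|^{-|\beta|}$ because $\rho>1$. Combined with (\ref{eig26}) this yields (\ref{eig27}).

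The main step is (\ref{eig28}). I will write $r_N^{\pm}=P_\omega(E)r_N^{\pm}+P_\omega(-E)r_N^{\pm}$ and analyze each piece using the projected equations (\ref{eig11}) and (\ref{eig10}). In (\ref{eig10}), the term $P_\omega(E)Vb_j^{\pm}$ vanishes because $b_j^{\pm}$ lies in the range of $P_\omega(-E)$ and $V$ is scalar; since $P_\omega(E)$ commutes with $\langle\omega,-i\nabla\rangle$ and $P_\omega(E)^2=P_\omega(E)$, substituting the defining relation (\ref{eig13}) for $c_j^{\pm}$ makes every bracket in the $j$-sum vanish identically, so that $P_\omega(E)r_N^{\pm}=0$. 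For the $P_\omega(-E)$ part, setting $M_1:=P_\omega(-E)(\alpha(-i\nabla+\nabla\Phi^{\pm}+A)+V)$ and $L_2:=P_\omega(-E)\alpha(-i\nabla+\nabla\Phi^{\pm}+A)P_\omega(E)$, relation (\ref{eig12}) is equivalent to $-2Eb_{j+1}^{\pm}/|\xi|+M_1 b_j^{\pm}+L_2 c_j^{\pm}=0$ for $0\leq j\leq N-1$; telescoping the sum in (\ref{eig11}) will therefore cancel all terms of order $|\xi|^{-k}$ with $k\leq N-1$, leaving
\[
e^{-i\Phi^{\pm}}r_N^{\pm}=\frac{1}{|\xi|^N}\bigl[M_1 b_N^{\pm}+L_2 c_N^{\pm}\bigr].
\]
Applying Leibniz to this closed form, the dominant contribution will come from $-i\nabla$ falling on $b_N^{\pm}$ or $c_N^{\pm}$, yielding decay $(1+|x|)^{-\rho-N-|\alpha|}$ via (\ref{eig19})--(\ref{eig20}); the factors $\nabla\Phi^{\pm}$, $A$, $V$ only improve the spatial decay, the projectors contribute $|\xi|^{-|\beta|}$ factors via (\ref{eig23}), and the phase $e^{i\Phi^{\pm}}$ contributes only bounded quantities by the Fa\`{a} di Bruno argument of the second paragraph, proving (\ref{eig28}).

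The main obstacle will be the closed-form identification of $e^{-i\Phi^{\pm}}r_N^{\pm}$ above: I will need to carefully track the algebra of $P_\omega(\pm E)$ (especially $P_\omega(E)P_\omega(-E)=0$) to ensure that only $M_1 b_N^{\pm}$ and $L_2 c_N^{\pm}$ survive the telescoping, and to verify that $V$ enters only through $M_1$ thanks to $P_\omega(E)Vb_j^{\pm}=0$. Once this residual formula is in hand, the remaining estimates are routine Leibniz bookkeeping built on Proposition \ref{eig18} and (\ref{eig22})--(\ref{eig23}).
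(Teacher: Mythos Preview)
Your proposal is correct and follows essentially the same route as the paper. The paper simply asserts that relations (\ref{eig11}) and (\ref{eig10}) imply the closed form $r_N^{\pm}=|\xi|^{-N}e^{i\Phi^{\pm}}\bigl[M_1 b_N^{\pm}+L_2 c_N^{\pm}\bigr]$ and then differentiates; your explicit splitting into $P_\omega(E)r_N^{\pm}$ and $P_\omega(-E)r_N^{\pm}$, together with the telescoping via (\ref{eig12}) and the vanishing of the $P_\omega(E)$ piece via (\ref{eig13}), is precisely the algebra that justifies that assertion.
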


\begin{proof}
Estimate (\ref{eig26}) is a consequence of Proposition \ref{eig18}.

Note that $\partial_{x}^{\alpha}\partial_{\xi}^{\beta}a_{N}^{\pm}\left(
x,\xi;E\right)  $ is a sum of terms of the form $\left(  \partial_{x}%
^{\alpha_{1}}\partial_{\xi}^{\beta_{1}}e^{i\Phi^{\pm}\left(  x,\xi;E\right)
}\right)  \left(  \partial_{x}^{\alpha_{2}}\partial_{\xi}^{\beta_{2}}%
w_{N}^{\pm}\left(  x,\xi;E\right)  \right)  ,$ with $\alpha_{1}+\alpha
_{2}=\alpha$ and $\beta_{1}+\beta_{2}=\beta.$ Thus, relation (\ref{eig27})
follows from the estimates (\ref{eig22}) and (\ref{eig26}).

To prove the estimate (\ref{eig28}) we observe first that relations
(\ref{eig11}) and (\ref{eig10}) imply $r_{N}^{\pm}\left(  x,\xi;E\right)
=\left\vert \xi\right\vert ^{-N}e^{i\Phi^{\pm}\left(  x,\xi;E\right)
}\{P_{\omega}\left(  -E\right)  $ $\times\left(  \alpha\left(  -i\triangledown
+\triangledown\Phi^{\pm}+A\right)  +V\right)  b_{N}^{\pm}+\left(  P_{\omega
}\left(  -E\right)  \alpha\left(  -i\triangledown+\triangledown\Phi^{\pm
}+A\right)  \right)  P_{\omega}\left(  E\right)  c_{N}^{\pm}\}.$
Differentiating $r_{N}^{\pm}$ we see that $\partial_{x}^{\alpha}\partial_{\xi
}^{\beta}r_{N}^{\pm}$ is a sum of terms similar to (\ref{eig29}). Thus, using
estimates (\ref{eig22}), (\ref{eig23}), (\ref{eig19}) and (\ref{eig20}) we
obtain (\ref{eig28}).
\end{proof}

\section{Estimates for the scattering matrix kernel.}

\subsection{Statement of the results.}

In this Section we study the diagonal singularities and the high-energy
behavior of the scattering amplitude for potentials of the form (\ref{intro4})
satisfying estimates (\ref{eig31}) and (\ref{eig32}). We follow the method of
Yafaev \cite{27} and \cite{30} for the Schr\"{o}dinger operator for this
problem, that consist in defining special identifications $J_{\pm}$ and in
studying the perturbed stationary formula for the scattering matrix. In other
words, we will use the approximate solutions (\ref{eig2}) to construct
explicit functions $s_{N}\left(  \omega,\theta;E\right)  ,$ such that the
difference $s-s_{N}$ is increasingly smoother as $N\rightarrow\infty.$
Moreover, as $N\rightarrow\infty,$ the difference $s-s_{N}$ decays
increasingly faster when $E\rightarrow\infty.$

Let us announce the main result of this Section. First we prepare some results.

For an arbitrary point $\omega_{0}\in\mathbb{S}^{2}$ let $\Pi_{\omega_{0}}$ be
the plane orthogonal to $\omega_{0}$ and
\begin{equation}
\Omega_{\pm}\left(  \omega_{0},\delta\right)  :=\{\omega\in\mathbb{S}^{2}%
|\pm\left\langle \omega,\omega_{0}\right\rangle >\delta\},
\label{representation64}%
\end{equation}
for some $0<\delta<1$. For any $\omega_{j}\in\mathbb{S}^{2}$ let us define
$O_{j}^{\pm}=\Omega_{\pm}\left(  \omega_{j},\sqrt{\left(  1+\delta\right)
/2}\right)  $ and set $O_{j}:=O_{j}^{+}\cup O_{j}^{-}.$ Let us prove the
following result

\begin{lemma}
\label{representation221}Let $j$ and $k$ be such that $O_{j}\cap O_{k}%
\neq\varnothing.$ Then, if $\omega_{jk}\in O_{j}^{\pm}\cap O_{k}^{\pm},$ we
get $O_{j}^{+}\cup O_{k}^{+}\subseteq\Omega_{\pm}\left(  \omega_{jk}%
,\delta\right)  $ and $O_{j}^{-}\cup O_{k}^{-}\subseteq\Omega_{\mp}\left(
\omega_{jk},\delta\right)  .$ Moreover, if $\omega_{jk}\in O_{j}^{\pm}\cap
O_{k}^{\mp},$ we have $O_{j}^{+}\cup O_{k}^{-}\subseteq\Omega_{\pm}\left(
\omega_{jk},\delta\right)  $ and $O_{j}^{-}\cup O_{k}^{+}\subseteq\Omega_{\mp
}\left(  \omega_{jk},\delta\right)  .$
\end{lemma}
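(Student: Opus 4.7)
The plan is to reduce all four sign combinations to a single elementary inequality on the sphere. The key lemma I would establish first is the following: if $v, v' \in \mathbb{S}^2$ satisfy $\langle v, \omega_j \rangle > \sqrt{(1+\delta)/2}$ and $\langle v', \omega_j \rangle > \sqrt{(1+\delta)/2}$ for some $\omega_j \in \mathbb{S}^2$, then $\langle v, v' \rangle > \delta$. This is exactly the content needed, because membership in $O_j^{\pm}$ is the statement that the inner product with $\omega_j$ exceeds (or goes below the negative of) the threshold $\sqrt{(1+\delta)/2}$.

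To prove this auxiliary inequality I would decompose $v = a\omega_j + v_{\perp}$ and $v' = b\omega_j + v'_{\perp}$, with $v_{\perp}, v'_{\perp} \perp \omega_j$, $|v_\perp|^2 = 1-a^2$, $|v'_\perp|^2 = 1-b^2$, and $a, b > \sqrt{(1+\delta)/2}$. Then Cauchy--Schwarz yields
\begin{equation*}
\langle v, v' \rangle = ab + \langle v_\perp, v'_\perp \rangle \geq ab - \sqrt{(1-a^2)(1-b^2)} > \frac{1+\delta}{2} - \frac{1-\delta}{2} = \delta,
\end{equation*}
since $a^2, b^2 > (1+\delta)/2$ forces $1-a^2, 1-b^2 < (1-\delta)/2$. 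This is really the only analytic step in the whole lemma; once it is in hand, the rest is bookkeeping.

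The remaining step is to unfold the four sign combinations. I would observe that by definition $\omega \in O_j^{\mp}$ if and only if $-\omega \in O_j^{\pm}$, and likewise $\omega \in \Omega_{\mp}(\omega_{jk}, \delta)$ iff $-\omega \in \Omega_{\pm}(\omega_{jk}, \delta)$. For the first stated case, assume $\omega_{jk} \in O_j^+ \cap O_k^+$; for any $\omega \in O_j^+$ (resp.\ $\omega \in O_k^+$) apply the auxiliary inequality with $\omega_j$ (resp.\ $\omega_k$) playing the role of the common axis, to get $\langle \omega, \omega_{jk}\rangle > \delta$, i.e.\ $\omega \in \Omega_+(\omega_{jk},\delta)$; then for $\omega \in O_j^- \cup O_k^-$ apply the same argument to $-\omega$ to conclude $\omega \in \Omega_-(\omega_{jk},\delta)$. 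The case $\omega_{jk} \in O_j^- \cap O_k^-$ follows immediately by replacing $\omega_{jk}$ with $-\omega_{jk}$ in the previous case. For the mixed case $\omega_{jk} \in O_j^+ \cap O_k^-$, note that $-\omega_{jk} \in O_k^+$, so one applies the auxiliary inequality once around $\omega_j$ (relating $\omega \in O_j^+$ and $\omega_{jk}$) and once around $\omega_k$ (relating $\omega \in O_k^-$, hence $-\omega \in O_k^+$, to $-\omega_{jk} \in O_k^+$), which together give $O_j^+ \cup O_k^- \subseteq \Omega_+(\omega_{jk},\delta)$; negating $\omega$ gives the companion inclusion. The last case is symmetric.

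The main obstacle, such as it is, is purely notational: making sure the sign conventions line up correctly across the four cases. The analytic content is entirely contained in the one Cauchy--Schwarz estimate above, whose sharpness is what forces the particular threshold $\sqrt{(1+\delta)/2}$ in the definition of $O_j^{\pm}$.
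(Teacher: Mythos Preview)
Your proof is correct and follows essentially the same approach as the paper. Both arguments decompose the vectors along $\omega_j$ and bound the inner product of the perpendicular components; the paper writes this out with explicit unit vectors $\theta_\omega,\theta_{\omega_{jk}}$ in the orthogonal complement and the bound $\langle\theta_\omega,\theta_{\omega_{jk}}\rangle\ge -1$, while you package the identical estimate as a Cauchy--Schwarz inequality on $v_\perp,v'_\perp$, and both then reduce the remaining sign cases by the same $\omega\mapsto -\omega$ symmetries.
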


\begin{proof}
Let $\omega_{jk}\in O_{j}^{+}$ and $\omega$ be in $O_{j}^{+}.$ If $\omega
_{j}=\omega_{jk}$ or $\omega=\omega_{j}$, then $\omega$ belongs to $\Omega
_{+}\left(  \omega_{jk},\delta\right)  $. Thus, we can suppose that
$\omega_{j}\neq\omega_{jk}$ and $\omega\neq\omega_{j}.$ Let $\theta_{\omega}$
be a unit vector in the plane generated by $\omega$ and $\omega_{j},$ that is
orthogonal to $\omega_{j}:\left\langle \omega_{j},\theta_{\omega}\right\rangle
=0.$ We decompose $\omega$ as $\omega=\left\langle \omega,\omega
_{j}\right\rangle \omega_{j}+\left\langle \omega,\theta_{\omega}\right\rangle
\theta_{\omega}.$ Similarly, for $\omega_{j}\neq\omega_{jk}$ we take a unit
vector $\theta_{\omega_{jk}}$ such that $\omega_{jk}=\left\langle \omega
_{jk},\omega_{j}\right\rangle \omega_{j}+\left\langle \omega_{jk}%
,\theta_{\omega_{jk}}\right\rangle \theta_{\omega_{jk}},$ $\left\langle
\omega_{j},\theta_{\omega_{jk}}\right\rangle =0.$ Then, we have $\left\langle
\omega,\omega_{jk}\right\rangle =\left\langle \omega,\omega_{j}\right\rangle
\left\langle \omega_{jk},\omega_{j}\right\rangle +\left\langle \omega
,\theta_{\omega}\right\rangle \left\langle \omega_{jk},\theta_{\omega_{jk}%
}\right\rangle \left\langle \theta_{\omega},\theta_{\omega_{jk}}\right\rangle
.$ As $\left\vert \left\langle \omega,\omega_{j}\right\rangle \right\vert
>\sqrt{\left(  1+\delta\right)  /2},$ we get $\left\vert \left\langle
\omega,\theta_{\omega}\right\rangle \right\vert <\sqrt{1-\left(
1+\delta\right)  /2}$ and, similarly $\left\vert \left\langle \omega
_{jk},\theta_{\omega_{jk}}\right\rangle \right\vert <\sqrt{1-\left(
1+\delta\right)  /2}.$ Using this inequalities and the estimate $\left\langle
\theta_{\omega},\theta_{\omega_{jk}}\right\rangle \geq-1$, we obtain
$\left\langle \omega,\omega_{jk}\right\rangle >\left(  1+\delta\right)
/2-\left(  1-\left(  1+\delta\right)  /2\right)  =\delta,$ and then,
$\omega\in\Omega_{+}\left(  \omega_{jk},\delta\right)  .$ If $\omega$ belongs
to $O_{j}^{-},$ $\left(  -\omega\right)  \in O_{j}^{+}.$ Thus, $-\omega$
belongs to $\Omega_{+}\left(  \omega_{jk},\delta\right)  $ and hence,
$\omega\in\Omega_{-}\left(  \omega_{jk},\delta\right)  .$ If $\omega_{jk}\in
O_{j}^{-}$, then $-\omega_{jk}\in O_{j}^{+}$, that implies $O_{j}^{\pm
}\subseteq\Omega_{\pm}\left(  -\omega_{jk},\delta\right)  =\Omega_{\mp}\left(
\omega_{jk},\delta\right)  .$ Proceeding similarly for $\omega_{jk},\omega\in
O_{k},$\ we obtain the result of Lemma \ref{representation221}$.$
\end{proof}

Let us take $\{O_{j}\}_{j=1,2,...,n},$ such that for some $n,$ they are an
open cover of $\mathbb{S}^{2}$ with the following property: if $O_{j}\cap
O_{k}=\varnothing,$ then $\operatorname*{dist}\left(  O_{j},O_{k}\right)  >0.$
For every $j$ we take $\chi_{j}\left(  \omega\right)  \in C^{\infty}\left(
\mathbb{S}^{2}\right)  ,$ $\ \chi_{j}\left(  \omega\right)  =\chi_{j}\left(
-\omega\right)  ,$ such that $\sum_{j=1}^{n}\chi_{j}\left(  \omega\right)
=1,$ for any $\omega\in\mathbb{S}^{2}.$

We decompose\ $S\left(  E\right)  $ as the sum
\begin{equation}
S\left(  E\right)  =\sum_{j,k=1}^{n}\chi_{j}S\left(  E\right)  \chi_{k}.
\label{representation170}%
\end{equation}
Then the kernel $s\left(  \omega,\theta;E\right)  ~$of the scattering matrix
$S\left(  E\right)  $ decomposes as the sum
\begin{equation}
s\left(  \omega,\theta;E\right)  =\sum_{j,k=1}^{n}\chi_{j}\left(
\omega\right)  s\left(  \omega,\theta;E\right)  \chi_{k}\left(  \theta\right)
. \label{representation187}%
\end{equation}

Suppose that $O_{j}\cap O_{k}\neq\varnothing$ and let $\omega_{jk}\in
O_{j}\cap O_{k}$ be fixed. We take $\omega_{kj}=\omega_{jk}.$ Let us define
\begin{equation}
\chi_{jk}\left(  \omega,\theta\right)  :=\chi_{jk}^{+}\left(  \omega\right)
\chi_{jk}^{+}\left(  \theta\right)  -\chi_{jk}^{-}\left(  \omega\right)
\chi_{jk}^{-}\left(  \theta\right)  , \label{representation225}%
\end{equation}
where $\chi_{jk}^{\pm}\left(  \omega\right)  \in C^{\infty}\left(
\mathbb{S}^{2}\right)  $ are such that $\chi_{jk}^{\pm}\left(  \omega\right)
=\chi_{kj}^{\pm}\left(  \omega\right)  ,$ $\chi_{jk}^{\pm}\left(
\omega\right)  =1$ for $\omega\in\Omega_{\pm}\left(  \omega_{jk}%
,\delta\right)  $ and $\chi_{jk}^{\pm}\left(  \omega\right)  =0$ for
$\pm\left\langle \omega,\omega_{jk}\right\rangle <0.$ Note that $\chi
_{jk}\left(  \omega,\theta\right)  =\chi_{jk}\left(  \theta,\omega\right)  $
and $\chi_{jk}\left(  \omega,\theta\right)  =-\chi_{jk}\left(  -\omega
,-\theta\right)  $ Moreover, Lemma \ref{representation221} implies the
following properties of the function $\chi_{jk}\left(  \omega,\theta\right)
:$ if $\omega_{jk}\in O_{j}^{+}\cap O_{k}^{+},$ $\chi_{jk}\left(
\omega,\theta\right)  =\pm1$ for $\left(  \omega,\theta\right)  \in O_{j}%
^{\pm}\times O_{k}^{\pm}\subseteq\Omega_{\pm}\left(  \omega_{jk}%
,\delta\right)  \times\Omega_{\pm}\left(  \omega_{jk},\delta\right)  ,$ and
$\chi_{jk}\left(  \omega,\theta\right)  =0$ for $\left(  \omega,\theta\right)
\in O_{j}^{\pm}\times O_{k}^{\mp}\subseteq\Omega_{\pm}\left(  \omega
_{jk},\delta\right)  \times\Omega_{\mp}\left(  \omega_{jk},\delta\right)  $;
if $\omega_{jk}\in O_{j}^{-}\cap O_{k}^{-},$ $\chi_{jk}\left(  \omega
,\theta\right)  =\mp1$ for $\left(  \omega,\theta\right)  \in O_{j}^{\pm
}\times O_{k}^{\pm}\subseteq\Omega_{\mp}\left(  \omega_{jk},\delta\right)
\times\Omega_{\mp}\left(  \omega_{jk},\delta\right)  ,$ and $\chi_{jk}\left(
\omega,\theta\right)  =0$ for $\left(  \omega,\theta\right)  \in O_{j}^{\pm
}\times O_{k}^{\mp}\subseteq\Omega_{\mp}\left(  \omega_{jk},\delta\right)
\times\Omega_{\pm}\left(  \omega_{jk},\delta\right)  ;$ if $\omega_{jk}\in
O_{j}^{+}\cap O_{k}^{-},$ $\chi_{jk}\left(  \omega,\theta\right)  =\pm1$ for
$\left(  \omega,\theta\right)  \in O_{j}^{\pm}\times O_{k}^{\mp}%
\subseteq\Omega_{\pm}\left(  \omega_{jk},\delta\right)  \times\Omega_{\pm
}\left(  \omega_{jk},\delta\right)  ,$ and $\chi_{jk}\left(  \omega
,\theta\right)  =0$ for $\left(  \omega,\theta\right)  \in O_{j}^{\pm}\times
O_{k}^{\pm}\subseteq\Omega_{\pm}\left(  \omega_{jk},\delta\right)
\times\Omega_{\mp}\left(  \omega_{jk},\delta\right)  ;$ if $\omega_{jk}\in
O_{j}^{-}\cap O_{k}^{+},$ $\chi_{jk}\left(  \omega,\theta\right)  =\pm1$ for
$\left(  \omega,\theta\right)  \in O_{j}^{\mp}\times O_{k}^{\pm}%
\subseteq\Omega_{\pm}\left(  \omega_{jk},\delta\right)  \times\Omega_{\pm
}\left(  \omega_{jk},\delta\right)  ,$ and $\chi_{jk}\left(  \omega
,\theta\right)  =0$ for $\left(  \omega,\theta\right)  \in O_{j}^{\pm}\times
O_{k}^{\pm}\subseteq\Omega_{\mp}\left(  \omega_{jk},\delta\right)
\times\Omega_{\pm}\left(  \omega_{jk},\delta\right)  .$ We set
\begin{equation}
s_{N,jk}\left(  \omega,\theta;E\right)  :=\left(  2\pi\right)  ^{-2}%
\upsilon\left(  E\right)  ^{2}\chi_{jk}\left(  \omega,\theta\right)  \chi
_{j}\left(  \omega\right)  \chi_{k}\left(  \theta\right)  \int\limits_{\Pi
_{\omega_{jk}}}e^{i\nu\left(  E\right)  \left\langle y,\theta-\omega
\right\rangle }\mathbf{h}_{N,jk}\left(  y,\omega,\theta;E\right)  dy,
\label{representation26}%
\end{equation}
where
\begin{equation}
\mathbf{h}_{N,jk}\left(  y,\omega,\theta;E\right)  :=\left(
\operatorname*{sgn}E\right)  \left(  a_{N}^{+}\left(  y,\nu\left(  E\right)
\omega;E\right)  \right)  ^{\ast}\left(  \alpha\cdot\omega_{jk}\right)
\left(  a_{N}^{-}\left(  y,\nu\left(  E\right)  \theta;E\right)  \right)  ,
\label{representation27}%
\end{equation}
where $a_{N}^{\pm}\left(  x,\xi;E\right)  $ are the functions (\ref{eig33}).
The integral in (\ref{representation26}) is understood as an oscillatory integral.

\begin{rem}
\label{representation75}\rm{ Note that the operator $S_{\operatorname{pr}%
}\left(  E\right)  $ with kernel $\sum\limits_{O_{j}\cap O_{k}\neq\varnothing
}s_{N,jk}\left(  \omega,\theta;E\right)  $ is a PDO on the sphere $S^{2},$
with amplitude of the class $S^{0}.$ Indeed, let us denote by $\zeta$ the
orthogonal projection of $\omega\in\Omega_{+}\left(  \omega_{jk}%
,\delta\right)  $ on the two-dimensional plane $\Pi_{\omega_{jk}},$ and let
$\Sigma$ be the projection of $\Omega_{+}\left(  \omega_{jk},\delta\right)  $
on $\Pi_{\omega_{jk}}.$ We identify below the points $\omega\in\Omega
_{+}\left(  \omega_{jk},\delta\right)  $ and $\zeta\in\Sigma$ and for any
function $f\left(  \omega\right)  $, $\omega\in\Omega_{+}\left(  \omega
_{jk},\delta\right)  ,$ we define $\tilde{f}\left(  \zeta\right)  :=f\left(
\omega\right)  .$ If $f\left(  \omega,\theta\right)  $ is a function of two
variables $\omega,\theta\in\Omega_{+}\left(  \omega_{jk},\delta\right)  ,$
then we put $\tilde{f}\left(  \zeta,\zeta^{\prime}\right)  :=f\left(
\omega,\theta\right)  .$ We have,%
\begin{equation}
\left.
{\displaystyle\int}
s_{N,jk}\left(  \omega,\theta;E\right)  f\left(  \theta\right)  d\theta
=\left(  2\pi\right)  ^{-2}\upsilon\left(  E\right)  ^{2}%
{\displaystyle\int_{\Pi_{\omega_{jk}}}}
{\displaystyle\int_{\Pi_{\omega_{jk}}}}
e^{i\left\langle y,\zeta^{\prime}-\zeta\right\rangle }\mathbf{\tilde{h}%
}_{N,jk}^{\prime}\left(  y,\zeta,\zeta^{\prime}\right)  \tilde{f}\left(
\zeta^{\prime}\right)  d\zeta^{\prime}dy,\right.  \label{representation74}%
\end{equation}
with $\tilde{h}_{N,jk}^{\prime}\left(  y,\zeta,\zeta^{\prime}\right)
:=\frac{\tilde{\chi}_{jk}\left(  \zeta,\zeta^{\prime}\right)  \tilde{\chi}%
_{j}\left(  \zeta\right)  \tilde{\chi}_{k}\left(  \zeta^{\prime}\right)
}{\left(  1-\left\vert \zeta^{\prime}\right\vert ^{2}\right)  ^{1/2}}\tilde
{h}_{N,jk}\left(  y,\zeta,\zeta^{\prime};E\right)  .$ Note that for
$\omega,\theta\in$ $\Omega_{\pm}\left(  \omega_{jk},\delta\right)  ,$ the
functions $a_{N}^{\pm}$ satisfy (\ref{eig27}), for all $y\in\Pi_{\omega_{jk}%
}.$ Therefore, the amplitude $\tilde{h}_{N,jk}^{\prime}\left(  y,\zeta
,\zeta^{\prime};E\right)  $ of $S_{\operatorname{pr}}\left(  E\right)  $
belongs to the class $S^{0}.$}
\end{rem}

We define also the function $g_{N,jk}^{\operatorname{int}}\left(
\omega,\theta;E\right)  $ as
\begin{equation}
\mathbf{g}_{N,jk}^{\operatorname{int}}\left(  \omega,\theta;E\right)
:=s_{N,jk}\left(  \omega,\theta;E\right)  -s_{00}^{(jk)}\left(  \omega
,\theta;E\right)  , \label{representation181}%
\end{equation}
where%
\begin{equation}
\left.  s_{00}^{(jk)}\left(  \omega,\theta;E\right)  :=\left(
\operatorname*{sgn}E\right)  \left(  2\pi\right)  ^{-2}\upsilon\left(
E\right)  ^{2}\chi_{jk}\left(  \omega,\theta\right)  \chi_{j}\left(
\omega\right)  \chi_{k}\left(  \theta\right)  \int\limits_{\Pi_{\omega_{jk}}%
}e^{i\nu\left(  E\right)  \left\langle y,\theta-\omega\right\rangle }%
P_{\omega}\left(  E\right)  \left(  \alpha\cdot\omega_{jk}\right)  P_{\theta
}\left(  E\right)  dy.\right.  \label{representation182}%
\end{equation}
Proposition \ref{representation89} shows that $\sum\limits_{O_{j}\cap
O_{k}\neq\varnothing}s_{00}^{(jk)}\left(  \omega,\theta;E\right)  $ is a
Dirac-function on $\mathcal{H}\left(  E\right)  .$

We now formulate the results that we will prove in this Section. For
$\omega_{0}\in\mathbb{S}^{2}$ we introduce cut-off function $\Psi_{\pm}\left(
\omega,\theta;\omega_{0}\right)  \in C^{\infty}\left(  \mathbb{S}^{2}%
\times\mathbb{S}^{2}\right)  ,$ supported on $\Omega_{\pm}\left(  \omega
_{0},\delta\right)  \times\Omega_{\pm}\left(  \omega_{0},\delta\right)  .$
Moreover, let $\Psi_{1}\left(  \omega,\theta\right)  \in C^{\infty}\left(
\mathbb{S}^{2}\times\mathbb{S}^{2}\right)  $ be supported on $O\times
O^{\prime},$ where $O,O^{\prime}\subseteq\mathbb{S}^{2}$ are open sets such
that $\overline{O}\cap\overline{O}^{\prime}=\varnothing.$ We define
\begin{equation}
s_{\operatorname{sing}}^{(N)}\left(  \omega,\theta;E;\omega_{0}\right)
:=\pm\Psi_{\pm}\left(  \omega,\theta;\omega_{0}\right)  \left(  2\pi\right)
^{-2}\upsilon\left(  E\right)  ^{2}\int\limits_{\Pi_{\omega_{0}}}%
e^{i\nu\left(  E\right)  \left\langle y,\theta-\omega\right\rangle }%
\mathbf{h}_{N}\left(  y,\omega,\theta;E;\omega_{0}\right)  dy,
\label{representation248}%
\end{equation}
as an oscillatory integral, where
\begin{equation}
\mathbf{h}_{N}\left(  y,\omega,\theta;E;\omega_{0}\right)  :=\left(
\operatorname*{sgn}E\right)  \left(  a_{N}^{+}\left(  y,\nu\left(  E\right)
\omega;E\right)  \right)  ^{\ast}\left(  \alpha\cdot\omega_{0}\right)  \left(
a_{N}^{-}\left(  y,\nu\left(  E\right)  \theta;E\right)  \right)
\label{representation252}%
\end{equation}
and $s_{\operatorname*{reg}}\left(  \omega,\theta;E\right)  :=\Psi_{1}\left(
\omega,\theta\right)  s\left(  \omega,\theta;E\right)  .$

\begin{theorem}
\label{representation179}Let the magnetic potential $A\left(  x\right)  $ and
the electric potential $V\left(  x\right)  $ satisfy the estimates
(\ref{eig31}) and (\ref{eig32}), respectively. For any $p$ and $q,$
$s_{\operatorname*{reg}}\left(  \omega,\theta;E\right)  \ $belongs to the
class $C^{p}\left(  \mathbb{S}^{2}\times\mathbb{S}^{2}\right)  $ and its
$C^{p}$-norm is a $O\left(  E^{-q}\right)  $ function. Moreover, for any $p$
and $q$ there exists $N,$ sufficiently large, such that, $\Psi_{\pm}\left(
\omega,\theta;\omega_{0}\right)  s\left(  \omega,\theta;E\right)
-s_{\operatorname{sing}}^{(N)}\left(  \omega,\theta;E\right)  $ belongs to the
class $C^{p}\left(  \mathbb{S}^{2}\times\mathbb{S}^{2}\right)  $, and
moreover, its $C^{p}-$norm is bounded by $C\left\vert E\right\vert ^{-q}$, as
$\left\vert E\right\vert \rightarrow\infty$. These estimates are uniform in
$\omega_{0},$ in the case when the $C^{p}$-norms of the function\ $\Psi_{\pm
}\left(  \omega,\theta;\omega_{0}\right)  $ are uniformly bounded on
$\omega_{0}\in\mathbb{S}^{2}.$
\end{theorem}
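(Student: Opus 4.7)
The plan is to adapt the approach of Yafaev \cite{27},\cite{30} for the Schr\"{o}dinger equation, replacing the plane waves in the stationary formula (\ref{basicnotions14}) by the approximate generalized eigenfunctions $u_N^\pm$ constructed in Section~5. Concretely, using the partition of unity $\{\chi_j\}$ together with the cutoffs $\chi_{jk}^\pm$, I would introduce identification operators $J_{\pm,N}(E)$ that, in the sector $\Omega_\pm(\omega_{jk},\delta)$, replace the plane wave $P_\theta(E) e^{i\nu(E)\langle x,\theta\rangle}$ by $u_N^\pm(x,\nu(E)\theta;E)=e^{i\langle x,\nu(E)\theta\rangle}a_N^\pm(x,\nu(E)\theta;E)$. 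Combining this with (\ref{basicnotions14}) and the algebraic bilinear pairing $(a_N^+(y,\nu(E)\omega;E))^*(\alpha\cdot\omega_{jk})a_N^-(y,\nu(E)\theta;E)$ would produce exactly the explicit kernel (\ref{representation26}) as the principal part. By the identity $(H-E)u_N^\pm=e^{i\langle x,\xi\rangle}r_N^\pm$ together with the resolvent identity and the mapping properties of $R(E\pm i0)$, the difference between $\chi_j S(E)\chi_k$ and $s_{N,jk}$ would then be representable as an oscillatory integral whose amplitudes contain $r_N^\pm$ plus commutator terms of $H_0$ with the angular cutoffs (the latter being of lower order in $|\xi|$).

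For the regular part, I would exploit that $\overline{O}\cap\overline{O}'=\varnothing$ forces $|\omega-\theta|\geq c>0$ on $\operatorname{supp}\Psi_1$. The kernel of $\Gamma_0(E)\mathbf{V}\Gamma_0^*(E)$ contains the non-stationary phase $e^{-i\nu(E)\langle\omega-\theta,x\rangle}$, so repeated integration by parts in $x$, using the smoothness and decay of $\mathbf{V}$ provided by (\ref{eig31})--(\ref{eig32}), would produce factors of $\nu(E)^{-M}$ for arbitrary $M$, with $\omega,\theta$-derivatives absorbed by further derivatives of $\mathbf{V}$. The double-scattering term $\Gamma_0(E)\mathbf{V}R(E+i0)\mathbf{V}\Gamma_0^*(E)$ would be handled by introducing an angular partition separating the two $\mathbf{V}$-factors, applying the same non-stationary-phase argument on each piece, and controlling the resolvent through the limiting-absorption bound $\|R(E+i0)\|_{L_s^2\to\mathcal{H}^{\alpha,-s}}\leq C|E|^{|\alpha|}$ recalled in Section~2.

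For the singular part, the difference $\Psi_\pm s-s_{\operatorname{sing}}^{(N)}$ is, after the reduction sketched above, an oscillatory integral whose amplitude contains $r_N^\pm$. The sharp bound (\ref{eig28}) supplies the factor $(1+|x|)^{-\rho-N-|\alpha|}|\xi|^{-N-|\beta|}$ in this amplitude. The factor $|\xi|^{-N}=O(|E|^{-N})$ immediately yields the required $|E|^{-q}$ decay as soon as $N\geq q$; the spatial factor, integrable in $\mathbb{R}^3$ for $N$ large, gives the $C^p$-regularity in $(\omega,\theta)$ after differentiating under the integral sign, again using (\ref{eig28}) with the appropriate multi-indices. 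Remark \ref{representation75} identifies the resulting remainder as a PDO on $\mathbb{S}^2$ of arbitrarily low order, from which the $C^p$ estimate follows via Propositions \ref{basicnotions24} and \ref{representation76}. Uniformity in $\omega_0$ reduces to the fact that all the constants that enter depend only on the $C^p$-norms of $\Psi_\pm(\cdot,\cdot;\omega_0)$ and on the seminorms of $a_N^\pm$ in (\ref{eig27}), (\ref{eig28}), which are independent of $\omega_0$.

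The main obstacle will be the double-scattering contribution $\mathbf{V}R(E+i0)\mathbf{V}$ inside the remainder, where the resolvent grows as $|E|^{|\alpha|}$ under the LAP and each derivative in $x,\omega,\theta$ must be compensated. The strategy would be to iterate the resolvent identity $R=R_0-R_0\mathbf{V}R$ enough times so that the product of $\mathbf{V}$-factors combined with $r_N^\pm$-type amplitudes (assembled by the symbol calculus on $\mathbb{S}^2$ of Remark \ref{representation75}) accumulates sufficient $|\xi|^{-N}$ decay to overwhelm the polynomial $|E|$-growth produced by the resolvent, yielding net $|E|^{-q}$ decay together with $C^p$-smoothness for any prescribed $p,q$ by choosing $N$ large enough.
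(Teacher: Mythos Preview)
Your overall architecture---introduce identification operators $J_\pm$ built from $a_N^\pm$, decompose $S(E)$ into a resolvent term and an explicit oscillatory integral term, and isolate the remainders containing $r_N^\pm$---is indeed the paper's strategy. However, there is a genuine gap in how you propose to control the resolvent term.

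When you form the perturbations $T_\pm=HJ_\pm-J_\pm H_0$, their symbols split as $t_\pm=t_\pm^1+t_\pm^2$, where $t_\pm^1$ contains $r_N^\pm$ and lies in $\mathcal{S}^{-\rho-N,-N}$, but $t_\pm^2$ consists of the commutator terms with the angular cutoffs and lies only in $\mathcal{S}_\pm^{-1,0}$---it carries \emph{no} $r_N^\pm$ factor and no $|\xi|^{-N}$ decay. The operator $S_1(E)=2\pi i\,\Gamma_0(E)T_+^*R_+(E)T_-\Gamma_0^*(E)$ therefore contains the piece $T_+^{2*}R_+(E)T_-^2$, and to get $C^p$-smoothness of its kernel you need $\langle x\rangle^{p_0}\langle\nabla\rangle^{q_0}T_+^{2*}R_+(E)T_-^2\langle\nabla\rangle^{q_0}\langle x\rangle^{p_0}$ bounded uniformly in $E$ for arbitrary $p_0,q_0$. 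Neither the basic LAP bound from Section~2 nor the iteration $R=R_0-R_0\mathbf{V}R$ can deliver this: the iteration inserts factors of $\mathbf{V}$, not of $r_N^\pm$, and each $R_0$ costs a factor of $|E|$; meanwhile the weights $\langle x\rangle^{p_0}$ are unbounded and cannot be absorbed by $\mathbf{V}$ alone. The paper closes this gap with a different mechanism: the symbols $t_\pm^2\in\mathcal{S}_\pm^{-1,0}$ are microlocally supported in opposite (outgoing/incoming) phase-space regions, and one invokes the Mourre estimate for $H$ together with the multiple-commutator propagation estimates of Jensen~\cite{5} (Proposition~\ref{representation9} and Lemma~\ref{representation33}) to bound $\langle x\rangle^p\langle\nabla\rangle^q T_+^*R_+(E)T_-\langle\nabla\rangle^q\langle x\rangle^p$ uniformly in $E$. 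This microlocal resolvent estimate is the missing ingredient in your proposal; without it the term $T_+^{2*}R_+(E)T_-^2$ is uncontrolled and the argument breaks down. Your non-stationary-phase treatment of the regular part has the same defect: integration by parts in $x$ cannot be pushed through the non-local operator $R(E+i0)$, so the double-scattering contribution again reduces to the same microlocal resolvent bound.
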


Let us decompose the scattering matrix $S\left(  E\right)  $ in the sum
(\ref{representation170}). Then, taking $\Psi_{\pm}\left(  \omega
,\theta;\omega_{jk}\right)  =\chi_{jk}^{\pm}\left(  \omega\right)  \chi
_{jk}^{\pm}\left(  \theta\right)  \chi_{j}\left(  \omega\right)  \chi
_{k}\left(  \theta\right)  ,$ $\omega_{0}=\omega_{jk},$ in the definition of
$s_{\operatorname{sing}}^{(N)},$ and noting that in this case
$s_{\operatorname{sing}}^{(N)}=s_{N,jk},$ we obtain

\begin{corollary}
If $O_{j}\cap O_{k}=\varnothing,$ then for any $p$ and $q,$ $\chi_{j}\left(
\omega\right)  s\left(  \omega,\theta;E\right)  \chi_{k}\left(  \theta\right)
$ belongs to the class $C^{p}\left(  \mathbb{S}^{2}\times\mathbb{S}%
^{2}\right)  $ and its $C^{p}$-norm is a $O\left(  E^{-q}\right)  $ function.
If $O_{j}\cap O_{k}\neq\varnothing,$ then for any $p$ and $q$ there exists
$N,$ sufficiently large, such that, $\chi_{j}\left(  \omega\right)  s\chi
_{k}\left(  \theta\right)  -s_{N,jk}$ belongs to the class $C^{p}\left(
\mathbb{S}^{2}\times\mathbb{S}^{2}\right)  $, and moreover, its $C^{p}-$norm
is bounded by $C\left\vert E\right\vert ^{-q}$, as $\left\vert E\right\vert
\rightarrow\infty$.
\end{corollary}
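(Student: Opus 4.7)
The plan is to reduce both cases to Theorem \ref{representation179} by decomposing the product $\chi_{j}(\omega)\chi_{k}(\theta)$ into pieces adapted either to disjoint closed sets (to feed into the "regular" part of the theorem) or to the two hemispheres around $\omega_{jk}$ (to feed into the "singular" part).

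For the first case, the construction of the cover $\{O_{j}\}$ ensures that $O_{j}\cap O_{k}=\varnothing$ implies $\operatorname*{dist}(O_{j},O_{k})>0$, so the function $\Psi_{1}(\omega,\theta):=\chi_{j}(\omega)\chi_{k}(\theta)$ is supported in a product $O\times O^{\prime}$ with $\overline{O}\cap\overline{O}^{\prime}=\varnothing$. With this choice of $\Psi_{1}$ in Theorem \ref{representation179}, the function $s_{\operatorname*{reg}}(\omega,\theta;E)$ of the theorem is exactly $\chi_{j}(\omega)s(\omega,\theta;E)\chi_{k}(\theta)$, and the first assertion of the theorem gives the claim directly.

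For the second case, I would fix $\omega_{jk}\in O_{j}\cap O_{k}$ and set
$$\Psi_{\pm}(\omega,\theta;\omega_{jk}):=\chi_{jk}^{\pm}(\omega)\chi_{jk}^{\pm}(\theta)\chi_{j}(\omega)\chi_{k}(\theta),\qquad\Psi_{\operatorname*{reg}}:=\chi_{j}(\omega)\chi_{k}(\theta)-\Psi_{+}-\Psi_{-}.$$
By the support properties of $\chi_{jk}^{\pm}$, each $\Psi_{\pm}$ is supported in $\Omega_{\pm}(\omega_{jk},\delta)\times\Omega_{\pm}(\omega_{jk},\delta)$, which matches the hypothesis on the cutoff $\Psi_{\pm}$ in Theorem \ref{representation179}. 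The key geometric step is to show that $\Psi_{\operatorname*{reg}}$ splits into at most two summands, each supported in a product of sets with disjoint closures. Here I would do a case analysis according to whether $\omega_{jk}$ lies in $O_{j}^{+}\cap O_{k}^{+}$, $O_{j}^{-}\cap O_{k}^{-}$, $O_{j}^{+}\cap O_{k}^{-}$, or $O_{j}^{-}\cap O_{k}^{+}$, and use Lemma \ref{representation221} to conclude that on the "same-sign" products $O_{j}^{\pm}\times O_{k}^{\pm}$ one has $\chi_{jk}^{\pm}(\omega)\chi_{jk}^{\pm}(\theta)=1$ (so $\Psi_{\operatorname*{reg}}$ vanishes there), while on the "opposite-sign" products the two factors lie in strictly opposite open hemispheres relative to $\omega_{jk}$ and hence have disjoint closures. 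Splitting $\Psi_{\operatorname*{reg}}$ accordingly, each piece satisfies the hypothesis on $\Psi_{1}$ in the theorem.

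Finally, applying Theorem \ref{representation179} to the four resulting pieces and summing, I would use that with $\omega_{0}=\omega_{jk}$ the amplitude satisfies $\mathbf{h}_{N}(y,\omega,\theta;E;\omega_{jk})=\mathbf{h}_{N,jk}(y,\omega,\theta;E)$, and that the $\pm$ sign in the definition of $s_{\operatorname*{sing}}^{(N)}$ combines the two singular contributions (coming from the $\Psi_{+}$ and $\Psi_{-}$ applications of the theorem) into
$$(\Psi_{+}-\Psi_{-})(2\pi)^{-2}\upsilon(E)^{2}\int_{\Pi_{\omega_{jk}}}e^{i\nu(E)\langle y,\theta-\omega\rangle}\mathbf{h}_{N,jk}(y,\omega,\theta;E)\,dy=s_{N,jk}(\omega,\theta;E),$$
where the identity $\Psi_{+}-\Psi_{-}=\chi_{jk}(\omega,\theta)\chi_{j}(\omega)\chi_{k}(\theta)$ is exactly the definition of $\chi_{jk}$. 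The four error terms then combine into a $C^{p}(\mathbb{S}^{2}\times\mathbb{S}^{2})$ function with $C^{p}$-norm of order $O(|E|^{-q})$ for $N$ large enough. The hardest step is the geometric bookkeeping of $\Psi_{\operatorname*{reg}}$ in the preceding paragraph; once it is in hand, everything else is a direct accounting of applications of Theorem \ref{representation179}.
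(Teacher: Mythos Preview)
Your proof is correct and follows exactly the approach the paper indicates in the sentence preceding the corollary: choose $\Psi_{\pm}(\omega,\theta;\omega_{jk})=\chi_{jk}^{\pm}(\omega)\chi_{jk}^{\pm}(\theta)\chi_{j}(\omega)\chi_{k}(\theta)$ and $\omega_{0}=\omega_{jk}$, so that the sum of the two singular pieces from Theorem~\ref{representation179} is $(\Psi_{+}-\Psi_{-})\cdot[\text{integral}]=s_{N,jk}$. Your treatment of the remainder $\Psi_{\operatorname{reg}}=\chi_{j}\chi_{k}-\Psi_{+}-\Psi_{-}$ via Lemma~\ref{representation221} makes explicit a step the paper leaves tacit in this one-line derivation (the same decomposition $1=(1-\chi_{jk}^{\prime})+\chi_{jk}^{\prime}$ appears later in the proof of Lemma~\ref{representation168}); your case analysis showing that $\Psi_{\operatorname{reg}}$ is supported on the ``opposite-sign'' products $O_{j}^{\pm}\times O_{k}^{\mp}\subseteq\overline{\Omega_{\pm}(\omega_{jk},\delta)}\times\overline{\Omega_{\mp}(\omega_{jk},\delta)}$, hence on products with disjoint closures, is exactly what is needed to feed each piece into the $s_{\operatorname{reg}}$ part of Theorem~\ref{representation179}.
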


\begin{theorem}
\label{representation25}Let the magnetic potential $A\left(  x\right)  $ and
the electric potential $V\left(  x\right)  $ satisfy the estimates
(\ref{eig31}) and (\ref{eig32}), respectively.\ Then, the scattering matrix
$S\left(  E\right)  $ admits the following decomposition%
\begin{equation}
S\left(  E\right)  =I+\mathcal{G+R}\text{,} \label{representation180}%
\end{equation}
where $I$ is the identity in $\mathcal{H}\left(  E\right)  ,$ $\mathcal{G}$ is
an integral operator with kernel $\mathbf{g}_{N}\left(  \omega,\theta
;E\right)  \mathbf{:=}\sum\limits_{O_{j}\cap O_{k}\neq\varnothing}\chi
_{j}\left(  \omega\right)  \mathbf{g}_{N,jk}^{\operatorname{int}}\left(
\omega,\theta;E\right)  \chi_{k}\left(  \theta\right)  ,$ which satisfies the
estimate
\begin{equation}
\left\vert \mathbf{g}_{N}\left(  \omega,\theta;E\right)  \right\vert \leq
C\left\vert \omega-\theta\right\vert ^{-\left(  3-\rho\right)  },\text{
}\omega\neq\theta,\text{ for }\rho=\min\{\rho_{e},\rho_{m}\}<3,
\label{representation83}%
\end{equation}
and it is a continuous function of $\omega$ and $\theta,$ for $\rho>3;$ and
$\mathcal{R}$ is an integral operator with kernel $r_{N}\left(  \omega
,\theta;E\right)  $. For any $p$ and $q$ there exists $N,$ sufficiently large,
such that $r_{N}\left(  \omega,\theta;E\right)  $ belongs to the class
$C^{p}\left(  \mathbb{S}^{2}\times\mathbb{S}^{2}\right)  $, and moreover, its
$C^{p}-$norm is bounded by $C\left\vert E\right\vert ^{-q}$, as $\left\vert
E\right\vert \rightarrow\infty$.
\end{theorem}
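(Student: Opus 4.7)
The plan is to assemble the decomposition \eqref{representation180} by starting from the partition-of-unity identity \eqref{representation170}, $S(E) = \sum_{j,k=1}^n \chi_j S(E) \chi_k$, and sorting the resulting pieces according to whether $O_j \cap O_k = \varnothing$ or not. The smoothness and high-energy decay supplied by the Corollary to Theorem \ref{representation179} absorbs the ``harmless'' pieces into the remainder $\mathcal{R}$, while the explicit construction of $s_{N,jk}$ together with the splitting \eqref{representation181} produces both the identity $I$ and the singular part $\mathcal{G}$.

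More concretely, for pairs with $O_j \cap O_k = \varnothing$, the Corollary says that $\chi_j(\omega) s(\omega,\theta;E) \chi_k(\theta) \in C^p(\mathbb{S}^2 \times \mathbb{S}^2)$ with $C^p$-norm $O(|E|^{-q})$ for arbitrary $p,q$, and hence goes into $r_N$. For pairs with $O_j \cap O_k \neq \varnothing$ I would write $\chi_j s \chi_k = s_{N,jk} + (\chi_j s \chi_k - s_{N,jk})$; by the Corollary, the parenthesized difference lies in $C^p$ with norm $O(|E|^{-q})$ once $N$ is large, so it also contributes to $r_N$. Then the definition \eqref{representation181} splits $s_{N,jk} = s_{00}^{(jk)} + \mathbf{g}_{N,jk}^{\operatorname{int}}$. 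Summing over the relevant $(j,k)$ and invoking Proposition \ref{representation89}, which identifies $\sum s_{00}^{(jk)}$ with the Dirac kernel on $\mathcal{H}(E)$, i.e.\ with the identity operator $I$, the leading parts reproduce $I$ while the remaining pieces $\chi_j \mathbf{g}_{N,jk}^{\operatorname{int}} \chi_k$ assemble into the kernel $\mathbf{g}_N$ of $\mathcal{G}$.

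The heart of the argument is the estimate \eqref{representation83}. Here I would use the construction of $a_N^{\pm}$ in Section~5 (specifically the estimates \eqref{eig22}, \eqref{eig19}, \eqref{eig20} and \eqref{eig27}) to show that $a_N^{\pm}(y,\nu(E)\theta;E) - P_\theta(E)$ decays on $\Pi_{\omega_{jk}}$ at rate $(1+|y|)^{-(\rho-1)}$, with analogous bounds on the derivatives; the worst term here comes from $e^{i\Phi^{\pm}} - 1$ multiplied by $P_\theta(E)$, since $\Phi^{\pm}$ itself obeys only the $O((1+|y|)^{-(\rho-1)})$ estimate \eqref{eig22}. Subtracting the $y$-independent principal part \eqref{representation182} from the integrand \eqref{representation27} then yields $\mathbf{h}_{N,jk}(y,\omega,\theta;E) - (\operatorname{sgn} E)\, P_\omega(E)(\alpha\cdot\omega_{jk}) P_\theta(E) = O((1+|y|)^{-(\rho-1)})$ on the 2-plane $\Pi_{\omega_{jk}}$. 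The phase $\nu(E)\langle y,\theta-\omega\rangle$ sees only the projection $\pi_{jk}(\theta-\omega)$ of $\theta-\omega$ onto $\Pi_{\omega_{jk}}$, whose magnitude is comparable to $|\omega-\theta|$ on the support of $\chi_{jk}(\omega,\theta)\chi_j(\omega)\chi_k(\theta)$. A two-dimensional oscillatory-integral computation then produces the singularity $|\omega-\theta|^{-(3-\rho)}$ for $\rho<3$, and absolute integrability uniformly in $(\omega,\theta)$ for $\rho>3$, which gives continuity.

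The main obstacle will be the careful bookkeeping in this oscillatory integral needed to extract the exact singularity order $|\omega-\theta|^{-(3-\rho)}$. I plan to split the $y$-integrand with a smooth cutoff $\eta(y)$ supported in $|y|\le 1$: on the compactly supported piece, repeated integration by parts in $y$ (using non-stationarity of the phase after the $\chi_{jk}$ cutoff) gives a contribution bounded uniformly in $(\omega,\theta)$, hence smooth; on the piece with $|y|\ge 1$, after a rescaling $y \mapsto y/|\pi_{jk}(\omega-\theta)|$ the integral reduces to the Fourier transform in $\mathbb{R}^2$ of a function controlled by $|y|^{-(\rho-1)}$, which as a tempered distribution is homogeneous of degree $(\rho-1)-2 = -(3-\rho)$ for $2<\rho<3$. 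Recombining the two contributions and summing over $(j,k)$ yields \eqref{representation83}; the borderline $\rho=3$ can be recovered by an endpoint (logarithmic) refinement, and the case $\rho>3$ follows at once from absolute convergence of the $y$-integral.
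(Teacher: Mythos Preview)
Your decomposition strategy is correct and matches the paper's. The paper assembles Theorem \ref{representation25} from Corollary \ref{representation191} (the stationary formula $S(E)=S_1(E)+S_2(E)$), Theorem \ref{representation32} (smoothness of $S_1$), and Lemma \ref{representation168} (the decomposition $S_2=I+\mathcal{G}+\mathcal{R}_1$); you instead take the Corollary to Theorem \ref{representation179} as a black box and do the partition-of-unity bookkeeping directly. The two routes are equivalent, since that Corollary is itself proved from the same $S_1+S_2$ machinery. Your use of Proposition \ref{representation89} to recognize $\sum s_{00}^{(jk)}$ as the identity, and of \eqref{representation181} to peel off $\mathbf{g}_{N,jk}^{\operatorname{int}}$, is exactly what the paper does inside the proof of Lemma \ref{representation168}.

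Where your proposal diverges is in the proof of the estimate \eqref{representation83}. The paper does not rescale or invoke homogeneity of tempered distributions; it simply quotes a standard two-dimensional oscillatory-integral bound (Lemma \ref{representation201}, taken from Isozaki--Kitada): if $|\partial_y^\alpha f(y,\xi)|\le C_\alpha\langle y\rangle^{-\sigma-|\alpha|}$ with $0<\sigma<2$, then $|\int_{\mathbb{R}^2}e^{-i\langle y,\xi\rangle}f\,dy|\le C|\xi|^{-(2-\sigma)}$. Applying this with $\sigma=\rho-1$ on the plane $\Pi_{\omega_{jk}}$ and using $|\pi_{jk}(\omega-\theta)|\sim|\omega-\theta|$ on the cutoff support gives \eqref{representation83} in one stroke for the full range $1<\rho<3$. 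Your rescaling argument is morally the same idea, but as written it has a gap: after the change $y\mapsto y/\lambda$ the integrand is only \emph{controlled by} $|y|^{-(\rho-1)}$, not equal to it, so the ``homogeneous of degree $-(3-\rho)$'' claim does not apply; you still need the derivative bounds and an integration-by-parts or dyadic argument to bound the resulting integral uniformly in $\lambda$, which is precisely what Lemma \ref{representation201} packages. Your unexplained restriction to $2<\rho<3$ (the theorem covers $1<\rho<3$) is a symptom of this: once you use the derivative estimates rather than just the size bound, the full range drops out. The comment about a logarithmic refinement at $\rho=3$ is unnecessary, since the statement excludes $\rho=3$.
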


\subsection{Symmetries of the approximate kernel of the scattering matrix.}

We note that the approximate kernels $\sum\limits_{O_{j}\cap O_{k}%
\neq\varnothing}s_{N,jk}$ satisfy the symmetry relations (\ref{sym4}),
(\ref{sym5}), (\ref{representation106}), (\ref{sym6}),
\ (\ref{representation107}) and (\ref{sym10}). Below we suppose that $E>m.$
The case $E<-m$ is analogous.

Let us first show that the approximate kernels $\sum\limits_{O_{j}\cap
O_{k}\neq\varnothing}s_{N,jk}$ are invariant under the gauge transformation
$A\rightarrow A+\nabla\psi,$ for $\psi\in C^{\infty}\left(  \mathbb{R}%
^{3}\right)  $ such that $\partial^{\alpha}\psi=O\left(  \left\vert
x\right\vert ^{-\rho-\left\vert \alpha\right\vert }\right)  $ for
$0\leq\left\vert \alpha\right\vert \leq1$ and some $\rho>0$ as $\left\vert
x\right\vert \rightarrow\infty$.\ We emphasize the dependence of different
functions on $A.$ We get from (\ref{eig15}) that%
\begin{equation}
\left.  \Phi^{\pm}\left(  x,\xi;E;A+\nabla\psi\right)  =\pm\left(
\int\limits_{0}^{\infty}\left(  \frac{\left\vert E\right\vert }{\left\vert
\xi\right\vert }V\left(  x\pm t\omega\right)  +\left\langle \omega,A\left(
x\pm t\omega\right)  \right\rangle +\left\langle \omega,\nabla\psi\left(  x\pm
t\omega\right)  \right\rangle \right)  dt\right)  =\Phi^{\pm}\left(
x,\xi;E;A\right)  -\psi\left(  x\right)  .\right.  \label{sym40}%
\end{equation}
Next we show that%
\begin{equation}
b_{j}^{\pm}\left(  x,\xi;E;A+\nabla\psi\right)  =b_{j}^{\pm}\left(
x,\xi;E;A\right)  \text{,} \label{sym41}%
\end{equation}
and%
\begin{equation}
c_{j}^{\pm}\left(  x,\xi;E;A+\nabla\psi\right)  =c_{j}^{\pm}\left(
x,\xi;E;A\right)  \text{,} \label{sym42}%
\end{equation}
for $j\geq1.$ First we prove relations (\ref{sym41}) and (\ref{sym42}) for
$j=1.$ From (\ref{eig12}) and using (\ref{sym40}) we get $b_{1}^{\pm}\left(
x,\xi;E;A+\nabla\psi\right)  =\frac{\left\vert \xi\right\vert }{2E}P_{\omega
}\left(  -E\right)  \alpha\left(  \triangledown\Phi^{\pm}\left(
x,\xi;E;A\right)  -\triangledown\psi\left(  x\right)  +A+\nabla\psi\right)
P_{\omega}\left(  E\right)  =b_{1}^{\pm}\left(  x,\xi;E;A\right)  .$ Then, by
(\ref{eig41}), we have $F_{1}^{\pm}\left(  x,\xi;E;A+\nabla\psi\right)  $
$=F_{1}^{\pm}\left(  x,\xi;E;A\right)  $. Moreover, using the relation
(\ref{eig16}), we get $c_{1}^{\pm}\left(  x,\xi;E;A+\nabla\psi\right)
=\pm\int\limits_{0}^{\infty}F_{1}^{\pm}\left(  x\pm\omega t,\xi;E;A+\nabla
\psi\right)  dt=\pm\int\limits_{0}^{\infty}F_{1}^{\pm}\left(  x\pm\omega
t,\xi;E;A\right)  dt=c_{1}^{\pm}\left(  x,\xi;E;A\right)  .$ By an argument
similar to the case $j=1$ we prove relations (\ref{sym41}) and (\ref{sym42})
by induction for any $j.$ The definition (\ref{representation27}) of
$\mathbf{h}_{N,jk}$ and relations (\ref{eig33}), (\ref{eig42}), (\ref{eig8}),
(\ref{eig9}), (\ref{sym40})-(\ref{sym42}) imply $\mathbf{h}_{N,jk}\left(
y,\omega,\theta;E;A+\nabla\psi\right)  =\mathbf{h}_{N,jk}\left(
y,\omega,\theta;E;A\right)  ,$ and hence, $s_{N,jk}\left(  y,\omega
,\theta;E;A+\nabla\psi\right)  =s_{N,jk}\left(  y,\omega,\theta;E;A\right)  .$

Now we show that relation (\ref{sym4}) for $\sum\limits_{O_{j}\cap O_{k}%
\neq\varnothing}s_{N,jk}$ with an even electric potential $V$ and an odd
magnetic potential $A$ holds. From (\ref{representation26}) and the relation
$\chi_{jk}\left(  \omega,\theta\right)  \chi_{j}\left(  \omega\right)
\chi_{k}\left(  \theta\right)  =-\chi_{jk}\left(  -\omega,-\theta\right)
\chi_{j}\left(  -\omega\right)  \chi_{k}\left(  -\theta\right)  $ we get
$\beta s_{N,jk}\left(  -\omega,-\theta;E\right)  \beta=-\left(  2\pi\right)
^{-2}\upsilon\left(  E\right)  ^{2}\chi_{jk}\left(  \omega,\theta\right)
\chi_{j}\left(  \omega\right)  \chi_{k}\left(  \theta\right)  \int%
_{\Pi_{\omega_{jk}}}e^{i\nu\left(  E\right)  \left\langle y,\theta
-\omega\right\rangle }\beta\mathbf{h}_{N,jk}\left(  -y,-\omega,-\theta
;E\right)  \beta dy.$ Thus, we need to show that
\begin{equation}
\beta\mathbf{h}_{N,jk}\left(  y,\omega,\theta;E\right)  =-\mathbf{h}%
_{N,jk}\left(  -y,-\omega,-\theta;E\right)  \beta. \label{representation120}%
\end{equation}
By relation (\ref{representation27}), equation (\ref{representation120}) is
equivalent to%
\begin{equation}
\left.  \beta\left(  a_{N}^{+}\left(  y,\nu\left(  E\right)  \omega;E\right)
\right)  ^{\ast}\left(  \alpha\cdot\omega_{jk}\right)  \left(  a_{N}%
^{-}\left(  y,\nu\left(  E\right)  \theta;E\right)  \right)  =-\left(
a_{N}^{+}\left(  -y,-\nu\left(  E\right)  \omega;E\right)  \right)  ^{\ast
}\left(  \alpha\cdot\omega_{jk}\right)  \left(  a_{N}^{-}\left(
-y,-\nu\left(  E\right)  \theta;E\right)  \right)  \beta.\right.
\label{sym43}%
\end{equation}
Under the assumptions on $V$ and $A$ we get from (\ref{eig15}) that%
\begin{equation}
\left.
\begin{array}
[c]{c}%
\Phi^{\pm}\left(  x,\xi;E\right)  =\pm\left(  \int\limits_{0}^{\infty}\left(
\frac{\left\vert E\right\vert }{\left\vert \xi\right\vert }V\left(  x\pm
t\omega\right)  +\left\langle \omega,A\left(  x\pm t\omega\right)
\right\rangle \right)  dt\right) \\
=\pm\left(  \int\limits_{0}^{\infty}\left(  \frac{\left\vert E\right\vert
}{\left\vert \xi\right\vert }V\left(  -x\pm t\left(  -\omega\right)  \right)
+\left\langle \left(  -\omega\right)  ,A\left(  -x\pm t\left(  -\omega\right)
\right)  \right\rangle \right)  dt\right)  =\Phi^{\pm}\left(  -x,-\xi
;E\right)  .
\end{array}
\right.  \label{sym32}%
\end{equation}
Let us prove that%
\begin{equation}
\beta b_{j}^{\pm}\left(  x,\xi;E\right)  =b_{j}^{\pm}\left(  -x,-\xi;E\right)
\beta\text{ and }\beta c_{j}^{\pm}\left(  x,\xi;E\right)  =c_{j}^{\pm}\left(
-x,-\xi,E\right)  \beta, \label{sym34}%
\end{equation}
for $j\geq1.$ By (\ref{eig12}), for $j=1,$ we have $\beta b_{1}^{\pm}\left(
x,\xi;E\right)  =\frac{\left\vert \xi\right\vert }{2E}\beta P_{\omega}\left(
-E\right)  \alpha\left(  \triangledown\Phi^{\pm}\left(  x,\xi;E\right)
+A\right)  P_{\omega}\left(  E\right)  .$ Using that $\beta\alpha=-\alpha
\beta$, $-A\left(  x\right)  =A\left(  -x\right)  ,$ relation (\ref{sym32})
and $-\triangledown\Phi^{\pm}\left(  x,\xi;E\right)  =\left(  \triangledown
\Phi^{\pm}\right)  \left(  -x,-\xi;E\right)  $ we obtain%
\begin{equation}
\left.
\begin{array}
[c]{c}%
\beta b_{1}^{\pm}\left(  x,\xi;E\right)  =-\frac{\left\vert \xi\right\vert
}{2E}P_{-\omega}\left(  -E\right)  \alpha\left(  \triangledown\Phi^{\pm
}\left(  x,\xi;E\right)  +A\left(  x\right)  \right)  P_{-\omega}\left(
E\right)  \beta\\
=\frac{\left\vert \xi\right\vert }{2E}P_{-\omega}\left(  -E\right)
\alpha\left(  \left(  \triangledown\Phi^{\pm}\right)  \left(  -x,-\xi
;E\right)  +A\left(  -x\right)  \right)  P_{-\omega}\left(  E\right)
\beta=b_{1}^{\pm}\left(  -x,-\xi;E\right)  \beta.
\end{array}
\right.  \label{sym33}%
\end{equation}
Using (\ref{eig41}), (\ref{sym33}) and equality $i\beta\triangledown
b_{1}^{\pm}\left(  x,\xi;E\right)  =-\left(  i\triangledown b_{1}^{\pm
}\right)  \left(  -x,-\xi;E\right)  \beta,$ we obtain $\beta F_{1}^{\pm
}\left(  x,\xi;E\right)  =-i\frac{\left\vert E\right\vert }{\left\vert
\xi\right\vert }P_{-\omega}\left(  E\right)  \alpha(-i\triangledown
+\triangledown\Phi^{\pm}\left(  x,\xi;E\right)  $ $+A\left(  x\right)  )\beta
b_{1}^{\pm}\left(  x,\xi;E\right)  =F_{1}^{\pm}\left(  -x,-\xi;E\right)
\beta$ and therefore we get $\beta c_{1}^{\pm}\left(  x,\xi;E\right)  =\pm
\int\limits_{0}^{\infty}\beta F_{1}^{\pm}\left(  x\pm\omega t,\xi;E\right)
dt=\pm\int\limits_{0}^{\infty}F_{1}^{\pm}\left(  -x\pm\left(  -\omega\right)
t,-\xi;E\right)  \beta dt=c_{1}^{\pm}\left(  -x,-\xi;E\right)  \beta.$ By an
argument similar to the case $j=1$ we prove relation (\ref{sym34}) by
induction$.$ Using (\ref{sym32}), (\ref{sym34}) and equality $\beta
\alpha=-\alpha\beta$ we obtain (\ref{sym43}) and then, we get relation
(\ref{representation120}).

Let us consider now an odd electric potential $V\left(  x\right)  $ and an
even magnetic potential $A\left(  x\right)  $ and prove equality (\ref{sym5}).
From (\ref{representation26}) we get $\overline{s_{N,jk}\left(  \omega
,\theta;E\right)  }=\left(  2\pi\right)  ^{-2}\upsilon\left(  E\right)
^{2}\chi_{jk}\left(  \omega,\theta\right)  \chi_{j}\left(  \omega\right)
\chi_{k}\left(  \theta\right)  \int\limits_{\Pi_{\omega_{jk}}}e^{i\nu\left(
E\right)  \left\langle y,\theta-\omega\right\rangle }\overline{\mathbf{h}%
_{N,jk}\left(  -y,\omega,\theta;E\right)  }dy.$ Thus, we have to show that
$\alpha_{2}\left(  \overline{\mathbf{h}_{N,jk}\left(  -y,\omega,\theta
;E\right)  }\right)  =\mathbf{h}_{N,jk}\left(  y,\omega,\theta;-E\right)
\alpha_{2}.$ That is%
\begin{equation}
\left.
\begin{array}
[c]{c}%
\alpha_{2}\overline{\left(  a_{N}^{+}\left(  -y,\nu\left(  E\right)
\omega;E\right)  \right)  ^{\ast}\left(  \alpha\cdot\omega_{jk}\right)
\left(  a_{N}^{-}\left(  -y,\nu\left(  E\right)  \theta;E\right)  \right)  }\\
=-\left(  a_{N}^{+}\left(  y,\nu\left(  E\right)  \omega;-E\right)  \right)
^{\ast}\left(  \alpha\cdot\omega_{jk}\right)  \left(  a_{N}^{-}\left(
y,\nu\left(  E\right)  \theta;-E\right)  \right)  \alpha_{2}.
\end{array}
\right.  \label{sym36}%
\end{equation}
Let us show that
\begin{equation}
\alpha_{2}\left(  \overline{a_{N}^{\pm}\left(  -y,\nu\left(  E\right)
\omega;E\right)  }\right)  =a_{N}^{\pm}\left(  y,\nu\left(  E\right)
\omega;-E\right)  \alpha_{2}. \label{sym15}%
\end{equation}
For the phase functions $\Phi^{\pm}$ we have the following equality%
\begin{equation}
\left.
\begin{array}
[c]{c}%
-\Phi^{\pm}\left(  x,\xi;E\right)  =\pm\left(  -\int\limits_{0}^{\infty
}\left(  \frac{\left\vert E\right\vert }{\left\vert \xi\right\vert }V\left(
x\pm t\omega\right)  +\left\langle \omega,A\left(  x\pm t\omega\right)
\right\rangle \right)  dt\right) \\
=\pm\left(  \int\limits_{0}^{\infty}\left(  \frac{\left\vert E\right\vert
}{\left\vert \xi\right\vert }V\left(  -x\mp t\omega\right)  -\left\langle
\omega,A\left(  -x\mp t\omega\right)  \right\rangle \right)  dt\right)
=\Phi^{\pm}\left(  -x,\xi;-E\right)  .
\end{array}
\right.  \label{sym11}%
\end{equation}
Let us prove that%
\begin{equation}
\alpha_{2}\left(  \overline{b_{j}^{\pm}\left(  x,\xi;E\right)  }\right)
=b_{j}^{\pm}\left(  -x,\xi;-E\right)  \alpha_{2} \label{sym12}%
\end{equation}
and%
\begin{equation}
\alpha_{2}\left(  \overline{c_{j}^{\pm}\left(  x,\xi;E\right)  }\right)
=c_{j}^{\pm}\left(  -x,\xi;-E\right)  \alpha_{2}, \label{sym35}%
\end{equation}
for any $j\geq1.$ For $j=1,$ using that $\alpha_{2}\overline{\alpha}%
=-\alpha\alpha_{2}$, $\alpha_{2}\overline{P_{\omega}\left(  E\right)
}=P_{\omega}\left(  -E\right)  \alpha_{2},$ $A\left(  x\right)  =A\left(
-x\right)  $ and (\ref{sym11}) we obtain%
\begin{equation}
\left.
\begin{array}
[c]{c}%
\alpha_{2}\overline{b_{1}^{\pm}\left(  x,\xi;E\right)  }=-\frac{\left\vert
\xi\right\vert }{2E}P_{\omega}\left(  E\right)  \alpha\left(  \triangledown
\Phi^{\pm}\left(  x,\xi;E\right)  +A\left(  x\right)  \right)  P_{\omega
}\left(  -E\right)  \alpha_{2}\\
=-\frac{\left\vert \xi\right\vert }{2E}P_{\omega}\left(  E\right)
\alpha\left(  \left(  \triangledown\Phi^{\pm}\right)  \left(  -x,\xi
;-E\right)  +A\left(  -x\right)  \right)  P_{\omega}\left(  -E\right)
\alpha_{2}=b_{1}^{\pm}\left(  -x,\xi;-E\right)  \alpha_{2}.
\end{array}
\right.  \label{sym14}%
\end{equation}
From (\ref{sym11}) and (\ref{sym14}) we have the equality $\alpha_{2}%
\overline{F_{1}^{\pm}\left(  x,\xi;E\right)  }=i\frac{\left\vert E\right\vert
}{\left\vert \xi\right\vert }P_{\omega}\left(  -E\right)  \left(
\alpha\left(  i\triangledown+\triangledown\Phi^{\pm}\left(  x,\xi;E\right)
+A\left(  x\right)  \right)  \right)  \alpha_{2}\overline{b_{1}^{\pm}\left(
x,\xi;E\right)  }=F_{1}^{\pm}\left(  -x,\xi;-E\right)  \alpha_{2}$ and
$\alpha_{2}\overline{c_{1}^{\pm}\left(  x,\xi;E\right)  }=\pm\int%
\limits_{0}^{\infty}\alpha_{2}\overline{F_{1}^{\pm}\left(  x\pm\omega
t,\xi;E\right)  }dt=\pm\int\limits_{0}^{\infty}F_{1}^{\pm}\left(  -x\mp\omega
t,\xi;-E\right)  \alpha_{2}dt=c_{1}^{\pm}\left(  -x,\xi;-E\right)  \alpha
_{2}.$ Relation (\ref{sym12}) and (\ref{sym35}) for any $j$ can be proved
similarly by induction$.$ From (\ref{sym11}), (\ref{sym12}) and (\ref{sym35})
using the definition of $a_{N}^{\pm}$ (see (\ref{eig33})) and recalling that
$\alpha_{2}\overline{P_{\omega}\left(  E\right)  }=P_{\omega}\left(
-E\right)  \alpha_{2},$ we get (\ref{sym15}). Multiplying equation
(\ref{sym15}) on the left and on the right by $\alpha_{2}$ and taking adjoint,
we prove that
\begin{equation}
\alpha_{2}\left(  \overline{a_{N}^{\pm}\left(  -y,\nu\left(  E\right)
\omega;E\right)  }\right)  ^{\ast}=\left(  a_{N}^{\pm}\left(  y,\nu\left(
E\right)  \omega;-E\right)  \right)  ^{\ast}\alpha_{2}. \label{sym37}%
\end{equation}
By (\ref{sym15}), (\ref{sym37}) and $\alpha_{2}\overline{\alpha}=-\alpha
\alpha_{2}$ we obtain (\ref{sym36}), what proves (\ref{sym5}).

We suppose now that $A=0$ and prove the equality (\ref{representation106}) for
$\sum\limits_{O_{j}\cap O_{k}\neq\varnothing}s_{N,jk}$. Note that
\begin{equation}
\chi_{jk}\left(  \omega,\theta\right)  \chi_{j}\left(  \omega\right)  \chi
_{k}\left(  \theta\right)  +\chi_{kj}\left(  \omega,\theta\right)  \chi
_{k}\left(  \omega\right)  \chi_{j}\left(  \theta\right)  =-(\chi_{jk}\left(
-\theta,-\omega\right)  \chi_{j}\left(  -\theta\right)  \chi_{k}\left(
-\omega\right)  +\chi_{kj}\left(  -\theta,-\omega\right)  \chi_{k}\left(
-\theta\right)  \chi_{j}\left(  -\omega\right)  ). \label{sym45}%
\end{equation}
Thus, it is enough to prove that $\left(  \alpha_{1}\alpha_{3}\right)
\overline{\mathbf{h}_{N,jk}\left(  y,\omega,\theta;E\right)  }=-\left(
\mathbf{h}_{N,jk}\left(  y,-\theta,-\omega;E\right)  \right)  ^{\ast}\left(
\alpha_{1}\alpha_{3}\right)  ,$ or%
\begin{equation}
\left.
\begin{array}
[c]{c}%
\left(  \alpha_{1}\alpha_{3}\right)  \overline{\left(  a_{N}^{+}\left(
y,\nu\left(  E\right)  \omega;E\right)  \right)  ^{\ast}\left(  \alpha
\cdot\omega_{jk}\right)  \left(  a_{N}^{-}\left(  y,\nu\left(  E\right)
\theta;E\right)  \right)  }\\
=-\left(  a_{N}^{-}\left(  y,-\nu\left(  E\right)  \omega;E\right)  \right)
^{\ast}\left(  \alpha\cdot\omega_{jk}\right)  \left(  a_{N}^{+}\left(
y,-\nu\left(  E\right)  \theta;E\right)  \right)  \left(  \alpha_{1}\alpha
_{3}\right)  .
\end{array}
\right.  \label{sym20}%
\end{equation}
First of all note that%
\begin{equation}
\left.  -\Phi^{\pm}\left(  x,\xi;E\right)  =\pm\left(  -\int\limits_{0}%
^{\infty}\frac{\left\vert E\right\vert }{\left\vert \xi\right\vert }V\left(
x\pm t\omega\right)  dt\right)  =\mp\left(  \int\limits_{0}^{\infty}%
\frac{\left\vert E\right\vert }{\left\vert \xi\right\vert }V\left(  x\mp
t\left(  -\omega\right)  \right)  dt\right)  =\Phi^{\mp}\left(  x,-\xi
;E\right)  .\right.  \label{sym16}%
\end{equation}
Let us prove that%
\begin{equation}
\left(  \alpha_{1}\alpha_{3}\right)  \overline{b_{j}^{\pm}\left(
x,\xi;E\right)  }=b_{j}^{\mp}\left(  x,-\xi;E\right)  \left(  \alpha_{1}%
\alpha_{3}\right)  \label{sym18}%
\end{equation}
and that%
\begin{equation}
\left(  \alpha_{1}\alpha_{3}\right)  \overline{c_{j}^{\pm}\left(
x,\xi;E\right)  }=c_{j}^{\mp}\left(  x,-\xi;E\right)  \left(  \alpha_{1}%
\alpha_{3}\right)  , \label{sym19}%
\end{equation}
for $j\geq1.$ Consider the case $j=1.$ We have $\left(  \alpha_{1}\alpha
_{3}\right)  \overline{b_{1}^{\pm}\left(  x,\xi;E\right)  }=\frac{\left\vert
\xi\right\vert }{2E}\left(  \alpha_{1}\alpha_{3}\right)  \overline{P_{\omega
}\left(  -E\right)  \left(  \alpha\cdot\triangledown\Phi^{\pm}\left(
x,\xi;E\right)  \right)  P_{\omega}\left(  E\right)  }.$ Using that $\left(
\alpha_{1}\alpha_{3}\right)  \overline{\alpha}=-\alpha\left(  \alpha_{1}%
\alpha_{3}\right)  $, $\left(  \alpha_{1}\alpha_{3}\right)  \beta=\beta\left(
\alpha_{1}\alpha_{3}\right)  $ and relation (\ref{sym16}) we have%
\begin{equation}
\left.  \left(  \alpha_{1}\alpha_{3}\right)  \overline{b_{1}^{\pm}\left(
x,\xi;E\right)  }=\frac{\left\vert \xi\right\vert }{2E}P_{-\omega}\left(
-E\right)  \alpha\left(  \triangledown\left(  \Phi^{\mp}\left(  x,-\xi
;E\right)  \right)  \right)  P_{-\omega}\left(  E\right)  \left(  \alpha
_{1}\alpha_{3}\right)  =b_{1}^{\mp}\left(  x,-\xi;E\right)  \left(  \alpha
_{1}\alpha_{3}\right)  .\right.  \label{sym17}%
\end{equation}
From (\ref{sym17}) we get $\left(  \alpha_{1}\alpha_{3}\right)  \overline
{F_{1}^{\pm}\left(  x,\xi;E\right)  }=i\frac{\left\vert E\right\vert
}{\left\vert \xi\right\vert }P_{-\omega}\left(  E\right)  \left(
\alpha\left(  i\triangledown+\triangledown\Phi^{\pm}\left(  x,\xi;E\right)
\right)  \right)  \left(  \alpha_{1}\alpha_{3}\right)  \overline{b_{1}^{\pm
}\left(  x,\xi;E\right)  }=-F_{1}^{\mp}\left(  x,-\xi;E\right)  \left(
\alpha_{1}\alpha_{3}\right)  $ and $\left(  \alpha_{1}\alpha_{3}\right)
\overline{c_{1}^{\pm}\left(  x,\xi;E\right)  }=\pm\int\limits_{0}^{\infty
}\left(  \alpha_{1}\alpha_{3}\right)  \overline{F_{1}^{\pm}\left(  x\pm\omega
t,\xi;E\right)  }dt=\mp\int\limits_{0}^{\infty}F_{1}^{\mp}\left(  x\mp\left(
-\omega\right)  t,-\xi;E\right)  \left(  \alpha_{1}\alpha_{3}\right)
dt=c_{1}^{\mp}\left(  x,-\xi;E\right)  \left(  \alpha_{1}\alpha_{3}\right)  .$
Similarly we prove relations (\ref{sym18}) and (\ref{sym19}) for any $j.$ From
(\ref{sym16})-(\ref{sym19}), using the identity $\left(  \alpha_{1}\alpha
_{3}\right)  \overline{\alpha}=-\alpha\left(  \alpha_{1}\alpha_{3}\right)  $
we obtain (\ref{sym20}).

Let us prove (\ref{sym6}). Suppose that $\mathbf{V}$ is even. From
(\ref{representation26}) we get $\overline{s_{N,jk}\left(  \omega
,\theta;E\right)  }=\left(  2\pi\right)  ^{-2}\upsilon\left(  E\right)
^{2}\chi_{jk}\left(  \omega,\theta\right)  \chi_{j}\left(  \omega\right)
\chi_{k}\left(  \theta\right)  \int\limits_{\Pi_{\omega_{jk}}}$ $\times
e^{i\nu\left(  E\right)  \left\langle y,\theta-\omega\right\rangle }%
\overline{\mathbf{h}_{N,jk}\left(  -y,\omega,\theta;E\right)  }dy.$ Moreover,
note that
\begin{equation}
\chi_{jk}\left(  \omega,\theta\right)  \chi_{j}\left(  \omega\right)  \chi
_{k}\left(  \theta\right)  +\chi_{kj}\left(  \omega,\theta\right)  \chi
_{k}\left(  \omega\right)  \chi_{j}\left(  \theta\right)  =\chi_{jk}\left(
\theta,\omega\right)  \chi_{j}\left(  \theta\right)  \chi_{k}\left(
\omega\right)  +\chi_{kj}\left(  \theta,\omega\right)  \chi_{k}\left(
\theta\right)  \chi_{j}\left(  \omega\right)  ). \label{sym44}%
\end{equation}
Thus, in order to prove relation (\ref{sym6}) we need to show that%
\begin{equation}
\left(  \alpha_{1}\alpha_{3}\beta\right)  \overline{\mathbf{h}_{N,jk}\left(
-y,\omega,\theta;E\right)  }=\left(  \mathbf{h}_{N,jk}\left(  y,\theta
,\omega;E\right)  \right)  ^{\ast}\left(  \alpha_{1}\alpha_{3}\beta\right)  ,
\label{sym24}%
\end{equation}
which follows from%
\begin{equation}
\left.
\begin{array}
[c]{c}%
\left(  \alpha_{1}\alpha_{3}\beta\right)  \overline{\left(  a_{N}^{+}\left(
-y,\nu\left(  E\right)  \omega;E\right)  \right)  ^{\ast}\left(  \alpha
\cdot\omega_{jk}\right)  \left(  a_{N}^{-}\left(  -y,\nu\left(  E\right)
\theta;E\right)  \right)  }\\
=\left(  a_{N}^{-}\left(  y,\nu\left(  E\right)  \omega;E\right)  \right)
^{\ast}\left(  \alpha\cdot\omega_{jk}\right)  \left(  a_{N}^{+}\left(
y,\nu\left(  E\right)  \theta;E\right)  \right)  \left(  \alpha_{1}\alpha
_{3}\beta\right)  .
\end{array}
\right.  \label{sym25}%
\end{equation}
Note that%
\begin{equation}
\left.
\begin{array}
[c]{c}%
-\Phi^{\pm}\left(  x,\xi;E\right)  =\pm\left(  -\int\limits_{0}^{\infty
}\left(  \frac{\left\vert E\right\vert }{\left\vert \xi\right\vert }V\left(
x\pm t\omega\right)  +\left\langle \omega,A\left(  x\pm t\omega\right)
\right\rangle \right)  dt\right) \\
=\mp\left(  \int\limits_{0}^{\infty}\left(  \frac{\left\vert E\right\vert
}{\left\vert \xi\right\vert }V\left(  -x\mp t\omega\right)  +\left\langle
\omega,A\left(  -x\mp t\omega\right)  \right\rangle \right)  dt\right)
=\Phi^{\mp}\left(  -x,\xi;E\right)  .
\end{array}
\right.  \label{sym21}%
\end{equation}
For $j=1,$ using that $\left(  \alpha_{1}\alpha_{3}\beta\right)
\overline{\alpha}=\alpha\left(  \alpha_{1}\alpha_{3}\beta\right)  $ and
(\ref{sym21}) we have $\left(  \alpha_{1}\alpha_{3}\beta\right)
\overline{b_{1}^{\pm}\left(  x,\xi;E\right)  }=\frac{\left\vert \xi\right\vert
}{2E}P_{\omega}\left(  -E\right)  \alpha(\left(  \left(  \triangledown
\Phi^{\mp}\right)  \left(  -x,\xi;E\right)  \right)  $ $+A\left(  -x\right)
)P_{\omega}\left(  E\right)  \left(  \alpha_{1}\alpha_{3}\beta\right)
=b_{1}^{\mp}\left(  -x,\xi;E\right)  \left(  \alpha_{1}\alpha_{3}\beta\right)
,$ and $\left(  \alpha_{1}\alpha_{3}\beta\right)  \overline{F_{1}^{\pm}\left(
x,\xi;E\right)  }=-i\frac{\left\vert E\right\vert }{\left\vert \xi\right\vert
}P_{\omega}\left(  E\right)  \left(  \alpha\left(  i\triangledown
+\triangledown\Phi^{\pm}\left(  x,\xi;E\right)  \right)  +A\right)  $
$\times\left(  \alpha_{1}\alpha_{3}\beta\right)  \overline{b_{1}^{\pm}\left(
x,\xi;E\right)  }=-F_{1}^{\mp}\left(  -x,\xi;E\right)  \left(  \alpha
_{1}\alpha_{3}\beta\right)  .$ Therefore, we get $\left(  \alpha_{1}\alpha
_{3}\beta\right)  \overline{c_{1}^{\pm}\left(  x,\xi;E\right)  }=\pm\int%
_{0}^{\infty}\left(  \alpha_{1}\alpha_{3}\beta\right)  \overline{F_{1}^{\pm
}\left(  x\pm\omega t,\xi;E\right)  }$ $=\mp\int_{0}^{\infty}F_{1}^{\mp
}\left(  -x\mp\omega t,\xi;E\right)  \left(  \alpha_{1}\alpha_{3}\beta\right)
dt=c_{1}^{\mp}\left(  -x,\xi;E\right)  \left(  \alpha_{1}\alpha_{3}%
\beta\right)  .$ Then, by induction in $j$ we obtain%
\begin{equation}
\left(  \alpha_{1}\alpha_{3}\beta\right)  \overline{b_{j}^{\pm}\left(
x,\xi;E\right)  }=b_{j}^{\mp}\left(  -x,\xi;E\right)  \left(  \alpha_{1}%
\alpha_{3}\beta\right)  , \label{sym22}%
\end{equation}
and%
\begin{equation}
\left(  \alpha_{1}\alpha_{3}\beta\right)  \overline{c_{j}^{\pm}\left(
x,\xi;E\right)  }=c_{j}^{\mp}\left(  -x,\xi;E\right)  \left(  \alpha_{1}%
\alpha_{3}\beta\right)  , \label{sym23}%
\end{equation}
for any $j\geq1.$ As before, relations (\ref{sym21}), (\ref{sym22}) and
(\ref{sym23}) imply (\ref{sym25}).

Now suppose that $V$ is equal to zero and prove relation
(\ref{representation107}). As relation (\ref{sym44}) holds, we have to show
that $\gamma\mathbf{h}_{N,jk}\left(  y,\omega,\theta;E\right)  =\left(
\mathbf{h}_{N,jk}\left(  y,\theta,\omega;-E\right)  \right)  ^{\ast}\gamma,$
or, which is the same%
\begin{equation}
\left.
\begin{array}
[c]{c}%
\gamma\left(  a_{N}^{+}\left(  y,\nu\left(  E\right)  \omega;E\right)
\right)  ^{\ast}\left(  \alpha\cdot\omega_{jk}\right)  \left(  a_{N}%
^{-}\left(  y,\nu\left(  E\right)  \theta;E\right)  \right) \\
=-\left(  a_{N}^{-}\left(  y,\nu\left(  E\right)  \omega;-E\right)  \right)
^{\ast}\left(  \alpha\cdot\omega_{jk}\right)  \left(  a_{N}^{+}\left(
y,\nu\left(  E\right)  \theta;-E\right)  \right)  \gamma.
\end{array}
\right.  \label{sym27}%
\end{equation}
Noting that $\gamma\alpha=-\alpha\gamma,$ $\gamma\beta=-\beta\gamma$ and%
\begin{equation}
\left.  \Phi^{\pm}\left(  x,\xi;E\right)  =\pm\left(  \int\limits_{0}^{\infty
}\left\langle \omega,A\left(  x\pm t\omega\right)  \right\rangle dt\right)
=\mp\left(  -\int\limits_{0}^{\infty}\left\langle \omega,A\left(  x\pm
t\omega\right)  \right\rangle dt\right)  =\Phi^{\mp}\left(  x,\xi;-E\right)
,\right.  \label{sym26}%
\end{equation}
we have $\gamma b_{1}^{\pm}\left(  x,\xi;E\right)  =-\frac{\left\vert
\xi\right\vert }{2E}P_{\omega}\left(  E\right)  \alpha\left(  \left(
\triangledown\Phi^{\mp}\left(  x,\xi;-E\right)  \right)  +A\right)  P_{\omega
}\left(  -E\right)  \gamma=b_{1}^{\mp}\left(  x,\xi;-E\right)  \gamma.$ It
follows that $\gamma F_{1}^{\pm}\left(  x,\xi;E\right)  =-i\frac{\left\vert
E\right\vert }{\left\vert \xi\right\vert }P_{\omega}\left(  -E\right)  \left(
\alpha\left(  -i\triangledown+\triangledown\Phi^{\pm}\left(  x,\xi;E\right)
\right)  +A\right)  \gamma b_{1}^{\pm}\left(  x,\xi;E\right)  =-F_{1}^{\mp
}\left(  x,\xi;-E\right)  \gamma\ $and $\gamma c_{1}^{\pm}\left(
x,\xi;E\right)  =\pm\int\limits_{0}^{\infty}\gamma F_{1}^{\pm}\left(
x\pm\omega t,\xi;E\right)  dt$ $=\mp\int\limits_{0}^{\infty}F_{1}^{\mp}\left(
x\pm\omega t,\xi;-E\right)  \gamma dt=c_{1}^{\mp}\left(  x,\xi;-E\right)
\gamma.$ By induction in $j$ we get the equalities $\gamma b_{j}^{\pm}\left(
x,\xi;E\right)  =b_{j}^{\mp}\left(  x,\xi;-E\right)  \gamma,$ and $\gamma
c_{j}^{\pm}\left(  x,\xi;E\right)  =c_{j}^{\mp}\left(  x,\xi;-E\right)
\gamma,$ with $j\geq1.$ These two relations, together with (\ref{sym26}) and
identities $\gamma\alpha=-\alpha\gamma,$ $\gamma\beta=-\beta\gamma$ imply
equality (\ref{sym27}).

Finally, we consider an odd function $\mathbf{V.}$ From
(\ref{representation26}) we get $\left(  s_{N,jk}\left(  -\theta
,-\omega;-E\right)  \right)  ^{\ast}=\left(  2\pi\right)  ^{-2}\chi
_{jk}\left(  -\theta,-\omega\right)  \chi_{j}\left(  -\theta\right)  \chi
_{k}\left(  -\omega\right)  $ $\times\int_{\Pi_{\omega_{jk}}}e^{i\nu\left(
E\right)  \left\langle y,\theta-\omega\right\rangle }\left(  \mathbf{h}%
_{N,jk}\left(  -y,-\theta,-\omega;-E\right)  \right)  ^{\ast}dy.$ Then, by
(\ref{sym45}), relation (\ref{sym10}) for $\sum\limits_{O_{j}\cap O_{k}%
\neq\varnothing}s_{N,jk}$ follows from
\begin{equation}
\left.
\begin{array}
[c]{c}%
\gamma\beta\left(  a_{N}^{+}\left(  y,\nu\left(  E\right)  \omega;E\right)
\right)  ^{\ast}\left(  \alpha\cdot\omega_{jk}\right)  \left(  a_{N}%
^{-}\left(  y,\nu\left(  E\right)  \theta;E\right)  \right) \\
=\left(  a_{N}^{-}\left(  -y,-\nu\left(  E\right)  \omega;-E\right)  \right)
^{\ast}\left(  \alpha\cdot\omega_{jk}\right)  \left(  a_{N}^{+}\left(
-y,-\nu\left(  E\right)  \theta;-E\right)  \right)  \gamma\beta.
\end{array}
\right.  \label{sym31}%
\end{equation}
Note that%
\begin{equation}
\left.
\begin{array}
[c]{c}%
\Phi^{\pm}\left(  x,\xi;E\right)  =\pm\left(  \int\limits_{0}^{\infty}\left(
\frac{\left\vert E\right\vert }{\left\vert \xi\right\vert }V\left(  x\pm
t\omega\right)  +\left\langle \omega,A\left(  x\pm t\omega\right)
\right\rangle \right)  dt\right) \\
=\mp\left(  \int\limits_{0}^{\infty}\left(  \frac{\left\vert E\right\vert
}{\left\vert \xi\right\vert }V\left(  -x\pm t\left(  -\omega\right)  \right)
-\left\langle -\omega,A\left(  -x\pm t\left(  -\omega\right)  \right)
\right\rangle \right)  dt\right)  =\Phi^{\mp}\left(  -x,-\xi;-E\right)  .
\end{array}
\right.  \label{sym28}%
\end{equation}
Then, we have%
\begin{equation}
\left.  \gamma\beta b_{1}^{\pm}\left(  x,\xi;E\right)  =-\frac{\left\vert
\xi\right\vert }{2E}P_{-\omega}\left(  E\right)  \alpha\left(  \left(  \left(
\triangledown\Phi^{\mp}\right)  \left(  -x,-\xi;-E\right)  \right)  +A\left(
-x\right)  \right)  P_{-\omega}\left(  -E\right)  \gamma\beta=b_{1}^{\mp
}\left(  -x,-\xi;-E\right)  \gamma\beta.\right.  \label{sym29}%
\end{equation}
Using relations (\ref{sym28}) and (\ref{sym29}) we get $\gamma\beta F_{1}%
^{\pm}\left(  x,\xi;E\right)  =\left(  i\frac{\left\vert E\right\vert
}{\left\vert \xi\right\vert }P_{-\omega}\left(  -E\right)  \left(
\alpha\left(  -i\triangledown+\triangledown\Phi^{\pm}\left(  x,\xi;E\right)
\right)  +A\right)  \right)  \gamma\beta b_{1}^{\pm}\left(  x,\xi;E\right)
=-F_{1}^{\mp}\left(  -x,-\xi;-E\right)  \gamma\beta.$ Thus, we have
$\gamma\beta c_{1}^{\pm}\left(  x,\xi;E\right)  =\pm\int\limits_{0}^{\infty
}\gamma\beta F_{1}^{\pm}\left(  x\pm\omega t,\xi;E\right)  dt$ $=\mp
\int\limits_{0}^{\infty}F_{1}^{\mp}\left(  -x\pm\left(  -\omega\right)
t,-\xi;-E\right)  \gamma\beta dt=c_{1}^{\mp}\left(  -x,-\xi;-E\right)
\gamma\beta.$ By induction in $j$ we obtain%
\begin{equation}
\gamma\beta b_{j}^{\pm}\left(  x,\xi;E\right)  =b_{j}^{\mp}\left(
-x,-\xi;-E\right)  \gamma\beta\text{,} \label{sym30}%
\end{equation}
and%
\begin{equation}
\gamma\beta c_{j}^{\pm}\left(  x,\xi;E\right)  =c_{j}^{\mp}\left(
-x,-\xi;-E\right)  \gamma\beta, \label{sym39}%
\end{equation}
with $j\geq1.$ Using (\ref{sym28}), (\ref{sym30}) and (\ref{sym39}) we get
equality (\ref{sym31}).

\subsection{The identification operators.}

The proofs of Theorems \ref{representation179} and \ref{representation25} are
based in a stationary formula for the scattering matrix $S\left(  E\right)  $.

In the general case, where $H_{0}$ and $H$ are self-adjoint operators in
different Hilbert spaces $\mathcal{H}_{0}$ and $\mathcal{H}$ respectively, for
$\Lambda\subset\sigma_{ac}\left(  H_{0}\right)  ,$ the wave operators are
defined by the relation
\begin{equation}
W_{\pm}\left(  H,H_{0};J;\Lambda\right)  :=s-\lim_{t\rightarrow\pm\infty
}e^{itH}Je^{-itH_{0}}E_{0}\left(  \Lambda\right)  , \label{representation35}%
\end{equation}
where $J$ is a bounded identification operator between the spaces
$\mathcal{H}_{0}$ and $\mathcal{H},$ and $E_{0}\left(  \Lambda\right)  $ is
the resolution of the identity for $H_{0}.$ When $\Lambda=\sigma_{ac}\left(
H_{0}\right)  $ we write $W_{\pm}\left(  H,H_{0};J\right)  $ instead of
$W_{\pm}\left(  H,H_{0};J;\Lambda\right)  .$ In our case, $\sigma_{ac}\left(
H_{0}\right)  =\sigma\left(  H_{0}\right)  ,$ the spaces $\mathcal{H}_{0}$ and
$\mathcal{H}$ coincide and the wave operators $W_{\pm}\left(  H,H_{0}\right)
=W_{\pm}\left(  H,H_{0};I\right)  $ ($I$ is the identity operator) exist and
are complete (\cite{9}, \cite{14}, \cite{33}).\ Thus, there is no need to
consider an identification operator $J.$ However, it is convenient for us to
introduce special identifications $J_{\pm}$ and to study the scattering matrix
$\tilde{S}\left(  E\right)  $ associated to the wave operators $W_{\pm}\left(
H,H_{0};J;\Lambda\right)  .$ We have to construct $J_{\pm}$ in such way that
for a given $E,$ $\tilde{S}\left(  E\right)  =S\left(  E\right)  .$

We begin by defining the identifications $J_{\pm}=J_{\pm}^{\left(  N\right)
}.$ We take $\varepsilon_{0}$ and $R$ as in Definition \ref{eig40}. Let
$\varepsilon>0$ be such that
\begin{equation}
\sqrt{1-\delta^{2}}<\varepsilon<1-\varepsilon_{0}, \label{representation63}%
\end{equation}
where $\delta$ is given in the definition of the sets $\Omega_{\pm}\left(
\omega_{0},\delta\right)  $ (see (\ref{representation64})). Let $\sigma_{+}\in
C^{\infty}\left[  -1,1\right]  ,$ be such that $\sigma_{+}\left(  \tau\right)
=1$ if $\tau\in(-\varepsilon,1]$ and $\sigma_{+}\left(  \tau\right)  =0$ if
$\tau\in\lbrack-1,-1+\varepsilon_{0}].$ We take $\sigma_{-}\left(
\tau\right)  =\sigma_{+}\left(  -\tau\right)  .$ Now let $\eta\in C^{\infty
}\left(  \mathbb{R}^{3}\right)  ,$ $0\leq\eta\leq1,$ be such that $\eta\left(
x\right)  =0$ in a neighborhood of zero and $\eta\left(  x\right)  =1$ for
$\left\vert x\right\vert \geq R$. Let the function $\theta\left(  t\right)
\in C^{\infty}\left(  \mathbb{R}_{+}\right)  $ be equal to zero if $t\leq c$
and equal to $1$ for $t\geq c_{1},$ with some $0<c<c_{1}<\nu\left(  E\right)
$. \ Finally, we define $\zeta_{\pm}^{+}\left(  x,\xi\right)  \left.
:=\right.  \sigma_{\pm}\left(  \eta\left(  x\right)  \left\langle \hat{x}%
,\hat{\xi}\right\rangle \right)  \theta\left(  \left\vert \xi\right\vert
\right)  ,$ and $\zeta_{\pm}^{-}\left(  x,\xi\right)  \left.  :=\right.
\sigma_{\mp}\left(  \eta\left(  x\right)  \left\langle \hat{x},\hat{\xi
}\right\rangle \right)  \theta\left(  \left\vert \xi\right\vert \right)  .$
Note that $\zeta_{\pm}^{+}$ is supported on $\Xi^{\pm}\left(  E\right)  ,$ for
$E>m$ and $\zeta_{\pm}^{-}$ is supported on $\Xi^{\pm}\left(  E\right)  ,$ for
$E<-m.$

We define the identifications $J_{\pm}=J_{\pm}^{\left(  N\right)  }$ as the
PDO's
\begin{equation}
\left(  J_{\pm}f\right)  \left(  x\right)  :=\left(  2\pi\right)  ^{-3/2}%
\int_{\mathbb{R}^{3}}e^{i\left\langle x,\xi\right\rangle }j_{N}^{\pm}\left(
x,\xi\right)  \hat{f}\left(  \xi\right)  d\xi, \label{representation15}%
\end{equation}
where $j_{N}^{\pm}\left(  x,\xi\right)  :=a_{N}^{\pm}\left(  x,\xi;\left\vert
\lambda\left(  \xi\right)  \right\vert \right)  \zeta_{\pm}^{+}\left(
x,\xi\right)  +a_{N}^{\pm}\left(  x,\xi;-\left\vert \lambda\left(  \xi\right)
\right\vert \right)  \zeta_{\pm}^{-}\left(  x,\xi\right)  ,$ $\lambda\left(
\xi\right)  =\lambda\left(  \xi;E\right)  :=\left(  \operatorname*{sgn}%
E\right)  \sqrt{\left\vert \xi\right\vert ^{2}+m^{2}}$ and the functions
$a_{N}^{\pm}\left(  x,\xi;E\right)  $ are given by (\ref{eig33}). As
$a_{N}^{\pm}\left(  x,\xi;E\right)  $ satisfies the estimate (\ref{eig27})
on\ $\Xi^{\pm}\left(  E\right)  $, then $j_{N}^{\pm}\left(  x,\xi\right)
\in\mathit{S}^{0,0}$. Thus, using Proposition \ref{basicnotions24} we see that
$J_{\pm}$ are bounded. It follows from relation (\ref{eig36}) that
\begin{equation}
j_{N}^{\pm}\left(  x,\xi\right)  =a_{N}^{\pm}\left(  x,\xi;\left\vert
\lambda\left(  \xi\right)  \right\vert \right)  P^{+}\left(  \xi\right)
\zeta_{\pm}^{+}\left(  x,\xi\right)  +a_{N}^{\pm}\left(  x,\xi;-\left\vert
\lambda\left(  \xi\right)  \right\vert \right)  P^{-}\left(  \xi\right)
\zeta_{\pm}^{-}\left(  x,\xi\right)  . \label{representation133}%
\end{equation}

\begin{rem}
\rm{ We take $\zeta_{\pm}^{+}$ for the projector $P^{+}\left(
\xi\right)  $ on the positive energies $E$ and $\zeta_{\pm}^{-}$ for the
projector $P^{-}\left(  \xi\right)  $ \ on the negative energies $E$, in order
to assure that $J_{\pm}$ correspond to $W_{\pm}.$ This also explains the
different definitions of $\Phi^{\pm}$ and $c_{j}^{\pm}$ for $E>m$ and $-E>m.$}
\end{rem}

The following result is analogous to Lemma 1.1 of \cite{72} for the
Schr\"{o}dinger operator, that can be proved by using a stationary phase argument

\begin{lemma}
\label{representation56}Let $A_{\pm}^{+}$ and $A_{\pm}^{-}$ be PDO operators
with symbols $a_{\pm}^{+}\left(  x,\xi\right)  ,a_{\pm}^{-}\left(
x,\xi\right)  \in\mathit{S}^{0,0},$ respectively, satisfying
\begin{equation}
a_{\pm}^{+}\left(  x,\xi\right)  =0\text{ if }\pm\left\langle \hat{x},\hat
{\xi}\right\rangle \leq-1+\varepsilon_{0},\text{ and }a_{\pm}^{-}\left(
x,\xi\right)  =0\text{ if }\mp\left\langle \hat{x},\hat{\xi}\right\rangle
\leq-1+\varepsilon_{0},\text{ }\varepsilon_{0}>0, \label{representation58}%
\end{equation}
for $\left\vert x\right\vert \geq R>0,$ and%
\begin{equation}
\text{ }a_{\pm}^{+}\left(  x,\xi\right)  =a_{\pm}^{-}\left(  x,\xi\right)
=0\text{ for }\left\vert \xi\right\vert \leq c, \label{representation122}%
\end{equation}
for some $c>0.$ Moreover, suppose that
\begin{equation}
a_{\pm}^{+}\left(  x,\xi\right)  P^{+}\left(  \xi\right)  =a_{\pm}^{+}\left(
x,\xi\right)  \text{ and }a_{\pm}^{-}\left(  x,\xi\right)  P^{-}\left(
\xi\right)  =a_{\pm}^{-}\left(  x,\xi\right)  . \label{representation59}%
\end{equation}
Then, for any $f\in\mathcal{S}\left(  \mathbb{R}^{3};\mathbb{C}^{4}\right)  $
and any $N$ there is a constant $C_{N,f}$ such that
\begin{equation}
\left.  \left\Vert A_{\pm}e^{-itH_{0}}f\right\Vert \leq C_{N,f}\left(
1+\left\vert t\right\vert \right)  ^{-N},\text{ }\mp t>0,\right.
\label{representation57}%
\end{equation}
where $A_{\pm}$ is either $A_{\pm}^{+}$ or $A_{\pm}^{-}.$
\end{lemma}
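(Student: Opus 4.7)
The plan is to prove the bound by a non-stationary phase argument in the fiber variable $\xi$. First, I would diagonalize the free propagator in the Fourier representation: since $h_0(\xi)=\alpha\cdot\xi+m\beta$ has spectral decomposition $e^{-ith_0(\xi)}=e^{-it\lambda(\xi)}P^+(\xi)+e^{it\lambda(\xi)}P^-(\xi)$ with $\lambda(\xi)=\sqrt{|\xi|^2+m^2}$, the orthogonality hypotheses in (\ref{representation59}) kill the cross terms and collapse each of the four operators $A_\pm^\epsilon e^{-itH_0}f$ (with $\epsilon\in\{+,-\}$) into a single oscillatory integral
\begin{equation*}
(A_\pm^\epsilon e^{-itH_0}f)(x)=(2\pi)^{-3/2}\int_{\mathbb{R}^3} e^{i\phi_\epsilon(x,\xi,t)}\,a_\pm^\epsilon(x,\xi)\,\hat f(\xi)\,d\xi,\qquad \phi_\epsilon(x,\xi,t):=\langle x,\xi\rangle-\epsilon t\lambda(\xi).
\end{equation*}

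Second, I would establish the crucial lower bound $|\nabla_\xi\phi_\epsilon(x,\xi,t)|\ge c_1(1+|x|+|t|)$ on the support of the amplitude whenever $\mp t>0$ and $|t|$ is sufficiently large. Writing $\mu(\xi)=|\xi|/\lambda(\xi)$ and computing $\nabla_\xi\phi_\epsilon=x-\epsilon t\mu(\xi)\hat\xi$, the elementary identity
\begin{equation*}
|\nabla_\xi\phi_\epsilon|^2=(|x|-|t|\mu)^2+2|x||t|\mu\bigl(1-\epsilon\,\mathrm{sgn}(t)\langle\hat x,\hat\xi\rangle\bigr)
\end{equation*}
together with a case-by-case check shows that, in each of the four combinations consistent with $\mp t>0$, the cone hypothesis (\ref{representation58}) produces $1-\epsilon\,\mathrm{sgn}(t)\langle\hat x,\hat\xi\rangle\ge\varepsilon_0$ on $\{|x|\ge R\}\cap\mathrm{supp}\,a_\pm^\epsilon$. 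Splitting into the regimes $|x|\le|t|\mu/2$, $|x|\ge 2|t|\mu$, and $|t|\mu/2\le|x|\le 2|t|\mu$ yields the claimed lower bound, where the bound $\mu(\xi)\ge c/\sqrt{c^2+m^2}>0$ from (\ref{representation122}) guarantees uniformity in $\xi$. The region $|x|<R$ is handled at once since there $|\nabla_\xi\phi_\epsilon|\ge|t|\mu-R\ge c_1|t|$ for $|t|$ large.

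Finally, I would run the standard non-stationary phase iteration. The operator $L=-i|\nabla_\xi\phi_\epsilon|^{-2}\langle\nabla_\xi\phi_\epsilon,\nabla_\xi\rangle$ satisfies $Le^{i\phi_\epsilon}=e^{i\phi_\epsilon}$, and $N$-fold integration by parts replaces the amplitude by $(L^t)^N[a_\pm^\epsilon(x,\xi)\hat f(\xi)]$. Since $|\nabla_\xi^2\phi_\epsilon|\le C|t|\le C(1+|x|+|t|)$, each application of $L^t$ produces a factor $(1+|x|+|t|)^{-1}$; combining this with the symbol estimates for $a_\pm^\epsilon\in S^{0,0}$ and the Schwartz decay of $\hat f$ (which is integrable in $\xi$ together with all its $\xi$-derivatives), one obtains the pointwise bound
\begin{equation*}
|(A_\pm^\epsilon e^{-itH_0}f)(x)|\le C_{N,f}\,(1+|x|+|t|)^{-N}.
\end{equation*}
Choosing $N$ large enough that $(1+|x|+|t|)^{-N}$ is $L^2_x$-integrable with norm $O((1+|t|)^{-N_0})$ yields the desired estimate (\ref{representation57}). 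The main obstacle is the uniform lower bound on $|\nabla_\xi\phi_\epsilon|$ across the whole $(x,\xi)$-support of the symbol: this is precisely where the cone separation $\varepsilon_0>0$ and the exclusion $|\xi|\ge c$ from (\ref{representation122}) are indispensable; without either of them the stationary set of $\phi_\epsilon$ would intersect the support and rapid decay would fail.
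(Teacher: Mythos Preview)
Your proof is correct and is precisely the argument the paper has in mind: the paper does not give its own proof but simply states that the lemma ``is analogous to Lemma 1.1 of \cite{72} for the Schr\"odinger operator, that can be proved by using a stationary phase argument.'' Your reduction via the projection identities (\ref{representation59}) to a scalar phase $\phi_\epsilon=\langle x,\xi\rangle-\epsilon t\lambda(\xi)$, followed by the non-stationary phase bound $|\nabla_\xi\phi_\epsilon|\gtrsim 1+|x|+|t|$ on the symbol support, is exactly the Dirac adaptation of the Isozaki--Kitada argument the paper is invoking.
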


Note that the symbols $a_{N}^{\pm}\left(  x,\xi;\left\vert \lambda\left(
\xi\right)  \right\vert \right)  \zeta_{\pm}^{+}\left(  x,\xi\right)  $ and
$a_{N}^{\pm}\left(  x,\xi;-\left\vert \lambda\left(  \xi\right)  \right\vert
\right)  \zeta_{\pm}^{-}\left(  x,\xi\right)  $ satisfy the assumptions of
Lemma \ref{representation56}.

Let us define $\left(  \mathbf{J}_{\pm}f\right)  \left(  x\right)  :=\left(
2\pi\right)  ^{-3/2}\int_{\mathbb{R}^{3}}e^{i\left\langle x,\xi\right\rangle
}\mathbf{j}_{\pm}\left(  x,\xi\right)  \hat{f}\left(  \xi\right)  d\xi,$ with
$\mathbf{j}_{\pm}\left(  x,\xi\right)  :=P^{+}\left(  \xi\right)  \zeta_{\pm
}^{+}\left(  x,\xi\right)  +P^{-}\left(  \xi\right)  \zeta_{\pm}^{-}\left(
x,\xi\right)  .$ Then, we have $\left(  \left(  J_{\pm}-\mathbf{J}_{\pm
}\right)  f\right)  \left(  x\right)  =\left(  2\pi\right)  ^{-3/2}%
\int_{\mathbb{R}^{3}}e^{i\left\langle x,\xi\right\rangle }\tilde{j}_{\pm
}\left(  x,\xi\right)  \hat{f}\left(  \xi\right)  d\xi,$ where $\tilde{j}%
_{\pm}\left(  x,\xi\right)  \in\mathit{S}^{-\left(  \rho-1\right)  ,0}.$ Note
that
\begin{equation}
\lim_{\left\vert t\right\vert \rightarrow\infty}\left\Vert \left(  J_{\pm
}-\mathbf{J}_{\pm}\right)  e^{-iH_{0}t}f\right\Vert =0.
\label{representation82}%
\end{equation}
The last equality is consequence of the following Proposition, that also can
be proved by using a stationary phase argument

\begin{proposition}
\label{representation118}Let $A$ be a PDO with symbol $a\left(  x,\xi\right)
\in\mathit{S}^{-\sigma,0},$ for some $\sigma>0,$ such that $a\left(
x,\xi\right)  =0$ for $\left\vert \xi\right\vert \leq c,$ $c>0$. Then, for
$f\in\mathcal{S}\left(  \mathbb{R}^{3};\mathbb{C}^{4}\right)  $ the following
estimate holds%
\begin{equation}
\left\Vert Ae^{-iH_{0}t}f\right\Vert \leq C\left\langle t\right\rangle
^{-\sigma}. \label{representation203}%
\end{equation}
In particular, we get
\begin{equation}
\lim_{t\rightarrow\pm\infty}\left\Vert Ae^{-iH_{0}t}f\right\Vert =0,\text{ for
any }f\in L^{2}. \label{representation117}%
\end{equation}

\end{proposition}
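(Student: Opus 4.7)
The proof is a standard stationary/non-stationary phase argument. First I use the spectral decomposition of $H_{0}$: from $h_{0}(\xi)=\lambda(\xi)P^{+}(\xi)-\lambda(\xi)P^{-}(\xi)$ with $\lambda(\xi)=\sqrt{|\xi|^{2}+m^{2}}$, I write
\begin{equation*}
(Ae^{-iH_{0}t}f)(x)=(2\pi)^{-3/2}\sum_{\pm}\int_{\mathbb{R}^{3}}e^{i\psi_{\pm}(x,\xi,t)}a(x,\xi)P^{\pm}(\xi)\hat{f}(\xi)\,d\xi,
\end{equation*}
with $\psi_{\pm}(x,\xi,t):=\langle x,\xi\rangle\mp t\lambda(\xi)$. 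Since $a$ vanishes on $\{|\xi|\leq c\}$, I insert a smooth cutoff and assume $\hat{f}$ is supported in $|\xi|\geq c/2$. The critical set $\nabla_{\xi}\psi_{\pm}=x\mp t\,v(\xi)=0$, where $v(\xi)=\xi/\lambda(\xi)$, satisfies $|v(\xi)|\geq c_{0}:=(c/2)/\sqrt{(c/2)^{2}+m^{2}}>0$ on this support. It suffices to treat $|t|\geq1$.

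Next, I introduce a cutoff $\phi\in C^{\infty}([0,\infty))$ with $\phi(s)=1$ for $s\leq c_{0}/4$ and $\phi(s)=0$ for $s\geq c_{0}/2$, and I split $Ae^{-iH_{0}t}f=I_{1}(\cdot,t)+I_{2}(\cdot,t)$ according to the factors $\phi(|x|/|t|)$ and $1-\phi(|x|/|t|)$ inserted in the symbol. On the support of $I_{1}$ one has $|x|\leq c_{0}|t|/2$, hence $|\nabla_{\xi}\psi_{\pm}|\geq |t||v(\xi)|-|x|\geq c_{0}|t|/2$. Then the non-stationary phase operator
\begin{equation*}
L_{\pm}^{\ast}u=-\sum_{j}\partial_{\xi_{j}}\Bigl(\frac{\partial_{\xi_{j}}\psi_{\pm}}{|\nabla_{\xi}\psi_{\pm}|^{2}}u\Bigr),\qquad (L_{\pm}^{\ast}e^{i\psi_{\pm}}=-ie^{i\psi_{\pm}}),
\end{equation*}
satisfies $|(L_{\pm}^{\ast})^{N}u|\leq C_{N}|t|^{-N}\sum_{|\alpha|\leq N}|\partial_{\xi}^{\alpha}u|$ (using $|\partial_{\xi}^{\alpha}\psi_{\pm}|\leq C|t|$, $\alpha\neq 0$, together with $|\nabla_{\xi}\psi_{\pm}|\gtrsim|t|$). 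Integrating by parts $N$ times and using the Schwartz decay of $\hat{f}$, I obtain the pointwise bound $|I_{1}(x,t)|\leq C_{N,f}|t|^{-N}$; since $I_{1}$ is supported on a set of volume $\lesssim|t|^{3}$, taking $N$ large gives $\|I_{1}(\cdot,t)\|_{L^{2}}\leq C_{N,f}\langle t\rangle^{-N}$ for any $N$.

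For $I_{2}$, consider the PDO with $t$-dependent symbol $b_{t}(x,\xi):=(1-\phi(|x|/|t|))a(x,\xi)$. On its support $|x|\geq c_{0}|t|/4$, so $\langle x\rangle\gtrsim\langle t\rangle$ for $|t|\geq 1$. I factor $b_{t}=\langle t\rangle^{-\sigma}\tilde{b}_{t}$ with $\tilde{b}_{t}:=\langle t\rangle^{\sigma}b_{t}$. A direct check using the symbol estimates of $a\in\mathcal{S}^{-\sigma,0}$ and the bounds $|\partial_{x}^{\alpha}\phi(|x|/|t|)|\leq C_{\alpha}|t|^{-|\alpha|}\lesssim\langle x\rangle^{-|\alpha|}$ on the support shows $\tilde{b}_{t}\in\mathcal{S}^{0,0}$ with seminorms uniformly bounded in $t$. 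Hence by Proposition \ref{basicnotions24} the PDO with symbol $\tilde{b}_{t}$ is bounded on $L^{2}$ uniformly in $t$, so the PDO with symbol $b_{t}$ has $L^{2}$-norm $\leq C\langle t\rangle^{-\sigma}$; combining with unitarity of $e^{-iH_{0}t}$ yields $\|I_{2}(\cdot,t)\|_{L^{2}}\leq C\langle t\rangle^{-\sigma}\|f\|_{L^{2}}$. Summing the two estimates gives $\|Ae^{-iH_{0}t}f\|_{L^{2}}\leq C\langle t\rangle^{-\sigma}$. The second statement, $\|Ae^{-iH_{0}t}f\|\to 0$ for every $f\in L^{2}$, follows by density of $\mathcal{S}$ in $L^{2}$ together with uniform boundedness of $Ae^{-iH_{0}t}$ on $L^{2}$.

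The main obstacle is the bookkeeping in the second region: one must check that the $t$-dependent cutoff $\phi(|x|/|t|)$ does not spoil the symbol class, i.e.\ that $\langle t\rangle^{\sigma}(1-\phi(|x|/|t|))a(x,\xi)$ really lies in $\mathcal{S}^{0,0}$ uniformly in $t$; this works precisely because each $x$-derivative of the cutoff produces a factor $|t|^{-1}\lesssim\langle x\rangle^{-1}$ on the support, matching the order of $\mathcal{S}^{0,0}$.
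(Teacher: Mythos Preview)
Your proof is correct and follows exactly the route the paper indicates: the paper does not give a detailed argument but simply states that the proposition ``can be proved by using a stationary phase argument,'' and your decomposition into a non-stationary region $|x|\lesssim|t|$ (handled by integration by parts in $\xi$) and a far region $|x|\gtrsim|t|$ (handled by absorbing $\langle t\rangle^{\sigma}$ into the symbol and invoking Proposition~\ref{basicnotions24}) is the standard way to implement this. The check that $\langle t\rangle^{\sigma}(1-\phi(|x|/|t|))a(x,\xi)$ lies in $\mathcal{S}^{0,0}$ with seminorms uniform in $t$ is the only delicate point, and you have handled it correctly.
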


Using Lemma \ref{representation56} and equality (\ref{representation82}) we
obtain, by a stationary phase argument, the following result

\begin{proposition}
The following equalities hold
\begin{equation}
s-\lim_{t\rightarrow\pm\infty}\left(  J_{\pm}-\theta\left(  \sqrt{H_{0}%
^{2}-m^{2}}\right)  \right)  e^{-iH_{0}t}=0, \label{representation17}%
\end{equation}%
\begin{equation}
s-\lim_{\left\vert t\right\vert \rightarrow\infty}J_{+}^{\ast}J_{-}%
e^{-iH_{0}t}=0. \label{representation18}%
\end{equation}

\end{proposition}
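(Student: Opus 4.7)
The plan is to split each operator into a principal PDO part plus a lower-order remainder, dispatch the remainder via Proposition \ref{representation118}, and treat the principal part by Lemma \ref{representation56}.

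For \eqref{representation17} I would write
\[
J_{\pm}-\theta\bigl(\sqrt{H_{0}^{2}-m^{2}}\bigr)=(J_{\pm}-\mathbf{J}_{\pm})+\bigl(\mathbf{J}_{\pm}-\theta\bigl(\sqrt{H_{0}^{2}-m^{2}}\bigr)\bigr),
\]
and note that the first summand is dispatched directly by \eqref{representation82}. Using $P^{+}(\xi)+P^{-}(\xi)=I$, the symbol of the second summand reads
\[
P^{+}(\xi)\theta(|\xi|)\bigl[\sigma_{\pm}(\eta(x)\langle\hat{x},\hat{\xi}\rangle)-1\bigr]+P^{-}(\xi)\theta(|\xi|)\bigl[\sigma_{\mp}(\eta(x)\langle\hat{x},\hat{\xi}\rangle)-1\bigr].
\]
For $J_{+}$ the first piece is supported, for $|x|\geq R$, in $\{\langle\hat{x},\hat{\xi}\rangle\leq-\varepsilon\}$ and the second in $\{\langle\hat{x},\hat{\xi}\rangle\geq\varepsilon\}$; together with the inequality $\varepsilon<1-\varepsilon_{0}$ from \eqref{representation63} and the projector identities $aP^{\pm}=a$, these are precisely the hypotheses on $a_{-}^{+}$ and $a_{-}^{-}$ in Lemma \ref{representation56}, which yields rapid decay as $t\to+\infty$. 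The case $J_{-}$, $t\to-\infty$ is symmetric with $\sigma_{+}$ and $\sigma_{-}$ swapped and falls under $a_{+}^{\pm}$. A density argument from $\mathcal{S}(\mathbb{R}^{3};\mathbb{C}^{4})$ to $L^{2}$ extends the strong limit to all $f\in L^{2}$.

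For \eqref{representation18} I would expand
\[
J_{+}^{\ast}J_{-}=\mathbf{J}_{+}^{\ast}\mathbf{J}_{-}+(J_{+}-\mathbf{J}_{+})^{\ast}\mathbf{J}_{-}+\mathbf{J}_{+}^{\ast}(J_{-}-\mathbf{J}_{-})+(J_{+}-\mathbf{J}_{+})^{\ast}(J_{-}-\mathbf{J}_{-}).
\]
By Proposition \ref{representation196} and the adjoint rule, each of the last three terms is a PDO whose full symbol lies in $\mathit{S}^{-(\rho-1),0}$ (modulo further lower-order contributions), so Proposition \ref{representation118} sends their strong limits on $e^{-iH_{0}t}f$ to zero as $|t|\to\infty$. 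For the principal term, a direct computation using that $P^{\pm}$ are self-adjoint mutually orthogonal projections and that $\sigma_{\pm}$ are real gives
\[
\mathbf{j}_{+}(x,\xi)^{\ast}\,\mathbf{j}_{-}(x,\xi)=\sigma_{+}(\eta\langle\hat{x},\hat{\xi}\rangle)\,\sigma_{-}(\eta\langle\hat{x},\hat{\xi}\rangle)\,\theta(|\xi|)^{2}\,I,
\]
because the cross terms $P^{+}P^{-}$ and $P^{-}P^{+}$ vanish. The crucial observation is that this scalar (times $I$) symbol is supported, on $|x|\geq R$, in the strip $|\langle\hat{x},\hat{\xi}\rangle|<1-\varepsilon_{0}$, so it simultaneously fits both the $a_{+}^{\pm}$ and the $a_{-}^{\pm}$ hypotheses of Lemma \ref{representation56}. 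Writing $I=P^{+}+P^{-}$ and applying the lemma to each of the two pieces then yields decay for both $t\to+\infty$ and $t\to-\infty$. The $\mathit{S}^{-1,-1}$ correction separating $\mathbf{J}_{+}^{\ast}\mathbf{J}_{-}$ from the PDO with symbol $\mathbf{j}_{+}^{\ast}\mathbf{j}_{-}$ is again absorbed by Proposition \ref{representation118}.

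The main technical obstacle I anticipate is the bookkeeping of residual PDO symbol classes under the adjoint and composition formulas, specifically confirming that every leftover symbol belongs to some $\mathit{S}^{-\sigma,0}$ with $\sigma>0$ and vanishes near $\xi=0$, so that Proposition \ref{representation118} applies uniformly. The underlying geometric point, namely that \eqref{representation63} keeps $\operatorname{supp}(\sigma_{+}\sigma_{-})$ strictly away from both $\hat{x}=\hat{\xi}$ and $\hat{x}=-\hat{\xi}$, is precisely what makes the two-sided decay in \eqref{representation18} possible and motivates the somewhat delicate choice of $\varepsilon$ in that inequality.
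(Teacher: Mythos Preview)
Your proposal is correct and follows essentially the same route as the paper, which merely records that the result follows from Lemma \ref{representation56} together with \eqref{representation82} and a stationary phase argument; you have spelled out precisely that argument, including the key computation $\mathbf{j}_{+}^{\ast}\mathbf{j}_{-}=\sigma_{+}\sigma_{-}\theta^{2}I$ and the support observation that $\sigma_{+}\sigma_{-}$ vanishes outside $(-1+\varepsilon_{0},1-\varepsilon_{0})$, which is exactly what makes the two-sided decay in \eqref{representation18} work.
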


As the wave operators $W_{\pm}\left(  H,H_{0}\right)  $ exist, relation
(\ref{representation17}) implies that $W_{\pm}\left(  H,H_{0};J_{\pm}\right)
$ exist and the following equality holds%
\begin{equation}
W_{\pm}\left(  H,H_{0}\right)  \theta\left(  \sqrt{H_{0}^{2}-m^{2}}\right)
=W_{\pm}\left(  J_{\pm}\right)  . \label{representation91}%
\end{equation}
Note that the existence of the wave operators $W_{\pm}\left(  H,H_{0};J_{\pm
}\right)  $ can be proved in the same way as in \cite{33}, where similar
identification operators $J_{\pm}$ were defined.

We define the scattering operator $\mathbf{S}\left(  J_{+},J_{-}\right)  ,$
associated to the wave operators $W_{\pm}\left(  H,H_{0};J_{\pm}\right)  ,$ by
the relation $\mathbf{S}\left(  J_{+},J_{-}\,\right)  $ $:=W_{+}^{\ast}\left(
H,H_{0};J_{+}\right)  W_{-}\left(  H,H_{0};J_{-}\right)  .$ To simplify the
notation we denote $\mathbf{\tilde{S}=S}\left(  J_{+},J_{-}\right)  .$

Identity (\ref{representation91}) implies that the scattering operators
$\mathbf{S}$ and $\mathbf{\tilde{S}}$ are related by the equality
\begin{equation}
\theta\left(  \sqrt{H_{0}^{2}-m^{2}}\right)  \mathbf{S}\theta\left(
\sqrt{H_{0}^{2}-m^{2}}\right)  =\mathbf{\tilde{S}}. \label{representation206}%
\end{equation}

\subsection{The perturbation and the Mourre estimate.}

We define the perturbations $T_{\pm}$ as
\begin{equation}
T_{\pm}=HJ_{\pm}-J_{\pm}H_{0}. \label{representation34}%
\end{equation}
Note that $\left(  J_{\pm}H_{0}f\right)  \left(  x\right)  =\left(
2\pi\right)  ^{-3/2}\int_{\mathbb{R}^{3}}e^{i\left\langle x,\xi\right\rangle
}\left\vert \lambda\left(  \xi\right)  \right\vert \left(  a_{N}^{\pm}\left(
x,\xi;\left\vert \lambda\left(  \xi\right)  \right\vert \right)  \zeta_{\pm
}^{+}\left(  x,\xi\right)  -a_{N}^{\pm}\left(  x,\xi;-\left\vert
\lambda\left(  \xi\right)  \right\vert \right)  \zeta_{\pm}^{-}\left(
x,\xi\right)  \right)  \hat{f}\left(  \xi\right)  d\xi.$ Using relation
(\ref{eig30}) we have
\begin{equation}
\left.
\begin{array}
[c]{c}%
g_{\pm}\left(  x,\xi\right)  :=\left(  H-\left\vert \lambda\left(  \xi\right)
\right\vert \right)  \left(  u_{N}^{\pm}\left(  x,\xi;\left\vert
\lambda\left(  \xi\right)  \right\vert \right)  \zeta_{\pm}^{+}\left(
x,\xi\right)  \right)  +\left(  H+\left\vert \lambda\left(  \xi\right)
\right\vert \right)  \left(  u_{N}^{\pm}\left(  x,\xi;-\left\vert
\lambda\left(  \xi\right)  \right\vert \right)  \zeta_{\pm}^{-}\left(
x,\xi\right)  \right) \\
=e^{i\left\langle x,\xi\right\rangle }\left(  r_{N}^{\pm}\left(
x,\xi;\left\vert \lambda\left(  \xi\right)  \right\vert \right)  \zeta_{\pm
}^{+}\left(  x,\xi\right)  +r_{N}^{\pm}\left(  x,\xi;-\left\vert
\lambda\left(  \xi\right)  \right\vert \right)  \zeta_{\pm}^{-}\left(
x,\xi\right)  \right) \\
-i\sum_{j=1}^{3}\left(  \partial_{x_{j}}\zeta_{\pm}^{+}\left(  x,\xi\right)
\right)  \alpha_{j}u_{N}^{\pm}\left(  x,\xi;\left\vert \lambda\left(
\xi\right)  \right\vert \right)  -i\sum_{j=1}^{3}\left(  \partial_{x_{j}}%
\zeta_{\pm}^{-}\left(  x,\xi\right)  \right)  \alpha_{j}u_{N}^{\pm}\left(
x,\xi;-\left\vert \lambda\left(  \xi\right)  \right\vert \right)  .
\end{array}
\right.  \label{representation16}%
\end{equation}
Then, taking $t_{\pm}\left(  x,\xi\right)  =e^{-i\left\langle x,\xi
\right\rangle }g_{\pm}\left(  x,\xi\right)  $ and using relation (\ref{eig43})
we obtain the following representation for $T_{\pm}:$%
\begin{equation}
\left(  T_{\pm}f\right)  \left(  x\right)  =\int e^{i\left\langle
x,\xi\right\rangle }t_{\pm}\left(  x,\xi\right)  \hat{f}\left(  \xi\right)
d\xi=\left(  T_{\pm}^{1}f\right)  \left(  x\right)  +\left(  T_{\pm}%
^{2}f\right)  \left(  x\right)  , \label{representation36}%
\end{equation}
where the parts $T_{\pm}^{1}$ and $T_{\pm}^{2}$ have the symbols%
\begin{equation}
t_{\pm}^{1}=r_{N}^{\pm}\left(  x,\xi;\left\vert \lambda\left(  \xi\right)
\right\vert \right)  \zeta_{\pm}^{+}\left(  x,\xi\right)  +r_{N}^{\pm}\left(
x,\xi;-\left\vert \lambda\left(  \xi\right)  \right\vert \right)  \zeta_{\pm
}^{-}\left(  x,\xi\right)  \label{representation37}%
\end{equation}
and%

\begin{equation}
t_{\pm}^{2}=-i\sum_{j=1}^{3}\left(  \partial_{x_{j}}\zeta_{\pm}^{+}\left(
x,\xi\right)  \right)  \alpha_{j}a_{N}^{\pm}\left(  x,\xi;\left\vert
\lambda\left(  \xi\right)  \right\vert \right)  -i\sum_{j=1}^{3}\left(
\partial_{x_{j}}\zeta_{\pm}^{-}\left(  x,\xi\right)  \right)  \alpha_{j}%
a_{N}^{\pm}\left(  x,\xi;-\left\vert \lambda\left(  \xi\right)  \right\vert
\right)  , \label{representation38}%
\end{equation}
respectively. Using (\ref{eig28}) we get
\begin{equation}
t_{\pm}^{1}\in\mathcal{S}^{-\rho-N,-N},\text{ }N\geq0,
\label{representation23}%
\end{equation}
and, using (\ref{eig27}) we obtain%

\begin{equation}
t_{\pm}^{2}\in\mathcal{S}_{\pm}^{-1,0}. \label{representation24}%
\end{equation}

Let us introduce the operator $\mathbf{A}$, known as \textquotedblleft the
generator of dilation\textquotedblright, $\mathbf{A=}\frac{1}{2i}\sum
_{j=1}^{3}\left(  x\cdot\nabla+\nabla\cdot x\right)  .$ Note that
\begin{equation}
i[H_{0},\mathbf{A}]=H_{0}-m\beta, \label{representation22}%
\end{equation}
and%
\begin{equation}
i[H,\mathbf{A}]=H-m\beta-\mathbf{V}+i[\mathbf{V,A}]=H-m\beta-\mathbf{V}%
-\left\langle x,\left(  \nabla\mathbf{V}\right)  \left(  x\right)
\right\rangle . \label{representation121}%
\end{equation}
We recall that if$\ \mathbf{V}\ $satisfies the estimate (\ref{basicnotions46}%
), there are no eigenvalues embedded in the absolutely continuous spectrum of
$H$. Thus, the Mourre estimate%
\begin{equation}
\pm E_{H}\left(  I\right)  i[H,\mathbf{A}]E_{H}\left(  I\right)  \geq
cE_{H}\left(  I\right)  ,\text{ }c>0,\text{ }I=\left(  E-\eta_{E},E+\eta
_{E}\right)  ,\text{ }\pm E>m, \label{representation20}%
\end{equation}
is satisfied for some $\eta_{E}>0$ (Theorem 2.5 of \cite{33}).

The results we need below were proved in \cite{5} (see also \cite{34}) by
introducing the so-called conjugate operator. Condition $\left(  c_{n}\right)
$ of Definition 3.1 of \cite{5} for an operator $\mathbf{B}$ to be conjugate
to $H,$ asks the following: \textquotedblleft The form $i[H,\mathbf{B}]$,
defined on $D(\mathbf{B})\cap D(H)$, is bounded from below and closable. The
self-adjoint operator associated with its closure is denoted $iB_{1}$. Assume
$D(B_{1})\supset D(H)$. If $n>1$, assume for $j=2,3,...,n$ that the form
$i[iB_{j-1},\mathbf{B}],$ defined on $D(\mathbf{B})\cap D(H)$, is bounded from
below and closable. The associated self-adjoint operator is denoted $iB_{j}$,
and $D(B_{j})\supset D(H)$ is assumed.\textquotedblright\ If we know that the
forms $i[iB_{j-1},\mathbf{B}]$ extend to self-adjoint operators $iB_{j}$, for
all $j,$ and $D(B_{j})\supset D(H),$ then there is no need to ask the
boundeness from below of $i[iB_{j-1},\mathbf{B}],$ $j\geq1,$ in order to
obtain the results of \cite{5}.

For $\pm E>m,$ let us consider the operator $\pm\mathbf{A.}$ Note that
$i[iB_{j-1},\pm\mathbf{A}],$ $j\geq1,$ ($iB_{0}=H$) are not bounded from
below. Nevertheless,\ using equality (\ref{representation121}) we see that
$i[iB_{j-1},\mathbf{A}]=H-m\beta-\mathbf{V+V}_{j},$ $j\geq1,$ where
$\mathbf{V}_{0}=\mathbf{V}$ and $\mathbf{V}_{j}$ is defined recursively by
$\mathbf{V}_{j}=-\left\langle x,\left(  \nabla\mathbf{V}_{j-1}\right)  \left(
x\right)  \right\rangle ,$ $j\geq1.$ Then, the forms $i[iB_{j-1},\mathbf{A}],$
$j\geq1,$ extend to self-adjoint operators $iB_{j}$, and $D(B_{j})=D(H).$
Moreover $\pm\mathbf{A}$ satisfies relation (\ref{representation20}). Thus, we
can apply the results of \cite{5}, taking, for $\pm E>m,$ the operator
$\pm\mathbf{A}$ as conjugate to $H.$ Furthermore, we use the dilatation
transformation argument (see \cite{39},\cite{30}) in order to prove that these
results hold uniformly for $\left\vert E\right\vert \geq E_{0}>m,$ for any
$E_{0}.$ We get the following

\begin{proposition}
\label{representation9}Let estimates (\ref{eig31}) and (\ref{eig32}) hold.
Define $\mathbf{P}_{+}:=E_{\mathbf{A}}\left(  0,\infty\right)  ~$and
$\mathbf{P}_{-}:=E_{\mathbf{A}}\left(  -\infty,0\right)  $ as the spectral
projections of the operator $\mathbf{A}$ ($E_{\mathbf{A}}$ is the resolution
of the identity for $\mathbf{A}$). For $\pm\operatorname{Re}z>m$ and
$\operatorname{Im}z\geq0,$ the operators%
\begin{equation}
\left\langle \mathbf{A}\right\rangle ^{-p}R\left(  z\right)  \left\langle
\mathbf{A}\right\rangle ^{-p},p>\frac{1}{2}, \label{representation1}%
\end{equation}%
\begin{equation}
\left\langle \mathbf{A}\right\rangle ^{-1+p}\mathbf{P}_{\pm}R\left(  z\right)
\left\langle \mathbf{A}\right\rangle ^{-q},\text{ \ }\left\langle
\mathbf{A}\right\rangle ^{-q}R\left(  z\right)  \mathbf{P}_{\pm}\left\langle
\mathbf{A}\right\rangle ^{-1+p} \label{representation2}%
\end{equation}
with $q>\frac{1}{2},$ $p<q$ and%
\begin{equation}
\left\langle \mathbf{A}\right\rangle ^{p}\mathbf{P}_{\mp}R\left(  z\right)
\mathbf{P}_{\pm}\left\langle \mathbf{A}\right\rangle ^{p},\text{ \ }\forall p
\label{representation3}%
\end{equation}
are continuous in norm with respect to $z.$ Moreover, the norms of operators
(\ref{representation1})-(\ref{representation3}) at $z=E+i0$ are bounded by
$C\left\vert E\right\vert ^{-1}$ as $\left\vert E\right\vert \rightarrow
\infty.$
\end{proposition}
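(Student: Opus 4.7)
The plan is to apply the abstract commutator-method estimates of \cite{5} (in the sharpened form of \cite{34}) with $\pm\mathbf{A}$ as the conjugate operator to $H$, and then to upgrade the resulting pointwise-in-$E$ bounds to $O(|E|^{-1})$ via a dilation argument. The sign $\pm$ is dictated by the Mourre inequality (\ref{representation20}): for $\operatorname{Re}z>m$ one takes $+\mathbf{A}$, while for $\operatorname{Re}z<-m$ one takes $-\mathbf{A}$.

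First, the hypotheses of the abstract theorem are already verified in the paragraph preceding the statement: the iterated commutators $i[iB_{j-1},\mathbf{A}]=H-m\beta-\mathbf{V}+\mathbf{V}_{j}$ extend to self-adjoint operators with $D(B_{j})=D(H)$, because the recursively defined $\mathbf{V}_{j}=-\langle x,\nabla\mathbf{V}_{j-1}\rangle$ remain bounded thanks to (\ref{eig31})--(\ref{eig32}), and (\ref{representation20}) holds on an interval around any $E$ with $\pm E>m$. Applying \cite{5} then yields norm continuity of all three families on the closed half-planes $\{\pm\operatorname{Re}z>m,\ \operatorname{Im}z\ge 0\}$ and finite bounds at each fixed $E$: (\ref{representation1}) is the standard Mourre limiting absorption estimate; (\ref{representation2}) is the one-sided smoothing obtained by projecting onto $\mathbf{P}_{\pm}$, the spectral subspace of $\mathbf{A}$ on which $\pm\mathbf{A}$ is positive; (\ref{representation3}) is of Isozaki--Kitada type, giving arbitrary polynomial decay when $R(z)$ is sandwiched between opposite spectral projections of $\mathbf{A}$.

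To obtain the $|E|^{-1}$ bound as $|E|\to\infty$, I would use the unitary dilation $(U_\lambda f)(x):=\lambda^{3/2}f(\lambda x)$, which is generated by $\mathbf{A}$ and therefore commutes with every bounded Borel function of $\mathbf{A}$, in particular with $\langle\mathbf{A}\rangle^{-p}$ and with $\mathbf{P}_{\pm}$. A direct calculation gives
\[
U_\lambda^{-1} H U_\lambda \;=\; \lambda\,\tilde H_\lambda,\qquad \tilde H_\lambda:=-i\alpha\cdot\nabla+\lambda^{-1}m\beta+\lambda^{-1}\mathbf{V}(x/\lambda),
\]
so that $R(z)=\lambda^{-1}\,U_\lambda\,\tilde R_\lambda(z/\lambda)\,U_\lambda^{-1}$ with $\tilde R_\lambda(\zeta):=(\tilde H_\lambda-\zeta)^{-1}$. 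Choosing $\lambda=|E|$ converts $z=E+i0$ into the fixed spectral parameter $\zeta=(\operatorname{sgn}E)+i0$, and the $U_\lambda$-invariance of $\langle\mathbf{A}\rangle$ and $\mathbf{P}_{\pm}$ transfers each weighted sandwich in (\ref{representation1})--(\ref{representation3}) into the analogous sandwich of $\tilde R_\lambda(\pm 1+i0)$, multiplied by the prefactor $\lambda^{-1}$. It then suffices to show that the corresponding bounds for $\tilde R_\lambda(\pm 1+i0)$ are uniform in $\lambda\ge 1$, which follows because $\tilde H_\lambda$ converges in norm resolvent sense to the massless free Dirac operator (for which the Mourre estimate at $\pm 1$ is strict), while $\lambda^{-1}\mathbf{V}(x/\lambda)$ and its iterated radial derivatives satisfy (\ref{eig31})--(\ref{eig32}) with constants that only improve in $\lambda$; continuity in $\lambda$ handles the compact window $[1,\lambda_0]$.

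The main obstacle is verifying the uniformity in $\lambda$ of the hypotheses of the Mourre theorem for $\tilde H_\lambda$: both the positivity constant in the Mourre inequality and the operator norms of the iterated commutators $i[iB_{j-1}^{(\lambda)},\mathbf{A}]$ (with $H$ replaced by $\tilde H_\lambda$) must admit $\lambda$-independent estimates. The first follows from the stability of the Mourre estimate under small relatively compact perturbations, together with the fact that the unperturbed commutator $i[-i\alpha\cdot\nabla+\lambda^{-1}m\beta,\mathbf{A}]=-i\alpha\cdot\nabla$ is strictly positive on the spectral subspace attached to a neighborhood of $\pm 1$ with a $\lambda$-independent lower bound, and the perturbation $\lambda^{-1}\mathbf{V}(x/\lambda)$ tends to zero in the appropriate operator topology. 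The second reduces to the observation that the $j$-th iterated radial derivative of $\lambda^{-1}\mathbf{V}(x/\lambda)$ equals $\lambda^{-1}$ times a function of the same class (\ref{eig31})--(\ref{eig32}) evaluated at $x/\lambda$, and hence is bounded uniformly in $\lambda\ge 1$ on the weighted spaces that appear in \cite{5}.
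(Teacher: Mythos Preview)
Your proposal is correct and follows essentially the same route as the paper: apply the abstract propagation estimates of \cite{5} (see also \cite{34}) with $\pm\mathbf{A}$ as conjugate operator, after checking that the iterated commutators remain self-adjoint with domain $D(H)$ and that the Mourre inequality (\ref{representation20}) holds, and then obtain the $|E|^{-1}$ decay by the dilation transformation argument (your $U_\lambda$), exactly as the paper indicates by citing \cite{39},\cite{30}. Your write-up is in fact more explicit than the paper's, which simply states the proposition and refers to these sources; the rescaled operator $\tilde H_\lambda$ and the verification that the Mourre constants and commutator bounds are uniform in $\lambda$ are precisely the content of that dilation argument.
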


We now present the following two assertions (\cite{27},\cite{30}). We denote
by $T$ the PDO with symbol $t.$ Recall that the clases $\mathcal{S}_{\pm
}^{m,n}$ where defined below (\ref{representation197}).

\begin{proposition}
\label{representation7}Let $t\in\mathcal{S}_{\pm}^{0,0}$ for one of the signs
and let $p>0$ be an entire number. Then, the operator $\left\langle
x\right\rangle ^{p}\left\langle \nabla\right\rangle ^{p}T\left\langle
\mathbf{A}\right\rangle ^{-q}$ is bounded, for $q\geq p$.
\end{proposition}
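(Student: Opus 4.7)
The plan rests on one geometric observation: on the support of a symbol $t \in \mathcal{S}_{\pm}^{0,0}$ one has $|x|,|\xi|\geq \varepsilon_{1}$ and the cone condition $\pm\langle\hat{x},\hat{\xi}\rangle < -\varepsilon$, so that the Weyl symbol $\langle x,\xi\rangle$ of the dilation generator $\mathbf{A}$ satisfies $|\langle x,\xi\rangle|\geq \varepsilon |x||\xi|\sim \langle x\rangle\langle\xi\rangle$ there. Thus $\langle\mathbf{A}\rangle^{-q}$ is morally comparable to $\langle x\rangle^{-q}\langle\xi\rangle^{-q}$ on $\mathrm{supp}(t)$, which is exactly what is needed to absorb $\langle x\rangle^{p}\langle\nabla\rangle^{p}$ when $q\geq p$.

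First I would apply Proposition \ref{representation196} to write $\langle x\rangle^{p}\langle\nabla\rangle^{p}T$ as a PDO $\widetilde{T}$ with principal symbol $\langle x\rangle^{p}\langle\xi\rangle^{p}t(x,\xi)\in \mathcal{S}_{\pm}^{p,p}$ (the lower-order remainders being in $\mathcal{S}_{\pm}^{p-j,p-j}$, $j\geq 1$, still supported in the same cone up to an arbitrarily small enlargement). Since the problem is linear in the orders, it suffices to prove $L^{2}$-boundedness of $\widetilde{T}\langle\mathbf{A}\rangle^{-q}$ for $\widetilde{T}$ with symbol $\tilde{t}\in \mathcal{S}_{\pm}^{p,p}$ and $q\geq p$.

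Next I would exploit the microlocal restriction of $\widetilde{T}$ via functional calculus of $\mathbf{A}$. Let $\chi_{\pm}\in C^{\infty}(\mathbb{R})$ equal $1$ on $\pm(-\infty,-\delta]$ (respectively $\pm[\delta,\infty)$) and $0$ near the origin, with $\delta< \varepsilon\varepsilon_{1}^{2}$. On $\mathrm{supp}(\tilde{t})$ the symbol of $\mathbf{A}$ has the correct sign and is bounded away from zero, so by PDO composition (Proposition \ref{representation131}) $\widetilde{T}(1-\chi_{\pm}(\mathbf{A}))$ is a smoothing PDO and, in particular, bounded. It therefore remains to bound $\widetilde{T}\,\chi_{\pm}(\mathbf{A})\langle\mathbf{A}\rangle^{-q}$. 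Using the Helffer--Sj\"ostrand formula $f(\mathbf{A})=\pi^{-1}\int_{\mathbb{C}}\bar{\partial}\tilde{f}(z)(\mathbf{A}-z)^{-1}dL(z)$ with $f(\lambda)=\chi_{\pm}(\lambda)\langle\lambda\rangle^{-q}$ and iterating the identity $\widetilde{T}(\mathbf{A}-z)^{-1}=(\mathbf{A}-z)^{-1}\widetilde{T}+(\mathbf{A}-z)^{-1}[\widetilde{T},\mathbf{A}](\mathbf{A}-z)^{-1}$, one reduces $\widetilde{T}\,f(\mathbf{A})$ modulo explicitly bounded pieces to a PDO with symbol (to leading order) $\tilde{t}(x,\xi)\chi_{\pm}(\langle x,\xi\rangle)\langle\langle x,\xi\rangle\rangle^{-q}$, which on $\mathrm{supp}(\tilde{t})$ is bounded by $C\langle x\rangle^{p-q}\langle\xi\rangle^{p-q}\in\mathcal{S}^{0,0}$ for $q\geq p$. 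Proposition \ref{basicnotions24} then yields boundedness.

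The main obstacle is the third step: the putative ``symbol'' $\chi_{\pm}(\langle x,\xi\rangle)\langle\langle x,\xi\rangle\rangle^{-q}$ of $\chi_{\pm}(\mathbf{A})\langle\mathbf{A}\rangle^{-q}$ does not lie in any standard Hörmander class because it decays only along the radial dilation direction $x\parallel \xi$. The resolution is precisely that the composition with $\widetilde{T}$ microlocalizes the calculation to the cone on which this symbol is genuinely bidirectionally decaying. Making this rigorous via the Helffer--Sj\"ostrand contour integral requires careful bookkeeping: each commutation $[\,\cdot\,,\mathbf{A}]$ produces a PDO whose symbol keeps the same cone support (since $\mathbf{A}$ generates dilations, which preserve the direction $\hat{x}\cdot\hat{\xi}$), and the rapid vanishing of $\bar{\partial}\tilde{f}$ off the real axis provides enough integrability in $z$ to sum the finitely many terms. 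Once this is done, the operator-norm bound is uniform in the support constants, completing the proof.
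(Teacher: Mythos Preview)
The paper does not supply its own proof of this proposition; it is quoted from Yafaev \cite{27}, \cite{30}. Your geometric observation is exactly the right one and is the heart of the argument there as well: on $\operatorname{supp} t$ one has $|\langle x,\xi\rangle|\geq \varepsilon|x|\,|\xi|\geq c\langle x\rangle\langle\xi\rangle$, so the ``symbol'' of $\mathbf{A}$ is elliptic on the relevant cone.

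There is, however, a genuine gap in your execution. You invoke Proposition \ref{representation131} to conclude that $\widetilde{T}\bigl(1-\chi_{\pm}(\mathbf{A})\bigr)$ is smoothing, but that proposition requires both factors to have symbols in $\mathcal{S}^{0,0}$, and $\chi_{\pm}(\mathbf{A})$ does not: the candidate symbol $\chi_{\pm}(\langle x,\xi\rangle)$ fails the estimate already at first order, since $\partial_{x_j}\chi_{\pm}(\langle x,\xi\rangle)=\chi_{\pm}'(\langle x,\xi\rangle)\xi_j$ is unbounded. You correctly flag this difficulty for the ``third step'', but it already invalidates the second, so the splitting $\widetilde{T}=\widetilde{T}\chi_{\pm}(\mathbf{A})+\widetilde{T}(1-\chi_{\pm}(\mathbf{A}))$ is not under control.

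The route taken in \cite{27}, \cite{30} avoids functional calculus of $\mathbf{A}$ altogether and is considerably more elementary. One works instead with integer powers of the resolvent $(\mathbf{A}\pm i)^{-1}$, which is legitimate since $\langle\mathbf{A}\rangle^{-q}=(\mathbf{A}+i)^{-p}g(\mathbf{A})$ with $g$ bounded once $q\geq p$. The point is that $\mathbf{A}+i$ \emph{is} a PDO (a first-order differential expression), so one can compose honestly: if $B$ is the PDO with symbol $t(x,\xi)/(\langle x,\xi\rangle+i)$, the cone condition gives $b\in\mathcal{S}_{\pm}^{-1,-1}$, and a direct symbol computation shows $B(\mathbf{A}+i)=T+R$ with $R\in\mathcal{S}_{\pm}^{-1,-1}$. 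Hence $T(\mathbf{A}+i)^{-1}=B-R(\mathbf{A}+i)^{-1}$, and $\langle x\rangle\langle\nabla\rangle T(\mathbf{A}+i)^{-1}$ is bounded by Proposition \ref{basicnotions24}. Iterating $p$ times (each step produces a new symbol in $\mathcal{S}_{\pm}^{-1,-1}$ and a remainder handled by the previous step) gives the result. This bypasses Helffer--Sj\"ostrand entirely and never leaves the class $\mathcal{S}_{\pm}^{n,m}$, which is why the argument is short in Yafaev's papers.
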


\begin{proposition}
Let $t\in\mathcal{S}_{\pm}^{n,m}$ for some $n$ and $m.$ Then the operator
$\left\langle x\right\rangle ^{q}\left\langle \nabla\right\rangle
^{s}T\mathbf{P}_{\pm}\left\langle \mathbf{A}\right\rangle ^{p}$ is bounded for
all real numbers $p,q,s.$
\end{proposition}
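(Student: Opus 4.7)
Passing to the $L^{2}$-adjoint, the claim is equivalent to the $L^{2}$-boundedness of $\langle \mathbf{A}\rangle^{p}\mathbf{P}_{\pm}T^{*}\langle \nabla\rangle^{s}\langle x\rangle^{q}$. The operator $T^{*}$ is again a PDO whose left symbol lies in $\mathcal{S}_{\pm}^{n,m}$ modulo lower-order terms, because taking adjoints preserves both the axis-vanishing condition ($|x|,|\xi|\ge\varepsilon_{1}$) and the cone-support condition ($\mp\langle\hat x,\hat\xi\rangle>\varepsilon$). By Proposition \ref{representation196} (extended to include the weights $\langle x\rangle^{q}\in\mathcal{S}^{q,0}$ and the Fourier multiplier $\langle\nabla\rangle^{s}$), the composition $T^{*}\langle\nabla\rangle^{s}\langle x\rangle^{q}$ is again a PDO whose amplitude stays in $\mathcal{S}_{\pm}^{n+q,m+s}$: each term in the asymptotic expansion involves $\partial_{x},\partial_{\xi}$ applied to the symbol of $T^{*}$, an operation that preserves both geometric support conditions. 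Hence it suffices to prove: for every PDO $T_{0}$ with symbol in $\mathcal{S}_{\pm}^{n',m'}$ (with arbitrary orders $n',m'$) and every real $p$, the operator $\langle \mathbf{A}\rangle^{p}\mathbf{P}_{\pm}T_{0}$ is bounded on $L^{2}$.

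\textbf{Step 2 (Reduction to integer powers).} For $p\le 0$ the factor $\langle \mathbf{A}\rangle^{p}$ is bounded on $L^{2}$, so only the case $p=0$ survives. For $p>0$ I would write $\langle \mathbf{A}\rangle^{p}=\langle \mathbf{A}\rangle^{-M}\langle \mathbf{A}\rangle^{p+M}$ with $M$ chosen so that $N:=p+M$ is a non-negative integer; using that $\langle \mathbf{A}\rangle^{-M}$ is bounded, the problem reduces to proving $\langle \mathbf{A}\rangle^{N}\mathbf{P}_{\pm}T_{0}$ is bounded for every integer $N\ge 0$ and every such $T_{0}$ (with $n',m'$ allowed to depend on $N$ after re-absorbing weights in Step 1).

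\textbf{Step 3 (Smoothing via sign mismatch and commutator iteration).} The crucial geometric observation is that on $\operatorname{supp}(t_{0})$ the principal symbol $\langle x,\xi\rangle$ of $\mathbf{A}$ has a definite sign \emph{opposite} to that selected by $\mathbf{P}_{\pm}$: for $T_{0}$ with symbol in $\mathcal{S}_{+}^{n',m'}$ one has $\langle x,\xi\rangle\le -\varepsilon|x||\xi|\le -\varepsilon\varepsilon_{1}^{2}<0$ on $\operatorname{supp}(t_{0})$, while $\mathbf{P}_{+}$ projects onto the positive spectral part of $\mathbf{A}$ (and conversely for "$-$"). This sign mismatch is what drives the smoothing of $\mathbf{P}_{\pm}T_{0}$. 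To exploit it quantitatively, note that $[\mathbf{A},T_{0}]$ is a PDO whose symbol again lies in $\mathcal{S}_{\pm}^{n',m'}$ (the commutator with $\mathbf{A}$ amounts, to principal order, to $x\!\cdot\!\partial_{x}t_{0}-\xi\!\cdot\!\partial_{\xi}t_{0}$, which preserves the cone and axis-vanishing conditions by Proposition \ref{representation196}). An induction on $N$ then yields the operator identity
\[
\mathbf{A}^{N}\mathbf{P}_{\pm}T_{0}\;=\;\mathbf{P}_{\pm}\sum_{j=0}^{N}\binom{N}{j}(\operatorname{ad}_{\mathbf{A}})^{\,N-j}(T_{0})\,\mathbf{A}^{j},
\]
where each $(\operatorname{ad}_{\mathbf{A}})^{N-j}(T_{0})$ is a PDO with symbol in $\mathcal{S}_{\pm}^{n',m'}$. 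On $\operatorname{Ran}(\mathbf{P}_{\pm})$ one has $\mathbf{A}^{j}\mathbf{P}_{\pm}=(\pm 1)^{j}|\mathbf{A}|^{j}\mathbf{P}_{\pm}$, and for each fixed $j$ the block $\mathbf{P}_{\pm}(\operatorname{ad}_{\mathbf{A}})^{N-j}(T_{0})\,\mathbf{A}^{j}$ can be bounded by factoring out enough powers of $\langle x\rangle\langle\nabla\rangle$ from the PDO and appealing to Proposition \ref{representation7} (which is precisely designed to handle such combinations against negative powers of $\langle \mathbf{A}\rangle$, here extracted from the $|\mathbf{A}|^{j}$ on the right).

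\textbf{Main obstacle.} The central difficulty is bookkeeping: without the projection $\mathbf{P}_{\pm}$, commutators with $\mathbf{A}$ preserve (rather than lower) the order in the $\mathcal{S}^{n,m}$ scale, so a naive iteration does not terminate with bounded operators. The projection $\mathbf{P}_{\pm}$, together with the fact that the sign of $\langle x,\xi\rangle$ on $\operatorname{supp}(t_{0})$ is opposite to that of the spectrum of $\mathbf{A}$ selected by $\mathbf{P}_{\pm}$, is exactly what makes the iteration close into a finite sum of operators each falling within the scope of Proposition \ref{representation7}. Verifying that the adjoint and the iterated commutators genuinely stay inside $\mathcal{S}_{\pm}^{n',m'}$ (modulo negligible remainders) is the only technically delicate point and is a direct consequence of the asymptotic expansion in Proposition \ref{representation196}.
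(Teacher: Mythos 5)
The paper does not prove this proposition; it simply cites Yafaev's works \cite{27},\cite{30}, so there is no in-paper argument to compare against. Your Steps 1 and 2 (duality, absorbing $\langle x\rangle^{q}\langle\nabla\rangle^{s}$ into the symbol class via the product calculus of Proposition \ref{representation196}, and reduction to non-negative integer powers of $\langle\mathbf{A}\rangle$) are sound bookkeeping, and the geometric observation that the sign of $\langle x,\xi\rangle$ on $\operatorname{supp}(t_{0})$ is opposite to the spectral half-line selected by $\mathbf{P}_{\pm}$ is indeed the correct heart of the matter.

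However, Step 3 contains a genuine gap and does not close. The commutator identity
$\mathbf{A}^{N}\mathbf{P}_{\pm}T_{0}=\mathbf{P}_{\pm}\sum_{j=0}^{N}\binom{N}{j}(\operatorname{ad}_{\mathbf{A}})^{N-j}(T_{0})\,\mathbf{A}^{j}$
is correct as an algebraic identity (since $\mathbf{P}_{\pm}$ commutes with $\mathbf{A}$), but its $j=N$ term is $\mathbf{P}_{\pm}T_{0}\mathbf{A}^{N}$, which still has the unbounded $\mathbf{A}^{N}$ sitting on the far right and acting on all of $L^{2}$, not only on $\operatorname{Ran}(\mathbf{P}_{\pm})$. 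Your intended simplification $\mathbf{A}^{j}\mathbf{P}_{\pm}=(\pm1)^{j}|\mathbf{A}|^{j}\mathbf{P}_{\pm}$ is true when $\mathbf{A}^{j}$ and $\mathbf{P}_{\pm}$ are adjacent, but in $\mathbf{P}_{\pm}(\operatorname{ad}_{\mathbf{A}})^{N-j}(T_{0})\,\mathbf{A}^{j}$ they are separated by a PDO that does not commute with $\mathbf{A}$, so the relation cannot be applied. Finally, Proposition \ref{representation7} is a statement about compensating polynomial weights against \emph{negative} powers of $\langle\mathbf{A}\rangle$ on the right; your expression has a \emph{positive} power $|\mathbf{A}|^{j}$ there, which runs the wrong way — you cannot ``extract a negative power of $\langle\mathbf{A}\rangle$'' from it without simultaneously producing an unbounded compensating factor. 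In short, the commutator iteration alone preserves the problematic power of $\mathbf{A}$ rather than killing it.

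What is actually needed is a quantitative version of the sign-mismatch idea, not a finite Leibniz expansion. The mechanism Yafaev (and, in a closely related setting, Jensen \cite{5} and Isozaki--Kitada \cite{72}) uses is of a different nature: either a parametrix argument showing that, on $\operatorname{supp}(t_{0})$, the symbol $\langle x,\xi\rangle-z$ is uniformly bounded away from zero for $z$ in the spectral support of $\mathbf{P}_{\pm}$ (so that $T_{0}(\mathbf{A}-z)^{-1}$ is a PDO whose norm decays rapidly as $z\to\pm\infty$, allowing one to integrate $\langle z\rangle^{p}$ against the resulting spectral measure), or equivalently a dilation-group/Mellin-transform argument exploiting the fact that $e^{-it\mathbf{A}}$ pushes the support of the dilated symbol out of the cone for the sign of $t$ selected by $\mathbf{P}_{\pm}$. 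Either route produces super-polynomial decay in the $\mathbf{A}$-variable, which is what lets an arbitrary positive power $\langle\mathbf{A}\rangle^{p}$ be absorbed; the commutator expansion you propose has no such decay mechanism built in.
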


Since for $\mathbf{V}\ $satisfying the estimate (\ref{basicnotions46}), there
are no eigenvalues embedded in the absolutely continuous spectrum of $H,$ the
resolvent $R\left(  E\pm i0\right)  $ is locally H\"{o}lder continuous on
$(-\infty,-m)\cup(m,\infty).$ Thus, from Proposition 4.1 of \cite{15} we obtain

\begin{proposition}
\label{representation8}Let $\mathbf{V}=\{V_{jk}\}_{j,k=1,2,3,4,}$ be an
Hermitian $4\times4$-matrix valued function, such that $\left\vert \left(
x\cdot\nabla\right)  ^{l}V_{jk}\left(  x\right)  \right\vert ,$ $j,k=1,2,3,4,$
are bounded for all $x\in\mathbb{R}^{3}$ and $l=0,1,2.$ Then, for any
$E_{0}>m,$ the following estimate holds%
\[
\sup_{\substack{0<\varepsilon<1\\\left\vert E\right\vert \geq E_{0}%
}}\left\Vert R\left(  E\pm i\varepsilon\right)  f\right\Vert _{L_{-s}^{2}}\leq
C_{s,E_{0}}\left\Vert f\right\Vert _{L_{s}^{2}},\text{ }1/2<s\leq1.
\]

\end{proposition}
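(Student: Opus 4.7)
The estimate is a uniform high-energy version of the limiting absorption principle. My plan is to reduce the problem to a fixed energy window for a rescaled family of Dirac operators by a dilation change of variable, and then to apply the conjugate operator method of Mourre with $\mathbf{A}$ as conjugate operator; the combination is precisely Proposition~4.1 of \cite{15}.

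First, I use the unitary dilation $(U_\lambda f)(x):=\lambda^{3/2}f(\lambda x)$ for $\lambda\ge 1$. A direct computation gives
\[
U_\lambda^{-1}(H-E)U_\lambda=\lambda\bigl(\widetilde H^{(\lambda)}-E/\lambda\bigr),\qquad \widetilde H^{(\lambda)}:=-i\alpha\cdot\nabla+(m/\lambda)\beta+\lambda^{-1}\mathbf{V}(\cdot/\lambda),
\]
so that choosing $\lambda=|E|/E_{0}$ maps the high-energy problem onto the fixed-energy problem $\widetilde H^{(\lambda)}-\mathrm{sgn}(E)E_{0}\pm i\lambda^{-1}\varepsilon$ for a one-parameter family of Dirac operators. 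Since $U_\lambda$ contracts the weight $\langle x\rangle$ for $\lambda\ge 1$, the desired bound reduces to showing
\[
\sup_{\lambda\ge 1,\,0<\varepsilon'<1}\bigl\|\langle x\rangle^{-s}\bigl(\widetilde H^{(\lambda)}-(\pm E_{0}\pm i\varepsilon')\bigr)^{-1}\langle x\rangle^{-s}\bigr\|<\infty,\qquad 1/2<s\le 1.
\]

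Second, I verify that the hypotheses of the conjugate operator method apply uniformly to $\{\widetilde H^{(\lambda)}\}_{\lambda\ge 1}$ with $\mathbf{A}$ as conjugate operator. By the chain rule, the rescaled potential $\widetilde{\mathbf{V}}_\lambda(x):=\lambda^{-1}\mathbf{V}(x/\lambda)$ satisfies $(x\cdot\nabla)^l\widetilde{\mathbf{V}}_\lambda(x)=\lambda^{-1}[(y\cdot\nabla)^l\mathbf{V}](x/\lambda)$, so the hypothesis on $(x\cdot\nabla)^l V_{jk}$ for $l=0,1,2$ yields uniform bounds on $\widetilde{\mathbf{V}}_\lambda$ and on its first two commutators with $\mathbf{A}$, which is exactly the $C^{2}(\mathbf{A})$-regularity required by \cite{5}. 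The Mourre estimate \eqref{representation20} remains valid for $\widetilde H^{(\lambda)}$ near $\pm E_{0}$ uniformly in $\lambda$: by \eqref{representation22} and \eqref{representation121}, $i[\widetilde H^{(\lambda)},\mathbf{A}]=\widetilde H^{(\lambda)}-(m/\lambda)\beta-\widetilde{\mathbf{V}}_\lambda-\langle x,\nabla\widetilde{\mathbf{V}}_\lambda\rangle$, and on a spectral window about $\pm E_{0}$ the leading part is bounded below by $E_{0}-m/\lambda\ge E_{0}-m>0$, uniformly in $\lambda\ge 1$, while the remainder is $O(\lambda^{-1})$. The absence of eigenvalues embedded in $\sigma_{ac}(\widetilde H^{(\lambda)})$ near $\pm E_{0}$ is inherited from the corresponding property of $H$ under dilation.

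Finally, the standard conjugate operator LAP estimates, as packaged in Proposition~4.1 of \cite{15}, then produce the required uniform bound, and undoing the dilation yields the statement. The main obstacle is keeping all constants uniform as $\lambda\to\infty$, i.e.\ in the massless limit $m/\lambda\to 0$: the hypothesis $l\le 2$ on $(x\cdot\nabla)^l V_{jk}$ is exactly the threshold regularity of \cite{5}, and the fact that $E_{0}>m$ keeps the fixed energies $\pm E_{0}$ strictly away from the (shrinking) spectral gap $(-m/\lambda,m/\lambda)$ of $\widetilde H^{(\lambda)}$, so that both the spectral projections $E_{\widetilde H^{(\lambda)}}(I)$ appearing in the Mourre estimate and the Mourre constant $c$ can be chosen independent of $\lambda\ge 1$.
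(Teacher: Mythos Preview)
The paper does not prove this proposition; it merely writes ``from Proposition~4.1 of \cite{15} we obtain'' and states the result. Your proposal is therefore more detailed than the paper's own treatment: you correctly identify that the underlying mechanism is a dilation argument combined with uniform Mourre estimates, which is indeed what Ito \cite{15} does (and what the paper itself uses one paragraph earlier for Proposition~\ref{representation9} with $\langle\mathbf A\rangle$ weights).

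There is, however, a real gap in your reduction step. Under the dilation $U_\lambda$ one has
\[
\langle x\rangle^{-s}R(E\pm i\varepsilon)\langle x\rangle^{-s}
=\lambda^{-1}\,U_\lambda\,\langle x/\lambda\rangle^{-s}\,\widetilde R^{(\lambda)}(\pm E_0\pm i\varepsilon')\,\langle x/\lambda\rangle^{-s}\,U_\lambda^{-1},
\]
so what is actually needed is a bound on the rescaled resolvent with the \emph{stretched} weights $\langle x/\lambda\rangle^{-s}$, not $\langle x\rangle^{-s}$. Your claim that ``$U_\lambda$ contracts the weight'' and hence the problem reduces to a uniform $\langle x\rangle^{-s}$--weighted bound on $\widetilde R^{(\lambda)}(\pm E_0)$ is not justified: since $\langle x/\lambda\rangle\ge\langle x\rangle/\lambda$ for $\lambda\ge1$, one only has $\langle x/\lambda\rangle^{-s}\le\lambda^{s}\langle x\rangle^{-s}$, and replacing the stretched weights by $\langle x\rangle^{-s}$ costs a factor $\lambda^{2s}$. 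Combined with the prefactor $\lambda^{-1}$ this yields $\lambda^{2s-1}$, which diverges for every $s>1/2$. Thus the displayed ``reduced'' estimate, even if true, does not by itself imply the proposition.

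The dilation idea is salvageable, but the weights must be handled more carefully. One route (implicit in the paper's treatment of Proposition~\ref{representation9}) is to run Mourre theory with the dilation-invariant weights $\langle\mathbf A\rangle^{-s}$, for which the reduction is exact, and only afterwards pass to $\langle x\rangle^{-s}$; the passage then uses that on the energy shell the relevant momenta are of size $\nu(E)$, so that the loss incurred is compensated by the $|E|^{-1}$ gain from the Mourre constant. This is essentially what is carried out in \cite{15}, but it is not the simple contraction you describe.
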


Using Propositions \ref{representation9}-\ref{representation8} we get,
similarly to Proposition 3.5 of \cite{27} or Proposition 4.1 of \cite{30}, the
following result

\begin{lemma}
\label{representation33}For any $p$, $q,$ and $N$ such that $N>p-\rho+1/2,$
$N\geq q,$ for $\left\vert \operatorname{Re}z\right\vert >m$ and
$\operatorname{Im}z\geq0,$ the operator
\begin{equation}
\left\langle x\right\rangle ^{p}\left\langle \nabla\right\rangle ^{q}%
T_{+}^{\ast}R\left(  z\right)  T_{-}\left\langle \nabla\right\rangle
^{q}\left\langle x\right\rangle ^{p}, \label{representation21}%
\end{equation}
is continuous in norm~with respect to $z$ and, moreover, the operator
$\left\langle x\right\rangle ^{p}\left\langle \nabla\right\rangle ^{q}%
T_{+}^{\ast}R_{+}\left(  E\right)  T_{-}\left\langle \nabla\right\rangle
^{q}\left\langle x\right\rangle ^{p},$ is uniformly bounded for $\left\vert
E\right\vert \geq E_{0}>m$, for all $E_{0}.$
\end{lemma}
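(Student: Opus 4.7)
The plan is to decompose $T_{\pm}$ via (\ref{representation36}) into $T_{\pm}^{1}+T_{\pm}^{2}$ and to analyze the four cross terms that arise in $T_{+}^{\ast}R(z)T_{-}$ separately, exploiting that $t_{\pm}^{1}\in\mathcal{S}^{-\rho-N,-N}$ from (\ref{representation23}) is smoothing and decaying of arbitrarily high order when $N$ is large, while $t_{\pm}^{2}\in\mathcal{S}_{\pm}^{-1,0}$ from (\ref{representation24}) carries the geometric propagation structure adapted to Mourre-type estimates.

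For any cross term containing a factor of $T^{1}$, the strategy will be to absorb the outer weights directly into a PDO estimate. Concretely, standard composition of (\ref{representation23}) with $\langle x\rangle^{p}$ and $\langle\nabla\rangle^{q}$ produces a symbol in $\mathcal{S}^{p-\rho-N,q-N}$; under the hypothesis $N>p-\rho+1/2$ and $N\geq q$, one extracts an additional weight $\langle x\rangle^{-(1/2+\varepsilon)}$ to land in a class to which Proposition \ref{basicnotions24} applies, so that $\langle x\rangle^{p}\langle\nabla\rangle^{q}T_{\pm}^{1}$ acts boundedly from $L^{2}_{-(1/2+\varepsilon)}$ into $L^{2}$. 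Combining this mapping property on both sides with the uniform weighted estimate of Proposition \ref{representation8} and the Hölder continuity of $R(z)$ on weighted $L^{2}$-spaces, one obtains norm continuity in $z$ and a uniform bound for $|E|\geq E_{0}$ for the three cross terms $T_{+}^{1\ast}R(z)T_{-}^{1}$, $T_{+}^{1\ast}R(z)T_{-}^{2}$, $T_{+}^{2\ast}R(z)T_{-}^{1}$ after conjugation by $\langle x\rangle^{p}\langle\nabla\rangle^{q}$.

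The remaining principal term is $T_{+}^{2\ast}R(z)T_{-}^{2}$. Inserting $\mathbf{P}_{+}+\mathbf{P}_{-}=I$ on each side of $R(z)$ produces four subterms of the form
\[
\bigl(\langle x\rangle^{p}\langle\nabla\rangle^{q}T_{+}^{2\ast}\mathbf{P}_{\epsilon_{1}}\langle\mathbf{A}\rangle^{-\mu}\bigr)\,\bigl(\langle\mathbf{A}\rangle^{\mu}\mathbf{P}_{\epsilon_{1}}R(z)\mathbf{P}_{\epsilon_{2}}\langle\mathbf{A}\rangle^{\mu}\bigr)\,\bigl(\langle\mathbf{A}\rangle^{-\mu}\mathbf{P}_{\epsilon_{2}}T_{-}^{2}\langle\nabla\rangle^{q}\langle x\rangle^{p}\bigr),
\]
using that each $\mathbf{P}_{\epsilon}$ commutes with $\langle\mathbf{A}\rangle^{\mu}$. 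The middle factor is handled by Proposition \ref{representation9}: estimate (\ref{representation1}) covers the diagonal pairs with $\mu$ slightly above $1/2$; estimate (\ref{representation2}) covers the half-mixed pairs; and estimate (\ref{representation3}) covers the fully opposite pair and allows $\mu$ to be taken arbitrarily large. For the outer factors, one uses Proposition \ref{representation7} in the cases where the sign of $\mathbf{P}_{\epsilon}$ mismatches the class $\mathcal{S}_{\pm}^{-1,0}$ of $t_{\pm}^{2}$, and the subsequent proposition (which gives boundedness of $\langle x\rangle^{q}\langle\nabla\rangle^{s}T\mathbf{P}_{\pm}\langle\mathbf{A}\rangle^{p}$ for \emph{all} real $p,q,s$) in the matching cases, thereby supplying whatever $\langle\mathbf{A}\rangle^{\mu}$-growth the weakest middle-factor estimate demands. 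The main obstacle is precisely this bookkeeping: matching the four sign combinations of $(\epsilon_{1},\epsilon_{2})$ against the $\mathcal{S}_{\pm}$-class of $t_{\pm}^{2}$ so that, in each subterm, the available outer estimate absorbs the weight $\mu$ required by the corresponding middle-factor bound. This combinatorial matching is exactly what the outgoing/incoming design of $J_{\pm}$ through the cutoffs $\zeta_{\pm}^{\pm}$ and Lemma \ref{representation56} is built to provide; norm continuity and uniform high-energy decay follow simultaneously from Proposition \ref{representation9}.
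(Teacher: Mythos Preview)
Your overall architecture is the right one and matches the paper's approach (which in turn follows Yafaev \cite{27,30}): split $T_{\pm}=T_{\pm}^{1}+T_{\pm}^{2}$, treat the $(T_{+}^{2})^{\ast}R(z)T_{-}^{2}$ piece by inserting $\mathbf{P}_{+}+\mathbf{P}_{-}$ and invoking Propositions~\ref{representation9}, \ref{representation7} and the subsequent proposition, and use the strong decay of $t_{\pm}^{1}\in\mathcal{S}^{-\rho-N,-N}$ for the remaining pieces. The bookkeeping you sketch for the four sign combinations in the principal term is exactly what is needed.

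There is, however, a genuine gap in your handling of the two \emph{mixed} cross terms $(T_{+}^{1})^{\ast}R(z)T_{-}^{2}$ and $(T_{+}^{2})^{\ast}R(z)T_{-}^{1}$. You write that the mapping property ``$\langle x\rangle^{p}\langle\nabla\rangle^{q}T_{\pm}^{1}:L^{2}_{-(1/2+\varepsilon)}\to L^{2}$'' combined with Proposition~\ref{representation8} handles all three $T^{1}$-containing terms. That works for $(T_{+}^{1})^{\ast}R(z)T_{-}^{1}$, but for a mixed term you would also need $T_{-}^{2}\langle\nabla\rangle^{q}\langle x\rangle^{p}:L^{2}\to L^{2}_{s}$ with some $s>1/2$. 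Since $t_{-}^{2}\in\mathcal{S}_{-}^{-1,0}$, the composite $T_{-}^{2}\langle\nabla\rangle^{q}\langle x\rangle^{p}$ has symbol in $\mathcal{S}^{p-1,q}$, so $\langle x\rangle^{s}T_{-}^{2}\langle\nabla\rangle^{q}\langle x\rangle^{p}$ is bounded only if $s+p-1\leq 0$ and $q\leq 0$; this fails for $p\geq 1/2$ or $q>0$, i.e.\ precisely in the regime the lemma is meant to cover. The symmetric $\langle x\rangle$-weighted LAP of Proposition~\ref{representation8} cannot close this gap. The correct (and standard) remedy is to treat the mixed terms with the same $\langle\mathbf{A}\rangle$/$\mathbf{P}_{\pm}$ machinery: insert $\mathbf{P}_{+}+\mathbf{P}_{-}$ next to the $T^{2}$ factor and apply Proposition~\ref{representation7} or the subsequent proposition there, while on the $T^{1}$ side one uses that the $\mathcal{S}^{-\rho-N,-N}$ decay lets $\langle x\rangle^{p}\langle\nabla\rangle^{q}(T_{\pm}^{1})^{\ast}\langle\mathbf{A}\rangle^{M}$ remain bounded for any $M\lesssim N$, so it absorbs whatever $\langle\mathbf{A}\rangle$-weight the middle resolvent factor from Proposition~\ref{representation9} requires.
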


\subsection{The regular and singular parts of the scattering matrix.}

For $\left\vert E\right\vert >m,$ let us define the following operators%
\begin{equation}
S_{1}\left(  E\right)  =2\pi i\Gamma_{0}\left(  E\right)  T_{+}^{\ast}%
R_{+}\left(  E\right)  T_{-}\Gamma_{0}^{\ast}\left(  E\right)  ,
\label{representation14}%
\end{equation}
and for $f_{j}\in\mathcal{H}\left(  E\right)  $ such that $f_{1},f_{2}\in
C^{\infty}\left(  \mathbb{S}^{2};\mathbb{C}^{4}\right)  ,$ $j=1,2,$
$S_{2}\left(  E\right)  $ is defined as the following form%
\begin{equation}
\left(  S_{2}\left(  E\right)  f_{1},f_{2}\right)  :=-2\pi i\lim
_{\mu\downarrow0}\left(  J_{+}^{\ast}T_{-}\delta_{\mu}\left(  H_{0}-E\right)
g_{1},\Gamma_{0}^{\ast}\left(  E\right)  \Gamma_{0}\left(  E\right)
g_{2}\right)  , \label{representation134}%
\end{equation}
where $\delta_{\mu}\left(  H_{0}-E\right)  :=\left(  2\pi i\right)
^{-1}\left(  R_{0}\left(  E+i\mu\right)  -R_{0}\left(  E-i\mu\right)  \right)
$ and $g_{j}$ are such that $\hat{g}_{j}\left(  \xi\right)  =\upsilon\left(
E\right)  ^{-1}f_{j}\left(  \hat{\xi}\right)  \gamma\left(  \left\vert
\xi\right\vert \right)  $, $j=1,2,$ with $\hat{\xi}=\xi/\left\vert
\xi\right\vert ,$ $\gamma\in C_{0}^{\infty}\left(  \mathbb{R}^{+}%
\setminus\{0\}\right)  $ and $\gamma\left(  \nu\left(  E\right)  \right)  =1.$
Below we compute the limit in the R.H.S. of (\ref{representation134}) and we
prove that it is a bounded operator, that is independent of $\gamma.$

The kernel of the operator $S_{1}\left(  E\right)  $ is smooth:

\begin{theorem}
\label{representation32}Let the magnetic potential $A\left(  x\right)  $ and
the electric potential $V\left(  x\right)  $ satisfy the estimates
(\ref{eig31}) and (\ref{eig32}) respectively. For any $p$ and $q,$ there is
$N$ such that the kernel $s_{1}\left(  \omega,\theta;E\right)  $ of the
operator $S_{1}\left(  E\right)  $ belongs to the class of $C^{p}\left(
\mathbb{S}^{2}\times\mathbb{S}^{2}\right)  $-functions$,$ and furthermore its
$C^{p}-$norm is $O\left(  \left\vert E\right\vert ^{-q}\right)  $ when
$\left\vert E\right\vert \rightarrow\infty.$
\end{theorem}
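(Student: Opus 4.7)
The plan is to follow the Yafaev strategy (Proposition 3.5 of \cite{27} and its analogue in \cite{30}) and estimate derivatives of $s_{1}(\omega,\theta;E)$ in the angular variables. Writing out the kernel explicitly from the definitions of $\Gamma_{0}(E)$ and $\Gamma_{0}^{\ast}(E)$, one has
\[
s_{1}(\omega,\theta;E)=2\pi i\,(2\pi)^{-3}\upsilon(E)^{2}\,P_{\omega}(E)\!\iint e^{-i\nu(E)\langle\omega,x\rangle}\,\mathcal{K}(x,y;E)\,e^{i\nu(E)\langle y,\theta\rangle}\,P_{\theta}(E)\,dx\,dy,
\]
where $\mathcal{K}(x,y;E)$ is the Schwartz kernel of $T_{+}^{\ast}R_{+}(E)T_{-}$. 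Each application of $\partial_{\omega_{j}}$ or $\partial_{\theta_{j}}$ produces either a derivative of the explicit factors $P_{\omega}(E),P_{\theta}(E),\upsilon(E)^{2}$, which are harmless, or a factor $\nu(E)x_{j}$, $\nu(E)y_{j}$ from the exponentials. Thus a total of $p$ angular derivatives yields an expression controlled by operator norms of $\langle x\rangle^{p_{1}}T_{+}^{\ast}R_{+}(E)T_{-}\langle x\rangle^{p_{1}}$ multiplied by a power $\nu(E)^{p_{1}}$, for $p_{1}\leq p$.

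The first step is to invoke Lemma \ref{representation33}: for any $p_{1},q_{1}$ one can choose $N$ large enough so that $\langle x\rangle^{p_{1}}\langle\nabla\rangle^{q_{1}}T_{+}^{\ast}R_{+}(E)T_{-}\langle\nabla\rangle^{q_{1}}\langle x\rangle^{p_{1}}$ is bounded uniformly for $|E|\geq E_{0}>m$. The factors $\langle\nabla\rangle^{q_{1}}\langle x\rangle^{p_{1}}$, together with the decay and smoothness of the cutoff $\theta(|\xi|)$ built into $J_{\pm}$, allow one to pass from the operator bound to a $C^{p}(\mathbb{S}^{2}\times\mathbb{S}^{2})$ bound on the kernel. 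Concretely, one inserts $\langle x\rangle^{-p_{1}}\langle x\rangle^{p_{1}}$ and $\langle y\rangle^{-p_{1}}\langle y\rangle^{p_{1}}$ and uses that $P_{\omega}(E)\upsilon(E)\langle x\rangle^{-p_{1}}$, acting on the kernel in the remaining $L^{2}$-sense, yields a continuous function of $\omega$ with bounded angular derivatives whenever $p_{1}$ is sufficiently large (a Sobolev trace argument applied to the Fourier transform on the sphere).

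The second and more delicate step is the high-energy bound $O(|E|^{-q})$. For the $T_{\pm}^{1}$ part, whose symbol $t_{\pm}^{1}\in\mathcal{S}^{-\rho-N,-N}$ carries a genuine $|\xi|^{-N}$ decay, every factor $\nu(E)$ generated by differentiation is compensated, provided $N$ is large compared to $p+q$. For the $T_{\pm}^{2}$ part with $t_{\pm}^{2}\in\mathcal{S}_{\pm}^{-1,0}$ there is no $\xi$-decay, so one must exploit the cone support of $t_{\pm}^{2}$ and rewrite $T_{\pm}^{2}\sim T_{\pm}^{2}\mathbf{P}_{\pm}$ modulo negligible terms, using Proposition \ref{representation7} and its companion. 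Then the resolvent is sandwiched as $\mathbf{P}_{\mp}R_{+}(E)\mathbf{P}_{\pm}$, and the sharp estimates \eqref{representation3} of Proposition \ref{representation9} give an additional $|E|^{-1}$ for every pair of $\langle\mathbf{A}\rangle^{p}$-weights one inserts, which can be made arbitrarily large. Combining with the $T_{\pm}^{1}$ contribution one obtains, for $N$ chosen large enough in terms of $(p,q)$, the desired $C^{p}$-norm bound $O(|E|^{-q})$.

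The main obstacle is the second step: the angular derivatives bring a blow-up $\nu(E)^{p}\sim|E|^{p}$ that must be overcome simultaneously with the $|E|^{-q}$ gain. The $T^{1}$ contribution handles this by brute force through the parameter $N$, but the $T^{2}$ contribution requires care, because its symbol is only $\mathcal{S}_{\pm}^{-1,0}$. The resolution relies on combining the cone restriction of $t_{\pm}^{2}$ with the conjugate-operator estimates through the dilation generator $\mathbf{A}$, so that every insertion of a weight in $x$ (needed for $C^{p}$-smoothness) can be balanced by a weight in $\langle\mathbf{A}\rangle$ (needed to push the power of $|E|^{-1}$ down). Organising this bookkeeping so that $N=N(p,q)$ works uniformly for $|E|\geq E_{0}$ is the technical heart of the argument, exactly as in the Schr\"odinger analogues \cite{27,30}.
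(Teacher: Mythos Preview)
Your overall architecture is right: write out the kernel, note that angular derivatives produce factors $\nu(E)x_{j}$ and $\nu(E)y_{j}$, and reduce everything to the uniform boundedness of $\langle x\rangle^{p_{0}}\langle\nabla\rangle^{q_{0}}T_{+}^{\ast}R_{+}(E)T_{-}\langle\nabla\rangle^{q_{0}}\langle x\rangle^{p_{0}}$ furnished by Lemma~\ref{representation33}. Up to that point you are doing exactly what the paper does.

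The gap is in your ``second step'', the mechanism you propose for the $O(|E|^{-q})$ decay. For the $T_{\pm}^{2}$--$T_{\pm}^{2}$ contribution you assert that the microlocal estimate \eqref{representation3} yields ``an additional $|E|^{-1}$ for every pair of $\langle\mathbf{A}\rangle^{p}$-weights one inserts''. That is not what Proposition~\ref{representation9} says: the norms of \eqref{representation1}--\eqref{representation3} are each $O(|E|^{-1})$, full stop, independent of the exponent $p$. There is only one resolvent, and the conjugate-operator machinery gives you exactly one power of $|E|^{-1}$ from it; raising $p$ buys you more room for $\langle\mathbf{A}\rangle$-weights but no extra energy decay. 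So your mechanism produces at best $O(|E|^{-1})$ on the pure $T^{2}$ term, and the differentiation factors $\nu(E)^{p}$ are uncompensated.

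The paper's argument avoids this entirely and is much shorter. The whole point of the $\langle\nabla\rangle^{q_{0}}$ weights in Lemma~\ref{representation33} is the elementary identity
\[
\Gamma_{0}(E)\langle\nabla\rangle^{-q_{0}}=(1+\nu(E)^{2})^{-q_{0}/2}\Gamma_{0}(E),
\]
which you never use. Inserting $\langle\nabla\rangle^{-q_{0}}\langle\nabla\rangle^{q_{0}}$ on both sides of $T_{+}^{\ast}R_{+}(E)T_{-}$ and pushing the negative powers through $\Gamma_{0}(E)$, $\Gamma_{0}^{\ast}(E)$ produces an explicit prefactor $\upsilon(E)^{2}(1+\nu(E)^{2})^{-q_{0}}$; the remaining sandwich $\langle x\rangle^{p_{0}}\langle\nabla\rangle^{q_{0}}T_{+}^{\ast}R_{+}(E)T_{-}\langle\nabla\rangle^{q_{0}}\langle x\rangle^{p_{0}}$ is uniformly bounded by Lemma~\ref{representation33} for $N\geq\max\{p_{0}-\rho,q_{0}\}$. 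Counting powers, with $p_{0}>p+3/2$ (so that $e^{\pm i\nu(E)\langle\cdot,\cdot\rangle}\langle x\rangle^{-p_{0}}\in L^{2}$ together with $p$ angular derivatives) and $q_{0}\geq 1+q/2+p$, one gets the $C^{p}$-norm bounded by $C|E|^{-q}$. No separate $T^{1}/T^{2}$ analysis is needed here; that split is already inside the proof of Lemma~\ref{representation33}, and the high-energy decay comes purely from the $\langle\nabla\rangle$ commutation, not from iterating resolvent estimates.
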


\begin{proof}
Using the definition (\ref{basicnotions11}) of $\Gamma_{0}\left(  E\right)  $,
the relation (\ref{basicnotions39}) for $\Gamma_{0}^{\ast}\left(  E\right)  ,$
and $\Gamma_{0}\left(  E\right)  \left\langle \nabla\right\rangle ^{-q_{0}%
}=\left(  1+\nu\left(  E\right)  ^{2}\right)  ^{-\frac{q_{0}}{2}}\Gamma
_{0}\left(  E\right)  ,$ we get $\left(  S_{1}\left(  E\right)  f\right)
\left(  \omega\right)  =i\left(  2\pi\right)  ^{-2}\upsilon\left(  E\right)
^{2}\left(  1+\nu\left(  E\right)  ^{2}\right)  ^{-q_{0}}\int\int%
\limits_{\mathbb{S}^{2}}e^{-i\nu\left(  E\right)  \left\langle \omega
,x\right\rangle }\left[  P_{\omega}\left(  E\right)  \left\langle
\nabla\right\rangle ^{q_{0}}T_{+}^{\ast}R_{+}\left(  E\right)  T_{-}%
\left\langle \nabla\right\rangle ^{q_{0}}P_{\theta}\left(  E\right)  \right]
e^{i\nu\left(  E\right)  \left\langle \theta,x\right\rangle }$ $\times
f\left(  \theta\right)  d\theta dx.$ Note that for $s>3/2,$ $e^{i\nu\left(
E\right)  \left\langle \theta,x\right\rangle }\left\langle x\right\rangle
^{-s}\in L^{2}.$ Using Lemma \ref{representation33} with $N\geq3/2-\rho$ we
see that
\begin{equation}
\left.
\begin{array}
[c]{c}%
s_{1}\left(  \omega,\theta;E\right)  \left.  :=\right.  i\left(  2\pi\right)
^{-2}\upsilon\left(  E\right)  ^{2}\left(  1+\nu\left(  E\right)  ^{2}\right)
^{-q_{0}}%
{\displaystyle\int}
\left(  e^{-i\nu\left(  E\right)  \left\langle \omega,x\right\rangle
}\left\langle x\right\rangle ^{-s}\right) \\
\times\left[  P_{\omega}\left(  E\right)  \left\langle x\right\rangle
^{s}\left\langle \nabla\right\rangle ^{q_{0}}T_{+}^{\ast}R_{+}\left(
E\right)  T_{-}\left\langle \nabla\right\rangle ^{q_{0}}\left\langle
x\right\rangle ^{s}P_{\theta}\left(  E\right)  \right]  \left(  e^{i\nu\left(
E\right)  \left\langle \theta,x\right\rangle }\left\langle x\right\rangle
^{-s}\right)  dx,
\end{array}
\right.  \label{representation177}%
\end{equation}
is a continuous function of $\omega$ and$\ \theta.$ Differentiating
(\ref{representation177}) $p$ times with respect to $\omega$ or $\theta$ we
see that $\partial_{\omega}^{p}\partial_{\theta}^{p}\left(  s_{1}\left(
\omega,\theta;E\right)  \right)  $ is continuous in $\omega$ and $\theta,$ and
bounded by $C\left\vert E\right\vert ^{-q},$ if the operator
\begin{equation}
\left\langle x\right\rangle ^{p_{0}}\left\langle \nabla\right\rangle ^{q_{0}%
}T_{+}^{\ast}R_{+}\left(  E\right)  T_{-}\left\langle \nabla\right\rangle
^{q_{0}}\left\langle x\right\rangle ^{p_{0}} \label{representation19}%
\end{equation}
with $p_{0}>p+3/2$ and $q_{0}\geq1+q/2+p,$ is bounded uniformly for
$\left\vert E\right\vert \geq E_{0}>m$. Taking $N\geq p_{0}-\rho$ and $N\geq
q_{0}$ in Lemma \ref{representation33} we get the desired result.
\end{proof}

Let us study now the limit (\ref{representation134}). It follows from
(\ref{representation134}) that%
\begin{equation}
\left.  \left(  S_{2}\left(  E\right)  f_{1},f_{2}\right)  =-2\pi i\lim
_{\mu\downarrow0}\left(  T_{-}\mathcal{F}^{\ast}\tilde{\delta}_{\mu}^{\left(
\operatorname*{sgn}E\right)  }\hat{g}_{1},J_{+}\Gamma_{0}^{\ast}\left(
E\right)  f_{2}\right)  ,\right.  \label{representation146}%
\end{equation}
where $\tilde{\delta}_{\mu}^{\left(  \operatorname*{sgn}E\right)  }=\frac{\mu
}{\pi}\left(  \left(  \lambda\left(  \xi\right)  -E\right)  ^{2}+\mu
^{2}\right)  ^{-1}P^{\left(  \operatorname*{sgn}E\right)  }\left(  \xi\right)
,$ where $\lambda\left(  \xi\right)  $ are defined below
(\ref{representation15}). Note that the equality $\lambda\left(  \xi\right)
=E$ is valid only if $\left\vert \xi\right\vert =\nu\left(  E\right)  .$

Using the relation (\ref{representation133}) we obtain the following equation
$J_{+}\Gamma_{0}^{\ast}\left(  E\right)  f_{2}=\left(  2\pi\right)
^{-\frac{3}{2}}\upsilon\left(  E\right)  \int e^{i\left\langle x,\nu\left(
E\right)  \omega\right\rangle }j_{N}^{+}\left(  x,\nu\left(  E\right)
\omega;E\right)  $ $\times f_{2}\left(  \omega\right)  d\omega,$ where
$j_{N}^{+}\left(  x,\xi;E\right)  :=a_{N}^{+}\left(  x,\xi;\lambda\left(
\xi\right)  \right)  \zeta_{+}^{\left(  \operatorname*{sgn}E\right)  }\left(
x,\xi\right)  $ and moreover
\begin{equation}
\left.  \left(  T_{-}f,J_{+}\Gamma_{0}^{\ast}\left(  E\right)  f_{2}\right)
=\left(  2\pi\right)  ^{-3}\upsilon\left(  E\right)
{\displaystyle\int}
\left(
{\displaystyle\int}
{\displaystyle\int}
e^{i\left\langle x,\xi^{\prime}-\nu\left(  E\right)  \omega\right\rangle
}\left(  \left(  j_{N}^{+}\left(  x,\nu\left(  E\right)  \omega;E\right)
\right)  ^{\ast}t_{-}\left(  x,\xi^{\prime}\right)  \hat{f}\left(  \xi
^{\prime}\right)  ,f_{2}\left(  \omega\right)  \right)  d\omega d\xi^{\prime
}\right)  dx,\right.  \label{representation51}%
\end{equation}
for $f\in\mathcal{S}\left(  \mathbb{R}^{3};\mathbb{C}^{4}\right)  .$

Let us define $\varphi\in C_{0}^{\infty}\left(  \mathbb{R}^{3}\right)  $ such
that $\varphi\left(  0\right)  =1.$ Then, taking $f=\mathcal{F}^{\ast}%
\tilde{\delta}_{\mu}^{\left(  \operatorname*{sgn}E\right)  }\hat{g}_{1}$ in
(\ref{representation51}) and using relations (\ref{representation36}),
(\ref{eig36}) and (\ref{representation146}) we get%
\begin{equation}
\left.  \left(  S_{2}\left(  E\right)  f_{1},f_{2}\right)  =-i\left(
2\pi\right)  ^{-2}\upsilon\left(  E\right)  \lim_{\mu\downarrow0}%
\lim_{\varepsilon\rightarrow0}%
{\displaystyle\int}
{\displaystyle\int_{\mathbb{S}^{2}}}
\left(  G^{\left(  \varepsilon\right)  }\left(  \nu\left(  E\right)
\omega,\xi^{\prime};E\right)  \tilde{\delta}_{\mu}^{\left(
\operatorname*{sgn}E\right)  }\hat{g}_{1}\left(  \xi^{\prime}\right)
,f_{2}\left(  \omega\right)  \right)  d\omega d\xi^{\prime}.\right.
\label{representation188}%
\end{equation}
where, for some $\varsigma\in C_{0}^{\infty}\left(  \mathbb{R}^{+}\right)  ,$
such that $\varsigma\left(  t\right)  =1,$ in some neighborhood of $\nu\left(
E\right)  $ and $\varsigma\left(  t\right)  =0,$ for $t<c_{1},$
\begin{equation}
G^{\left(  \varepsilon\right)  }\left(  \xi,\xi^{\prime};E\right)
:=\varsigma\left(  \left\vert \xi\right\vert \right)  \varsigma\left(
\left\vert \xi^{\prime}\right\vert \right)  \int e^{i\left\langle
x,\xi^{\prime}-\xi\right\rangle }\left(  j_{N}^{+}\left(  x,\xi;E\right)
\right)  ^{\ast}t_{-}\left(  x,\xi^{\prime};E\right)  \varphi\left(
\varepsilon x\right)  dx, \label{representation40}%
\end{equation}
and
\begin{equation}
\left.  t_{-}\left(  x,\xi^{\prime};E\right)  :=t_{-}^{1}\left(  x,\xi
^{\prime};E\right)  +t_{-}^{2}\left(  x,\xi^{\prime};E\right)  ,\right.
\label{representation223}%
\end{equation}
with $t_{-}^{1}\left(  x,\xi^{\prime};E\right)  :=r_{N}^{-}\left(
x,\xi^{\prime};\lambda\left(  \xi^{\prime}\right)  \right)  \zeta_{-}^{\left(
\operatorname*{sgn}E\right)  }\left(  x,\xi^{\prime}\right)  $ and $t_{-}%
^{2}\left(  x,\xi^{\prime};E\right)  :=-i\sum_{j=1}^{3}\left(  \partial
_{x_{j}}\zeta_{-}^{\left(  \operatorname*{sgn}E\right)  }\left(  x,\xi
^{\prime}\right)  \right)  \alpha_{j}a_{N}^{-}\left(  x,\xi^{\prime}%
;\lambda\left(  \xi^{\prime}\right)  \right)  .$ Below we study the limit of
$G^{\left(  \varepsilon\right)  }\left(  \xi,\xi^{\prime};E\right)  $, as
$\varepsilon\rightarrow0$. Then, calculating the limit
(\ref{representation188}), we recover information about the smoothness and
behavior for $\left\vert E\right\vert \rightarrow\infty$ of the kernel
$s_{2}\left(  \omega,\theta;E\right)  $ of $S_{2}\left(  E\right)  .$

Let us denote $\hat{\xi}=\xi/\left\vert \xi\right\vert $ and $\hat{\xi
}^{\prime}=\xi^{\prime}/\left\vert \xi^{\prime}\right\vert .$ We use the
following result

\begin{lemma}
\label{representation54}Let $G\left(  \xi,\xi^{\prime};E\right)  ,$ defined on
$\mathbb{R}^{3}\times\mathbb{R}^{3},$ be such that for each $E,$ $\left\vert
E\right\vert >m,$ the function $G\left(  \nu\left(  E\right)  \hat{\xi}%
,\xi^{\prime};E\right)  $ is H\"{o}lder-continuous on $\xi^{\prime},$
uniformly for $\omega\in\mathbb{S}^{2}.$ Let $g_{1},$ $f_{1}$ and $f_{2}$ be
as in (\ref{representation134})$.$ Then, the following identity holds%
\begin{equation}
\left.  \lim_{\mu\downarrow0}%
{\displaystyle\int}
{\displaystyle\int_{\mathbb{S}^{2}}}
\left(  G\left(  \nu\left(  E\right)  \omega,\xi^{\prime};E\right)
\tilde{\delta}_{\mu}^{\left(  \operatorname*{sgn}E\right)  }\hat{g}_{1}\left(
\xi^{\prime}\right)  ,f_{2}\left(  \omega\right)  \right)  d\omega
d\xi^{\prime}=\upsilon\left(  E\right)
{\displaystyle\int_{\mathbb{S}^{2}}}
{\displaystyle\int_{\mathbb{S}^{2}}}
\left(  G\left(  \nu\left(  E\right)  \omega,\nu\left(  E\right)
\theta;E\right)  f_{1}\left(  \theta\right)  ,f_{2}\left(  \omega\right)
\right)  d\omega d\theta.\right.  \label{representation65}%
\end{equation}

\end{lemma}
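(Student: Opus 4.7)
The plan is to reduce the identity to a classical approximate-identity statement for the Poisson kernel in the radial variable of $\xi'$. First I would pass to polar coordinates $\xi' = r\theta$ with $r = |\xi'|$ and $\theta \in \mathbb{S}^2$, so that $d\xi' = r^2\,dr\,d\theta$. The specific form $\hat{g}_1(\xi') = \upsilon(E)^{-1} f_1(\theta)\gamma(r)$ separates the variables on the left-hand side of (\ref{representation65}), reducing the problem to computing the $\mu \downarrow 0$ limit of a radial integral weighted by $\mathcal{D}_\mu(r) := \frac{\mu/\pi}{(\lambda(r) - E)^2 + \mu^2}$, with $\lambda(r) = (\operatorname{sgn} E)\sqrt{r^2 + m^2}$.

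Next I would perform the change of variables $\lambda' = \sqrt{r^2 + m^2}$, under which $r\,dr = \lambda'\,d\lambda'$ and the radial range becomes $\lambda' \in (m, \infty)$. Since $(\lambda(r) - E)^2 = (\lambda' - |E|)^2$, this rewrites $\mathcal{D}_\mu(r)\,r^2\,dr$ as the standard Poisson kernel $\frac{\mu/\pi}{(\lambda' - |E|)^2 + \mu^2}$ in the single real variable $\lambda'$, times the Jacobian factor $r(\lambda')\lambda'\,d\lambda'$. The concentration point is $\lambda' = |E|$, equivalently $r = \nu(E)$, and as $\mu \downarrow 0$ this kernel is a standard approximation to $\delta(\lambda' - |E|)$.

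The main obstacle is the justification of interchanging the limit with the integral, and this is exactly where the H\"older continuity hypothesis on $G(\nu(E)\omega,\xi';E)$ in $\xi'$ (uniform in $\omega$) is indispensable. It furnishes a bound of the form $C|\lambda' - |E||^\alpha$, uniformly in $\omega$, $\theta$ and $\lambda'$ near $|E|$, for the deviation of $G$ from its value at $r = \nu(E)$. Combined with the $L^1$-boundedness of the Poisson kernel on $\mathbb{R}$ (mass one, concentrating near $|E|$) and the continuity of the remaining factors $P^{(\operatorname{sgn} E)}(r\theta)$ and $\gamma(r)$ together with the smoothness of $f_1$ and $f_2$, this allows the classical approximate-identity argument: split the integrand into its value at $r = \nu(E)$ plus a remainder, observe that the remainder vanishes in the limit while the point-evaluation piece is preserved.

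After taking the limit, the integrand is evaluated at $\lambda' = |E|$, equivalently $r = \nu(E)$. The Jacobian contributes the factor $\nu(E)|E|$, while $\gamma(\nu(E)) = 1$ by the normalisation of $\gamma$. Combining with the $\upsilon(E)^{-1}$ from $\hat{g}_1$ and the identity $\upsilon(E)^2 = |E|\nu(E)$ gives an overall prefactor $\upsilon(E)$. Finally, since $f_1 \in \mathcal{H}(E) = \int_{\mathbb{S}^2}^{\oplus} X^{\pm}(\nu(E)\omega)\,d\omega$ with sign agreeing with $\operatorname{sgn} E$, one has $P^{(\operatorname{sgn} E)}(\nu(E)\theta) f_1(\theta) = f_1(\theta)$ pointwise in $\theta$, so the projector in $\tilde{\delta}_\mu^{(\operatorname{sgn} E)}$ disappears, and one recovers exactly the right-hand side of (\ref{representation65}).
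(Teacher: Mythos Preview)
Your proposal is correct and follows essentially the same approach as the paper: pass to polar coordinates in $\xi'$, use the separated form of $\hat g_1$, and reduce to a one-dimensional approximate-identity statement in the radial variable justified by the H\"older continuity of $G$. The paper packages the radial limit as a stated identity $\lim_{\mu\downarrow0}\frac{\mu}{\pi}\int_0^\infty f(r)\big((\sqrt{r^2+m^2}-|E|)^2+\mu^2\big)^{-1}dr=\frac{f(\nu(E))|E|}{\nu(E)}$ and applies it to $f(r)=r^2 W(r)$ with $W(r)=\upsilon(E)^{-1}\gamma(r)\int_{\mathbb S^2}\int_{\mathbb S^2}\big(G(\nu(E)\omega,r\theta;E)f_1(\theta),f_2(\omega)\big)\,d\omega\,d\theta$, whereas you make the substitution $\lambda'=\sqrt{r^2+m^2}$ explicit and also spell out why the projector $P^{(\operatorname{sgn}E)}(\nu(E)\theta)$ acts trivially on $f_1$; these are the same argument, with your version slightly more detailed.
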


\begin{proof}
Note that for $f\in L^{1}\left(  0,\infty\right)  ,$ H\"{o}lder-continuous in
the point $\nu\left(  E\right)  ,$ the following relation holds%
\begin{equation}
\lim_{\mu\downarrow0}\frac{\mu}{\pi}\int_{0}^{\infty}f\left(  r\right)
\left(  \left(  \sqrt{r^{2}+m^{2}}-\left\vert E\right\vert \right)  ^{2}%
+\mu^{2}\right)  ^{-1}dr=\frac{f\left(  \nu\left(  E\right)  \right)
\left\vert E\right\vert }{\nu\left(  E\right)  }. \label{representation66}%
\end{equation}
Then, passing to the polar coordinate system in (\ref{representation65}) and
using that $\hat{g}_{1}\left(  \xi^{\prime}\right)  =\upsilon\left(  E\right)
^{-1}f_{1}\left(  \hat{\xi}^{\prime}\right)  \gamma\left(  \left\vert
\xi^{\prime}\right\vert \right)  $ we get%
\begin{equation}
\left.  \lim_{\mu\downarrow0}%
{\displaystyle\int}
{\displaystyle\int}
\left(  G\left(  \nu\left(  E\right)  \omega,\xi^{\prime};E\right)
\tilde{\delta}_{\mu}^{\left(  \operatorname*{sgn}E\right)  }\hat{g}_{1}\left(
\xi^{\prime}\right)  ,f_{2}\left(  \omega\right)  \right)  d\omega
d\xi^{\prime}=\lim_{\mu\downarrow0}\frac{\mu}{\pi}%
{\displaystyle\int_{0}^{\infty}}
\left(  \left(  \sqrt{r^{2}+m^{2}}-\left\vert E\right\vert \right)  ^{2}%
+\mu^{2}\right)  ^{-1}W\left(  r\right)  r^{2}dr,\right.
\label{representation153}%
\end{equation}
where $W\left(  r\right)  =\upsilon\left(  E\right)  ^{-1}\gamma\left(
r\right)  \int_{\mathbb{S}^{2}}\int_{\mathbb{S}^{2}}\left(  G\left(
\nu\left(  E\right)  \omega,r\theta;E\right)  f_{1}\left(  \theta\right)
,f_{2}\left(  \omega\right)  \right)  d\omega d\theta.$ As $G\left(
\nu\left(  E\right)  \hat{\xi},\xi^{\prime};E\right)  $ is a
H\"{o}lder-continuous function of the variable $\xi^{\prime},$ uniformly in
$\hat{\xi},$ we conclude that $W\left(  r\right)  $ is a H\"{o}lder-continuous
function of $r.$ Therefore, applying the equality (\ref{representation66}) to
the R.H.S. of (\ref{representation153}) and recalling that $\upsilon\left(
E\right)  =\left(  \left\vert E\right\vert \nu\left(  E\right)  \right)
^{1/2},$ we get the desired result.
\end{proof}

Using (\ref{representation223}) we decompose $G^{\left(  \varepsilon\right)
}\left(  \xi,\xi^{\prime};E\right)  $ as
\begin{equation}
G^{\left(  \varepsilon\right)  }\left(  \xi,\xi^{\prime};E\right)
=G_{1}^{\left(  \varepsilon\right)  }\left(  \xi,\xi^{\prime};E\right)
+G_{2}^{\left(  \varepsilon\right)  }\left(  \xi,\xi^{\prime};E\right)  ,
\label{representation166}%
\end{equation}
where
\begin{equation}
\left.  G_{1}^{\left(  \varepsilon\right)  }\left(  \xi,\xi^{\prime};E\right)
=\varsigma\left(  \left\vert \xi\right\vert \right)  \varsigma\left(
\left\vert \xi^{\prime}\right\vert \right)
{\displaystyle\int}
e^{i\left\langle x,\xi^{\prime}-\xi\right\rangle }\left(  j_{N}^{+}\left(
x,\xi;E\right)  \right)  ^{\ast}t_{-}^{1}\left(  x,\xi^{\prime};E\right)
\varphi\left(  \varepsilon x\right)  dx,\right.  \label{representation61}%
\end{equation}
and%
\begin{equation}
\left.  G_{2}^{\left(  \varepsilon\right)  }\left(  \xi,\xi^{\prime};E\right)
=\varsigma\left(  \left\vert \xi\right\vert \right)  \varsigma\left(
\left\vert \xi^{\prime}\right\vert \right)
{\displaystyle\int}
e^{i\left\langle x,\xi^{\prime}-\xi\right\rangle }\left(  j_{N}^{+}\left(
x,\xi;E\right)  \right)  ^{\ast}t_{-}^{2}\left(  x,\xi^{\prime};E\right)
\varphi\left(  \varepsilon x\right)  dx.\right.  \label{representation62}%
\end{equation}
Let us prove now the following result

\begin{lemma}
\label{representation183}The limit $G_{1}^{\left(  0\right)  }\left(
\nu\left(  E\right)  \hat{\xi},\nu\left(  E\right)  \hat{\xi}^{\prime
};E\right)  :=\lim\limits_{\varepsilon\rightarrow\infty}G_{1}^{\left(
\varepsilon\right)  }\left(  \nu\left(  E\right)  \hat{\xi},\nu\left(
E\right)  \hat{\xi}^{\prime};E\right)  $ exists. For any $p$ and $q,$ there is
$N,$ such that $G_{1}^{\left(  0\right)  }\left(  \nu\left(  E\right)
\hat{\xi},\nu\left(  E\right)  \hat{\xi}^{\prime};E\right)  \in C^{p}\left(
\mathbb{S}^{2}\times\mathbb{S}^{2}\right)  $, and its $C^{p}-$norm is bounded
by $C\left\vert E\right\vert ^{-q},$ as $\left\vert E\right\vert
\rightarrow\infty.$ Moreover, the limit in the R.H.S. of
(\ref{representation188}) with $G_{1}^{\left(  \varepsilon\right)  },$ instead
of $G^{\left(  \varepsilon\right)  },$ exists, and $-i\left(  2\pi\right)
^{-2}\upsilon\left(  E\right)  \lim\limits_{\mu\downarrow0}\lim
\limits_{\varepsilon\rightarrow0}%
{\displaystyle\int}
{\displaystyle\int_{\mathbb{S}^{2}}}
(G_{1}^{\left(  \varepsilon\right)  }\left(  \nu\left(  E\right)  \omega
,\xi^{\prime};E\right)  \tilde{\delta}_{\mu}^{\left(  \operatorname*{sgn}%
E\right)  }\hat{g}_{1}\left(  \xi^{\prime}\right)  ,$ $f_{2}\left(
\omega\right)  )d\omega d\xi^{\prime}$ $=-i\left(  2\pi\right)  ^{-2}%
\upsilon\left(  E\right)  ^{2}%
{\displaystyle\int_{\mathbb{S}^{2}}}
{\displaystyle\int_{\mathbb{S}^{2}}}
\left(  G_{1}^{\left(  0\right)  }\left(  \nu\left(  E\right)  \omega
,\nu\left(  E\right)  \theta;E\right)  f_{1}\left(  \theta\right)
,f_{2}\left(  \omega\right)  \right)  d\theta d\omega.$
\end{lemma}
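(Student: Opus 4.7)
The plan is to combine the pointwise decay estimates for $r_N^{-}$ and $j_N^{+}$ with the dominated convergence theorem and Lemma~\ref{representation54}. From (\ref{eig27}), (\ref{eig28}) and the fact that the cutoffs $\zeta_{\pm}^{(\operatorname{sgn}E)}(x,\xi)$ and $\varsigma(|\xi|)\varsigma(|\xi'|)$ are bounded with symbol-type derivatives, the amplitude $A_N(x,\xi,\xi';E):=\varsigma(|\xi|)\varsigma(|\xi'|)(j_N^{+}(x,\xi;E))^{\ast}t_{-}^{1}(x,\xi';E)$ satisfies, on the support of the cutoffs,
\[
|\partial_x^{\gamma}\partial_{\xi}^{\alpha}\partial_{\xi'}^{\beta}A_N(x,\xi,\xi';E)|\leq C_{\alpha,\beta,\gamma}(1+|x|)^{-\rho-N-|\gamma|}|\xi|^{-|\alpha|}|\xi'|^{-N-|\beta|}.
\]
Multiplying by $\varphi(\varepsilon x)$ (bounded by one) preserves this estimate uniformly in $\varepsilon\in[0,1]$. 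Fixing $N$ with $\rho+N>3$, at $\xi=\nu(E)\hat{\xi}$, $\xi'=\nu(E)\hat{\xi}'$ the integrand in (\ref{representation61}) is dominated by $C(1+|x|)^{-\rho-N}$, independently of $\varepsilon$ and of $(\hat{\xi},\hat{\xi}')$, so dominated convergence yields the existence of $G_1^{(0)}(\nu(E)\hat{\xi},\nu(E)\hat{\xi}';E)=\int e^{i\langle x,\nu(E)(\hat{\xi}'-\hat{\xi})\rangle}A_N(x,\nu(E)\hat{\xi},\nu(E)\hat{\xi}';E)\,dx$.

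To obtain the $C^{p}$-regularity and the $|E|^{-q}$ bound, I would differentiate under the integral sign. By the chain rule $\partial_{\hat{\xi}}=\nu(E)\partial_{\xi}$ (and similarly for $\hat{\xi}'$), each $\hat{\xi}$- or $\hat{\xi}'$-derivative either hits the exponential, producing a factor $\nu(E)x$, or it acts on $A_N$; in the latter case the previous estimate combined with $|\xi|=|\xi'|=\nu(E)$ yields derivatives of $j_N^{+}$ bounded by a constant and derivatives of $t_{-}^{1}$ bounded by $C\nu(E)^{-N}(1+|x|)^{-\rho-N}$. Hence any derivative of order $p$ in $(\hat{\xi},\hat{\xi}')$ of $G_1^{(\varepsilon)}$, and, in the limit, of $G_1^{(0)}$, is bounded by $C\nu(E)^{p-N}\int(1+|x|)^{p-\rho-N}dx$, which is finite provided $\rho+N-p>3$ and delivers the claimed $|E|^{-q}$ decay as soon as $N-p\geq q$. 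Choosing $N\geq\max(p+q,\,p+4-\rho)$ therefore gives $G_1^{(0)}\in C^{p}(\mathbb{S}^{2}\times\mathbb{S}^{2})$ with $C^{p}$-norm of order $|E|^{-q}$ as $|E|\to\infty$.

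For the final assertion, I would first apply Lemma~\ref{representation54} with $G=G_1^{(0)}$: by Step 2, $\xi'\mapsto G_1^{(0)}(\nu(E)\omega,\xi';E)$ is smooth on the support of $\varsigma$, hence H\"older-continuous in $\xi'$ uniformly in $\omega\in\mathbb{S}^{2}$, so the Lemma supplies the $\mu\downarrow 0$ limit with precisely the right-hand side of the claimed identity. It then remains to interchange $\lim_{\varepsilon\to 0}$ with the $\xi'$-integral against $\tilde{\delta}_{\mu}^{(\operatorname{sgn}E)}\hat{g}_1$, which is legitimate by dominated convergence: $\hat{g}_1$ has compact support in $\xi'$, $G_1^{(\varepsilon)}(\nu(E)\omega,\xi';E)\to G_1^{(0)}(\nu(E)\omega,\xi';E)$ uniformly on that support (by Step 1), and the pairing with $f_2\in C^{\infty}(\mathbb{S}^{2};\mathbb{C}^{4})$ is bounded. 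The main book-keeping obstacle I anticipate is tracking the exact cancellation in Step 2 between the $\nu(E)^{p}$ factors coming from differentiating the exponential and the $\nu(E)^{-N}$ decay of the $|\xi'|^{-N-|\beta|}$ factor in $t_{-}^{1}$, so that no parasitic $|E|$-losses enter and the cutoff derivatives only improve the $x$-decay rather than spoil it.
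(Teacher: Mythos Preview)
Your proposal is correct and follows essentially the same route as the paper: use the symbol estimates $j_N^{+}\in\mathcal{S}^{0,0}$ and $t_{-}^{1}\in\mathcal{S}^{-\rho-N,-N}$ to show that the $x$-integral defining $G_1^{(\varepsilon)}$ converges absolutely (uniformly in $\varepsilon$) and can be differentiated under the integral sign, then pass to the $\varepsilon\to 0$ limit by dominated convergence and finish with Lemma~\ref{representation54}. Your bookkeeping of the $\nu(E)$ powers and the choice $N\geq\max(p+q,\,p+4-\rho)$ is the explicit version of what the paper leaves implicit; the only cosmetic point is that Lemma~\ref{representation54} needs H\"older continuity of $G_1^{(0)}(\nu(E)\omega,\xi';E)$ in the Euclidean variable $\xi'$ (not just in $\hat{\xi}'$), but the same differentiation-under-the-integral argument you give for the spherical variables applies verbatim to the full $\xi'$-derivatives on the compact support of $\varsigma$.
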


\begin{proof}
As $j_{N}^{\pm}\left(  x,\xi;E\right)  \in\mathit{S}^{0,0}$, using
(\ref{representation23}) we see that $\left\vert \left(  j_{N}^{+}\left(
x,\xi;E\right)  \right)  ^{\ast}t_{-}^{1}\left(  x,\xi^{\prime};E\right)
\right\vert \leq C_{\alpha,\beta}\left(  1+\left\vert x\right\vert \right)
^{-\rho-N+\left\vert \alpha\right\vert +\left\vert \beta\right\vert
}\left\vert \xi^{\prime}\right\vert ^{-N},$ for $N\geq0.$ Thus, for $N$ big
enough, the limit in (\ref{representation61}), as $\varepsilon\rightarrow0,$
exists, and, moreover, $G_{1}^{\left(  0\right)  }\left(  \xi,\xi^{\prime
};E\right)  $ is a function in $C^{p\left(  N\right)  }\left(  \mathbb{R}%
^{3}\times\mathbb{R}^{3}\right)  ,$ that decreases as $C\left\vert \xi
^{\prime}\right\vert ^{-q\left(  N\right)  },$ when $\left\vert \xi^{\prime
}\right\vert \rightarrow\infty$. Replacing $G^{\left(  \varepsilon\right)  }$
with $G_{1}^{\left(  0\right)  }$ in (\ref{representation188}) and using Lemma
\ref{representation54} to calculate the resulting limit we complete the proof.
\end{proof}

We now study the term $G_{2}^{\left(  \varepsilon\right)  }\left(  \xi
,\xi^{\prime};E\right)  .$ Let us consider first the function $G_{2}^{\left(
\varepsilon\right)  }\left(  \xi,\xi^{\prime};E\right)  $ for $\xi\neq
\xi^{\prime}.$ We prove the following result

\begin{lemma}
\label{representation145}Let $O,O^{\prime}\subseteq\mathbb{S}^{2}$ be open
sets such that $\overline{O}\cap\overline{O}^{\prime}=\varnothing.$ Then,
there exists a function $G_{2,O,O^{\prime}}^{\left(  0\right)  }\left(
\xi,\xi^{\prime};E\right)  ,$ such that for any $p$ and $q$, $G_{2,O,O^{\prime
}}^{\left(  0\right)  }\left(  \xi,\xi^{\prime};E\right)  $ is of
$C^{p}\left(  \mathbb{R}^{3}\times\mathbb{R}^{3}\right)  $-class, and its
$C^{p}-$norm\ is bounded by $C\left\vert E\right\vert ^{-q}$ as $\left\vert
E\right\vert \rightarrow\infty,$ and moreover, for $g_{1},$ $f_{1}$ and
$f_{2}$ as in (\ref{representation134}), with the additional property
$f_{1}\in C_{0}^{\infty}\left(  O^{\prime}\right)  $ and $f_{2}\in
C_{0}^{\infty}\left(  O\right)  ,$ $-i\left(  2\pi\right)  ^{-2}%
\upsilon\left(  E\right)  \lim\limits_{\mu\downarrow0}\lim\limits_{\varepsilon
\rightarrow0}%
{\displaystyle\int}
{\displaystyle\int\limits_{\mathbb{S}^{2}}}
$ $\times(G_{2}^{\left(  \varepsilon\right)  }\left(  \nu\left(  E\right)
\omega,\xi^{\prime};E\right)  \tilde{\delta}_{\mu}^{\left(
\operatorname*{sgn}E\right)  }\hat{g}_{1}\left(  \xi^{\prime}\right)
,f_{2}\left(  \omega\right)  )d\omega d\xi^{\prime}=-i\left(  2\pi\right)
^{-2}\upsilon\left(  E\right)  ^{2}%
{\displaystyle\int\limits_{\mathbb{S}^{2}}}
{\displaystyle\int\limits_{\mathbb{S}^{2}}}
(G_{2,O,O^{\prime}}^{\left(  0\right)  }\left(  \nu\left(  E\right)
\omega,\nu\left(  E\right)  \theta;E\right)  f_{1}\left(  \theta\right)
,f_{2}\left(  \omega\right)  )d\theta d\omega.$ In particular, the function
$G_{2,O,O^{\prime}}^{\left(  0\right)  }$ satisfies the estimate%
\begin{equation}
\left\Vert G_{2,O,O^{\prime}}^{\left(  0\right)  }\left(  \nu\left(  E\right)
\hat{\xi},\nu\left(  E\right)  \hat{\xi}^{\prime};E\right)  \right\Vert
_{C^{p}\left(  \mathbb{S}^{2}\times\mathbb{S}^{2}\right)  }\leq C_{p}\left(
O,O^{\prime}\right)  \left\vert E\right\vert ^{-q}, \label{representation224}%
\end{equation}
for any $p$ and $q.$
\end{lemma}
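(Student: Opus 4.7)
The geometric hypothesis $\overline{O}\cap\overline{O}'=\varnothing$ implies $\delta_{0}:=\operatorname{dist}(O,O')>0$, so that on the conic set $\{\hat{\xi}\in O,\,\hat{\xi}'\in O'\}$ the phase $\langle x,\xi'-\xi\rangle$ in (\ref{representation62}) is non-stationary in $x$ with gradient of magnitude $|\xi'-\xi|\geq\delta_{0}\sqrt{|\xi||\xi'|}$. The plan is to integrate by parts in $x$ as many times as needed to force absolute integrability of the amplitude (which is only $O(\langle x\rangle^{-1})$ because of (\ref{representation24})) and simultaneously to gain arbitrarily large negative powers of $|E|$ from the prefactor.

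Using $e^{i\langle x,\xi'-\xi\rangle}=|\xi'-\xi|^{-2}\langle\xi'-\xi,-i\nabla_{x}\rangle e^{i\langle x,\xi'-\xi\rangle}$ and iterating $k$ times, I would rewrite
\[
G_{2}^{(\varepsilon)}(\xi,\xi';E)=\varsigma(|\xi|)\varsigma(|\xi'|)\,|\xi'-\xi|^{-2k}\int e^{i\langle x,\xi'-\xi\rangle}\langle\xi'-\xi,i\nabla_{x}\rangle^{k}\!\bigl[(j_{N}^{+}(x,\xi;E))^{\ast}\,t_{-}^{2}(x,\xi';E)\,\varphi(\varepsilon x)\bigr]dx.
\]
Since $j_{N}^{+}\in\mathit{S}^{0,0}$ and, by (\ref{representation24}), $t_{-}^{2}\in\mathcal{S}_{-}^{-1,0}$, the Leibniz expansion of $\langle\xi'-\xi,i\nabla_{x}\rangle^{k}$ produces a finite sum of amplitudes each majorized, on the support of $(j_{N}^{+})^{\ast}t_{-}^{2}$, by $C\,|\xi'-\xi|^{k}\langle x\rangle^{-1-k'}$ for some $0\leq k'\leq k$, with an additional factor $\varepsilon^{j}$ whenever $j\geq1$ derivatives fall on $\varphi(\varepsilon x)$. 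Choosing $k>2$ makes the integrand absolutely integrable uniformly in $\varepsilon\in(0,1]$, and the terms with $j\geq1$ vanish as $\varepsilon\to 0$ by dominated convergence. This yields the pointwise limit
\[
G_{2,O,O'}^{(0)}(\xi,\xi';E):=\varsigma(|\xi|)\varsigma(|\xi'|)\,|\xi'-\xi|^{-2k}\int e^{i\langle x,\xi'-\xi\rangle}\langle\xi'-\xi,i\nabla_{x}\rangle^{k}\!\bigl[(j_{N}^{+}(x,\xi;E))^{\ast}\,t_{-}^{2}(x,\xi';E)\bigr]dx
\]
on the conic set, which I then extend to all of $\mathbb{R}^{3}\times\mathbb{R}^{3}$ by multiplying by an auxiliary $C^{\infty}$ cutoff equal to $1$ on $\{\hat{\xi}\in O,\,\hat{\xi}'\in O'\}$ and supported where $|\xi'-\xi|\geq(\delta_{0}/2)\sqrt{|\xi||\xi'|}$.

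Smoothness and the $|E|^{-q}$ estimate (\ref{representation224}) follow by differentiating the above display in $\hat{\xi},\hat{\xi}'$: each such derivative either lowers a factor in $|\xi'-\xi|^{-2k}$ (giving an extra $(\delta_{0}\nu(E))^{-1}\lesssim|E|^{-1}$ after evaluation at $\xi=\nu(E)\hat{\xi}$, $\xi'=\nu(E)\hat{\xi}'$), or falls on the amplitude $(j_{N}^{+})^{\ast}t_{-}^{2}$ (staying uniformly bounded by the $\mathcal{S}^{m,0}$-class estimates), or brings out a polynomial factor in $x$ from the $\xi$-dependence of $\langle\xi'-\xi,i\nabla_{x}\rangle^{k}$, which is absorbed by using a slightly larger $k$. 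Taking $k=k(p,q)$ sufficiently large delivers $\|G_{2,O,O'}^{(0)}(\nu(E)\hat{\xi},\nu(E)\hat{\xi}';E)\|_{C^{p}(\mathbb{S}^{2}\times\mathbb{S}^{2})}\leq C_{p}(O,O')|E|^{-q}$, with the constant $C_{p}(O,O')$ depending on $O,O'$ only through $\delta_{0}$.

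To evaluate the iterated $(\varepsilon,\mu)$-limit in (\ref{representation188}) with $G_{2}^{(\varepsilon)}$ in place of $G^{(\varepsilon)}$, I would note that the supports of $f_{1}\in C_{0}^{\infty}(O')$ and $f_{2}\in C_{0}^{\infty}(O)$ confine the outer integrations to $\omega\in O$ and $\hat{\xi}'\in O'$; on this region the IBP representation is valid and gives an integrand dominated uniformly in $(\varepsilon,\mu)$ by an $L^{1}$ function. Dominated convergence therefore permits bringing $\lim_{\varepsilon\to 0}$ under the integrals, replacing $G_{2}^{(\varepsilon)}$ by $G_{2,O,O'}^{(0)}$; since $G_{2,O,O'}^{(0)}(\nu(E)\hat{\xi},\xi';E)$ is $C^{\infty}$, hence locally H\"older-continuous in $\xi'$ uniformly in $\hat{\xi}$, Lemma \ref{representation54} applies and evaluates the remaining $\mu\to 0$ limit, producing the stated identity. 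The main obstacle will be the combinatorial bookkeeping of the Leibniz expansion of $\langle\xi'-\xi,i\nabla_{x}\rangle^{k}$, which must be checked to yield, after distribution across the three factors $(j_{N}^{+})^{\ast}$, $t_{-}^{2}$ and $\varphi(\varepsilon x)$, surviving summands that are simultaneously integrable in $x$ uniformly in $\varepsilon$ and smooth with the desired uniform $|E|^{-q}$ bound after the $\hat{\xi},\hat{\xi}'$ differentiations.
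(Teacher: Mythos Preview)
Your approach is the same as the paper's: exploit the non-stationary phase via repeated integration by parts in $x$, separate the terms where derivatives land on $\varphi(\varepsilon x)$ (these carry $\varepsilon^{j}$ and vanish in the limit, which is the paper's remainder $R_{jk}^{(\varepsilon)}$), pass to the absolutely convergent limit, and then invoke Lemma~\ref{representation54} for the $\mu$-limit. The paper uses $(i\partial_{x_l})^{n}$ for a single coordinate $l$ with $\sqrt{3}|\xi_l-\xi_l'|\geq|\xi-\xi'|$, while you use the full operator $\langle\xi'-\xi,i\nabla_x\rangle^{k}$; your choice is slightly cleaner since it avoids the coordinate $l$ depending on $(\xi,\xi')$, and your explicit cutoff for the extension to $\mathbb{R}^3\times\mathbb{R}^3$ is something the paper leaves implicit.

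There is one slip in your bookkeeping for the $C^p$-estimate. In the list of places a $\hat\xi$-derivative can land you omit the exponential $e^{i\langle x,\xi'-\xi\rangle}$: it is $\partial_{\hat\xi_j}e^{i\nu(E)\langle x,\hat\xi'-\hat\xi\rangle}=-i\nu(E)x_j\,e^{\cdots}$ that produces the polynomial $x$-factors (and costs a factor $\nu(E)$), not the $\xi$-dependence of $\langle\xi'-\xi,i\nabla_x\rangle^{k}$, whose derivative merely drops one $(\xi'-\xi)$-factor. Likewise, a $\hat\xi$-derivative on $|\xi'-\xi|^{-2k}$ gives only an extra $\delta_0^{-1}$, not $(\delta_0\nu(E))^{-1}$, once the chain-rule factor $\nu(E)$ is accounted for. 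These corrections do not affect the outcome: the remedy you already propose, increasing $k$, absorbs both the loss of $x$-decay and the surplus $\nu(E)^{p}$, yielding $\|\,\cdot\,\|_{C^p}\lesssim\nu(E)^{p-k}\delta_0^{-k}$, and choosing $k>p+\max(2,q)$ gives the claimed bound. This is precisely the content of the paper's estimate~(\ref{representation137}), which allows $p<n-2$ derivatives after $n$ integrations by parts.
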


\begin{proof}
Choosing $l$ such that $\sqrt{3}\left\vert \xi_{l}-\xi_{l}^{\prime}\right\vert
\geq\left\vert \xi-\xi^{\prime}\right\vert >0$ and integrating
(\ref{representation62}) by parts $n$ times we get
\begin{equation}
G_{2}^{\left(  \varepsilon\right)  }\left(  \xi,\xi^{\prime};E\right)
=G_{2,O,O^{\prime}}^{\left(  \varepsilon\right)  }\left(  \xi,\xi^{\prime
};E\right)  +R_{jk}^{\left(  \varepsilon\right)  }\left(  \xi,\xi^{\prime
};E\right)  , \label{representation245}%
\end{equation}
where $G_{2,O,O^{\prime}}^{\left(  \varepsilon\right)  }\left(  \xi
,\xi^{\prime};E\right)  :=\left(  \xi_{l}^{\prime}-\xi_{l}\right)
^{-n}\varsigma\left(  \left\vert \xi\right\vert \right)  \varsigma\left(
\left\vert \xi^{\prime}\right\vert \right)  \int e^{i\left\langle
x,\xi^{\prime}-\xi\right\rangle }\varphi\left(  \varepsilon x\right)  \left(
i\partial_{x_{l}}\right)  ^{n}\left(  \left(  j_{N}^{+}\left(  x,\xi;E\right)
\right)  ^{\ast}t_{-}^{2}\left(  x,\xi^{\prime};E\right)  \right)  dx$ and
$R_{jk}^{\left(  \varepsilon\right)  }$ is given by $R_{jk}^{\left(
\varepsilon\right)  }\left(  \xi,\xi^{\prime};E\right)  :=\sum_{m=1}%
^{n}\varepsilon^{m}\left(  \xi_{l}^{\prime}-\xi_{l}\right)  ^{-n}\int
e^{i\left\langle x,\xi^{\prime}-\xi\right\rangle }g_{m}\left(  x,\xi
,\xi^{\prime};E\right)  \varphi^{\left(  m\right)  }\left(  \varepsilon
x\right)  dx,$ with $g_{m}\in\mathcal{S}^{m-n},$ for $m\leq n.$ Note that
\begin{equation}
\lim_{\varepsilon\rightarrow0}%
{\displaystyle\int}
{\displaystyle\int_{\mathbb{S}^{2}}}
\left(  R_{jk}^{\left(  \varepsilon\right)  }\left(  \nu\left(  E\right)
\omega,\xi^{\prime};E\right)  \tilde{\delta}_{\mu}^{\left(
\operatorname*{sgn}E\right)  }\hat{g}_{1}\left(  \xi^{\prime}\right)
,f_{2}\left(  \omega\right)  \right)  d\omega d\xi^{\prime}=0.
\label{representation243}%
\end{equation}
Indeed, substituting the definition of $R_{jk}^{\left(  \varepsilon\right)  }$
in (\ref{representation243}) and integrating several times by parts in the
variable $\xi^{\prime}$ the resulting expression, we obtain a product of an
absolutely convergent integral, uniformly bounded on $\varepsilon,$ and
$\varepsilon^{m}.$

For $n>2$ the integral in the relation for $G_{2,O,O^{\prime}}^{\left(
\varepsilon\right)  }$ is absolutely convergent. Hence, the limit of
$G_{2,O,O^{\prime}}^{\left(  \varepsilon\right)  },$ as $\varepsilon
\rightarrow0,$ exists and it is equal to the absolutely convergent integral%
\begin{equation}
G_{2,O,O^{\prime}}^{\left(  0\right)  }\left(  \xi,\xi^{\prime};E\right)
=\left(  \xi_{j}-\xi_{j}^{\prime}\right)  ^{-n}\varsigma\left(  \left\vert
\xi\right\vert \right)  \varsigma\left(  \left\vert \xi^{\prime}\right\vert
\right)  \int e^{i\left\langle x,\xi^{\prime}-\xi\right\rangle }\left(
i\partial_{x_{j}}\right)  ^{n}\left(  \left(  j_{N}^{+}\left(  x,\xi;E\right)
\right)  ^{\ast}t_{-}\left(  x,\xi^{\prime};E\right)  \right)  dx.
\label{representation41}%
\end{equation}
Moreover, for any $n$ we have the following estimate
\begin{equation}
\left\vert \left(  \partial_{\xi}^{\beta}\partial_{\xi^{\prime}}%
^{\beta^{\prime}}G_{2,O,O^{\prime}}^{\left(  0\right)  }\right)  \left(
\xi,\xi^{\prime};E\right)  \right\vert \leq C_{p,jk}\left\vert \xi-\xi
^{\prime}\right\vert ^{-n},\text{ }\left\vert \beta\right\vert +\left\vert
\beta^{\prime}\right\vert =p, \label{representation137}%
\end{equation}
for $p<n-2.$ Introducing decomposition (\ref{representation245}) in the R.H.S.
of (\ref{representation188}) and using relation (\ref{representation243}) for
the part corresponding to $R_{jk}^{\left(  \varepsilon\right)  }$ and Lemma
\ref{representation54} to calculate the limit, as $\mu\rightarrow0,$ of the
part $G_{2,O,O^{\prime}}^{\left(  0\right)  },$ we conclude the proof.
\end{proof}

Now let us study the singularities of $G_{2}^{\left(  \varepsilon\right)
}\left(  \xi,\xi^{\prime};E\right)  $ for $\xi^{\prime}=\xi.$ For an arbitrary
$\omega_{0}\in\mathbb{S}^{2},$ we introduce cut-off functions $\Psi_{\pm
}\left(  \hat{\xi},\hat{\xi}^{\prime};\omega_{0}\right)  \in C^{\infty}\left(
\mathbb{S}^{2}\times\mathbb{S}^{2}\right)  ,$ supported on $\{\left(  \hat
{\xi},\hat{\xi}^{\prime}\right)  \in\mathbb{S}^{2}\times\mathbb{S}^{2}%
|\hat{\xi},\hat{\xi}^{\prime}\in\Omega_{\pm}\left(  \omega_{0},\delta\right)
\}$ and consider $\Psi_{\pm}\left(  \hat{\xi},\hat{\xi}^{\prime};\omega
_{0}\right)  G_{2}^{\left(  \varepsilon\right)  }\left(  \xi,\xi^{\prime
};E\right)  .$ We need the following result

\begin{lemma}
Let us take $\omega,\theta\in\Omega_{+}\left(  \omega_{0},\delta\right)  $ or
$\omega,\theta\in\Omega_{-}\left(  \omega_{0},\delta\right)  $. Suppose that
$\left\langle \theta,\hat{x}\right\rangle >\varepsilon,$ where $\varepsilon
>\sqrt{1-\delta^{2}}.$ Then, $\left\langle \omega,\hat{x}\right\rangle
>-\varepsilon.$
\end{lemma}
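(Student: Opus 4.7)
The plan is to reduce this to a purely geometric computation by decomposing the three unit vectors $\omega$, $\theta$, $\hat{x}$ into their components along $\omega_0$ and in the orthogonal plane $\Pi_{\omega_0}$. Writing $\omega=\langle\omega,\omega_0\rangle\omega_0+\omega_\perp$, $\theta=\langle\theta,\omega_0\rangle\omega_0+\theta_\perp$, and $\hat{x}=\langle\hat{x},\omega_0\rangle\omega_0+\hat{x}_\perp$, the fact that $\omega,\theta\in\Omega_\pm(\omega_0,\delta)$ immediately gives $|\omega_\perp|^2=1-\langle\omega,\omega_0\rangle^2<1-\delta^2$ and similarly $|\theta_\perp|^2<1-\delta^2$, while $|\hat{x}_\perp|\leq 1$.

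The first step would be to extract information on the sign of $\langle\hat{x},\omega_0\rangle$ from the hypothesis $\langle\theta,\hat{x}\rangle>\varepsilon$. Expanding,
\[
\langle\theta,\omega_0\rangle\langle\hat{x},\omega_0\rangle \;=\; \langle\theta,\hat{x}\rangle-\langle\theta_\perp,\hat{x}_\perp\rangle \;>\; \varepsilon-|\theta_\perp||\hat{x}_\perp|\;\geq\;\varepsilon-\sqrt{1-\delta^2}\;>\;0,
\]
where the last inequality uses precisely the assumption $\varepsilon>\sqrt{1-\delta^2}$. Hence $\langle\hat{x},\omega_0\rangle$ has the same sign as $\langle\theta,\omega_0\rangle$, i.e.\ it is positive when $\omega,\theta\in\Omega_+(\omega_0,\delta)$ and negative when they lie in $\Omega_-(\omega_0,\delta)$.

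Since $\omega$ lies in the same cap as $\theta$, the product $\langle\omega,\omega_0\rangle\langle\hat{x},\omega_0\rangle$ is the product of two numbers of the same sign and is therefore nonnegative (in fact strictly positive). Expanding
\[
\langle\omega,\hat{x}\rangle \;=\; \langle\omega,\omega_0\rangle\langle\hat{x},\omega_0\rangle + \langle\omega_\perp,\hat{x}_\perp\rangle \;\geq\; \langle\omega_\perp,\hat{x}_\perp\rangle \;\geq\; -|\omega_\perp||\hat{x}_\perp| \;\geq\; -\sqrt{1-\delta^2} \;>\; -\varepsilon,
\]
which is the desired conclusion. No step here is an obstacle: the entire argument is elementary linear algebra, and the role of the threshold $\varepsilon>\sqrt{1-\delta^2}$ is exactly to force the transversal components to be too small to flip signs, which is the only subtlety worth highlighting.
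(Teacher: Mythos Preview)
Your proof is correct and follows essentially the same geometric idea as the paper: decompose along $\omega_0$ and the orthogonal plane $\Pi_{\omega_0}$, use the hypothesis $\langle\theta,\hat x\rangle>\varepsilon>\sqrt{1-\delta^2}$ to force $\langle\hat x,\omega_0\rangle$ to have the correct sign, and then bound the perpendicular contribution by $\sqrt{1-\delta^2}$. The paper carries this out by fixing a specific orthonormal frame $(\omega_0,\hat y,\hat z)$ adapted to $\theta$, which introduces some avoidable case distinctions (e.g.\ $\theta=\omega_0$, or vanishing perpendicular parts); your direct use of the orthogonal projections $\omega_\perp,\theta_\perp,\hat x_\perp$ is a cleaner packaging of the same argument.
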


\begin{proof}
The case $\theta=\omega_{0}$ is immediate. Suppose that $\theta\neq\omega
_{0}.$ We prove the case $\omega,\theta\in\Omega_{+}\left(  \omega_{0}%
,\delta\right)  .$ The proof for the case $\omega,\theta\in\Omega_{-}\left(
\omega_{0},\delta\right)  $ is analogous. Let $\hat{y}$ be a unit vector in
the plane generated by $\theta$ and $\omega_{0},$ that is orthogonal to
$\omega_{0}:\left\langle \omega_{0},\hat{y}\right\rangle =0.$ We write
$\theta$ as $\theta=\left\langle \theta,\omega_{0}\right\rangle \omega
_{0}+\left\langle \theta,\hat{y}\right\rangle \hat{y}.$ Let $x$ be such that
$\left\langle \theta,\hat{x}\right\rangle >\varepsilon.$ Then, from the
relation $\left\langle \theta,\hat{x}\right\rangle =\left\langle \theta
,\omega_{0}\right\rangle \left\langle \omega_{0},\hat{x}\right\rangle
+\left\langle \theta,\hat{y}\right\rangle \left\langle \hat{y},\hat
{x}\right\rangle ,$ and $\left\vert \left\langle \theta,\hat{y}\right\rangle
\right\vert <\sqrt{1-\delta^{2}},$ for $\theta\in\Omega_{+}\left(  \omega
_{0},\delta\right)  ,$ it follows that $\left\langle \omega_{0},\hat
{x}\right\rangle >0.$ For $\omega\in\Omega_{+}\left(  \omega_{0}%
,\delta\right)  $ and for $\hat{z}$ such that $\left\langle \hat{y},\hat
{z}\right\rangle =\left\langle \omega_{0},\hat{z}\right\rangle =0,$ we have
$\omega=\left\langle \omega,\omega_{0}\right\rangle \omega_{0}+\left\langle
\omega,\hat{y}\right\rangle \hat{y}+\left\langle \omega,\hat{z}\right\rangle
\hat{z}.$ Then, it follows%
\begin{equation}
\left.  \left\langle \omega,\hat{x}\right\rangle =\left\langle \omega
,\omega_{0}\right\rangle \left\langle \omega_{0},\hat{x}\right\rangle
+\left\langle \omega,\hat{y}\right\rangle \left\langle \hat{y},\hat
{x}\right\rangle +\left\langle \omega,\hat{z}\right\rangle \left\langle
\hat{z},\hat{x}\right\rangle >\left\langle \omega,\hat{y}\right\rangle
\left\langle \hat{y},\hat{x}\right\rangle +\left\langle \omega,\hat
{z}\right\rangle \left\langle \hat{z},\hat{x}\right\rangle .\text{ }\right.
\label{representation173}%
\end{equation}
If $\left\langle \omega,\hat{y}\right\rangle \hat{y}+\left\langle \omega
,\hat{z}\right\rangle \hat{z}=0\ $or $\left\langle \hat{x},\hat{y}%
\right\rangle \hat{y}+\left\langle \hat{x},\hat{z}\right\rangle \hat{z}=0,$
then from (\ref{representation173}) we get $\left\langle \omega,\hat
{x}\right\rangle >0>-\delta.$ Suppose that $\left\langle \omega,\hat
{y}\right\rangle \hat{y}+\left\langle \omega,\hat{z}\right\rangle \hat{z}%
\neq0$ and $\left\langle \hat{y},\hat{x}\right\rangle \hat{y}+\left\langle
\hat{z},\hat{x}\right\rangle \hat{z}\neq0.$ Let us define $\omega_{\hat
{y},\hat{z}}=\frac{1}{\sqrt{\left\langle \omega,\hat{y}\right\rangle
^{2}+\left\langle \omega,\hat{z}\right\rangle ^{2}}}\left(  \left\langle
\omega,\hat{y}\right\rangle \hat{y}+\left\langle \omega,\hat{z}\right\rangle
\hat{z}\right)  $ and $\hat{x}_{\hat{y},\hat{z}}=\frac{1}{\sqrt{\left\langle
\hat{x},\hat{y}\right\rangle ^{2}+\left\langle \hat{x},\hat{z}\right\rangle
^{2}}}\left(  \left\langle \hat{x},\hat{y}\right\rangle \hat{y}+\left\langle
\hat{x},\hat{z}\right\rangle \hat{z}\right)  .$ Then, using that
$\varepsilon>\sqrt{1-\delta^{2}}$, it follows from (\ref{representation173})
that $\left\langle \omega,\hat{x}\right\rangle >\left(  \sqrt{\left\langle
\omega,\hat{y}\right\rangle ^{2}+\left\langle \omega,\hat{z}\right\rangle
^{2}}\right)  \left(  \sqrt{\left\langle \hat{x},\hat{y}\right\rangle
^{2}+\left\langle \hat{x},\hat{z}\right\rangle ^{2}}\right)  \left\langle
\omega_{\hat{y},\hat{z}},\hat{x}_{\hat{y},\hat{z}}\right\rangle >-\sqrt
{1-\delta^{2}}>-\varepsilon.$
\end{proof}

Note that, $\partial_{x_{j}}\zeta_{-}^{\pm}\left(  x,\xi^{\prime}\right)  $ is
equal to $0$ for $\pm\left\langle \hat{\xi}^{\prime},\hat{x}\right\rangle
<\varepsilon$ and $\zeta_{+}^{\pm}\left(  x,\xi\right)  =1$ for $\pm
\left\langle \hat{\xi},\hat{x}\right\rangle >-\varepsilon$ and $\left\vert
\xi\right\vert \geq c_{1}.$ Then, Lemma above implies $\zeta_{+}^{\pm}\left(
x,\xi\right)  \partial_{x_{j}}\zeta_{-}^{\pm}\left(  x,\xi^{\prime}\right)
=\partial_{x_{j}}\zeta_{-}^{\pm}\left(  x,\xi^{\prime}\right)  ,$ for
$\left\vert \xi\right\vert \geq c_{1}$ and $\hat{\xi},\hat{\xi}^{\prime}%
\in\Omega_{+}\left(  \omega_{0},\delta\right)  $ or $\hat{\xi},\hat{\xi
}^{\prime}\in\Omega_{-}\left(  \omega_{0},\delta\right)  .$ Thus, from
(\ref{representation62}) we obtain the following equality, for $\hat{\xi}%
,\hat{\xi}^{\prime}\in\Omega_{+}\left(  \omega_{0},\delta\right)  $ or
$\hat{\xi},\hat{\xi}^{\prime}\in\Omega_{-}\left(  \omega_{0},\delta\right)
,$
\begin{equation}
\left.  G_{2}^{\left(  \varepsilon\right)  }\left(  \xi,\xi^{\prime};E\right)
=-i\varsigma\left(  \left\vert \xi\right\vert \right)  \varsigma\left(
\left\vert \xi^{\prime}\right\vert \right)
{\displaystyle\sum\limits_{j=1}^{3}}
{\displaystyle\int}
\left(  u_{N}^{+}\left(  x,\xi;\lambda\left(  \xi\right)  \right)  \right)
^{\ast}\alpha_{j}u_{N}^{-}\left(  x,\xi^{\prime};\lambda\left(  \xi^{\prime
}\right)  \right)  \left(  \partial_{x_{j}}\zeta_{-}^{\left(
\operatorname*{sgn}E\right)  }\left(  x,\xi^{\prime}\right)  \right)
\varphi\left(  \varepsilon x\right)  dx.\right.  \label{representation156}%
\end{equation}

Recall the notation of Remark \ref{representation75}. We need the two
following results (see Proposition 5.4 and 5.5 of \cite{30})

\begin{lemma}
\label{representation147}Let us consider $\mathcal{A}_{\pm}^{\left(
\varepsilon\right)  }\left(  \xi,\xi^{\prime};E\right)  =\int\limits_{\Pi
_{\omega_{0}}}e^{i\left\langle y,\xi^{\prime}-\xi\right\rangle }g_{\pm}\left(
y,\xi,\xi^{\prime};E\right)  \varphi\left(  \varepsilon y\right)  dy,$ where
$\varphi\in C_{0}^{\infty}\left(  \mathbb{R}^{3}\right)  $ is such that
$\varphi\left(  0\right)  =1,$ $\Pi_{\omega_{0}}:=\left\{  y\in\mathbb{R}%
^{3}|\left\langle y,\omega_{0}\right\rangle =0,\omega_{0}\in\mathbb{S}%
^{2}\right\}  $ and $g_{\pm}\in\mathcal{S}^{p}$ for some real $p,$ satisfies
\begin{equation}
\operatorname*{supp}g_{\pm}\subset\{\left(  \hat{\xi},\hat{\xi}^{\prime
}\right)  \in\mathbb{R}^{3}\times\mathbb{R}^{3}\mid\hat{\xi},\hat{\xi}%
^{\prime}\in\Omega_{\pm}\left(  \omega_{0},\delta\right)  ,\text{ }%
\delta>0,\text{ }\left\vert \xi^{\prime}\right\vert \geq c\}.
\label{representation228}%
\end{equation}
Let $g_{1},$ $f_{1}$ and $f_{2}$ be as in (\ref{representation134})$.$ Then,
for $\left\vert E\right\vert >m$ and even $n$ we have
\begin{equation}
\left.
\begin{array}
[c]{c}%
\lim_{\mu\downarrow0}\lim_{\varepsilon\rightarrow0}%
{\displaystyle\int_{\mathbb{S}^{2}}}
{\displaystyle\int}
\left(  \mathcal{A}_{\pm}^{\left(  \varepsilon\right)  }\left(  \nu\left(
E\right)  \omega,\xi^{\prime};E\right)  \tilde{\delta}_{\mu}^{\left(
\operatorname*{sgn}E\right)  }\hat{g}_{1}\left(  \xi^{\prime}\right)
,f_{2}\left(  \omega\right)  \right)  d\xi^{\prime}d\omega\\
=\upsilon\left(  E\right)
{\displaystyle\int_{\Pi_{\omega_{0}}}}
{\displaystyle\int_{\Pi_{\omega_{0}}}}
{\displaystyle\int_{\Pi_{\omega_{0}}}}
e^{i\nu\left(  E\right)  \left\langle y,\zeta^{\prime}-\zeta\right\rangle
}\left\langle \nu\left(  E\right)  y\right\rangle ^{-n}\left\langle
D_{\zeta^{\prime}}\right\rangle ^{n}\left(  \tilde{g}_{\pm}^{\prime}\left(
y,\nu\left(  E\right)  \zeta,\nu\left(  E\right)  \zeta^{\prime};E\right)
\tilde{f}_{1}\left(  \zeta^{\prime}\right)  ,\tilde{f}_{2}\left(
\zeta\right)  \right)  dyd\zeta^{\prime}d\zeta,
\end{array}
\right.  \label{representation151}%
\end{equation}
where $\left\langle \nu\left(  E\right)  y\right\rangle =\sqrt{1+\left(
\nu\left(  E\right)  y\right)  ^{2}},$ $\left\langle D_{\zeta^{\prime}%
}\right\rangle ^{2}=1-\partial_{\zeta^{\prime}}^{2}$ and $\tilde{g}_{\pm
}^{\prime}\left(  y,\nu\left(  E\right)  \zeta,\nu\left(  E\right)
\zeta^{\prime};E\right)  :=\frac{\tilde{g}_{\pm}\left(  y,\nu\left(  E\right)
\zeta,\nu\left(  E\right)  \zeta^{\prime};E\right)  }{\left(  \left(
1-\left\vert \zeta\right\vert ^{2}\right)  \left(  1-\left\vert \zeta^{\prime
}\right\vert ^{2}\right)  \right)  ^{1/2}}.$
\end{lemma}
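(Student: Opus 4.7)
The plan is to reduce the left-hand side to an absolutely convergent integral on $\Pi_{\omega_{0}}\times\Sigma\times\Sigma$, where $\Sigma$ denotes the orthogonal projection of $\Omega_{\pm}(\omega_{0},\delta)$ onto $\Pi_{\omega_{0}}$, by combining three essentially independent operations: evaluating the spectral limit $\mu\downarrow 0$ through Lemma \ref{representation54}; changing variables from the sphere $\mathbb{S}^{2}$ to its projection on $\Pi_{\omega_{0}}$; and regularizing the residual oscillatory $y$-integral by integration by parts in $\zeta'$.

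First I would exploit the fact that for each fixed $\varepsilon>0$ the cutoff $\varphi(\varepsilon y)$ turns the $y$-integral in the definition of $\mathcal{A}_{\pm}^{(\varepsilon)}$ into a Lebesgue integral, so that $\mathcal{A}_{\pm}^{(\varepsilon)}(\nu(E)\omega,\xi';E)$ is Schwartz in $\xi'$, uniformly for $\omega\in\mathbb{S}^{2}$. Lemma \ref{representation54} then applies and yields, for each $\varepsilon>0$,
\begin{equation*}
\lim_{\mu\downarrow 0}\iint\bigl(\mathcal{A}_{\pm}^{(\varepsilon)}(\nu(E)\omega,\xi';E)\tilde{\delta}_{\mu}^{(\operatorname*{sgn}E)}\hat{g}_{1}(\xi'),f_{2}(\omega)\bigr)\,d\xi'\,d\omega=\upsilon(E)\int_{\mathbb{S}^{2}}\int_{\mathbb{S}^{2}}\bigl(\mathcal{A}_{\pm}^{(\varepsilon)}(\nu(E)\omega,\nu(E)\theta;E)f_{1}(\theta),f_{2}(\omega)\bigr)\,d\omega\,d\theta.
\end{equation*}
The support condition (\ref{representation228}) confines $\omega,\theta$ to $\Omega_{\pm}(\omega_{0},\delta)$, so I can pass to projections $\zeta=\omega-\langle\omega,\omega_{0}\rangle\omega_{0}\in\Sigma$, and analogously for $\theta\mapsto\zeta'$. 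Since $d\omega=(1-|\zeta|^{2})^{-1/2}\,d\zeta$, this change of variables produces exactly the Jacobian factor $\bigl((1-|\zeta|^{2})(1-|\zeta'|^{2})\bigr)^{-1/2}$ appearing in the definition of $\tilde{g}_{\pm}'$, and because $y\in\Pi_{\omega_{0}}$ the phase $e^{i\langle y,\xi'-\xi\rangle}$ reduces to $e^{i\nu(E)\langle y,\zeta'-\zeta\rangle}$.

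At this stage the $y$-integral over $\Pi_{\omega_{0}}$ still carries the cutoff $\varphi(\varepsilon y)$, and I would regularize it by integration by parts in $\zeta'$. The key algebraic identity, using that $y\in\Pi_{\omega_{0}}$ implies $\partial_{\zeta'}\langle y,\zeta'\rangle=y$, is
\begin{equation*}
\langle\nu(E)y\rangle^{-n}\langle D_{\zeta'}\rangle^{n}e^{i\nu(E)\langle y,\zeta'-\zeta\rangle}=e^{i\nu(E)\langle y,\zeta'-\zeta\rangle},\qquad\langle D_{\zeta'}\rangle^{2}=1-\partial_{\zeta'}^{2}.
\end{equation*}
Integrating by parts $n$ times in $\zeta'$---no boundary contributions arise because $g_{\pm}$ has support strictly inside $\Omega_{\pm}(\omega_{0},\delta)\times\Omega_{\pm}(\omega_{0},\delta)$---moves $\langle D_{\zeta'}\rangle^{n}$ onto $\tilde{g}_{\pm}'(y,\nu(E)\zeta,\nu(E)\zeta';E)\tilde{f}_{1}(\zeta')$, which by the $\mathcal{S}^{p}$-bounds on $g_{\pm}$ carries a factor $\langle\nu(E)y\rangle^{p-n}$ integrable on $\Pi_{\omega_{0}}$ provided $n$ is even and $n>p+2$. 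Dominated convergence then justifies $\varphi(\varepsilon y)\to 1$ in the $\varepsilon\to 0$ limit and delivers the claimed identity.

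The main obstacle will be to legitimize the inversion of the $\mu$- and $\varepsilon$-limits implicit in this scheme: one must establish absolute convergence of the expression on the right uniformly in $\mu>0$, so that the Moore--Osgood principle (or equivalently a direct application of dominated convergence to the doubly regularized integral) applies. This amounts to verifying that the bound on the integrand obtained after $n$ integrations by parts in $\zeta'$ is uniform in both $\mu$ and $\varepsilon$, which follows from the Schwartz character in $\xi'$ of $\tilde{\delta}_{\mu}^{(\operatorname*{sgn}E)}\hat{g}_{1}$ for each $\mu>0$ together with the $\mathcal{S}^{p}$-estimates on derivatives of $g_{\pm}$.
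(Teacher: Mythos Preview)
The paper does not give its own proof of this lemma; it refers to Propositions~5.4 and~5.5 of Yafaev~\cite{30} for the argument in the Schr\"odinger setting, which carries over essentially unchanged. Your proposal assembles the correct ingredients---Lemma~\ref{representation54} for the spectral collapse, the projection $\omega\mapsto\zeta$ onto $\Pi_{\omega_0}$ producing the Jacobian in $\tilde g_\pm'$, and the integration-by-parts identity $\langle D_{\zeta'}\rangle^{n}e^{i\nu(E)\langle y,\zeta'\rangle}=\langle\nu(E)y\rangle^{n}e^{i\nu(E)\langle y,\zeta'\rangle}$ to regularize the $y$-integral---and these are precisely the steps in Yafaev's argument.

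The one point that needs tightening is the order of operations. You compute $\lim_{\varepsilon\to0}\lim_{\mu\downarrow0}$ and then appeal to a limit-exchange to recover the stated $\lim_{\mu\downarrow0}\lim_{\varepsilon\to0}$; your justification invokes bounds ``uniform in both $\mu$ and $\varepsilon$,'' but $\tilde\delta_\mu^{(\operatorname*{sgn}E)}$ is not uniformly bounded as $\mu\downarrow0$, so the argument as written is incomplete. The clean fix---and the route taken in~\cite{30}---is to perform the $\zeta'$-integration by parts \emph{before} either limit. For fixed $\varepsilon,\mu>0$ everything is absolutely convergent, and since $\tilde\delta_\mu^{(\operatorname*{sgn}E)}$ depends only on the radial variable $|\xi'|$, the operator $\langle D_{\zeta'}\rangle^{n}$ does not act on it. After this regularization the $y$-integrand is dominated by $C\langle y\rangle^{p-n}$ uniformly in $\varepsilon$ (with $|\xi'|$ confined to $\operatorname{supp}\gamma$), so dominated convergence gives the inner limit $\varepsilon\to0$ directly; the resulting expression is H\"older continuous in $r=|\xi'|$, and relation~(\ref{representation66}) then yields the outer limit $\mu\downarrow0$ together with the factor $\upsilon(E)$. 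No interchange of limits is required.
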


We define $\Pi_{\omega_{0}}^{\pm}\left(  E\right)  :=\{x\in\mathbb{R}%
^{3}|x=z\omega_{0}+y,$ $y\in\Pi_{\omega_{0}}$ and $\pm\left(
\operatorname*{sgn}E\right)  z\geq0\}.$

\begin{lemma}
\label{representation176}Let us consider $\mathcal{A}_{\pm}^{\left(
\varepsilon\right)  }\left(  \xi,\xi^{\prime};E\right)  =\left(  \left\vert
\lambda\left(  \xi\right)  \right\vert -\left\vert \lambda\left(  \xi^{\prime
}\right)  \right\vert \right)
{\displaystyle\int\limits_{\Pi_{\omega_{0}}^{\pm}\left(  E\right)  }}
e^{i\left\langle x,\xi^{\prime}-\xi\right\rangle }g_{\pm}\left(  x,\xi
,\xi^{\prime};E\right)  \varphi\left(  \varepsilon x\right)  dx,$ where
$\varphi\in C_{0}^{\infty}\left(  \mathbb{R}^{3}\right)  $ is such that
$\varphi\left(  0\right)  =1,$ and $g_{\pm}$ satisfies the assumption
(\ref{representation228}) and the estimate
\begin{equation}
\left\vert \partial_{x}^{\alpha}\partial_{\xi}^{\beta}\partial_{\xi^{\prime}%
}^{\beta^{\prime}}g_{\pm}\left(  x,\xi,\xi^{\prime};E\right)  \right\vert \leq
C_{\alpha,\beta,\beta^{\prime}}\left(  1+\left\vert x\right\vert \right)
^{p-\left\vert \alpha\right\vert }, \label{representation227}%
\end{equation}
for some real $p,$ and all $x\in\Pi_{\omega_{0}}^{\pm}\left(  E\right)  $.
Moreover, the following relation holds
\begin{equation}
g_{\pm}\left(  x,\xi,\xi^{\prime};E\right)  =0\text{ if\ }\left(
\operatorname*{sgn}E\right)  \left\langle \eta,x\right\rangle \geq
c_{0}\left\vert \eta\right\vert \left\vert x\right\vert ,\text{ for }\eta
=\xi+\xi^{\prime},\text{ }c_{0}\in\left(  0,1\right)  ,
\label{representation226}%
\end{equation}
for all $x\in\Pi_{\omega_{0}}^{\pm}\left(  E\right)  $, $\left\vert
x\right\vert \geq R.$ Then, we have
\begin{equation}
\lim_{\mu\downarrow0}\lim_{\varepsilon\rightarrow0}%
{\displaystyle\int_{\mathbb{S}^{2}}}
{\displaystyle\int}
\left(  \mathcal{A}_{\pm}^{\left(  \varepsilon\right)  }\left(  \nu\left(
E\right)  \omega,\xi^{\prime};E\right)  \tilde{\delta}_{\mu}^{\left(
\operatorname*{sgn}E\right)  }\hat{g}_{1}\left(  \xi^{\prime}\right)
,f_{2}\left(  \omega\right)  \right)  d\xi^{\prime}d\omega=0.
\label{representation175}%
\end{equation}

\end{lemma}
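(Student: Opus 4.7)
The plan is to exploit the vanishing of the prefactor $|\lambda(\xi)|-|\lambda(\xi')|$ on the energy shell $|\xi|=|\xi'|=\nu(E)$, combined with the fact that $\tilde\delta_\mu^{(\operatorname{sgn} E)}\hat g_1(\xi')$ concentrates on $|\xi'|=\nu(E)$ as $\mu\downarrow 0$. Writing
\[
|\lambda(\xi)|-|\lambda(\xi')|=\frac{|\xi|^2-|\xi'|^2}{|\lambda(\xi)|+|\lambda(\xi')|}=\frac{\langle\eta,\xi-\xi'\rangle}{|\lambda(\xi)|+|\lambda(\xi')|},\qquad \eta=\xi+\xi',
\]
and using $\langle\eta,\xi-\xi'\rangle\,e^{i\langle x,\xi'-\xi\rangle}=i\langle\eta,\nabla_x\rangle\,e^{i\langle x,\xi'-\xi\rangle}$, I would integrate by parts once in $x$ on the half-space $\Pi_{\omega_0}^{\pm}(E)$. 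This produces a bulk integrand $\langle\eta,\nabla_x(g_\pm\varphi(\varepsilon\,\cdot))\rangle$, which by (\ref{representation227}) gains one negative power of $|x|$, plus a boundary term on $\Pi_{\omega_0}$ weighted by $\mp\operatorname{sgn}(E)\langle\eta,\omega_0\rangle/(|\lambda(\xi)|+|\lambda(\xi')|)$, exactly of the form treated by Lemma \ref{representation147}.

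I would then iterate the same IBP in the bulk. The support condition (\ref{representation226}) is used decisively here: on $\operatorname{supp}\,g_\pm$ with $|x|\ge R$ the cosine between $\hat x$ and $\operatorname{sgn}(E)\hat\eta$ is bounded away from $1$, so the $\eta$-direction is non-stationary, which both gives a meaning to the oscillatory integrand and prevents uncontrolled contributions at infinity. After finitely many iterations the remaining bulk integrand is absolutely integrable, so the inner limit $\varepsilon\to 0$ can be taken: for the boundary contributions by Lemma \ref{representation147}, for the bulk by dominated convergence. Call the resulting function $\mathcal{A}_\pm^{(0)}(\xi,\xi';E)$. Since the sequence of equalities relating $\mathcal{A}_\pm^{(\varepsilon)}$ to (bulk $+$ boundary) preserves, as an overall identity, the prefactor $|\lambda(\xi)|-|\lambda(\xi')|$, the limit $\mathcal{A}_\pm^{(0)}$ is continuous and vanishes identically on the diagonal $|\xi|=|\xi'|$.

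For the outer limit I would proceed as in the proof of Lemma \ref{representation54}. Passing to polar coordinates $\xi'=r\theta$ and changing variable $s=\sqrt{r^2+m^2}-|E|$, one has $|\lambda(\nu(E)\omega)|-|\lambda(r\theta)|=-s$ and $\tilde\delta_\mu^{(\operatorname{sgn} E)}(r\theta)=\pi^{-1}\mu(s^2+\mu^2)^{-1}P^{(\operatorname{sgn} E)}(r\theta)$, so every resulting $r$-integral reduces to
\[
\int h(s,\omega,\theta;E)\,\frac{s\mu}{\pi(s^2+\mu^2)}\,ds,
\]
with $h$ bounded, compactly supported in $s$, and H\"older continuous in $s$ uniformly in $(\omega,\theta)$ (this uses the shell vanishing of $\mathcal{A}_\pm^{(0)}$ to extract the factor $s$, together with the regularity produced by the iterated IBPs). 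The kernel $s\mu/(s^2+\mu^2)$ is uniformly bounded by $1/2$ by AM--GM and tends pointwise to $0$ for $s\ne 0$, so dominated convergence yields the asserted vanishing in (\ref{representation175}).

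The main obstacle I expect is the bookkeeping of the prefactor through the iterated IBPs: after the first one the factor $|\lambda(\xi)|-|\lambda(\xi')|$ is no longer explicit in either the bulk or the boundary summand. To extract the factor $s$ from each summand in the polar-coordinate step above one must carefully track, via the defining equalities, that $\mathcal{A}_\pm^{(0)}(\nu(E)\omega,r\theta;E)$ factors as $(|E|-\sqrt{r^2+m^2})\,\tilde h$ with $\tilde h$ H\"older continuous uniformly in $\omega,\theta$. This is delicate because the bulk IBP is naturally performed in the direction $\eta=\langle\eta,\omega_0\rangle\omega_0+\eta_\perp$, whose non-tangential $\omega_0$-component generates a further boundary term at each stage, while only the tangential $\eta_\perp$-component iterates cleanly; organizing this decomposition so that the shell vanishing is visible in every term is the technical crux.
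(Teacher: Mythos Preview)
The paper does not prove this lemma; it cites Propositions 5.4 and 5.5 of Yafaev \cite{30} for both Lemma \ref{representation147} and Lemma \ref{representation176}. So there is no ``paper's own proof'' to compare against, and your proposal must be judged on its own terms.

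Your first integration by parts is correct and is the standard opening move: writing $|\lambda(\xi)|-|\lambda(\xi')|=\langle\eta,\xi-\xi'\rangle/(|\lambda(\xi)|+|\lambda(\xi')|)$ and trading $\langle\eta,\xi-\xi'\rangle$ for $i\langle\eta,\nabla_x\rangle$ on the exponential yields one boundary term on $\Pi_{\omega_0}$ (to which Lemma \ref{representation147} applies) and one bulk term with an extra $x$-derivative on $g_\pm\varphi(\varepsilon\,\cdot)$.

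The gap is in your iteration. You write that condition (\ref{representation226}) makes ``the $\eta$-direction non-stationary,'' but the $\eta$-directional derivative of the phase $\langle x,\xi'-\xi\rangle$ in $x$ equals $\langle\eta,\xi'-\xi\rangle=|\xi'|^2-|\xi|^2$, which is \emph{independent of $x$} and vanishes precisely on the energy shell $|\xi|=|\xi'|$. Condition (\ref{representation226}) constrains the position of $\hat x$ relative to $\hat\eta$, not the size of this phase derivative. If you nonetheless iterate the $\eta$-directional integration by parts $n$ times, each step after the first introduces a factor $\langle\eta,\xi-\xi'\rangle^{-1}=(|\xi|^2-|\xi'|^2)^{-1}$, so the bulk and boundary terms individually carry $(|\xi|^2-|\xi'|^2)^{-(n-1)}$. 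On setting $\xi=\nu(E)\omega$ this is $\sim s^{-(n-1)}$, and paired with $\tilde\delta_\mu\sim\mu/(s^2+\mu^2)$ the integrand is non-integrable in $s$ for $n\geq 2$. So your term-by-term $\varepsilon\to 0$ limit does not give objects to which the final $s\mu/(s^2+\mu^2)\to 0$ argument can be applied; the cancellations you hope to ``track'' are exactly what fails to survive the limit.

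The correct role of (\ref{representation226}) is geometric rather than phase-theoretic. From (\ref{representation228}) one has $\langle\hat\eta,\omega_0\rangle>\delta$, and on $\operatorname{supp}g_\pm$ with $|x|\geq R$ the bound $\langle\hat\eta,\hat x\rangle<c_0$ then forces the $\omega_0$-component $z=\langle x,\omega_0\rangle$ to satisfy $z\leq C|y|$ (write $\hat x=(z/|x|)\omega_0+y/|x|$ and use $\langle\hat\eta,\hat x\rangle\geq\delta z/|x|-\sqrt{1-\delta^2}\,|y|/|x|$; with the constants as used in the application $\delta>c_0$ this yields a linear bound). Hence $|x|\lesssim|y|$ on the support, and the half-space integral is effectively controlled by decay in $y\in\Pi_{\omega_0}$ alone. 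One then integrates by parts in the \emph{tangential} variable $\zeta'$ (the projection of $\hat\xi'$ onto $\Pi_{\omega_0}$, at fixed $|\xi'|$), exactly as in Lemma \ref{representation147}, to gain arbitrary powers of $\langle y\rangle^{-1}$. This does \emph{not} differentiate $\tilde\delta_\mu$ (which depends only on $|\xi'|$) and introduces no singular denominators. With the prefactor $|\lambda(\xi)|-|\lambda(\xi')|$ kept outside, the $\varepsilon\to 0$ limit then exists and is a bounded function of $|\xi'|$, after which your final dominated-convergence step with $s\mu/(s^2+\mu^2)$ is correct.
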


We want to find an expression for $G_{2}^{\left(  \varepsilon\right)  }\left(
\xi,\xi^{\prime};E\right)  $ independent on the cut-off functions $\zeta
_{-}^{\left(  \operatorname*{sgn}E\right)  }$. If $\hat{\xi}^{\prime}\in
\Omega_{\pm}\left(  \omega_{0},\delta\right)  ,$ the function $\partial
_{x_{j}}\zeta_{-}^{\left(  \operatorname*{sgn}E\right)  }\left(  x,\xi
^{\prime}\right)  $ is equal to zero for $\pm\left(  \operatorname*{sgn}%
E\right)  z<0$, so we can consider the integral in (\ref{representation156})
only in the region $\Pi_{\omega_{0}}^{\pm}\left(  E\right)  .$ Integrating by
parts in $G_{2}^{\left(  \varepsilon\right)  }$ and noting that $\zeta
_{-}^{\left(  \operatorname*{sgn}E\right)  }\left(  y,\xi^{\prime}\right)
=1,$ for $\left\vert \xi^{\prime}\right\vert \geq c_{1},$ we get
\begin{equation}
\Psi_{\pm}\left(  \hat{\xi},\hat{\xi}^{\prime};\omega_{0}\right)
G_{2}^{\left(  \varepsilon\right)  }\left(  \xi,\xi^{\prime};E\right)
=\Psi_{\pm}\left(  \hat{\xi},\hat{\xi}^{\prime};\omega_{0}\right)  \left(
\pm\breve{G}_{2}^{\left(  \varepsilon\right)  }\left(  \xi,\xi^{\prime
};E\right)  +R_{\pm}^{\left(  \varepsilon\right)  }\left(  \xi,\xi^{\prime
};E\right)  \right)  , \label{representation159}%
\end{equation}
where%
\begin{equation}
\left.  \breve{G}_{2}^{\left(  \varepsilon\right)  }\left(  \xi,\xi^{\prime
};E\right)  :=i\left(  \operatorname*{sgn}E\right)  \varsigma\left(
\left\vert \xi\right\vert \right)  \varsigma\left(  \left\vert \xi^{\prime
}\right\vert \right)
{\displaystyle\int\limits_{\Pi_{\omega_{0}}}}
\left(  u_{N}^{+}\left(  y,\xi;\lambda\left(  \xi\right)  \right)  \right)
^{\ast}\left(  \alpha\cdot\omega_{0}\right)  u_{N}^{-}\left(  y,\xi^{\prime
};\lambda\left(  \xi^{\prime}\right)  \right)  \varphi\left(  \varepsilon
y\right)  dy,\right.  \label{representation150}%
\end{equation}
and $R_{\pm}^{\left(  \varepsilon\right)  }\left(  \xi,\xi^{\prime};E\right)
:=i\left(  \operatorname*{sgn}E\right)  \varsigma\left(  \left\vert
\xi\right\vert \right)  \varsigma\left(  \left\vert \xi^{\prime}\right\vert
\right)
{\displaystyle\sum\limits_{i=1}^{3}}
$ $\
{\displaystyle\int\limits_{\Pi_{\omega_{0}}^{\pm}\left(  E\right)  }}
\partial_{x_{i}}\left(  \left(  u_{N}^{+}\left(  x,\xi;\lambda\left(
\xi\right)  \right)  ^{\ast}\alpha_{i}u_{N}^{-}\left(  x,\xi^{\prime}%
;\lambda\left(  \xi^{\prime}\right)  \right)  \right)  \varphi\left(
\varepsilon x\right)  \right)  \zeta_{-}^{\left(  \operatorname*{sgn}E\right)
}\left(  x,\xi^{\prime}\right)  dx.$ Using the definition (\ref{eig30}) of the
functions $r_{N}^{_{\pm}}\left(  x,\xi;E\right)  $ we obtain $i\sum_{j=1}%
^{3}\partial_{x_{j}}\left(  \left(  u_{N}^{_{+}}\left(  x,\xi;\lambda\left(
\xi\right)  \right)  \right)  ^{\ast}\alpha_{j}u_{N}^{_{-}}\left(
x,\xi^{\prime};\lambda\left(  \xi^{\prime}\right)  \right)  \right)
=e^{i\left\langle x,\xi^{\prime}-\xi\right\rangle }$ $\times\lbrack\left(
r_{N}^{_{+}}\left(  x,\xi;\lambda\left(  \xi\right)  \right)  \right)  ^{\ast
}a_{N}^{_{-}}\left(  x,\xi^{\prime};\lambda\left(  \xi^{\prime}\right)
\right)  $ $-\left(  a_{N}^{_{+}}\left(  x,\xi;\lambda\left(  \xi\right)
\right)  \right)  ^{\ast}r_{N}^{_{-}}\left(  x,\xi^{\prime};\lambda\left(
\xi^{\prime}\right)  \right)  +\left(  \lambda\left(  \xi\right)
-\lambda\left(  \xi^{\prime}\right)  \right)  \left(  a_{N}^{_{+}}\left(
x,\xi;\lambda\left(  \xi\right)  \right)  \right)  ^{\ast}a_{N}^{_{-}}\left(
x,\xi^{\prime};\lambda\left(  \xi^{\prime}\right)  \right)  ].$

Let us decompose $\Psi_{\pm}\left(  \hat{\xi},\hat{\xi}^{\prime};\omega
_{0}\right)  R_{\pm}^{\left(  \varepsilon\right)  }$ in the sum
\begin{equation}
\Psi_{\pm}\left(  \hat{\xi},\hat{\xi}^{\prime};\omega_{0}\right)  R_{\pm
}^{\left(  \varepsilon\right)  }=\Psi_{\pm}\left(  \hat{\xi},\hat{\xi}%
^{\prime};\omega_{0}\right)  \left(  \left(  R_{1}^{\left(  \varepsilon
\right)  }\right)  _{\pm}+\left(  R_{2}^{\left(  \varepsilon\right)  }\right)
_{\pm}+\left(  R_{3}^{\left(  \varepsilon\right)  }\right)  _{\pm}\right)  ,
\label{representation249}%
\end{equation}
where\
\[
\left.
\begin{array}
[c]{c}%
\left(  R_{1}^{\left(  \varepsilon\right)  }\right)  _{\pm}\left(  \xi
,\xi^{\prime};E\right)  :=i\varepsilon\left(  \operatorname*{sgn}E\right)
\varsigma\left(  \left\vert \xi\right\vert \right)  \varsigma\left(
\left\vert \xi^{\prime}\right\vert \right) \\
\times%
{\displaystyle\sum\limits_{j=1}^{3}}
\text{ }\left.
{\displaystyle\int\limits_{\Pi_{\omega_{0}}^{\pm}\left(  E\right)  }}
e^{i\left\langle x,\xi^{\prime}-\xi\right\rangle }\left(  \left(  a_{N}^{_{+}%
}\left(  x,\xi;\lambda\left(  \xi\right)  \right)  \right)  ^{\ast}\alpha
_{j}a_{N}^{_{-}}\left(  x,\xi^{\prime};\lambda\left(  \xi^{\prime}\right)
\right)  \right)  \zeta_{-}^{\left(  \operatorname*{sgn}E\right)  }\left(
x,\xi^{\prime}\right)  \left(  \partial_{x_{j}}\varphi\right)  \left(
\varepsilon x\right)  dx\right.  ,
\end{array}
\right.
\]
$\ $%
\[
\left.
\begin{array}
[c]{c}%
\ \left(  R_{2}^{\left(  \varepsilon\right)  }\right)  _{\pm}\left(  \xi
,\xi^{\prime};E\right)  :=\left(  \operatorname*{sgn}E\right)  \varsigma
\left(  \left\vert \xi\right\vert \right)  \varsigma\left(  \left\vert
\xi^{\prime}\right\vert \right) \\
\times%
{\displaystyle\int\limits_{\Pi_{\omega_{0}}^{\pm}\left(  E\right)  }}
e^{i\left\langle x,\xi^{\prime}-\xi\right\rangle }\left(  \left(  r_{N}%
^{+}\left(  x,\xi;\lambda\left(  \xi\right)  \right)  \right)  ^{\ast}%
a_{N}^{-}\left(  x,\xi^{\prime};\lambda\left(  \xi^{\prime}\right)  \right)
-\left(  a_{N}^{+}\left(  x,\xi;\lambda\left(  \xi\right)  \right)  \right)
^{\ast}r_{N}^{-}\left(  x,\xi^{\prime};\lambda\left(  \xi^{\prime}\right)
\right)  \right)  \zeta_{-}^{\left(  \operatorname*{sgn}E\right)  }\left(
x,\xi^{\prime}\right)  \varphi\left(  \varepsilon x\right)  dx,
\end{array}
\right.
\]
and
\[
\left.
\begin{array}
[c]{c}%
\left(  R_{3}^{\left(  \varepsilon\right)  }\right)  _{\pm}\left(  \xi
,\xi^{\prime};E\right)  :=\left(  \left\vert \lambda\left(  \xi\right)
\right\vert -\left\vert \lambda\left(  \xi^{\prime}\right)  \right\vert
\right)  \varsigma\left(  \left\vert \xi\right\vert \right)  \varsigma\left(
\left\vert \xi^{\prime}\right\vert \right) \\
\times%
{\displaystyle\int\limits_{\Pi_{\omega_{0}}^{\pm}\left(  E\right)  }}
e^{i\left\langle x,\xi^{\prime}-\xi\right\rangle }\left(  a_{N}^{+}\left(
x,\xi;\lambda\left(  \xi\right)  \right)  \right)  ^{\ast}a_{N}^{-}\left(
x,\xi^{\prime};\lambda\left(  \xi^{\prime}\right)  \right)  \zeta_{-}^{\left(
\operatorname*{sgn}E\right)  }\left(  x,\xi^{\prime}\right)  \varphi\left(
\varepsilon x\right)  dx.
\end{array}
\right.
\]
We first prove the following

\begin{lemma}
\label{representation184}The limit $\Psi_{\pm}\left(  \hat{\xi},\hat{\xi
}^{\prime};\omega_{0}\right)  \left(  R_{2}^{\left(  0\right)  }\right)
_{\pm}\left(  \xi,\xi^{\prime};E\right)  :=\lim\limits_{\varepsilon
\rightarrow0}\Psi_{\pm}\left(  \hat{\xi},\hat{\xi}^{\prime};\omega_{0}\right)
\left(  R_{2}^{\left(  \varepsilon\right)  }\right)  _{\pm}\left(  \xi
,\xi^{\prime};E\right)  $ exists. For any $p$ and $q,$ there is $N,$ such that
the function $\Psi_{\pm}\left(  \hat{\xi},\hat{\xi}^{\prime};\omega
_{0}\right)  \left(  R_{2}^{\left(  0\right)  }\right)  _{\pm}\left(
\nu\left(  E\right)  \hat{\xi},\nu\left(  E\right)  \hat{\xi}^{\prime
};E\right)  $ is of the $C^{p}\left(  \mathbb{S}^{2}\times\mathbb{S}%
^{2}\right)  $ class, and its $C^{p}-$norm is bounded by $C\left\vert
E\right\vert ^{-q},$ as $\left\vert E\right\vert \rightarrow\infty.$ Moreover,
the following relation holds
\begin{equation}
\left.
\begin{array}
[c]{c}%
-i\left(  2\pi\right)  ^{-2}\upsilon\left(  E\right)  \lim_{\mu\downarrow
0}\lim_{\varepsilon\rightarrow0}%
{\displaystyle\int}
{\displaystyle\int_{\mathbb{S}^{2}}}
\left(  \Psi_{\pm}\left(  \omega,\hat{\xi}^{\prime}\,;\omega_{0}\right)
R_{\pm}^{\left(  \varepsilon\right)  }\left(  \nu\left(  E\right)  \omega
,\xi^{\prime};E\right)  \tilde{\delta}_{\mu}^{\left(  \operatorname*{sgn}%
E\right)  }\hat{g}_{1}\left(  \xi^{\prime}\right)  ,f_{2}\left(
\omega\right)  \right)  d\omega d\xi^{\prime}\\
=-i\left(  2\pi\right)  ^{-2}\upsilon\left(  E\right)  ^{2}%
{\displaystyle\int_{\mathbb{S}^{2}}}
{\displaystyle\int_{\mathbb{S}^{2}}}
\left(  \Psi_{\pm}\left(  \omega,\theta;\omega_{0}\right)  \left(
R_{2}^{\left(  0\right)  }\right)  _{\pm}\left(  \nu\left(  E\right)
\omega,\nu\left(  E\right)  \theta;E\right)  f_{1}\left(  \theta\right)
,f_{2}\left(  \omega\right)  \right)  d\theta d\omega.
\end{array}
\right.  \label{representation161}%
\end{equation}

\end{lemma}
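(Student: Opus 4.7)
The strategy is to analyze the three pieces of the decomposition~(\ref{representation249}) separately: $(R_2^{(\varepsilon)})_\pm$ will be the only surviving contribution and yield $(R_2^{(0)})_\pm$, whereas $(R_1^{(\varepsilon)})_\pm$ and $(R_3^{(\varepsilon)})_\pm$ produce vanishing contributions in the iterated limit.

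First I would establish the existence, smoothness and high-energy decay of the limit $(R_2^{(0)})_\pm$. The integrand of $(R_2^{(\varepsilon)})_\pm$ is a product of $a_N^\pm$ and the remainder $r_N^\mp$, together with the cut-off $\zeta_-^{(\operatorname{sgn}E)}$ and $\varphi(\varepsilon x)$. By estimates~(\ref{eig27}) and~(\ref{eig28}) it is bounded by $C(1+|x|)^{-\rho-N}|\xi|^{-N}|\xi'|^{-N}$, and choosing $N$ with $\rho+N>3$ renders the $x$-integral over the half-space $\Pi_{\omega_0}^\pm(E)$ absolutely convergent uniformly in $\varepsilon$. Dominated convergence then gives the $\varepsilon\to 0$ limit, and the resulting function is Hölder continuous in $\xi'$ uniformly in $\hat{\xi}$, so Lemma~\ref{representation54} handles the $\mu\to 0$ limit, producing the right-hand side of~(\ref{representation161}). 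Differentiating $p$ times in $(\omega,\theta)$ inside the integral introduces polynomial factors in $x$ and $\nu(E)$, which are absorbed into the decay $(1+|x|)^{-\rho-N}\nu(E)^{-N}$ by taking $N$ sufficiently large compared with $p+q$; this yields the claimed $C^p$-regularity and $|E|^{-q}$ bound.

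Next I would dispose of $(R_1^{(\varepsilon)})_\pm$, which carries an explicit prefactor $\varepsilon$ from $(\partial_{x_j}\varphi)(\varepsilon x)$. Inserting it into the bilinear form on the left of~(\ref{representation161}) and integrating by parts in $\xi'$ repeatedly, in exactly the manner used for $R_{jk}^{(\varepsilon)}$ in the proof of Lemma~\ref{representation145} (cf.~(\ref{representation243})), one produces absolutely convergent integrals that are uniformly bounded in $\varepsilon$ and $\mu$; the surviving factor $\varepsilon$ then forces the contribution to vanish. For $(R_3^{(\varepsilon)})_\pm$ the explicit factor $|\lambda(\xi)|-|\lambda(\xi')|$ matches the hypothesis of Lemma~\ref{representation176}, and I would apply that lemma with amplitude $g_\pm(x,\xi,\xi';E):=\varsigma(|\xi|)\varsigma(|\xi'|)(a_N^+(x,\xi;\lambda(\xi)))^{\ast} a_N^-(x,\xi';\lambda(\xi'))\zeta_-^{(\operatorname{sgn}E)}(x,\xi')$. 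The support requirement~(\ref{representation228}) comes from the $\Psi_\pm$ cut-off and from $\varsigma$, while the polynomial bound~(\ref{representation227}) (with $p=0$) follows from~(\ref{eig27}) and the smoothness of $\zeta_-$.

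The main obstacle will be verifying the geometric support condition~(\ref{representation226}) for the above $g_\pm$: one must check that on $\Pi_{\omega_0}^\pm(E)\cap\{|x|\geq R\}$ the cut-off $\zeta_-^{(\operatorname{sgn}E)}(x,\xi')$ vanishes whenever $(\operatorname{sgn}E)\langle\xi+\xi',x\rangle\geq c_0|\xi+\xi'||x|$, for some $c_0\in(0,1)$ chosen once and for all. Since on the support of $\Psi_\pm$ the unit vectors $\hat\xi,\hat\xi'$ lie in the same hemisphere $\Omega_{+}(\omega_0,\delta)$ or $\Omega_{-}(\omega_0,\delta)$, choosing $c_0$ sufficiently close to $1$ forces $(\operatorname{sgn}E)\langle\hat x,\hat\xi'\rangle$ to exceed $1-\varepsilon_0$, which by the constraint~(\ref{representation63}) is precisely the region where the defining $\sigma$-factor of $\zeta_-^{(\operatorname{sgn}E)}$ vanishes. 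Granting this, Lemma~\ref{representation176} yields the vanishing of the $(R_3^{(\varepsilon)})_\pm$ contribution in the iterated limit, and combining the three pieces gives formula~(\ref{representation161}) with $(R_2^{(0)})_\pm$ as the only surviving term.
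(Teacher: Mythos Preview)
Your handling of $(R_1^{(\varepsilon)})_\pm$ and $(R_2^{(\varepsilon)})_\pm$ is fine and matches the paper. The gap is in the $(R_3^{(\varepsilon)})_\pm$ step: the geometric support condition~(\ref{representation226}) does \emph{not} hold for the amplitude $g_\pm$ you wrote down, no matter how close to $1$ you take $c_0$. Here is a concrete obstruction (take $E>m$, the $+$ sign, and $|\xi|=|\xi'|$). Choose $\hat\xi,\hat\xi'\in\Omega_{+}(\omega_0,\delta)$ symmetric about $\omega_0$ with $\langle\hat\xi,\omega_0\rangle=\langle\hat\xi',\omega_0\rangle$ just above $\delta$; then $\hat\eta=(\xi+\xi')/|\xi+\xi'|=\omega_0$. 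Take $\hat x=\omega_0$, so $\langle\hat\eta,\hat x\rangle=1\geq c_0$ is satisfied, yet $\langle\hat x,\hat\xi'\rangle\approx\delta$. For $\zeta_-^{(\operatorname{sgn}E)}(x,\xi')$ to vanish you would need $\langle\hat x,\hat\xi'\rangle\geq 1-\varepsilon_0$, but the constraint~(\ref{representation63}) forces $\varepsilon_0<1-\sqrt{1-\delta^2}$, and for moderate $\delta$ (e.g.\ $\delta=1/2$) this gives $1-\varepsilon_0>\sqrt{3}/2\gg\delta$. So $g_\pm(x,\xi,\xi';E)\neq 0$ at such points and Lemma~\ref{representation176} cannot be invoked on the whole amplitude.

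The paper resolves this by splitting $(R_3^{(\varepsilon)})_\pm$ with a cutoff $\chi(|\xi-\xi'|)$, $\chi\in C_0^\infty(\mathbb R_+)$, supported near $0$. On the near-diagonal piece $\chi(|\xi-\xi'|)(R_3^{(\varepsilon)})_\pm$ one has $\eta\approx 2\xi'$ (since $|\xi|\approx|\xi'|\approx\nu(E)$ by the $\varsigma$-cutoffs), so $\langle\hat\eta,\hat x\rangle\geq c_0$ genuinely forces $\langle\hat x,\hat\xi'\rangle$ close to $1$, and~(\ref{representation226}) is verified for $\kappa_1$ small; Lemma~\ref{representation176} then kills this piece. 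On the complementary piece $(1-\chi(|\xi-\xi'|))(R_3^{(\varepsilon)})_\pm$ one has $|\xi-\xi'|\geq\kappa_0>0$, so the integration-by-parts argument of Lemma~\ref{representation145} applies and yields a smooth limit; after the $\mu\to 0$ limit both $|\xi|$ and $|\xi'|$ equal $\nu(E)$, whence the explicit prefactor $|\lambda(\xi)|-|\lambda(\xi')|$ vanishes, giving zero contribution. You need this near/far-diagonal splitting; without it the argument for $(R_3^{(\varepsilon)})_\pm$ does not go through.
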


\begin{proof}
Since $\zeta_{-}^{\left(  \operatorname*{sgn}E\right)  }$ is supported on
$\Xi^{-}\left(  E\right)  ,$ $a_{N}^{_{-}}\zeta_{-}^{\left(
\operatorname*{sgn}E\right)  }$ satisfies the estimate (\ref{eig27}). If
$x\in\Pi_{\omega_{0}}^{\pm}\left(  E\right)  $ and $\hat{\xi}\in\Omega_{\pm
}\left(  \omega_{0},\delta\right)  ,$ using (\ref{representation63}), we get
$\left(  x,\xi\right)  \in$ $\Xi^{+}\left(  E\right)  .$ Then, $\left(
a_{N}^{_{+}}\left(  x,\xi;\lambda\left(  \xi\right)  \right)  \right)  ^{\ast
}$ also satisfies (\ref{eig27}). Thus, for $x\in\Pi_{\omega_{0}}^{\pm}\left(
E\right)  $ and $\hat{\xi}\in\Omega_{\pm}\left(  \omega_{0},\delta\right)  $,
we obtain $\left\vert \partial_{x}^{\alpha}\partial_{\xi}^{\beta}\partial
_{\xi^{\prime}}^{\beta^{\prime}}\left(  \left(  a_{N}^{_{+}}\left(
x,\xi;\lambda\left(  \xi\right)  \right)  \right)  ^{\ast}\alpha_{j}%
a_{N}^{_{-}}\left(  x,\xi^{\prime};\lambda\left(  \xi^{\prime}\right)
\right)  \right)  \zeta_{-}^{\left(  \operatorname*{sgn}E\right)  }\left(
x,\xi^{\prime}\right)  \right\vert \leq C_{\alpha,\beta,\beta^{\prime}}\left(
1+\left\vert x\right\vert \right)  ^{-\left\vert \alpha\right\vert }\left\vert
\xi\right\vert ^{-\left\vert \beta\right\vert }\left\vert \xi^{\prime
}\right\vert ^{-\left\vert \beta^{\prime}\right\vert },$ for all indices
$\alpha$ and $\beta$ . This estimate implies that for all $f,g\in
\mathcal{S}\left(  \mathbb{R}^{3};\mathbb{C}^{4}\right)  ,$ (see relation
(\ref{representation243})
\begin{equation}
\lim_{\varepsilon\rightarrow0}\int\int\left(  \Psi_{\pm}\left(  \hat{\xi}%
,\hat{\xi}^{\prime};\omega_{0}\right)  \left(  R_{1}^{\left(  \varepsilon
\right)  }\right)  _{\pm}\left(  \xi,\xi^{\prime};E\right)  f\left(
\xi^{\prime}\right)  ,g\left(  \xi\right)  \right)  d\xi^{\prime}d\xi=0.
\label{representation157}%
\end{equation}
The proof of this relation is analogous to that of relation
(\ref{representation243}).

Now observe that the functions $r_{N}^{-}\left(  x,\xi^{\prime};\lambda\left(
\xi^{\prime}\right)  \right)  \zeta_{-}^{\left(  \operatorname*{sgn}E\right)
}\left(  x,\xi^{\prime}\right)  $ and $a_{N}^{-}\left(  x,\xi^{\prime}%
;\lambda\left(  \xi^{\prime}\right)  \right)  \zeta_{-}^{\left(
\operatorname*{sgn}E\right)  }\left(  x,\xi^{\prime}\right)  $ satisfy the
estimates (\ref{eig28}) and (\ref{eig27}), respectively, for all
$x,\xi^{\prime}\in\mathbb{R}^{3}$. Moreover, the estimate (\ref{eig28}) for
the function $\left(  r_{N}^{+}\left(  x,\xi;\lambda\left(  \xi\right)
\right)  \right)  ^{\ast}$ and the estimate (\ref{eig27}) for $\left(
a_{N}^{+}\left(  x,\xi;\lambda\left(  \xi\right)  \right)  \right)  ^{\ast}$
hold for $x\in\Pi_{\omega_{0}}^{\pm}\left(  E\right)  $ and $\hat{\xi}%
\in\Omega_{\pm}\left(  \omega_{0},\delta\right)  $. Hence, for $N$ big enough,
we conclude that the limit, as $\varepsilon\rightarrow0,$ of $\Psi_{\pm
}\left(  \hat{\xi},\hat{\xi}^{\prime};\omega_{0}\right)  \left(
R_{2}^{\left(  \varepsilon\right)  }\right)  _{\pm}\left(  \xi,\xi^{\prime
};E\right)  $ exists and it is equal to $\Psi_{\pm}\left(  \hat{\xi},\hat{\xi
}^{\prime};\omega_{0}\right)  \left(  R_{2}^{\left(  0\right)  }\right)
_{\pm}\left(  \xi,\xi^{\prime};E\right)  .$ Moreover, for any $p$ and $q$
there exist $N,$ such that $\Psi_{\pm}\left(  \hat{\xi},\hat{\xi}^{\prime
};\omega_{0}\right)  \left(  R_{2}^{\left(  0\right)  }\right)  _{\pm}\left(
\xi,\xi^{\prime};E\right)  $ is a $C^{p}-$function of variables $\xi$ and
$\xi^{\prime},$ and its $C^{p}-$norm is bounded by $C\left\vert E\right\vert
^{-q},$ as $\left\vert E\right\vert \rightarrow\infty$.

Note that $\left\vert \partial_{x}^{\alpha}\partial_{\xi}^{\beta}\partial
_{\xi^{\prime}}^{\beta^{\prime}}\left(  \left(  a_{N}^{+}\left(  x,\xi
;\lambda\left(  \xi\right)  \right)  \right)  ^{\ast}a_{N}^{-}\left(
x,\xi^{\prime};\lambda\left(  \xi^{\prime}\right)  \right)  \zeta_{-}^{\left(
\operatorname*{sgn}E\right)  }\left(  x,\xi^{\prime}\right)  \right)
\right\vert \leq C_{\alpha,\beta,\beta^{\prime}}\left(  1+\left\vert
x\right\vert \right)  ^{-\left\vert \alpha\right\vert }\left\vert
\xi\right\vert ^{-\left\vert \beta\right\vert }\left\vert \xi^{\prime
}\right\vert ^{-\left\vert \beta^{\prime}\right\vert },$ for all $x\in
\Pi_{\omega_{0}}^{\pm}\left(  E\right)  \ $and $\hat{\xi}\in\Omega_{\pm
}\left(  \omega_{0},\delta\right)  ,$ and all indices $\alpha,\beta
,\beta^{\prime}$. For some $0<\kappa_{0}<\kappa_{1},$ let $\chi\in
C_{0}^{\infty}\left(  \mathbb{R}^{+}\right)  $ be such that $\chi\left(
\kappa\right)  =1$ for $0\leq\kappa\leq\kappa_{0}$ and $\chi\left(
\kappa\right)  =0$ for $\kappa\geq\kappa_{1}.$ We split $\left(
R_{3}^{\left(  \varepsilon\right)  }\right)  _{\pm}$ in two parts,
$\chi\left(  \left\vert \xi-\xi^{\prime}\right\vert \right)  \left(
R_{3}^{\left(  \varepsilon\right)  }\right)  _{\pm}$ and $\left(
1-\chi\left(  \left\vert \xi-\xi^{\prime}\right\vert \right)  \right)  \left(
R_{3}^{\left(  \varepsilon\right)  }\right)  _{\pm}.$ Taking $\kappa_{1}$
small enough, we see that the cut-off function $\chi\left(  \left\vert \xi
-\xi^{\prime}\right\vert \right)  \zeta_{-}^{\left(  \operatorname*{sgn}%
E\right)  }\left(  x,\xi^{\prime}\right)  $ satisfies relation
(\ref{representation226}). Then, applying Lemma \ref{representation176} to the
term $\chi\left(  \left\vert \xi-\xi^{\prime}\right\vert \right)  \Psi_{\pm
}\left(  \hat{\xi},\hat{\xi}^{\prime};\omega_{0}\right)  \left(
R_{3}^{\left(  \varepsilon\right)  }\right)  _{\pm}$ and Lemma
\ref{representation145} to $\left(  1-\chi\left(  \left\vert \xi-\xi^{\prime
}\right\vert \right)  \right)  \Psi_{\pm}\left(  \hat{\xi},\hat{\xi}^{\prime
};\omega_{0}\right)  \left(  R_{3}^{\left(  \varepsilon\right)  }\right)
_{\pm}$ we have%
\begin{equation}
\lim_{\mu\downarrow0}\lim_{\varepsilon\rightarrow0}%
{\displaystyle\int_{\mathbb{S}^{2}}}
{\displaystyle\int}
\left(  \Psi_{\pm}\left(  \hat{\xi},\hat{\xi}^{\prime};\omega_{0}\right)
\left(  R_{3}^{\left(  \varepsilon\right)  }\right)  _{\pm}\left(  \xi
,\xi^{\prime};E\right)  \tilde{\delta}_{\mu}^{\left(  \operatorname*{sgn}%
E\right)  }\hat{g}_{1}\left(  \xi^{\prime}\right)  ,f_{2}\left(
\omega\right)  \right)  d\xi^{\prime}d\omega=0. \label{representation158}%
\end{equation}
Introducing decomposition (\ref{representation249}) in the L.H.S. of
(\ref{representation161}) and using relations (\ref{representation157}),
(\ref{representation158}) and Lemma \ref{representation54} to calculate the
resulting limit, we conclude that relation (\ref{representation161}) holds.\
\end{proof}

Let us now prove the following result

\begin{lemma}
\label{representation185}Let $g_{1},$ $f_{1}$ and $f_{2}$ be as in
(\ref{representation134}). For an arbitrary $\omega_{0}\in\mathbb{S}^{2},$ the
equality
\begin{equation}
\left.
\begin{array}
[c]{c}%
-i\left(  2\pi\right)  ^{-2}\upsilon\left(  E\right)  \lim_{\mu\downarrow
0}\lim_{\varepsilon\downarrow0}%
{\displaystyle\int}
{\displaystyle\int}
\left(  \pm\Psi_{\pm}\left(  \omega,\hat{\xi}^{\prime}\,;\omega_{0}\right)
\breve{G}_{2}^{\left(  \varepsilon\right)  }\left(  \nu\left(  E\right)
\omega,\xi^{\prime};E\right)  \tilde{\delta}_{\mu}^{\left(
\operatorname*{sgn}E\right)  }\hat{g}_{1}\left(  \xi^{\prime}\right)
,f_{2}\left(  \omega\right)  \right)  d\xi^{\prime}d\omega\\
=%
{\displaystyle\int_{\mathbb{S}^{2}}}
{\displaystyle\int_{\mathbb{S}^{2}}}
\left(  s_{\operatorname{sing}}^{(N)}\left(  \omega,\theta;E;\omega
_{0}\right)  f_{1}\left(  \theta\right)  ,f_{2}\left(  \omega\right)  \right)
d\theta d\omega,
\end{array}
\right.  \label{representation189}%
\end{equation}
holds, where $s_{\operatorname{sing}}^{(N)}\left(  \omega,\theta;E;\omega
_{0}\right)  $ is given by (\ref{representation248}).
\end{lemma}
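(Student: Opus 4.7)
The plan is to identify $\pm \Psi_\pm(\hat\xi,\hat\xi';\omega_0)\breve{G}_2^{(\varepsilon)}(\xi,\xi';E)$ as an instance of the oscillatory integral $\mathcal{A}_\pm^{(\varepsilon)}$ to which Lemma \ref{representation147} directly applies, and then to recognize the resulting limit as precisely the kernel $s_{\operatorname{sing}}^{(N)}(\omega,\theta;E;\omega_0)$ from (\ref{representation248}) written in its regularized oscillatory-integral form. Substituting $u_N^\pm(y,\xi;\lambda(\xi)) = e^{i\langle y,\xi\rangle}a_N^\pm(y,\xi;\lambda(\xi))$ from (\ref{eig43}) into the definition (\ref{representation150}) of $\breve{G}_2^{(\varepsilon)}$ puts the object into the form required by Lemma \ref{representation147}, with amplitude
\begin{equation*}
g_\pm(y,\xi,\xi';E) := \pm i(\operatorname{sgn} E)\,\Psi_\pm(\hat\xi,\hat\xi';\omega_0)\,\varsigma(|\xi|)\varsigma(|\xi'|)\,\bigl(a_N^+(y,\xi;\lambda(\xi))\bigr)^{\ast}(\alpha\cdot\omega_0)\,a_N^-(y,\xi';\lambda(\xi')).
\end{equation*}

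The support hypothesis (\ref{representation228}) for $g_\pm$ is immediate, since $\Psi_\pm$ is supported in $\Omega_\pm(\omega_0,\delta)\times\Omega_\pm(\omega_0,\delta)$ and $\varsigma(|\xi'|)=0$ for $|\xi'|<c_1$. The class requirement $g_\pm\in \mathcal{S}^p$ follows from estimate (\ref{eig27}) for $a_N^\pm$: the choice of $\varepsilon$ in (\ref{representation63}) ensures that whenever $y\in\Pi_{\omega_0}$ and $\hat\xi\in \Omega_\pm(\omega_0,\delta)$, the pair $(y,\xi)$ lies in $\Xi^\pm(E)$, on which (\ref{eig27}) is valid uniformly in $y$ along the plane. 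Invoking (\ref{representation151}) with an even integer $n$ chosen large enough (depending on the desired regularity) then yields
\begin{equation*}
\mathrm{LHS} = -i(2\pi)^{-2}\upsilon(E)^2\!\!\int\limits_{\Pi_{\omega_0}}\!\!\int\limits_{\Pi_{\omega_0}}\!\!\int\limits_{\Pi_{\omega_0}}\! e^{i\nu(E)\langle y,\zeta'-\zeta\rangle}\langle\nu(E)y\rangle^{-n}\langle D_{\zeta'}\rangle^{n}\bigl(\tilde g_\pm'(y,\nu(E)\zeta,\nu(E)\zeta';E)\tilde f_1(\zeta'),\tilde f_2(\zeta)\bigr)\,dy\,d\zeta'\,d\zeta,
\end{equation*}
and on the support of $\Psi_\pm$ the amplitude reduces, by virtue of (\ref{representation252}) and $\varsigma(\nu(E))=1$, to $\tilde g_\pm = \pm i\,\Psi_\pm(\omega,\theta;\omega_0)\,\mathbf{h}_N(y,\omega,\theta;E;\omega_0)$.

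It remains to match this expression with the right-hand side of (\ref{representation189}). The integrals $\int_{\mathbb S^2}d\omega$ and $\int_{\mathbb S^2}d\theta$ restricted to $\Omega_\pm(\omega_0,\delta)$ become, via the projection identification $\omega\leftrightarrow\zeta$, $\theta\leftrightarrow\zeta'$, integrals over $\Sigma\subset\Pi_{\omega_0}$ with Jacobian $(1-|\zeta|^2)^{-1/2}(1-|\zeta'|^2)^{-1/2}$; this Jacobian is precisely the factor absorbed into $\tilde g_\pm'$ in Lemma \ref{representation147}. The $y$-integral defining $s_{\operatorname{sing}}^{(N)}$ in (\ref{representation248}) is by assumption interpreted as an oscillatory integral, and the standard way of regularizing it—inserting $\langle \nu(E)y\rangle^{-n}\langle D_{\zeta'}\rangle^{n}$ via integration by parts in $\zeta'$ against the test function $f_1$—produces exactly the expression obtained from Lemma \ref{representation147}. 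The main obstacle is the careful bookkeeping: one must verify that the oscillatory-integral convention implicit in (\ref{representation248}) coincides with the regularization delivered by Lemma \ref{representation147}, track the signs $\pm$ and the factor $(\operatorname{sgn} E)$ through the identifications, and handle the Jacobian of the stereographic-type change of variables $\mathbb S^2\to\Pi_{\omega_0}$. Once these are in order, the identity (\ref{representation189}) follows directly.
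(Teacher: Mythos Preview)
Your proposal is correct and follows essentially the same route as the paper's own proof: apply Lemma \ref{representation147} to $\pm\Psi_{\pm}\breve{G}_{2}^{(\varepsilon)}$ with amplitude $g_\pm$ built from $(a_N^+)^\ast(\alpha\cdot\omega_0)a_N^-$, then integrate back by parts in $\zeta'$ to recover the oscillatory integral defining $s_{\operatorname{sing}}^{(N)}$ in (\ref{representation248}). You in fact give more detail than the paper on why the hypotheses of Lemma \ref{representation147} are met and on the Jacobian in the passage $\mathbb{S}^2\to\Pi_{\omega_0}$; the only minor imprecision is that you should state separately that $(y,\xi)\in\Xi^{+}(E)$ (for $a_N^+$) and $(y,\xi')\in\Xi^{-}(E)$ (for $a_N^-$), both of which follow from $|\langle\hat y,\hat\xi\rangle|<\sqrt{1-\delta^2}<1-\varepsilon_0$ via (\ref{representation63}).
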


\begin{proof}
Applying Lemma \ref{representation147} to $\pm\Psi_{\pm}\left(  \hat{\xi}%
,\hat{\xi}^{\prime};\omega_{0}\right)  \breve{G}_{2}^{\left(  \varepsilon
\right)  }\left(  \xi,\xi^{\prime};E\right)  $ we get
\begin{equation}
\left.
\begin{array}
[c]{c}%
-i\left(  2\pi\right)  ^{-2}\upsilon\left(  E\right)  \lim_{\mu\downarrow
0}\lim_{\varepsilon\downarrow0}%
{\displaystyle\int}
{\displaystyle\int}
\left(  \pm\Psi_{\pm}\left(  \omega,\hat{\xi}^{\prime}\,;\omega_{0}\right)
\breve{G}_{2}^{\left(  \varepsilon\right)  }\left(  \nu\left(  E\right)
\omega,\xi^{\prime};E\right)  \tilde{\delta}_{\mu}^{\left(
\operatorname*{sgn}E\right)  }\hat{g}_{1}\left(  \xi^{\prime}\right)
,f_{2}\left(  \omega\right)  \right)  d\xi^{\prime}d\omega\\
=\left(  2\pi\right)  ^{-2}\upsilon\left(  E\right)  ^{2}%
{\displaystyle\int_{\Pi_{\omega_{0}}}}
{\displaystyle\int_{\Pi_{\omega_{0}}}}
{\displaystyle\int_{\Pi_{\omega_{0}}}}
e^{i\nu\left(  E\right)  \left\langle y,\zeta^{\prime}-\zeta\right\rangle
}\left\langle \nu\left(  E\right)  y\right\rangle ^{-n}\left\langle
D_{\zeta^{\prime}}\right\rangle ^{n}\left(  \mathbf{\tilde{h}}_{N}^{\prime
}\left(  y,\zeta,\zeta^{\prime};E\right)  \tilde{f}_{1}\left(  \zeta^{\prime
}\right)  ,\tilde{f}_{2}\left(  \zeta\right)  \right)  d\zeta^{\prime}d\zeta,
\end{array}
\right.  \label{representation236}%
\end{equation}
for even $n,$ where $\mathbf{\tilde{h}}_{N,jk}^{\prime}\left(  y,\zeta
,\zeta^{\prime};E\right)  :=\pm\tilde{\Psi}_{\pm}\left(  \zeta,\zeta^{\prime
};\omega_{0}\right)  \frac{1}{\left(  1-\left\vert \zeta\right\vert
^{2}\right)  ^{1/2}}\frac{1}{\left(  1-\left\vert \zeta^{\prime}\right\vert
^{2}\right)  ^{1/2}}\mathbf{\tilde{h}}_{N}\left(  y,\zeta,\zeta^{\prime
};E\right)  .$ Integrating back by parts in the R.H.S of
(\ref{representation236}) and understanding the resulting expression as an
oscillatory integral, we obtain the expression (\ref{representation189}).
\end{proof}

Recall the function $\Psi_{1}\left(  \hat{\xi},\hat{\xi}^{\prime}\right)  ,$
defined above (\ref{representation248}). We are able to prove the following
result for $S_{2}\left(  E\right)  .$

\begin{theorem}
\label{representation241}Let $s_{2}\left(  \omega,\theta;E\right)  $ be the
kernel of the operator $S_{2}\left(  E\right)  ,$ defined as the limit
(\ref{representation134}). For any $p$ and $q,$ $\Psi_{1}\left(  \omega
,\theta\right)  $ $\times s_{2}\left(  \omega,\theta;E\right)  $ belongs to
the class $C^{p}\left(  \mathbb{S}^{2}\times\mathbb{S}^{2}\right)  $ and its
$C^{p}$-norm is a $O\left(  E^{-q}\right)  $ function. Moreover, for any $p$
and $q$ there exists $N,$ sufficiently large, such that, $\Psi_{\pm}\left(
\omega,\theta;\omega_{0}\right)  s_{2}\left(  \omega,\theta;E\right)
-s_{\operatorname{sing}}^{\left(  N\right)  }\left(  \omega,\theta
;E;\omega_{0}\right)  $ belongs to the class $C^{p}\left(  \mathbb{S}%
^{2}\times\mathbb{S}^{2}\right)  $, and moreover, its $C^{p}-$norm is bounded
by $C\left\vert E\right\vert ^{-q}$, as $\left\vert E\right\vert
\rightarrow\infty$. These estimates are uniform in $\omega_{0}\in
\mathbb{S}^{2}.$
\end{theorem}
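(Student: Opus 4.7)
My plan is to combine the lemmas already established for the pieces of $G^{(\varepsilon)}(\xi,\xi';E)$ and to handle the two assertions of the theorem in turn. Recall that from (\ref{representation188}) we have the representation
\begin{equation*}
(S_2(E)f_1,f_2) = -i(2\pi)^{-2}\upsilon(E)\lim_{\mu\downarrow0}\lim_{\varepsilon\to0}\int\int_{\mathbb{S}^2}\bigl(G^{(\varepsilon)}(\nu(E)\omega,\xi';E)\tilde{\delta}_\mu^{(\operatorname*{sgn}E)}\hat g_1(\xi'),f_2(\omega)\bigr)\,d\omega\,d\xi',
\end{equation*}
together with the splitting $G^{(\varepsilon)} = G_1^{(\varepsilon)} + G_2^{(\varepsilon)}$ from (\ref{representation166}).

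For the first assertion, I specialize to test functions $f_1\in C_0^\infty(O')$ and $f_2\in C_0^\infty(O)$ so that $\Psi_1(\omega,\theta)$ cuts off to a product of disjoint open subsets of $\mathbb{S}^2$. Lemma \ref{representation183} identifies the $G_1^{(\varepsilon)}$ contribution to $\Psi_1 s_2$ with a kernel built from $G_1^{(0)}(\nu(E)\omega,\nu(E)\theta;E)$, which, for $N$ chosen large enough depending on $p$ and $q$, lies in $C^p(\mathbb{S}^2\times\mathbb{S}^2)$ with norm $O(|E|^{-q})$. For the $G_2^{(\varepsilon)}$ part, since $\overline{O}\cap\overline{O}'=\varnothing$, Lemma \ref{representation145} applies directly and yields a contribution whose kernel $G_{2,O,O'}^{(0)}(\nu(E)\omega,\nu(E)\theta;E)$ obeys (\ref{representation224}). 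Summing the two contributions gives the desired statement for $\Psi_1 s_2$.

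For the second assertion, I start from (\ref{representation159}), which provides the decomposition
\begin{equation*}
\Psi_\pm G^{(\varepsilon)} = \Psi_\pm G_1^{(\varepsilon)} \;\pm\; \Psi_\pm \breve G_2^{(\varepsilon)} \;+\; \Psi_\pm R_\pm^{(\varepsilon)}.
\end{equation*}
Inserting this into the bilinear form for $(S_2(E)f_1,f_2)$ and passing to the limits $\varepsilon\to 0$ and $\mu\downarrow 0$ term by term, I apply Lemma \ref{representation183} to the $\Psi_\pm G_1^{(\varepsilon)}$ piece, Lemma \ref{representation184} to the $\Psi_\pm R_\pm^{(\varepsilon)}$ piece, and Lemma \ref{representation185} to the $\pm\Psi_\pm\breve G_2^{(\varepsilon)}$ piece. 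The first two lemmas produce $C^p$ kernels of norm $O(|E|^{-q})$ once $N$ is taken sufficiently large, while the third lemma identifies the contribution of $\pm\Psi_\pm\breve G_2^{(\varepsilon)}$ with exactly the kernel $s^{(N)}_{\operatorname{sing}}(\omega,\theta;E;\omega_0)$ as in (\ref{representation248}). Therefore $\Psi_\pm s_2 - s^{(N)}_{\operatorname{sing}}$ is the sum of the two smooth contributions and satisfies the required $C^p$ bound.

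The main obstacle I anticipate is establishing uniformity of the constants in $\omega_0\in\mathbb{S}^2$. Each of the auxiliary bounds in Lemmas \ref{representation183}, \ref{representation184} and \ref{representation185} depends on $\omega_0$ only through (i) the $C^p$ norms of $\Psi_\pm(\,\cdot\,,\,\cdot\,;\omega_0)$, which are uniformly bounded in $\omega_0$ by hypothesis, and (ii) the cut-offs $\sigma_\pm$, $\eta$, $\theta$, $\varsigma$, $\chi$ used in the construction of $j_N^\pm$, $t_\pm^{1,2}$, and $\breve G_2^{(\varepsilon)}$, which are fixed once and for all and do not depend on $\omega_0$. Consequently the integration-by-parts arguments behind (\ref{representation137}) and (\ref{representation243}), as well as the Mourre-type smoothness estimates entering via Lemma \ref{representation33}, produce constants that can be taken uniform in $\omega_0$, yielding the claimed uniformity.
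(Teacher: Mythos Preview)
Your proposal is correct and follows essentially the same approach as the paper: split $G^{(\varepsilon)}=G_1^{(\varepsilon)}+G_2^{(\varepsilon)}$, handle the off-diagonal case $\Psi_1$ via Lemmas \ref{representation183} and \ref{representation145}, and for the near-diagonal case $\Psi_\pm$ use the decomposition (\ref{representation159}) together with Lemmas \ref{representation183}, \ref{representation184}, \ref{representation185}. Your discussion of uniformity in $\omega_0$ is slightly more detailed than the paper's (which simply notes that the estimates are uniform once the $C^p$-norms of $\Psi_\pm$ are), but the argument is the same.
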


\begin{proof}
From the relations (\ref{representation188}) and (\ref{representation166}) we
get
\begin{equation}
\left.
\begin{array}
[c]{c}%
\left(  \left(  S_{2}\left(  E\right)  \Psi\left(  \omega,\cdot;\omega
_{0}\right)  f_{1}\right)  \left(  \omega\right)  ,f_{2}\left(  \omega\right)
\right)  =-i\left(  2\pi\right)  ^{-2}\upsilon\left(  E\right) \\
\times\lim_{\mu\downarrow0}\lim_{\varepsilon\rightarrow0}%
{\displaystyle\int}
{\displaystyle\int_{\mathbb{S}^{2}}}
\left(  \Psi\left(  \omega,\hat{\xi}^{\prime}\right)  \left(  G_{1}^{\left(
\varepsilon\right)  }+G_{2}^{\left(  \varepsilon\right)  }\right)  \left(
\nu\left(  E\right)  \omega,\xi^{\prime};E\right)  \tilde{\delta}_{\mu
}^{\left(  \operatorname*{sgn}E\right)  }\hat{g}_{1}\left(  \xi^{\prime
}\right)  ,f_{2}\left(  \omega\right)  \right)  d\omega d\xi^{\prime},
\end{array}
\right.  \label{representation239}%
\end{equation}
where $\Psi\left(  \hat{\xi},\hat{\xi}^{\prime}\right)  $ is either $\Psi
_{\pm}\left(  \hat{\xi},\hat{\xi}^{\prime};\omega_{0}\right)  $ or $\Psi
_{1}\left(  \hat{\xi},\hat{\xi}^{\prime}\right)  .$ Suppose first that
$\Psi\left(  \hat{\xi},\hat{\xi}^{\prime}\right)  =\Psi_{1}\left(  \hat{\xi
},\hat{\xi}^{\prime}\right)  .$ By Lemma \ref{representation183} and Lemma
\ref{representation145} the limit (\ref{representation239}) exists and, for
any $p$ and $q$ there exists $N,$ such that $\Psi_{1}\left(  \hat{\xi}%
,\hat{\xi}^{\prime}\right)  \left(  G_{1}^{\left(  0\right)  }+G_{2}^{\left(
\varepsilon\right)  }\right)  \left(  \nu\left(  E\right)  \hat{\xi}%
,\nu\left(  E\right)  \hat{\xi}^{\prime};E\right)  $ belongs to the class
$C^{p}\left(  \mathbb{S}^{2}\times\mathbb{S}^{2}\right)  $, and its $C^{p}%
-$norm is bounded by $C\left\vert E\right\vert ^{-q}$, as $\left\vert
E\right\vert \rightarrow\infty$.

Now let us consider the case $\Psi\left(  \hat{\xi},\hat{\xi}^{\prime}\right)
=\Psi_{\pm}\left(  \hat{\xi},\hat{\xi}^{\prime};\omega_{0}\right)  .$ Again,
Lemma \ref{representation183} implies that the part in the limit
(\ref{representation239}) corresponding to the term $\Psi_{\pm}\left(
\hat{\xi},\hat{\xi}^{\prime};\omega_{0}\right)  G_{1}^{\left(  \varepsilon
\right)  }\left(  \nu\left(  E\right)  \hat{\xi},\xi^{\prime};E\right)  $
exists and the function $\Psi_{\pm}\left(  \hat{\xi},\hat{\xi}^{\prime}%
;\omega_{0}\right)  G_{1}^{\left(  \varepsilon\right)  }\left(  \nu\left(
E\right)  \hat{\xi},\xi^{\prime};E\right)  $ belongs to the class
$C^{p}\left(  \mathbb{S}^{2}\times\mathbb{S}^{2}\right)  $, and its $C^{p}%
-$norm is bounded by $C\left\vert E\right\vert ^{-q}$, as $\left\vert
E\right\vert \rightarrow\infty$. Using relation (\ref{representation159}) and
applying Lemma \ref{representation184} to the part $\Psi_{\pm}R_{\pm}^{\left(
\varepsilon\right)  }$ and Lemma \ref{representation185} to $\Psi_{\pm}%
\breve{G}_{2}^{\left(  \varepsilon\right)  },$ in order to calculate the limit
in the R.H.S of (\ref{representation239}) corresponding to the term $\Psi
_{\pm}\left(  \omega,\theta;\omega_{0}\right)  G_{2}^{\left(  \varepsilon
\right)  }\left(  \nu\left(  E\right)  \hat{\xi},\xi^{\prime};E\right)  ,$ and
noting that all the estimates are uniform on $\omega_{0}$ if the $C^{p}$-norms
of the function\ $\Psi_{\pm}\left(  \omega,\theta;\omega_{0}\right)  $ are
uniformly bounded on $\omega_{0}\in\mathbb{S}^{2},$ we complete the proof.
\end{proof}

Let us present a result that we use below (see Lemma 4.1 of \cite{47})

\begin{lemma}
\label{representation201}Let $f\left(  x,\xi\right)  \in C^{\infty}\left(
\mathbb{R}^{N}\times\mathbb{R}^{N}\right)  $ satisfy $\left\vert \partial
_{x}^{\alpha}f\left(  x,\xi\right)  \right\vert \leq C_{\alpha}\left\langle
x\right\rangle ^{-\rho-\left\vert \alpha\right\vert }$ for some $0<\rho<N.$
Then we have $\left\vert \int_{\mathbb{R}^{N}}(e^{-i\left\langle
x,\xi\right\rangle }f\left(  x,\xi\right)  )dx\right\vert \leq C\left\vert
\xi\right\vert ^{-\left(  N-\rho\right)  }$ as $\left\vert \xi\right\vert
\rightarrow0.$
\end{lemma}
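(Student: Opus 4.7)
The plan is to split the $x$-integral at the natural scale $|x|\sim|\xi|^{-1}$ dictated by the oscillation $e^{-i\langle x,\xi\rangle}$, and to treat the two pieces by complementary methods: a direct size estimate on the inner piece and repeated integration by parts on the outer piece. This is the standard argument showing that a function of $x$ decaying like $|x|^{-\rho}$ at infinity has a Fourier transform (interpreted as an oscillatory integral when $\rho\leq N$) behaving like $|\xi|^{-(N-\rho)}$ at the origin, in the spirit of the Riesz potential identities.

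Concretely, I would fix a cutoff $\chi\in C_{0}^{\infty}(\mathbb{R}^{N})$ with $\chi(y)=1$ for $|y|\leq 1/2$ and $\chi(y)=0$ for $|y|\geq 1$, and decompose
\begin{equation*}
\int_{\mathbb{R}^{N}} e^{-i\langle x,\xi\rangle}\, f(x,\xi)\,dx = I_{\mathrm{in}}(\xi) + I_{\mathrm{out}}(\xi),
\end{equation*}
where $I_{\mathrm{in}}$ and $I_{\mathrm{out}}$ are obtained by inserting the factors $\chi(|\xi|x)$ and $1-\chi(|\xi|x)$, respectively. For $I_{\mathrm{in}}$, the pointwise bound $|f(x,\xi)|\leq C\langle x\rangle^{-\rho}$ together with the support condition $|x|\leq|\xi|^{-1}$ gives, for $|\xi|\leq 1$,
\begin{equation*}
|I_{\mathrm{in}}(\xi)|\leq C\int_{|x|\leq |\xi|^{-1}}\langle x\rangle^{-\rho}\,dx \leq C\,|\xi|^{-(N-\rho)},
\end{equation*}
the last inequality using $\rho<N$ to convert the integral over the ball of radius $|\xi|^{-1}$ into exactly this power of $|\xi|$.

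For $I_{\mathrm{out}}$, I would integrate by parts $k$ times using the identity $L\,e^{-i\langle x,\xi\rangle}=e^{-i\langle x,\xi\rangle}$ with $L=i|\xi|^{-2}\,\xi\cdot\nabla_{x}$, choosing $k$ so that $k+\rho>N$. This produces an absolutely convergent integral of $(L^{*})^{k}\bigl[(1-\chi(|\xi|x))\,f(x,\xi)\bigr]$, which expands by Leibniz into a sum where each application of $L^{*}$ either differentiates $f$, producing an extra power of $\langle x\rangle^{-1}$ by the hypothesis, or differentiates $\chi(|\xi|x)$, producing one factor of $|\xi|$ by the chain rule and restricting the support to the annulus $|x|\sim|\xi|^{-1}$. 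In the bulk term in which all derivatives fall on $f$, the estimate $|\partial_{x}^{k}f|\leq C\langle x\rangle^{-\rho-k}$ combined with $|x|\gtrsim|\xi|^{-1}$ yields a contribution of order $|\xi|^{-k}\cdot|\xi|^{k+\rho-N}=|\xi|^{-(N-\rho)}$; in a mixed term with $j$ derivatives on the cutoff, the prefactor $|\xi|^{j-k}$ multiplied by the decay $\langle x\rangle^{-\rho-(k-j)}\sim|\xi|^{\rho+k-j}$ on the annulus of volume $|\xi|^{-N}$ gives the same power $|\xi|^{-(N-\rho)}$.

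The main obstacle is strictly bookkeeping: one must verify that every term arising from the Leibniz expansion of $(L^{*})^{k}[(1-\chi(|\xi|x))f]$ produces exactly the homogeneous order $|\xi|^{-(N-\rho)}$ predicted by the Riesz-potential scaling, regardless of how derivatives are distributed between $f$ and the cutoff, and that the oscillatory-integral interpretation of the left-hand side (which is needed because $f(\cdot,\xi)$ need not be integrable when $\rho\leq N$) is compatible with the integration-by-parts manipulation -- this is handled either by an auxiliary regularizer $e^{-\varepsilon|x|^{2}}$ and passage to the limit $\varepsilon\downarrow 0$, or by declaring the two pieces $I_{\mathrm{in}}$ and $I_{\mathrm{out}}$ to be the definition of the oscillatory integral. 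Once these power counts are assembled, no further analytic subtlety remains.
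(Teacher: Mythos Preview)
Your argument is correct and is the standard proof of this estimate. The paper does not actually supply a proof of this lemma: it merely quotes the statement and refers to Lemma~4.1 of Isozaki--Kitada (reference~\cite{47} in the paper). Your split at the scale $|x|\sim|\xi|^{-1}$, with a direct size bound on the inner piece and $k$-fold integration by parts via $L=i|\xi|^{-2}\xi\cdot\nabla_x$ on the outer piece, is exactly the expected argument, and your bookkeeping of the Leibniz terms (each yielding the same homogeneity $|\xi|^{-(N-\rho)}$) is accurate. There is no gap.
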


Let us define the function\ $\mathbf{h}_{N}^{\operatorname{int}}$ by the
relation
\begin{equation}
\left.
\begin{array}
[c]{c}%
\mathbf{h}_{N}^{\operatorname{int}}\left(  y,\omega,\theta;E;\omega
_{0}\right)  :=\left(  \operatorname*{sgn}E\right)  (\left(  a^{+}\left(
y,\nu\left(  E\right)  \omega;E\right)  -P_{\omega}\left(  E\right)  \right)
^{\ast}\left(  \alpha\cdot\omega_{0}\right)  \left(  a^{_{-}}\left(
y,\nu\left(  E\right)  \theta;E\right)  -P_{\theta}\left(  E\right)  \right)
\\
+P_{\omega}\left(  E\right)  \left(  \alpha\cdot\omega_{0}\right)  (a^{_{-}%
}\left(  y,\nu\left(  E\right)  \theta;E\right)  -P_{\theta}\left(  E\right)
)+\left(  a^{+}\left(  y,\nu\left(  E\right)  \omega;E\right)  -P_{\omega
}\left(  E\right)  \right)  \left(  \alpha\cdot\omega_{0}\right)  P_{\theta
}\left(  E\right)  ).
\end{array}
\right.  \label{representation253}%
\end{equation}
We prove now the following

\begin{lemma}
\label{representation186}The function $\mathbf{b}_{N},$ given by
$\mathbf{b}_{N}\left(  \omega,\theta;E;\omega_{0}\right)  :=\pm\left(
2\pi\right)  ^{-2}\upsilon\left(  E\right)  ^{2}\Psi_{\pm}\left(
\omega,\theta;\omega_{0}\right)  \int\limits_{\Pi_{\omega_{0}}}e^{i\nu\left(
E\right)  \left\langle y,\theta-\omega\right\rangle }(\mathbf{h}_{N}\left(
y,\omega,\theta;E;\omega_{0}\right)  $ $-\left(  \operatorname*{sgn}E\right)
P_{\omega}\left(  E\right)  \left(  \alpha\cdot\omega_{0}\right)  P_{\theta
}\left(  E\right)  )dy,$ satisfies the estimate (\ref{representation83}).
\end{lemma}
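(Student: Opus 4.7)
The plan is to read the integral defining $\mathbf{b}_{N}$ as a two-dimensional oscillatory integral on the plane $\Pi_{\omega_{0}}$ and to apply Lemma \ref{representation201} once we have shown that the amplitude decays like $(1+|y|)^{-(\rho-1)}$. The subtraction of $(\operatorname{sgn}E)P_{\omega}(E)(\alpha\cdot\omega_{0})P_{\theta}(E)$ in the definition (\ref{representation253}) of $\mathbf{h}_{N}^{\operatorname{int}}$ is the key algebraic identity: it cancels exactly the non-decaying piece of $\mathbf{h}_{N}$, leaving only cross-terms that vanish at infinity. The constant $C$ in the resulting estimate is allowed to depend on $E$ since the prefactors $(2\pi)^{-2}\upsilon(E)^{2}$ and $\nu(E)^{-(3-\rho)}$ will be absorbed.

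First I would establish the decay of $\mathbf{h}_{N}^{\operatorname{int}}$. Using $a_{N}^{\pm}(y,\nu(E)\omega;E)=e^{i\Phi^{\pm}}w_{N}^{\pm}$ and the factorization
\[
a_{N}^{\pm}-P_{\omega}(E)=\bigl(e^{i\Phi^{\pm}}-1\bigr)w_{N}^{\pm}+\bigl(w_{N}^{\pm}-P_{\omega}(E)\bigr),
\]
together with estimate (\ref{eig22}) on $\Phi^{\pm}$ and (\ref{eig26}), (\ref{eig19}), (\ref{eig20}) for the transport factors, would give
\[
\bigl|\partial_{y}^{\alpha}\bigl(a_{N}^{\pm}(y,\nu(E)\omega;E)-P_{\omega}(E)\bigr)\bigr|\leq C_{\alpha}(1+|y|)^{-(\rho-1)-|\alpha|},
\]
the slower exponent $\rho-1$ being supplied by the eikonal phase (the transport corrections decay at the faster rate $\rho$). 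Substituting into the three cross-terms of (\ref{representation253}) yields the uniform bound
\[
\bigl|\partial_{y}^{\alpha}\mathbf{h}_{N}^{\operatorname{int}}(y,\omega,\theta;E;\omega_{0})\bigr|\leq C_{\alpha}(1+|y|)^{-(\rho-1)-|\alpha|}
\]
for $(\omega,\theta)\in\operatorname{supp}\Psi_{\pm}(\cdot,\cdot;\omega_{0})$. One has to check, for this, that $y\in\Pi_{\omega_{0}}$ and $\omega,\theta\in\Omega_{\pm}(\omega_{0},\delta)$ imply $|\langle\hat y,\omega\rangle|\leq\sqrt{1-\delta^{2}}<1-\varepsilon_{0}$, by (\ref{representation63}), so that $(y,\nu(E)\omega)\in\Xi^{+}(E)$ and $(y,\nu(E)\theta)\in\Xi^{-}(E)$ (the case $E<-m$ being symmetric), which legitimizes the use of the Section 5 estimates.

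Next I would apply Lemma \ref{representation201} in dimension $N=2$ on $\Pi_{\omega_{0}}$ with exponent $\rho-1\in(0,2)$, which is where the hypothesis $1<\rho<3$ is essential. Denoting by $\eta$ the orthogonal projection of $\theta-\omega$ onto $\Pi_{\omega_{0}}$, the oscillatory factor becomes $e^{i\nu(E)\langle y,\eta\rangle}$, and Lemma \ref{representation201} produces
\[
\Bigl|\int_{\Pi_{\omega_{0}}}e^{i\nu(E)\langle y,\eta\rangle}\mathbf{h}_{N}^{\operatorname{int}}(y,\omega,\theta;E;\omega_{0})\,dy\Bigr|\leq C|\nu(E)\eta|^{-(3-\rho)}
\]
when $|\nu(E)\eta|$ is small. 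For $|\eta|$ bounded away from zero (with $(\omega,\theta)$ still in the compact support of $\Psi_{\pm}$), repeated integration by parts in $y$, using $|\partial_{y}^{k}\mathbf{h}_{N}^{\operatorname{int}}|\leq C_{k}\langle y\rangle^{-(\rho-1)-k}$, renders the integrand absolutely integrable and gives bounds compatible with the small-$\eta$ regime; patching the two estimates yields the bound $|\eta|^{-(3-\rho)}$ uniformly on $\operatorname{supp}\Psi_{\pm}$.

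Finally I would convert $|\eta|$ into $|\omega-\theta|$. Decomposing $\omega=\omega_{\parallel}\omega_{0}+\omega_{\perp}$ and $\theta=\theta_{\parallel}\omega_{0}+\theta_{\perp}$ with $\omega_{\perp},\theta_{\perp}\in\Pi_{\omega_{0}}$, so that $\eta=\theta_{\perp}-\omega_{\perp}$, the identity $\theta_{\parallel}^{2}-\omega_{\parallel}^{2}=|\omega_{\perp}|^{2}-|\theta_{\perp}|^{2}$ combined with $\omega_{\parallel}+\theta_{\parallel}\geq 2\delta>0$ on $\Omega_{\pm}(\omega_{0},\delta)$ gives $|\theta_{\parallel}-\omega_{\parallel}|\leq C_{\delta}|\eta|$, whence $|\omega-\theta|\leq C'_{\delta}|\eta|$. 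This promotes the previous bound to $|\mathbf{b}_{N}(\omega,\theta;E;\omega_{0})|\leq C|\omega-\theta|^{-(3-\rho)}$ for $\omega\neq\theta$, which is (\ref{representation83}). The main obstacle lies in the first step: one must verify by careful bookkeeping of the products that the precise exponent $\rho-1$ is attained in the decay of $\mathbf{h}_{N}^{\operatorname{int}}$, since it is this exponent that produces, via Lemma \ref{representation201}, the correct singularity $|\omega-\theta|^{-(3-\rho)}$; once this is in place, the Fourier-type estimate on $\Pi_{\omega_{0}}$ and the spherical geometry are routine.
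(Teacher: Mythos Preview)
Your proposal is correct and follows essentially the same route as the paper: identify the integrand as $\mathbf{h}_{N}^{\operatorname{int}}$, verify via the eikonal estimate (\ref{eig22}) and the transport estimates (\ref{eig19}), (\ref{eig20}) that this amplitude lies in the $\mathcal{S}^{-(\rho-1)}$ class on $\Pi_{\omega_{0}}$, apply Lemma~\ref{representation201} in dimension $2$, and convert the planar frequency variable back to $|\omega-\theta|$. The only cosmetic difference is that the paper passes to the projection coordinates $\zeta,\zeta'$ of Remark~\ref{representation75} (so the frequency is directly $\zeta'-\zeta$ and the comparison $|\omega-\theta|\le\delta^{-1}|\zeta'-\zeta|$ is quoted without derivation), whereas you work with the projection $\eta$ of $\theta-\omega$ and spell out the geometric inequality; your explicit patching of the small-$|\eta|$ and bounded-$|\eta|$ regimes is a detail the paper leaves implicit.
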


\begin{proof}
We first note that $\mathbf{b}_{N}\left(  \omega,\theta;E;\omega_{0}\right)
=\pm\left(  2\pi\right)  ^{-2}\upsilon\left(  E\right)  ^{2}\Psi_{\pm}\left(
\omega,\theta;\omega_{0}\right)  \int\limits_{\Pi_{\omega_{0}}}e^{i\nu\left(
E\right)  \left\langle y,\theta-\omega\right\rangle }\mathbf{h}_{N}%
^{\operatorname{int}}\left(  y,\omega,\theta;E;\omega_{0}\right)  dy,$\ where
the function $\mathbf{h}_{N}^{\operatorname{int}}$ is defined by
$\mathbf{h}_{N}^{\operatorname{int}}\left(  y,\omega,\theta;E;\omega
_{0}\right)  :=\left(  \operatorname*{sgn}E\right)  (\left(  a^{+}\left(
y,\nu\left(  E\right)  \omega;E\right)  -P_{\omega}\left(  E\right)  \right)
^{\ast}\left(  \alpha\cdot\omega_{0}\right)  \left(  a^{_{-}}\left(
y,\nu\left(  E\right)  \theta;E\right)  -P_{\theta}\left(  E\right)  \right)
+P_{\omega}\left(  E\right)  \left(  \alpha\cdot\omega_{0}\right)  (a^{_{-}%
}\left(  y,\nu\left(  E\right)  \theta;E\right)  -P_{\theta}\left(  E\right)
)+\left(  a^{+}\left(  y,\nu\left(  E\right)  \omega;E\right)  -P_{\omega
}\left(  E\right)  \right)  \left(  \alpha\cdot\omega_{0}\right)  P_{\theta
}\left(  E\right)  ).$ Using the notation of Remark \ref{representation75} and
relations (\ref{eig22}), (\ref{eig42}), (\ref{eig8}), (\ref{eig9}),
(\ref{eig19}) and (\ref{eig20}) we get $\mathbf{b}_{N}\left(  y,\omega
,\theta;E;\omega_{0}\right)  =\pm\left(  2\pi\right)  ^{-2}\upsilon\left(
E\right)  ^{2}\int_{\Pi_{\omega_{0}}}e^{i\nu\left(  E\right)  \left(
y,\zeta^{\prime}-\zeta\right)  }$ $\times\left(  \mathbf{\tilde{h}}%
_{N}^{\operatorname{int}}\right)  ^{\prime}\left(  y,\zeta,\zeta^{\prime
};E;\omega_{0}\right)  dy,$ where $\left(  \mathbf{\tilde{h}}_{N}%
^{\operatorname{int}}\right)  ^{\prime}\in\mathcal{S}^{-\left(  \rho-1\right)
}.$ From Lemma \ref{representation201} and the inequality $\left\vert
\omega-\theta\right\vert $ $\leq\frac{1}{\delta}\left\vert \zeta^{\prime
}-\zeta\right\vert $ we obtain the estimate (\ref{representation83}) for
$\mathbf{b}_{N}.$
\end{proof}

We now use the partition of the unity that we introduce in
(\ref{representation170}). Recall the definition (\ref{representation182}) of
$s_{00}^{(jk)}.$ For $f,g\in\mathcal{H}\left(  E\right)  ,$ let us define
$I_{jk}:=\int_{\mathbb{S}^{2}}\int_{\mathbb{S}^{2}}\left(  s_{00}%
^{(jk)}\left(  \omega,\theta;E\right)  f\left(  \theta\right)  ,g\left(
\omega\right)  \right)  _{\mathcal{H}\left(  E\right)  }d\theta d\omega.$ We
prove the following

\begin{proposition}
\label{representation89}The function $\sum\limits_{O_{j}\cap O_{k}%
\neq\varnothing}s_{00}^{(jk)}\left(  \omega,\theta;E\right)  $ is a
Dirac-function over $\mathcal{H}\left(  E\right)  $. That is
\begin{equation}
\sum\limits_{O_{j}\cap O_{k}\neq\varnothing}I_{jk}=\left(  f,g\right)
_{\mathcal{H}\left(  E\right)  }. \label{representation88}%
\end{equation}

\end{proposition}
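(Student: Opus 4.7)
The plan is to test the claim against smooth $f,g\in\mathcal{H}(E)$ and to show, pair by pair, that each $I_{jk}$ reduces to an integral over a single copy of $\mathbb{S}^2$, so that the sum telescopes via the partition of unity. First, split each $I_{jk}$ into a ``$+$'' piece carrying $\chi_{jk}^+(\omega)\chi_{jk}^+(\theta)$ (supported in $\Omega_+(\omega_{jk},\delta)\times\Omega_+(\omega_{jk},\delta)$) and a ``$-$'' piece carrying $-\chi_{jk}^-(\omega)\chi_{jk}^-(\theta)$ (supported in $\Omega_-(\omega_{jk},\delta)\times\Omega_-(\omega_{jk},\delta)$), according to the decomposition (\ref{representation225}) of $\chi_{jk}(\omega,\theta)$. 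On each half, parametrize $\omega,\theta$ by their orthogonal projections $\zeta,\zeta'$ onto $\Pi_{\omega_{jk}}$, writing $\omega = \pm\sqrt{1-|\zeta|^2}\,\omega_{jk}+\zeta$ with $d\omega = (1-|\zeta|^2)^{-1/2}\,d\zeta$, and similarly for $\theta$. Since $y\in\Pi_{\omega_{jk}}$, we have $\langle y,\theta-\omega\rangle = \langle y,\zeta'-\zeta\rangle$.

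Next, interpret the inner integral $\int_{\Pi_{\omega_{jk}}}e^{i\nu(E)\langle y,\zeta'-\zeta\rangle}\,dy$ as the distributional Fourier transform of $1$ on $\Pi_{\omega_{jk}}\cong\mathbb{R}^2$, which equals $(2\pi)^2\nu(E)^{-2}\delta(\zeta'-\zeta)$. Paired against the smooth amplitude $\chi_{jk}^\pm(\omega)\chi_{jk}^\pm(\theta)\chi_j(\omega)\chi_k(\theta)\bigl(P_\omega(E)(\alpha\cdot\omega_{jk})P_\theta(E)f(\theta),g(\omega)\bigr)$, this collapses $\theta$ to $\omega$. The rigorous justification of this step is the main technical point and can be carried out exactly as in Lemma \ref{representation147} above, via a regularization $\varphi(\varepsilon y)$ and a Fourier-inversion argument on the amplitude, which is smooth and compactly supported in $\zeta,\zeta'$ after the local parametrization.

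At $\theta=\omega$ one needs the algebraic identity
\begin{equation*}
P_\omega(E)(\alpha\cdot\omega_{jk})P_\omega(E) = (\operatorname*{sgn}E)\,\frac{\nu(E)}{|E|}\,\langle\omega_{jk},\omega\rangle\,P_\omega(E),
\end{equation*}
which follows directly from the anticommutation $(\alpha\cdot\omega_{jk})(\alpha\cdot\omega) + (\alpha\cdot\omega)(\alpha\cdot\omega_{jk})=2\langle\omega_{jk},\omega\rangle$, the relation $(\alpha\cdot\omega_{jk})\beta = -\beta(\alpha\cdot\omega_{jk})$, and $(\nu(E)\alpha\cdot\omega + m\beta)^2 = E^2$. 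Combining the prefactor $(\operatorname*{sgn}E)(2\pi)^{-2}\upsilon(E)^2$ with $(2\pi)^2\nu(E)^{-2}$ from the delta and with $\upsilon(E)^2 = |E|\nu(E)$, and noting that $|\langle\omega_{jk},\omega\rangle|=\sqrt{1-|\zeta|^2}$ together with the Jacobian $(1-|\zeta|^2)^{-1}\,d\zeta = (1-|\zeta|^2)^{-1/2}\,d\omega$, all numerical factors and signs cancel on each half. Using finally that $f\in\mathcal{H}(E)$ forces $P_\omega(E)f(\omega)=f(\omega)$, one obtains
\begin{equation*}
I_{jk} = \int_{\mathbb{S}^2}\bigl(\chi_{jk}^+(\omega)^2+\chi_{jk}^-(\omega)^2\bigr)\chi_j(\omega)\chi_k(\omega)\bigl(f(\omega),g(\omega)\bigr)_{\mathbb{C}^4}\,d\omega.
\end{equation*}

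To finish, I would invoke Lemma \ref{representation221}: on $\operatorname{supp}(\chi_j\chi_k)\subset O_j\cap O_k$, every point $\omega$ lies in $\Omega_+(\omega_{jk},\delta)\cup\Omega_-(\omega_{jk},\delta)$ (disjoint for $\delta>0$), so exactly one of $\chi_{jk}^\pm(\omega)$ equals $1$ and the other $0$; hence $(\chi_{jk}^+)^2+(\chi_{jk}^-)^2 \equiv 1$ on $\operatorname{supp}(\chi_j\chi_k)$. Because $\chi_j\chi_k\equiv 0$ whenever $O_j\cap O_k=\varnothing$, the summation restriction in (\ref{representation88}) is automatic, and $\sum_{j,k}\chi_j(\omega)\chi_k(\omega) = \bigl(\sum_j\chi_j(\omega)\bigr)\bigl(\sum_k\chi_k(\omega)\bigr)=1$. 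Therefore $\sum_{O_j\cap O_k\neq\varnothing}I_{jk} = \int_{\mathbb{S}^2}(f(\omega),g(\omega))\,d\omega = (f,g)_{\mathcal{H}(E)}$, as required. The only genuinely delicate step is the rigorous reading of the $y$-integral as a Dirac mass in $\zeta'-\zeta$; everything else is algebraic bookkeeping and the combinatorics of the partition $\{O_j,\chi_j,\chi_{jk}^\pm\}$.
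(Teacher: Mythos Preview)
Your proposal is correct and follows essentially the same route as the paper's own proof: pass to the $\zeta,\zeta'$ coordinates on $\Pi_{\omega_{jk}}$, read the $y$-integral as $(2\pi)^2\nu(E)^{-2}\delta(\zeta'-\zeta)$, apply the projector identity $P_\omega(E)(\alpha\cdot\omega_{jk})P_\omega(E)=\frac{\nu(E)}{E}\langle\omega,\omega_{jk}\rangle P_\omega(E)$ (the paper's relation (\ref{representation113})), and sum via the partition of unity. The only cosmetic difference is that you split $\chi_{jk}$ into its $\pm$ halves at the outset and obtain $I_{jk}=\int_{\mathbb{S}^2}\bigl((\chi_{jk}^+)^2+(\chi_{jk}^-)^2\bigr)\chi_j\chi_k(f,g)\,d\omega$, whereas the paper keeps $\chi_{jk}$ intact and then observes $\chi_{jk}(\omega,\omega)\langle\omega,\omega_{jk}\rangle=\chi'_{jk}(\omega,\omega)\sqrt{1-|\zeta|^2}$ with $\chi'_{jk}=(\chi_{jk}^+)^2+(\chi_{jk}^-)^2$; both then use Lemma~\ref{representation221} to reduce this factor to $1$ on $\operatorname{supp}(\chi_j\chi_k)$ and arrive at (\ref{representation86}).
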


\begin{proof}
Observe that $I_{jk}=\left(  \operatorname*{sgn}E\right)  \left(  2\pi\right)
^{-2}\upsilon\left(  E\right)  ^{2}\int\limits_{\Pi_{\omega_{jk}}}%
\int\limits_{\Pi_{\omega_{jk}}}\int\limits_{\Pi_{\omega_{jk}}}e^{i\nu\left(
E\right)  \left\langle y,\zeta^{\prime}-\zeta\right\rangle }\left(
\mathbf{\tilde{h}}_{00,jk}^{\prime}\left(  \zeta,\zeta^{\prime};E\right)
\tilde{f}\left(  \zeta^{\prime}\right)  ,\tilde{g}\left(  \zeta\right)
\right)  d\zeta^{\prime}dyd\zeta,$ \ where $\mathbf{\tilde{h}}_{00,jk}%
^{\prime}=\frac{\tilde{\chi}_{j}\left(  \zeta\right)  }{\left(  1-\left\vert
\zeta\right\vert ^{2}\right)  ^{1/2}}\frac{\tilde{\chi}_{k}\left(
\zeta^{\prime}\right)  }{\left(  1-\left\vert \zeta^{\prime}\right\vert
^{2}\right)  ^{1/2}}\tilde{\chi}_{jk}\left(  \zeta,\zeta^{\prime}\right)
\tilde{P}_{\zeta}\left(  E\right)  \left(  \alpha\cdot\omega_{jk}\right)
\tilde{P}_{\zeta^{\prime}}\left(  E\right)  .$ (Here we used the notation of
Remark \ref{representation75}). Calculating the integrals over $\zeta^{\prime
}$ and $y$ we get $I_{jk}=\left(  \operatorname*{sgn}E\right)  \frac
{\upsilon\left(  E\right)  ^{2}}{\nu\left(  E\right)  ^{2}}\int_{\Pi
_{\omega_{jk}}}\frac{1}{1-\left\vert \zeta\right\vert ^{2}}\tilde{\chi}%
_{j}\left(  \zeta\right)  \tilde{\chi}_{k}\left(  \zeta^{\prime}\right)
\tilde{\chi}_{jk}\left(  \zeta,\zeta^{\prime}\right)  \left(  \tilde{P}%
_{\zeta}\left(  E\right)  \left(  \alpha\cdot\omega_{jk}\right)  \tilde
{P}_{\zeta^{\prime}}\left(  E\right)  \tilde{f}\left(  \zeta\right)
,\tilde{g}\left(  \zeta\right)  \right)  d\zeta.$ Since%
\begin{equation}
P_{\omega}\left(  E\right)  \left(  \alpha\cdot\omega_{jk}\right)  =\left(
\alpha\cdot\omega_{jk}\right)  P_{\omega}\left(  -E\right)  +\frac{\nu\left(
E\right)  }{E}\left\langle \omega,\omega_{jk}\right\rangle ,
\label{representation113}%
\end{equation}
we have $P_{\omega}\left(  E\right)  \left(  \alpha\cdot\omega_{jk}\right)
P_{\omega}\left(  E\right)  =\frac{\nu\left(  E\right)  }{E}\left\langle
\omega,\omega_{jk}\right\rangle P_{\omega}\left(  E\right)  .$ As
$\pm\left\langle \omega,\omega_{jk}\right\rangle =\sqrt{1-\left\vert
\zeta\right\vert ^{2}}$ for $\omega\in\Omega_{\pm}\left(  \omega_{jk}%
,\delta\right)  ,$ then $\chi_{j}\left(  \omega\right)  \chi_{k}\left(
\omega\right)  $ $\times\chi_{jk}\left(  \omega,\omega\right)  \left\langle
\omega,\omega_{jk}\right\rangle =\chi_{j}\left(  \omega\right)  \chi
_{k}\left(  \omega\right)  \chi_{jk}^{\prime}\left(  \omega,\omega\right)
\sqrt{1-\left\vert \zeta\right\vert ^{2}},$ where $\chi_{jk}^{\prime}\left(
\omega,\theta\right)  :=\chi_{jk}^{+}\left(  \omega\right)  \chi_{jk}%
^{+}\left(  \theta\right)  +\chi_{jk}^{-}\left(  \omega\right)  \chi_{jk}%
^{-}\left(  \theta\right)  .$ Thus, using the relation $\chi_{j}\left(
\omega\right)  \chi_{k}\left(  \omega\right)  \chi_{jk}^{\prime}\left(
\omega,\omega\right)  =\chi_{j}\left(  \omega\right)  \chi_{k}\left(
\omega\right)  ,$ we get $I_{jk}=\int_{\Pi_{\omega_{jk}}}\frac{1}%
{\sqrt{1-\left\vert \zeta\right\vert ^{2}}}\tilde{\chi}_{j}\left(
\zeta\right)  \tilde{\chi}_{k}\left(  \zeta\right)  \left(  \tilde{P}_{\zeta
}\left(  E\right)  \tilde{f}\left(  \zeta\right)  ,\tilde{g}\left(
\zeta\right)  \right)  d\zeta.$ Returning back to variable $\omega$ in the
last expression we obtain%
\begin{equation}
I_{jk}=\int_{\mathbb{S}^{2}}\chi_{j}\left(  \omega\right)  \chi_{k}\left(
\omega\right)  \left(  f_{1}\left(  \omega\right)  ,f_{2}\left(
\omega\right)  \right)  d\omega. \label{representation86}%
\end{equation}
Noting that $\sum\limits_{O_{j}\cap O_{k}\neq\varnothing}\chi_{j}\left(
\omega\right)  \chi_{k}\left(  \omega\right)  =\sum\limits_{j,k=1}^{n}\chi
_{j}\left(  \omega\right)  \chi_{k}\left(  \omega\right)  =1$ we obtain
(\ref{representation88}).
\end{proof}

We obtain the following

\begin{lemma}
\label{representation168}The limit (\ref{representation134}) exists and the
operator $S_{2}\left(  E\right)  $ is decomposed as follows%
\begin{equation}
S_{2}\left(  E\right)  =I+\mathcal{G+R}_{1}\text{,} \label{representation171}%
\end{equation}
where $I$ is the identity in $\mathcal{H}\left(  E\right)  ,$ $\mathcal{G}$ is
an integral operator with kernel $\sum\limits_{O_{j}\cap O_{k}\neq\varnothing
}\mathbf{g}_{N,jk}^{\operatorname{int}}\left(  \omega,\theta;E\right)  $
satisfying (\ref{representation83}) and $\mathcal{R}_{1}$ is an integral
operator with kernel $r_{1}\left(  \omega,\theta;E\right)  :=-i\left(
2\pi\right)  ^{-2}\upsilon\left(  E\right)  ^{2}\left(  \sum\limits_{O_{j}\cap
O_{k}=\varnothing}\chi_{j}\left(  \omega\right)  \left(  G_{1}^{\left(
0\right)  }+G_{2,O_{j},O_{k}}^{\left(  0\right)  }\right)  \left(  \nu\left(
E\right)  \omega,\nu\left(  E\right)  \theta;E\right)  \chi_{k}\left(
\theta\right)  \right.  $ $\left.  +\sum\limits_{O_{j}\cap O_{k}%
\neq\varnothing}r_{jk}\left(  \omega,\theta;E\right)  \right)  ,$ where the
function $r_{jk}$ is defined by $r_{jk}\left(  \omega,\theta;E\right)
:=\chi_{j}\left(  \omega\right)  \left(  G_{1}^{\left(  0\right)  }+\left(
1-\chi_{jk}^{\prime}\left(  \omega,\theta\right)  \right)  G_{2}^{\left(
0\right)  }+\chi_{jk}^{+}\left(  \omega\right)  \right.  $ $\left.  \times
\chi_{jk}^{+}\left(  \theta\right)  \left(  R_{2}^{\left(  0\right)  }\right)
_{+}+\chi_{jk}^{-}\left(  \omega\right)  \chi_{jk}^{-}\left(  \theta\right)
\left(  R_{2}^{\left(  0\right)  }\right)  _{-}\right)  \chi_{k}\left(
\theta\right)  ,$ with $\chi_{jk}^{\prime}\left(  \omega,\theta\right)
=\chi_{jk}^{+}\left(  \omega\right)  \chi_{jk}^{+}\left(  \theta\right)
+\chi_{jk}^{-}\left(  \omega\right)  \chi_{jk}^{-}\left(  \theta\right)  .$
For any $p$ and $q$ there exists $N,$ sufficiently large, such that,
$r_{1}\left(  \omega,\theta;E\right)  $ belongs to the class $C^{p}\left(
\mathbb{S}^{2}\times\mathbb{S}^{2}\right)  $, and moreover, its $C^{p}-$norm
is bounded by $C\left\vert E\right\vert ^{-q}$, as $\left\vert E\right\vert
\rightarrow\infty$. Furthermore, the operator $S_{2}\left(  E\right)  -I$ is a
compact operator on $\mathcal{H}\left(  E\right)  .$
\end{lemma}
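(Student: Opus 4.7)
\textbf{Proof plan for Lemma \ref{representation168}.}
The plan is to expand $S_2(E)$ through the partition of unity $\{\chi_j\}$ from Subsection 6.1 and, on each piece, to reduce the limit in \eqref{representation134} to one of the cases already handled by Lemmas \ref{representation183}, \ref{representation145}, \ref{representation184}, \ref{representation185}. Concretely, writing $S_2(E) = \sum_{j,k} \chi_j S_2(E)\chi_k$ and inserting the decomposition $G^{(\varepsilon)} = G_1^{(\varepsilon)} + G_2^{(\varepsilon)}$ from \eqref{representation166} into \eqref{representation188}, I would distinguish two cases. If $O_j \cap O_k = \varnothing$, then $\chi_j(\omega)\chi_k(\theta)$ is supported in $O \times O'$ with $\overline{O} \cap \overline{O'} = \varnothing$, so the contribution from $G_1^{(\varepsilon)}$ is controlled by Lemma \ref{representation183} and that from $G_2^{(\varepsilon)}$ by Lemma \ref{representation145}; both produce $C^p$ kernels of norm $O(|E|^{-q})$, so these pieces go into $\mathcal{R}_1$.

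If $O_j \cap O_k \neq \varnothing$, pick $\omega_{jk} \in O_j \cap O_k$ and take $\Psi_{\pm}(\omega,\theta;\omega_{jk}) = \chi_{jk}^{\pm}(\omega)\chi_{jk}^{\pm}(\theta)\chi_j(\omega)\chi_k(\theta)$. Using $\chi_j \chi_k = \chi_j \chi_k\bigl(\chi_{jk}^{+}\otimes\chi_{jk}^{+} + \chi_{jk}^{-}\otimes\chi_{jk}^{-} + (1-\chi_{jk}')\bigr)$ (see the lines preceding \eqref{representation159}), the contribution of $G_1^{(\varepsilon)}$ and of $(1-\chi_{jk}')G_2^{(\varepsilon)}$ fall into the smooth/disjoint-support regime handled by Lemmas \ref{representation183} and \ref{representation145}, and thus contribute the corresponding $r_{jk}$-pieces listed in the statement. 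For the remaining diagonal pieces $\chi_{jk}^{\pm}(\omega)\chi_{jk}^{\pm}(\theta) G_2^{(\varepsilon)}$, I apply the key identity \eqref{representation159} writing $\Psi_\pm G_2^{(\varepsilon)} = \Psi_\pm(\pm \breve G_2^{(\varepsilon)} + R_\pm^{(\varepsilon)})$. The $R_\pm^{(\varepsilon)}$ piece is handled by Lemma \ref{representation184} and adds the $(R_2^{(0)})_\pm$ terms to $r_{jk}$, and Lemma \ref{representation185} computes the limit of the $\breve G_2^{(\varepsilon)}$ piece as the integral operator with kernel $s_{\operatorname{sing}}^{(N)}(\omega,\theta;E;\omega_{jk})$, which with this choice of $\Psi_\pm$ is exactly $s_{N,jk}(\omega,\theta;E)$.

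Summing over $j,k$ I obtain $S_2(E) = \sum_{O_j\cap O_k\neq\varnothing} s_{N,jk} + \mathcal{R}_1$. Writing $s_{N,jk} = \mathbf{g}_{N,jk}^{\operatorname{int}} + s_{00}^{(jk)}$ by \eqref{representation181} and invoking Proposition \ref{representation89} to identify $\sum s_{00}^{(jk)}$ as the identity on $\mathcal{H}(E)$, I get the announced decomposition $S_2(E) = I + \mathcal{G} + \mathcal{R}_1$. The estimate \eqref{representation83} on the kernel of $\mathcal{G}$ follows directly from Lemma \ref{representation186} (applied with $\Psi_\pm$ chosen as above): the difference $\mathbf{h}_N - (\operatorname{sgn}E) P_\omega(E)(\alpha\cdot\omega_{jk})P_\theta(E)$ is expressed through $\mathbf{h}_N^{\operatorname{int}}$, whose symbol lies in $\mathcal{S}^{-(\rho-1)}$, so Lemma \ref{representation201} applied to the oscillatory integral over $\Pi_{\omega_{jk}}$ (which is two-dimensional) yields the $|\omega-\theta|^{-(3-\rho)}$ bound; for $\rho>3$, iterating integration by parts gives continuity.

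Finally, for the compactness of $S_2(E) - I = \mathcal{G} + \mathcal{R}_1$: $\mathcal{R}_1$ is an integral operator on $\mathbb{S}^2$ with a $C^p$ kernel, hence Hilbert--Schmidt and compact; $\mathcal{G}$ is a weakly singular integral operator on the two-dimensional compact manifold $\mathbb{S}^2$ with kernel bounded by $C|\omega-\theta|^{-(3-\rho)}$, and since $\rho>1$ we have $3-\rho<2$, so standard bounds (cf. the discussion preceding Theorem \ref{representation25} via Young's inequality) show $\mathcal{G}$ is bounded on $L^2(\mathbb{S}^2;\mathbb{C}^4)$ and, in fact, compact because it is a norm limit of operators with truncated (continuous) kernels obtained by cutting off a neighborhood of the diagonal $\{\omega=\theta\}$. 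I expect the main technical obstacle to be the careful bookkeeping needed in step two, making sure that the combinations of cut-offs $\chi_j$, $\chi_k$, $\chi_{jk}^{\pm}$ dovetail with the hypotheses of Lemmas \ref{representation184} and \ref{representation185} (in particular that the residual pieces left outside $\chi_{jk}'$ truly fall into the disjoint-support regime of Lemma \ref{representation145}), and the uniformity of all estimates in $\omega_{jk}$ guaranteed by the last sentence of Theorem \ref{representation241}.
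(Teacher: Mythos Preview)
Your proposal is correct and follows essentially the same route as the paper: partition of unity in $(j,k)$, the split $G^{(\varepsilon)}=G_1^{(\varepsilon)}+G_2^{(\varepsilon)}$, the disjoint case via Lemmas \ref{representation183} and \ref{representation145}, the near-diagonal case via \eqref{representation159} together with Lemmas \ref{representation184} and \ref{representation185}, and then Proposition \ref{representation89} and Lemma \ref{representation186} to extract $I+\mathcal{G}$. The only divergence is the compactness of $S_2(E)-I$: the paper observes that its amplitude lies in $\mathcal{S}^{-(\rho-1)}$ and invokes Proposition \ref{representation76} directly, whereas you argue by splitting into a Hilbert--Schmidt piece $\mathcal{R}_1$ and a weakly singular piece $\mathcal{G}$ approximated in norm by continuous-kernel truncations; both are valid, the paper's PDO argument being slightly more efficient given the machinery already in place.
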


\begin{proof}
Let us write $S_{2}\left(  E\right)  $ as in relation (\ref{representation170}%
). First we consider the case $O_{j}\cap O_{k}=\varnothing.$ The relation
(\ref{representation166}), Lemma \ref{representation183} and Lemma
\ref{representation145} imply that $\chi_{j}\left(  \omega\right)
s_{2}\left(  \omega,\theta;E\right)  \chi_{k}\left(  \theta\right)  =-i\left(
2\pi\right)  ^{-2}\upsilon\left(  E\right)  ^{2}\chi_{j}\left(  \omega\right)
\left(  G_{1}^{\left(  0\right)  }+G_{2,O_{j},O_{k}}^{\left(  0\right)
}\right)  \left(  \nu\left(  E\right)  \omega,\nu\left(  E\right)
\theta;E\right)  \chi_{k}\left(  \theta\right)  $, where the function
$\chi_{j}\left(  \omega\right)  \left(  G_{1}^{\left(  0\right)  }%
+G_{2,O_{j},O_{k}}^{\left(  0\right)  }\right)  \left(  \nu\left(  E\right)
\omega,\nu\left(  E\right)  \theta;E\right)  \chi_{k}\left(  \theta\right)  $
satisfies the estimate (\ref{representation224})$.$ Suppose now that
$O_{j}\cap O_{k}\neq\varnothing.$ We take $\Psi_{\pm}\left(  \hat{\xi}%
,\hat{\xi}^{\prime};\omega_{jk}\right)  =\chi_{j}\left(  \hat{\xi}\right)
\chi_{k}\left(  \hat{\xi}^{\prime}\right)  \chi_{jk}^{\pm}\left(  \hat{\xi
}\right)  \chi_{jk}^{\pm}\left(  \hat{\xi}^{\prime}\right)  $ (here we use
Lemma \ref{representation221}). Let us decompose $G_{2}^{\left(
\varepsilon\right)  }=\left(  1-\chi_{jk}^{\prime}\left(  \hat{\xi},\hat{\xi
}^{\prime}\right)  \right)  G_{2}^{\left(  \varepsilon\right)  }+\chi
_{jk}^{\prime}\left(  \hat{\xi},\hat{\xi}^{\prime}\right)  G_{2}^{\left(
\varepsilon\right)  }.$ Then using the relations (\ref{representation188}),
(\ref{representation166}) and (\ref{representation159}) we get
\begin{equation}
\left.  \left(  \chi_{j}\left(  S_{2}\left(  E\right)  \chi_{k}f_{1}\right)
,f_{2}\right)  =-i\left(  2\pi\right)  ^{-2}\upsilon\left(  E\right)
\lim_{\mu\downarrow0}\lim_{\varepsilon\rightarrow0}%
{\displaystyle\int}
{\displaystyle\int_{\mathbb{S}^{2}}}
\left(  G_{jk}^{\left(  \varepsilon\right)  }\left(  \nu\left(  E\right)
\omega,\xi^{\prime};E\right)  \tilde{\delta}_{\mu}^{\left(
\operatorname*{sgn}E\right)  }\hat{g}_{1}\left(  \xi^{\prime}\right)
,f_{2}\left(  \omega\right)  \right)  d\omega d\xi^{\prime},\right.
\label{representation250}%
\end{equation}
with $G_{jk}^{\left(  \varepsilon\right)  }\left(  \xi,\xi^{\prime};E\right)
:=\chi_{j}\left(  \hat{\xi}\right)  \left(  G_{1}^{\left(  \varepsilon\right)
}+\left(  1-\chi_{jk}^{\prime}\left(  \hat{\xi},\hat{\xi}^{\prime}\right)
\right)  G_{2}^{\left(  \varepsilon\right)  }+\chi_{jk}\left(  \hat{\xi}%
,\hat{\xi}^{\prime}\right)  \tilde{G}_{2}^{\left(  \varepsilon\right)  }%
+\sum_{\tau=-1}^{1}\chi_{jk}^{\operatorname*{sgn}\tau}\left(  \hat{\xi
}\right)  \chi_{jk}^{\operatorname*{sgn}\tau}\left(  \hat{\xi}^{\prime
}\right)  R_{\operatorname*{sgn}\tau}^{\left(  \varepsilon\right)  }\right)
\chi_{k}\left(  \hat{\xi}^{\prime}\right)  .$ Using Lemma
\ref{representation183} for the part in the limit (\ref{representation250})
corresponding to the term $\chi_{j}\left(  \hat{\xi}\right)  G_{1}^{\left(
\varepsilon\right)  }\chi_{k}\left(  \hat{\xi}^{\prime}\right)  ,$ Lemma
\ref{representation145} for $\chi_{j}\left(  \hat{\xi}\right)  $
$\times\left(  1-\chi_{jk}^{\prime}\left(  \hat{\xi},\hat{\xi}^{\prime
}\right)  \right)  G_{2}^{\left(  \varepsilon\right)  }\chi_{k}\left(
\hat{\xi}\right)  ,$ Lemma \ref{representation184} for $\chi_{j}\left(
\hat{\xi}\right)  \chi_{k}\left(  \hat{\xi}^{\prime}\right)  \chi_{jk}^{\pm
}\left(  \hat{\xi}\right)  \chi_{jk}^{\pm}\left(  \hat{\xi}^{\prime}\right)
R_{\pm}^{\left(  \varepsilon\right)  }$ and Lemma \ref{representation185} for
$\chi_{j}\left(  \hat{\xi}\right)  \chi_{k}\left(  \hat{\xi}^{\prime}\right)
$ $\times\chi_{jk}^{\pm}\left(  \hat{\xi}\right)  \chi_{jk}^{\pm}\left(
\hat{\xi}^{\prime}\right)  \tilde{G}_{2}^{\left(  \varepsilon\right)  },$ in
order to calculate the limit (\ref{representation250}), we get $\chi
_{j}\left(  \omega\right)  s_{2}\left(  \omega,\theta;E\right)  \chi
_{k}\left(  \theta\right)  =-i\left(  2\pi\right)  ^{-2}\upsilon\left(
E\right)  ^{2}r_{jk}\left(  \omega,\theta;E\right)  $ $+s_{N,jk}\left(
\omega,\theta;E\right)  .$ For any $p$ and $q$ there exists $N,$ sufficiently
large, such that, $r_{jk}\left(  \omega,\theta;E\right)  $ belongs to the
class $C^{p}\left(  \mathbb{S}^{2}\times\mathbb{S}^{2}\right)  $, and
moreover, its $C^{p}-$norm is bounded by $C\left\vert E\right\vert ^{-q}$, as
$\left\vert E\right\vert \rightarrow\infty$.

Taking the sum over $j$ and $k$, such that $O_{j}\cap O_{k}\neq\varnothing,$
of $s_{N,jk}\left(  \omega,\theta;E\right)  ,$ and using Proposition
\ref{representation89} and Lemma \ref{representation186}, with $\Psi_{\pm
}\left(  \omega,\theta;\omega_{jk}\right)  =\chi_{j}\left(  \omega\right)
\chi_{k}\left(  \theta\right)  \chi_{jk}^{\pm}\left(  \omega\right)  \chi
_{jk}^{\pm}\left(  \theta\right)  ,$ we obtain the term $I+\mathcal{G}$, where
$\mathcal{G}$ satisfies the assertions of Lemma \ref{representation168}.
Moreover, taking the sum over $j$ and $k$, such that $O_{j}\cap O_{k}%
\neq\varnothing,$ of the parts $\chi_{j}\left(  \omega\right)  s_{2}\left(
\omega,\theta;E\right)  \chi_{k}\left(  \theta\right)  $ we see that the
kernel of $\mathcal{R}_{1}$ is given by $r_{1}\left(  \omega,\theta;E\right)
.$

Noting that the amplitude of $S_{2}\left(  E\right)  -I$ belongs to the class
$\mathcal{S}^{-\left(  \rho-1\right)  },$ it follows from Proposition
\ref{representation76} that the operator $S_{2}\left(  E\right)  -I$ is compact.
\end{proof}

\subsection{Stationary representation for the scattering matrix and proofs of
Theorems \ref{representation179} and \ref{representation25}.}

In order to prove Theorem \ref{representation25} we need a stationary formula
for the scattering matrix $S\left(  E\right)  $ in\ terms of the
identifications $J_{\pm}$ (\cite{47}, \cite{42}, \cite{43}, \cite{37},
\cite{8}). The scattering operators $\mathbf{S}$ and $\mathbf{\tilde{S}}$ are
related by equation (\ref{representation206}) and hence, $\mathbf{\tilde{S}}$
commutes with $H_{0}.$ This implies that $\mathcal{F}_{0}\mathbf{\tilde{S}%
}\mathcal{F}_{0}^{\ast}$ acts as a multiplication by the operator valued
function $\tilde{S}\left(  E\right)  .$\ Let us first give a stationary
formula for $\tilde{S}\left(  E\right)  .$

\begin{lemma}
\label{representation135}For $\left\vert E\right\vert >m,$ the scattering
matrix $\tilde{S}\left(  E\right)  $ can be represented as
\begin{equation}
\tilde{S}\left(  E\right)  =S_{1}\left(  E\right)  +S_{2}\left(  E\right)  ,
\label{representation30}%
\end{equation}
where $S_{1}\left(  E\right)  $ is given by relation (\ref{representation14})
and $S_{2}\left(  E\right)  $ is defined by (\ref{representation171}).
\end{lemma}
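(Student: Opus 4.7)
The plan is to derive a stationary representation of $\tilde S(E)$ directly from the time-dependent definition $\mathbf{\tilde S}=W_+^*(J_+)W_-(J_-)$, following the standard two-space stationary formalism for which references are listed just before the statement. Since $W_\pm(J_\pm)$ exist by \eqref{representation91} and $\mathbf{\tilde S}$ commutes with $H_0$ by \eqref{representation206}, $\mathcal F_0\mathbf{\tilde S}\mathcal F_0^*$ acts as multiplication by a well-defined operator-valued function $\tilde S(E):\mathcal H(E)\to\mathcal H(E)$; the task is to identify it with $S_1(E)+S_2(E)$.

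First I would apply Cook's formula $W_\pm(J_\pm)=J_\pm+i\int_0^{\pm\infty}e^{itH}T_\pm e^{-itH_0}\,dt$ to the pairing
$$(\mathbf{\tilde S}f_1,f_2)=(W_-(J_-)f_1,W_+(J_+)f_2)$$
on test vectors $f_j=\mathcal F_0^* g_j$ with $g_j$ as in \eqref{representation134}, producing four terms. Inserting the Abel regulator $e^{-\mu(|t|+|s|)}$ and letting $\mu\downarrow 0$ converts each time integral into a resolvent expression: the double-time integral collapses, by the second resolvent identity, into $2\pi i\,\Gamma_0(E)T_+^*R(E+i0)T_-\Gamma_0^*(E)=S_1(E)$, while the single-time integral carrying $T_-$ becomes precisely the limit \eqref{representation134} defining $S_2(E)$, after recognizing $\delta_\mu(H_0-E)$ against the energy-shell projector $\Gamma_0^*(E)\Gamma_0(E)$. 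The two remaining pieces, namely the "direct" term $(J_+^*J_-f_1,f_2)$ and the single-time integral carrying $T_+^*$ paired with $J_-$, must be shown to be absorbed into $S_2(E)$; in fact this pair equals $(W_+^*(J_+)J_-f_1,f_2)$, and after sandwiching with the spectral density one uses the adjoint of Cook's formula together with identity \eqref{representation18} to fold its Abel limit into the $J_+^*T_-\delta(H_0-E)$ term already identified, leaving no extra contribution on the energy shell.

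The main obstacle will be the rigorous interchange of limits and the absolute convergence of the Abel integrals. The required uniform bounds are provided by Lemma~\ref{representation33} for the double-resolvent factor $T_+^*R(z)T_-$, by Propositions \ref{representation9} and \ref{representation7} for the mapping of $T_\pm$ through the spectral projectors $\mathbf P_\pm$ of the dilation generator $\mathbf A$, and by the local H\"older continuity of $R(E+i0)$ on $\{(-\infty,-m)\cup(m,\infty)\}\setminus\sigma_p(H)$ recalled in Section~2; Proposition~\ref{representation118} and Lemma~\ref{representation56} ensure that the auxiliary $J_+^*J_-$ and boundary-at-infinity terms vanish in the appropriate weak sense. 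Once these ingredients are assembled, the identification of the two surviving summands with $S_1(E)$ and $S_2(E)$ is an immediate reading against definitions \eqref{representation14} and \eqref{representation134}.
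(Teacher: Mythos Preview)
Your four-term expansion does not line up with $S_1(E)$ and $S_2(E)$ the way you claim. The piece you call ``the single-time integral carrying $T_-$'' is
\[
-i\int_{-\infty}^{0}\bigl(J_+^{\ast}e^{itH}T_-e^{-itH_0}f_1,\,f_2\bigr)\,dt,
\]
which still contains the full propagator $e^{itH}$ and runs only over a half-line; on the energy shell this does \emph{not} reduce to $-2\pi i\,(J_+^{\ast}T_-\delta_\mu(H_0-E)g_1,\Gamma_0^{\ast}\Gamma_0 g_2)$ as in \eqref{representation134}. Likewise, your double-time integral is over the quarter-plane $(-\infty,0)\times(0,\infty)$ with $e^{i(t-s)H}$ inside; this does not ``collapse by the second resolvent identity'' to $\Gamma_0(E)T_+^{\ast}R(E+i0)T_-\Gamma_0^{\ast}(E)$ --- the half-line Abel integrals produce boundary resolvents $R_0(E\pm i\mu)$ that must be recombined with the other terms, and your sketch gives no mechanism for doing so. The handling of the leftover $(W_+^{\ast}(J_+)J_-f_1,f_2)$ via \eqref{representation18} is also not right: that relation concerns the strong limit of $J_+^{\ast}J_-e^{-itH_0}$ as $|t|\to\infty$, which says nothing about the fixed-energy form you need.

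The paper avoids all of this by a preliminary algebraic step you skipped. From \eqref{representation18} one has $W_+^{\ast}(J_+)W_+(J_-)=0$, so
\[
\mathbf{\tilde S}=W_+^{\ast}(J_+)\bigl(W_-(J_-)-W_+(J_-)\bigr)
=-i\,W_+^{\ast}(J_+)\int_{-\infty}^{\infty}e^{itH}T_-e^{-itH_0}\,dt.
\]
The intertwining property of $W_+(J_+)$ then turns the $e^{itH}$ into an $e^{itH_0}$ on the $g_2$ side, so the integrand becomes $(T_-e^{-itH_0}g_1,\,W_+(J_+)e^{-itH_0}g_2)$ with $t$ ranging over all of $\mathbb R$. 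Only at this point is Cook's formula applied once, to $W_+(J_+)=J_+ + i\int_0^{\infty}e^{i\tau H}T_+e^{-i\tau H_0}\,d\tau$, yielding exactly two terms: the $J_+$ part gives, after the full-line $t$-integral produces $2\pi i\,\delta_\mu(H_0-E)$, the form \eqref{representation134} defining $S_2(E)$; the $\tau$-integral part gives, after $\int_0^\infty e^{-\mu\tau}e^{-i\tau(H-E)}d\tau=R(E+i\mu)$ and the same $t$-integral, the formula \eqref{representation14} for $S_1(E)$. There are no extra cross terms to absorb. If you rewrite your argument with this subtraction and the intertwining step inserted first, the rest of your outline (Abel regulators, Lemma~\ref{representation33}, Proposition~\ref{representation118}, Lemma~\ref{representation56}) is exactly what is needed to justify the limits.
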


\begin{proof}
We follow the proof of Theorem 3.3 of \cite{47} for the case of the
Schr\"{o}dinger equation. Let $\Lambda\subset(-\infty,-m)\cup(m,+\infty)$ be
bounded.\ We first note that for $g_{j}\in\mathcal{S}\left(  \mathbb{R}%
^{3};\mathbb{C}^{4}\right)  \cap E_{0}\left(  \Lambda\right)  L^{2},$ $j=1,2,$
we have
\begin{equation}
\int\limits_{\Lambda}\left(  \tilde{S}\left(  E\right)  \Gamma_{0}\left(
E\right)  g_{1},\Gamma_{0}\left(  E\right)  g_{2}\right)  _{\mathcal{H}\left(
E\right)  }dE=\left(  \mathbf{\tilde{S}}g_{1},g_{2}\right)  .
\label{representation142}%
\end{equation}
We will work with the form $\left(  \mathbf{\tilde{S}}g_{1},g_{2}\right)  ,$
and then use equality (\ref{representation142}) to get the desired result for
$\tilde{S}\left(  E\right)  .$ From (\ref{representation18}) we get
$W_{+}\left(  H,H_{0};J_{+}\right)  ^{\ast}W_{+}\left(  H,H_{0};J_{-}\right)
=0,$ and thus, $\mathbf{\tilde{S}}=W_{+}^{\ast}\left(  H,H_{0};J_{+}\right)
\left(  W_{-}\left(  H,H_{0};J_{-}\right)  -W_{+}\left(  H,H_{0};J_{-}\right)
\right)  .$ Noting that
\begin{equation}
W_{+}\left(  H,H_{0};J_{-}\right)  -W_{-}\left(  H,H_{0};J_{-}\right)
=\left(  i\int\limits_{-\infty}^{\infty}e^{itH}T_{-}e^{-itH_{0}}dt\right)  ,
\label{representation43}%
\end{equation}
and using the intertwining property of $W_{+}\left(  H,H_{0};J_{+}\right)  $
we have
\begin{equation}
\left.
\begin{array}
[c]{c}%
\left(  \mathbf{\tilde{S}}g_{1},g_{2}\right)  =\left(  \left(  W_{-}\left(
H,H_{0};J_{-}\right)  -W_{+}\left(  H,H_{0};J_{-}\right)  \right)  g_{1}%
,W_{+}\left(  H,H_{0};J_{+}\right)  g_{2}\right) \\
=-i%
{\displaystyle\int\limits_{-\infty}^{\infty}}
\left(  T_{-}e^{-itH_{0}}g_{1},W_{+}\left(  H,H_{0};J_{+}\right)  e^{-itH_{0}%
}g_{2}\right)  .
\end{array}
\right.  \label{representation60}%
\end{equation}
Let us split $T_{-}=T_{-}^{1}+T_{-}^{2}$ in the R.H.S. of
(\ref{representation60}). By relation (\ref{representation23}), using
Proposition \ref{representation118} we see that the integral in the R.H.S. of
(\ref{representation60}) for $T_{-}^{1}$ is well defined. Moreover, noting
that $t_{-}^{2}$ is equal to $0$ in some neighborhood of the directions
$\left\langle \hat{x},\hat{\xi}\right\rangle =\pm1,$ $\left\vert x\right\vert
\geq R,$ and applying Lemma \ref{representation56} to the operator $T_{-}^{2}$
we conclude that the integral in the R.H.S. of (\ref{representation60}) is
well defined$.$ Using the equality $W_{+}\left(  H,H_{0};J_{+}\right)
-J_{+}=i\int\limits_{0}^{\infty}e^{i\tau H}T_{+}e^{-i\tau H_{0}}d\tau,$ we
obtain%
\begin{equation}
\left.  \left(  \mathbf{\tilde{S}}g_{1},g_{2}\right)  =i\int\limits_{0}%
^{\infty}i\left(  \int\limits_{-\infty}^{\infty}\left(  T_{-}e^{-itH_{0}}%
g_{1},e^{i\tau H}T_{+}e^{-i\left(  \tau+t\right)  H_{0}}g_{2}\right)
dt\right)  d\tau-i%
{\displaystyle\int\limits_{-\infty}^{\infty}}
\left(  T_{-}e^{-itH_{0}}g_{1},J_{+}e^{-itH_{0}}g_{2}\right)  dt.\right.
\label{representation44}%
\end{equation}
By the same argument as that we used in (\ref{representation60}) we show the
convergence of the integrals in (\ref{representation44}).

Let us consider the first term of the R.H.S. of (\ref{representation44}). We
represent this term as the following double limit%
\begin{equation}
\left.
\begin{array}
[c]{c}%
i\int\limits_{0}^{\infty}i\left(  \int\limits_{-\infty}^{\infty}\left(
T_{-}e^{-itH_{0}}g_{1},e^{i\tau H}T_{+}e^{-i\left(  \tau+t\right)  H_{0}}%
g_{2}\right)  dt\right)  d\tau\\
=\lim\limits_{\mu,\mu^{\prime}\downarrow0}i%
{\displaystyle\int\limits_{0}^{\infty}}
e^{-\mu\tau}i\left(
{\displaystyle\int\limits_{-\infty}^{\infty}}
e^{-\mu^{\prime}\left\vert t\right\vert }\left(  e^{i\left(  \tau+t\right)
H_{0}}T_{+}^{\ast}e^{-i\tau H}T_{-}e^{-itH_{0}}g_{1},g_{2}\right)  dt\right)
d\tau.
\end{array}
\right.  \label{representation45}%
\end{equation}

As $\mathcal{F}_{0}$ gives a spectral representation of $H_{0}$
(see\ \ref{basicnotions12}), then
\begin{equation}
\left(  g_{1},g_{2}\right)  =\int\limits_{\Lambda}\left(  \Gamma_{0}\left(
E\right)  g_{1}\left(  \cdot\right)  ,\Gamma_{0}\left(  E\right)  g_{2}\left(
\cdot\right)  \right)  _{\mathcal{H}\left(  E\right)  }dE,
\label{representation46}%
\end{equation}
for $g_{1},g_{2}\in L_{s}^{2}\cap E_{0}\left(  \Lambda\right)  L^{2},$
$s>1/2,$ where $\Gamma_{0}\left(  E\right)  $ is given by the relation
(\ref{basicnotions11}). Applying (\ref{representation46}) to the R.H.S. of
(\ref{representation45}) and using the identities $i\int\limits_{0}^{\infty
}e^{-\mu\tau}e^{-i\tau\left(  H-E\right)  }d\tau=R\left(  E+i\mu\right)  $ and
$i\int\limits_{-\infty}^{\infty}e^{-\mu^{\prime}\left\vert t\right\vert
}e^{-it\left(  H_{0}-E\right)  }dt=2\pi i\delta_{\mu^{\prime}}\left(
H_{0}-E\right)  ,$ to calculate the integrals on $t$ and $\tau$ of the
resulting expression we get
\begin{equation}
\left.
\begin{array}
[c]{c}%
i\int\limits_{0}^{\infty}i\left(  \int\limits_{-\infty}^{\infty}\left(
T_{-}e^{-itH_{0}}g_{1},e^{i\tau H}T_{+}e^{-i\left(  \tau+t\right)  H_{0}}%
g_{2}\right)  dt\right)  d\tau\\
=\lim\limits_{\mu,\mu^{\prime}\downarrow0}2\pi i\int\limits_{\Lambda}\left(
\Gamma_{0}\left(  E\right)  T_{+}^{\ast}R\left(  E+i\mu\right)  T_{-}%
\delta_{\mu^{\prime}}\left(  H_{0}-E\right)  g_{1},\Gamma_{0}\left(  E\right)
g_{2}\right)  dE.
\end{array}
\right.  \label{representation247}%
\end{equation}
Let $f_{1},f_{2}\in\mathcal{H}\left(  E\right)  $ be $C^{\infty}\left(
\mathbb{S}^{2};\mathbb{C}^{4}\right)  $ functions and take $g_{j}$ as in
(\ref{representation134})$.$ Note that $g_{j}\in\mathcal{S}\left(
\mathbb{R}^{3};\mathbb{C}^{4}\right)  $ and $\Gamma_{0}\left(  E\right)
g_{j}=f_{j}.$ Moreover, observe that the operator $\Gamma_{0}\left(  E\right)
\left\langle x\right\rangle ^{-s},$ for $s>1/2$, is bounded$.$ Then, relation
$\lim_{\mu\downarrow0}\left\Vert (\delta_{\mu}\left(  H_{0}-E\right)  \right.
$ $\left.  -\Gamma_{0}^{\ast}\left(  E\right)  \Gamma_{0}\left(  E\right)
)f\right\Vert _{L_{-s}^{2}}=0$ and Lemma \ref{representation33} imply that the
limit in the R.H.S. of (\ref{representation247}) exists. Hence, we obtain%
\begin{equation}
\left.  i\int\limits_{0}^{\infty}i\left(  \int\limits_{-\infty}^{\infty
}\left(  T_{-}e^{-itH_{0}}g_{1},e^{i\tau H}T_{+}e^{-i\left(  \tau+t\right)
H_{0}}g_{2}\right)  dt\right)  d\tau=2\pi i%
{\displaystyle\int\limits_{\Lambda}}
\left(  \Gamma_{0}\left(  E\right)  T_{+}^{\ast}R_{+}\left(  E\right)
T_{-}\Gamma_{0}\left(  E\right)  ^{\ast}f_{1},f_{2}\right)  _{\mathcal{H}%
\left(  E\right)  }dE.\right.  \label{representation143}%
\end{equation}

Applying (\ref{representation46}) to the second term of the R.H.S. of
(\ref{representation44}) we have
\begin{equation}
\left.  i%
{\displaystyle\int\limits_{-\infty}^{\infty}}
\left(  T_{-}e^{-itH_{0}}g_{1},J_{+}e^{-itH_{0}}g_{2}\right)  dt=%
{\displaystyle\int\limits_{\Lambda}}
\left(  i%
{\displaystyle\int\limits_{-\infty}^{\infty}}
\left(  \Gamma_{0}\left(  E\right)  J_{+}^{\ast}T_{-}e^{-it\left(
H_{0}-E\right)  }g_{1},\Gamma_{0}\left(  E\right)  g_{2}\right)
_{\mathcal{H}\left(  E\right)  }dt\right)  dE.\right.
\label{representation204}%
\end{equation}
As the operators $\Gamma_{0}\left(  E\right)  \left\langle x\right\rangle
^{-s}$ and $\left\langle x\right\rangle ^{s}J_{+}^{\ast}\left\langle
x\right\rangle ^{-s},$ for $s>1/2$, are bounded, splitting $\left\langle
x\right\rangle ^{s}T_{-}=\left\langle x\right\rangle ^{s}T_{-}^{1}%
+\left\langle x\right\rangle ^{s}T_{-}^{2}$ and using Proposition
\ref{representation118} for $\left\langle x\right\rangle ^{s}T_{-}^{1}$ (with
$N\geq1$) and Lemma \ref{representation56} for $\left\langle x\right\rangle
^{s}T_{-}^{2}$ we conclude that the integral in the variable $t$ in the R.H.S.
of (\ref{representation204}) is absolutely convergent. Then we get
$i\int\limits_{-\infty}^{\infty}\left(  T_{-}e^{-itH_{0}}g_{1},J_{+}%
e^{-itH_{0}}g_{2}\right)  dt=\int\limits_{\Lambda}\lim\limits_{\mu\downarrow
0}i\int\limits_{-\infty}^{\infty}e^{-\mu\left\vert t\right\vert }(J_{+}^{\ast
}T_{-}e^{-it\left(  H_{0}-E\right)  }g_{1},$ $\Gamma_{0}^{\ast}\left(
E\right)  \Gamma_{0}\left(  E\right)  g_{2})dtdE.$ Calculating the integral on
the variable $t$ we finally obtain%
\begin{equation}
\left.  i%
{\displaystyle\int\limits_{-\infty}^{\infty}}
\left(  T_{-}e^{-itH_{0}}g_{1},J_{+}e^{-itH_{0}}g_{2}\right)  dt=2\pi i%
{\displaystyle\int\limits_{\Lambda}}
\lim_{\mu\downarrow0}\left(  J_{+}^{\ast}T_{-}\delta_{\mu}\left(
H_{0}-E\right)  g_{1},\Gamma_{0}^{\ast}\left(  E\right)  \Gamma_{0}\left(
E\right)  g_{2}\right)  dE.\right.  \label{representation144}%
\end{equation}

Using equalities (\ref{representation142}), (\ref{representation44}),
(\ref{representation143}) and (\ref{representation144}) we obtain%
\begin{equation}
\left.
\begin{array}
[c]{c}%
{\displaystyle\int\limits_{\Lambda}}
\left(  \tilde{S}\left(  E\right)  f_{1},f_{2}\right)  _{\mathcal{H}\left(
E\right)  }dE\\
=2\pi i%
{\displaystyle\int\limits_{\Lambda}}
\left(  \Gamma_{0}\left(  E\right)  T_{+}^{\ast}R_{+}\left(  E\right)
T_{-}\Gamma_{0}\left(  E\right)  ^{\ast}f_{1},f_{2}\right)  _{\mathcal{H}%
\left(  E\right)  }dE-2\pi i%
{\displaystyle\int\limits_{\Lambda}}
\lim_{\mu\downarrow0}\left(  J_{+}^{\ast}T_{-}\delta_{\mu}\left(
H_{0}-E\right)  g_{1},\Gamma_{0}^{\ast}\left(  E\right)  \Gamma_{0}\left(
E\right)  g_{2}\right)  dE.
\end{array}
\right.  \label{representation190}%
\end{equation}
Note that the limit in the second term in the R.H.S. of
(\ref{representation190}) is equals to the limit in relation
(\ref{representation134}). Then applying Lemma \ref{representation168} to
calculate this limit we get%
\begin{equation}%
{\displaystyle\int\limits_{\Lambda}}
\left(  \tilde{S}\left(  E\right)  f_{1},f_{2}\right)  _{\mathcal{H}\left(
E\right)  }dE=%
{\displaystyle\int\limits_{\Lambda}}
\left(  \left(  S_{1}\left(  E\right)  +S_{2}\left(  E\right)  \right)
f_{1},f_{2}\right)  _{\mathcal{H}\left(  E\right)  }dE,
\label{representation205}%
\end{equation}
where $S_{1}\left(  E\right)  $ is given by relation (\ref{representation14})
and $S_{2}\left(  E\right)  $ is defined by (\ref{representation171}). Since
relation (\ref{representation205}) holds for all bounded $\Lambda
\subset(-\infty,-m)\cup(m,+\infty)$ and all $f_{1},f_{2}\in\mathcal{H}\left(
E\right)  \cap C^{\infty}\left(  \mathbb{S}^{2};\mathbb{C}^{4}\right)  $, we
obtain relation (\ref{representation30}).
\end{proof}

\begin{corollary}
\label{representation191}For any $\left\vert E\right\vert >m$ the scattering
matrix $S\left(  E\right)  $ satisfy the relation
\[
S\left(  E\right)  =S_{1}\left(  E\right)  +S_{2}\left(  E\right)  .
\]

\end{corollary}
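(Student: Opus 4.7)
\textbf{Proof proposal for Corollary \ref{representation191}.} The plan is to bridge the two scattering matrices $\tilde{S}(E)$ and $S(E)$ via the identity (\ref{representation206}) and then invoke Lemma \ref{representation135}. Recall that relation (\ref{representation206}) gives $\theta(\sqrt{H_{0}^{2}-m^{2}})\,\mathbf{S}\,\theta(\sqrt{H_{0}^{2}-m^{2}})=\mathbf{\tilde{S}}$, where the cutoff $\theta\in C^{\infty}(\mathbb{R}_{+})$ was constructed so that $\theta(t)=1$ for $t\geq c_{1}$, with $c_{1}<\nu(E)$. Both $\mathbf{S}$ and $\mathbf{\tilde{S}}$ commute with $H_{0}$, so each fiberizes in the spectral representation $\mathcal{F}_{0}$ of $H_{0}$ as multiplication by operator-valued functions, namely $S(\cdot)$ and $\tilde{S}(\cdot)$ respectively.

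The key step is to observe that the multiplication operator $\theta(\sqrt{H_{0}^{2}-m^{2}})$ acts in the fiber $\mathcal{H}(E)$ as the scalar $\theta(\nu(E))$; by the choice $c_{1}<\nu(E)$ in the definition of $\theta$ this scalar equals $1$. First I would apply $\mathcal{F}_{0}(\cdot)\mathcal{F}_{0}^{\ast}$ to both sides of (\ref{representation206}) and restrict to the fiber at the given energy $E$, obtaining $\theta(\nu(E))^{2}\,S(E)=\tilde{S}(E)$, i.e.\ $S(E)=\tilde{S}(E)$. Then I would invoke Lemma \ref{representation135}, which provides the stationary decomposition $\tilde{S}(E)=S_{1}(E)+S_{2}(E)$, to finish.

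There is essentially no obstacle beyond verifying the trivial bookkeeping for the cutoff: one must check that $\theta$, which was chosen energy-dependently in the definition of the identification operators $J_{\pm}=J_{\pm}^{(N)}$, is indeed evaluated at a point where it is identically $1$ when we restrict to the single energy $E$ under consideration. Since the construction of $J_{\pm}$ was explicitly tailored so that $c_{1}<\nu(E)$, this is automatic. The only conceptual point to emphasize is that the equality $S(E)=\tilde{S}(E)$ at the fixed energy $E$ is compatible with the fact that $\mathbf{S}\neq\mathbf{\tilde{S}}$ as global operators on $L^{2}$ (they differ at low energies where $\theta$ is not identically one); this is why the corollary is an assertion only at each individual energy $|E|>m$ rather than a global operator identity.
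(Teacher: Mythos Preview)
Your proposal is correct and follows essentially the same argument as the paper: use (\ref{representation206}) together with the fact that $\theta(\nu(E))=1$ (since the cutoff was constructed with $c_{1}<\nu(E)$) to conclude $S(E)=\tilde{S}(E)$, and then apply Lemma \ref{representation135}. The paper phrases the cutoff point as ``since $c_{1}$ in the definition of $\theta(t)$ is arbitrary,'' but this amounts to exactly the observation you make.
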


\begin{proof}
Recall that the scattering operators $\mathbf{S}$ and $\mathbf{\tilde{S}}$ are
related by equation (\ref{representation206}). As $\tilde{S}\left(  E\right)
$ satisfies equation (\ref{representation30}) and since $c_{1}$ in the
definition of $\theta\left(  t\right)  $ is arbitrary, we conclude that for
every $\left\vert E\right\vert >m$ there is $c_{1}$ such that the scattering
matrix $S\left(  E\right)  $ is equal to the scattering matrix $\tilde
{S}\left(  E\right)  .$ Thus, we get the relation (\ref{representation30})
also for $S\left(  E\right)  .$
\end{proof}

Theorem \ref{representation25} is now consequence of Corollary
\ref{representation191}, Theorem \ref{representation32} and Lemma
\ref{representation168}. Theorem \ref{representation179} follows from
Corollary \ref{representation191}, Theorem \ref{representation32} and Theorem
\ref{representation241}.

\section{Applications of the formula for the singularities of the kernel of
the scattering matrix.}

\subsection{High energy limit of the scattering matrix.}

Let us consider the principal part $S_{0}\left(  E\right)  ,$ of $S\left(
E\right)  .$ We take $N=0$ in the relation (\ref{representation26}). Then, the
kernel $s_{0}\left(  \omega,\theta;E\right)  $ of $S_{0}\left(  E\right)  $ is
given by $s_{0}\left(  \omega,\theta;E\right)  =\sum\limits_{O_{j}\cap
O_{k}\neq\varnothing}s_{0,jk}\left(  \omega,\theta;E\right)  $

Let us define the operator $\mathbf{P}\left(  E\right)  :L^{2}\left(
\mathbb{S}^{2};\mathbb{C}^{4}\right)  \rightarrow L^{2}\left(  \mathbb{S}%
^{2};\mathbb{C}^{4}\right)  $ by the relation $\left(  \mathbf{P}\left(
E\right)  f\right)  \left(  \omega\right)  :=P_{\omega}\left(  E\right)
f\left(  \omega\right)  .$ Moreover we denote $\mathbf{P}\left(  \pm
\infty\right)  :L^{2}\left(  \mathbb{S}^{2};\mathbb{C}^{4}\right)  \rightarrow
L^{2}\left(  \mathbb{S}^{2};\mathbb{C}^{4}\right)  $ by the relation $\left(
\mathbf{P}\left(  \pm\infty\right)  f\right)  \left(  \omega\right)
:=P_{\omega}\left(  \pm\infty\right)  f\left(  \omega\right)  ,$ where
$P_{\omega}\left(  \pm\infty\right)  =\frac{1}{2}\left(  1\pm\left(
\alpha\cdot\omega\right)  \right)  .$ Note that $\mathbf{P}\left(  E\right)  $
converges in $L^{2}\left(  \mathbb{S}^{2};\mathbb{C}^{4}\right)  $ norm to
$\mathbf{P}\left(  \pm\infty\right)  ,$ as $\pm E\rightarrow\infty.$ We prove
the following result.

\begin{proposition}
The operator $S_{0}\left(  E\right)  $ is uniformly bounded in $L^{2}\left(
\mathbb{S}^{2};\mathbb{C}^{4}\right)  ,$ for all $\left\vert E\right\vert \geq
E_{0}$, $E_{0}>m,$ and the following estimate holds%
\begin{equation}
\left\Vert S\left(  E\right)  -S_{0}\left(  E\right)  \right\Vert =O\left(
\left\vert E\right\vert ^{-1}\right)  ,\text{ }\left\vert E\right\vert
\rightarrow\infty. \label{representation77}%
\end{equation}
Moreover, $S\left(  E\right)  \mathbf{P}\left(  E\right)  $ converges
strongly$,$ as $\pm E\rightarrow\infty,$ in $L^{2}\left(  \mathbb{S}%
^{2};\mathbb{C}^{4}\right)  $ to the operator $S\left(  \pm\infty\right)
\mathbf{P}\left(  \pm\infty\right)  $ where $S\left(  \pm\infty\right)  $ is
the operator of multiplication by the function $\sum\limits_{O_{j}\cap
O_{k}\neq\varnothing}\chi_{j}\left(  \omega\right)  \chi_{k}\left(
\omega\right)  e^{-i\int_{-\infty}^{\infty}\left(  V\left(  t\omega\right)
\pm\left\langle \omega,A\left(  t\omega\right)  \right\rangle \right)  dt}.$
\end{proposition}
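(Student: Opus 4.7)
The plan is to establish the three claims sequentially. For the uniform $L^2$-boundedness of $S_0(E)$, I invoke Remark~\ref{representation75}: after local projection onto $\Pi_{\omega_{jk}}$ and the identification $\omega \in \Omega_\pm(\omega_{jk},\delta) \leftrightarrow \zeta \in \Sigma$, $S_0(E)$ is realized as a PDO on $\mathbb{S}^2$ whose amplitude $\tilde h_{0,jk}'$ belongs to $\mathcal S^0$ uniformly in $|E| \geq E_0 > m$. Uniformity of the $\mathcal S^0$-seminorms follows from the estimates (\ref{eig22}) on $\Phi^\pm$ and (\ref{eig27}) on $a_0^\pm = e^{i\Phi^\pm} P_\omega(E)$ together with $\nu(E) \geq \nu(E_0) > 0$, and Proposition~\ref{representation76} then yields the uniform $L^2$-bound.

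For the norm estimate $\|S(E)-S_0(E)\| = O(|E|^{-1})$, I combine Corollary~\ref{representation191}, Theorem~\ref{representation32} and Lemma~\ref{representation168}. Using Proposition~\ref{representation89} to absorb $\sum s_{00}^{(jk)}$ into the identity in Lemma~\ref{representation168} yields the clean decomposition
\[
S(E) \;=\; S_1(E) \;+\; \sum_{O_j\cap O_k\neq\varnothing} s_{N,jk}(\omega,\theta;E) \;+\; \mathcal R_1,
\]
with $\|S_1\| + \|\mathcal R_1\| = O(|E|^{-q})$ for any prescribed $q$ provided $N$ is chosen large enough. It thus suffices to estimate $\|\sum(s_{N,jk} - s_{0,jk})\|$. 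From the expansions (\ref{eig42})--(\ref{eig9}), together with Proposition~\ref{eig18}, one has $a_N^\pm - a_0^\pm = e^{i\Phi^\pm}|\xi|^{-1}(b_1^\pm + P_\omega(E)\, c_1^\pm) + O(|\xi|^{-2})$, so the difference $\mathbf h_{N,jk} - \mathbf h_{0,jk}$ is the amplitude of a semiclassical PDO whose symbol is of order $|\xi|^{-1}$. After the rescaling $y \mapsto y/\nu(E)$ implicit in Remark~\ref{representation75}, Proposition~\ref{representation76} gives operator norm $O(\nu(E)^{-1}) = O(|E|^{-1})$.

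For the strong convergence, the preceding step reduces the problem to proving $S_0(E)\mathbf P(E) f \to S(\pm\infty)\mathbf P(\pm\infty) f$ in $L^2(\mathbb S^2;\mathbb C^4)$ as $\pm E \to \infty$, for each $f$. Since $\mathbf P(E) \to \mathbf P(\pm\infty)$ in norm and $\|S_0(E)\|$ is uniformly bounded by Step~1, this further reduces to the strong convergence of the local semiclassical PDO's to multiplication by the pointwise limit of their amplitudes at $y=0$, $\zeta=\zeta'$. By identity (\ref{representation113}), $\operatorname{sgn}(E)\, P_\omega(E)(\alpha\cdot\omega_{jk}) P_\omega(E) = (\nu(E)/|E|)\langle \omega,\omega_{jk}\rangle P_\omega(E) \to \langle\omega,\omega_{jk}\rangle P_\omega(\pm\infty)$, and direct evaluation of (\ref{eig15}) gives
\[
-\Phi^+(0,\nu(E)\omega;E)+\Phi^-(0,\nu(E)\omega;E) \;\longrightarrow\; -\int_{-\infty}^{\infty}\bigl(V(t\omega) \pm \langle\omega,A(t\omega)\rangle\bigr)\,dt,\quad \pm E>m.
\]
The identity $\chi_j(\omega)\chi_k(\omega)\chi_{jk}(\omega,\omega)\langle\omega,\omega_{jk}\rangle = \chi_j(\omega)\chi_k(\omega)\sqrt{1-|\zeta|^2}$ from the proof of Proposition~\ref{representation89} cancels the Jacobian $(1-|\zeta|^2)^{-1/2}$ in $\tilde h'_{0,jk}$, and $\sum_{jk}\chi_j(\omega)\chi_k(\omega)=1$ collapses the local pieces into multiplication by the single exponential stated.

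The principal difficulty is this third step: justifying the semiclassical strong limit with an $E$-dependent amplitude. This requires uniform $\mathcal S^0$-bounds on $\tilde h'_{0,jk}(y/\nu(E),\cdot,\cdot;E)$ as $|E|\to\infty$ (extracted from Proposition~\ref{eig18}) and a standard stationary-phase argument, combined with careful bookkeeping of the cutoff functions so that the contributions from overlapping pieces of the partition of unity correctly assemble into the cutoff-free multiplication operator $S(\pm\infty)\mathbf P(\pm\infty)$.
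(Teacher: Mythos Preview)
Your proposal is correct and follows essentially the same approach as the paper. The paper's proof is somewhat terser: for the uniform boundedness it performs the rescaling $z=\nu(E)y$ and applies Proposition~\ref{representation76} directly; for the $O(|E|^{-1})$ estimate it uses the single amplitude bound (\ref{representation193}) on $\tilde h'_{N,jk}-\tilde h'_{0,jk}$ together with Proposition~\ref{representation76}, leaving the control of $S(E)-S_N(E)$ implicit in the earlier remainder theorems that you spell out explicitly; and for the strong limit it replaces your ``stationary-phase argument'' by the concrete integration-by-parts identity (\ref{representation207}), then passes to the limit under the resulting absolutely convergent integral and integrates back. One small simplification you can make: since $a_0^-(y,\nu(E)\theta;E)P_\theta(E)=a_0^-(y,\nu(E)\theta;E)$ by (\ref{eig36}), one already has $S_0(E)\mathbf P(E)=S_0(E)$, so the reduction via $\mathbf P(E)\to\mathbf P(\pm\infty)$ is unnecessary.
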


\begin{proof}
Using the notation of Remark \ref{representation75} and making the change
$z=\nu\left(  E\right)  y$ in the relation for $\tilde{s}_{0}\left(
\zeta,\zeta^{\prime};E\right)  ,$ we obtain $\tilde{s}_{0}\left(  \zeta
,\zeta^{\prime};E\right)  =\left(  2\pi\right)  ^{-2}\left(  \frac
{\upsilon\left(  E\right)  }{\nu\left(  E\right)  }\right)  ^{2}%
{\displaystyle\int\limits_{\Pi_{\omega_{jk}}}}
e^{i\left\langle z,\zeta^{\prime}-\zeta\right\rangle }\mathbf{\tilde{h}%
}_{N,jk}^{\prime}\left(  \frac{z}{\nu\left(  E\right)  },\zeta,\zeta^{\prime
};E\right)  dz.$ Since $\left\vert \partial_{z}^{\alpha}\partial_{\zeta
}^{\beta}\partial_{\zeta^{\prime}}^{\gamma}\mathbf{\tilde{h}}_{N,jk}^{\prime
}\left(  \frac{z}{\nu\left(  E\right)  },\zeta,\zeta^{\prime};E\right)
\right\vert \leq C_{\alpha,\beta,\,\gamma}\left\langle z\right\rangle
^{-\left\vert \alpha\right\vert },$ where $C_{\alpha,\beta,\,\gamma}$ is
independent on $E,$ for $\left\vert E\right\vert \geq E_{0}$, and
$\mathbf{\tilde{h}}_{N,jk}^{\prime}$ is a compact-supported function of
$\zeta$ and $\zeta^{\prime},$ it follows from Proposition
\ref{representation76} that $S_{0}\left(  E\right)  $ is uniformly bounded in
$L^{2}\left(  \mathbb{S}^{2};\mathbb{C}^{4}\right)  ,$ for $\left\vert
E\right\vert \geq E_{0}.$

Using (\ref{eig42}), (\ref{eig8}), (\ref{eig9}), (\ref{eig22}), (\ref{eig19}),
(\ref{eig20}) and (\ref{eig33}) we see that
\begin{equation}
\left\vert \partial_{z}^{\alpha}\partial_{\zeta}^{\beta}\partial
_{\zeta^{\prime}}^{\gamma}\left(  \mathbf{\tilde{h}}_{N,jk}^{\prime
}-\mathbf{\tilde{h}}_{0,jk}^{\prime}\right)  \left(  \frac{z}{\nu\left(
E\right)  },\zeta,\zeta^{\prime};E\right)  \right\vert \leq C_{\alpha
,\beta,\,\gamma}\nu\left(  E\right)  ^{-1}\left\langle z\right\rangle
^{-\rho-\left\vert \alpha\right\vert }, \label{representation193}%
\end{equation}
where $C_{\alpha,\beta,\,\gamma}$ is independent on $E,$ for $\left\vert
E\right\vert \geq E_{0}.$ Then, equality (\ref{representation77}) follows from
Proposition \ref{representation76}.

Now we calculate the limit of $S\left(  E\right)  $ as $\pm E\rightarrow
\infty.$ Note that integrating by parts on the variable $\zeta^{\prime}$ we
have $%
{\displaystyle\int_{\mathbb{S}^{2}}}
s_{0,jk}\left(  \omega,\theta;E\right)  $ $\times f\left(  \theta\right)
d\theta=\left(  2\pi\right)  ^{-2}%
{\displaystyle\int_{\Pi_{\omega_{jk}}}}
{\displaystyle\int_{\Pi_{\omega_{jk}}}}
e^{i\left\langle z,\zeta^{\prime}-\zeta\right\rangle }\left\langle
z\right\rangle ^{-n}\left\langle D_{\zeta^{\prime}}\right\rangle ^{n}\left(
\mathbf{\tilde{h}}_{0,jk}^{\prime}\left(  \frac{z}{\nu\left(  E\right)
},\zeta,\zeta^{\prime};E\right)  \tilde{f}\left(  \zeta^{\prime}\right)
\right)  d\zeta^{\prime}dz$, for even $n$ and any $f\in C^{\infty}.$ Thus,
taking the limit, as $\pm E\rightarrow\infty,$ in the R.H.S. of the last
relation and then integrating back by parts we get
\begin{equation}
\left.  \lim_{\pm E\rightarrow\infty}%
{\displaystyle\int_{\mathbb{S}^{2}}}
s_{0}\left(  \omega,\theta;E\right)  f\left(  \theta\right)  d\theta=\left(
2\pi\right)  ^{-2}%
{\displaystyle\int_{\Pi_{\omega_{jk}}}}
{\displaystyle\int_{\Pi_{\omega_{jk}}}}
e^{i\left\langle z,\zeta^{\prime}-\zeta\right\rangle }\mathbf{\tilde{h}%
}_{0,jk}^{\prime}\left(  0,\zeta,\zeta^{\prime};\infty\right)  \tilde
{f}\left(  \zeta^{\prime}\right)  d\zeta^{\prime}dz,\right.
\label{representation207}%
\end{equation}
where $\mathbf{\tilde{h}}_{0,jk}^{\prime}\left(  0,\zeta,\zeta^{\prime}%
;\pm\infty\right)  :=\pm\frac{\tilde{\chi}_{jk}\left(  \zeta,\zeta^{\prime
}\right)  \tilde{\chi}_{j}\left(  \zeta\right)  \tilde{\chi}_{k}\left(
\zeta^{\prime}\right)  }{\left(  1-\left\vert \zeta^{\prime}\right\vert
^{2}\right)  ^{1/2}}\tilde{P}_{\zeta}\left(  \pm\infty\right)  \left(
\alpha\cdot\omega_{jk}\right)  \tilde{P}_{\zeta^{\prime}}\left(  \pm
\infty\right)  .$ The integral in the variable $\zeta^{\prime}$ in
(\ref{representation207}) is the inverse of the Fourier transform of the
function $\mathbf{\tilde{h}}_{0,jk}^{\prime}\left(  0,\zeta,\zeta^{\prime
};\infty\right)  \tilde{f}\left(  \zeta^{\prime}\right)  .$ As $\mathbf{\tilde
{h}}_{0,jk}^{\prime}\left(  0,\zeta,\zeta^{\prime};\infty\right)  \tilde
{f}\left(  \zeta^{\prime}\right)  $ has a compact support in $\zeta^{\prime},$
then calculating the integral with respect to $\zeta^{\prime}$ in
(\ref{representation207}) we obtain a function of the variable $z$ that
belongs to $\mathcal{S}.$ Calculating the integral in $z$ we get back the
function $\mathbf{\tilde{h}}_{0,jk}^{\prime}\left(  0,\zeta,\zeta
;\infty\right)  \tilde{f}\left(  \zeta\right)  .$ From relation
(\ref{representation113}) we get
\begin{equation}
P_{\omega}\left(  \pm\infty\right)  \left(  \alpha\cdot\omega_{jk}\right)
P_{\omega}\left(  \pm\infty\right)  =\pm\left\langle \omega,\omega
_{jk}\right\rangle P_{\omega}\left(  \pm\infty\right)
.\ \label{representation251}%
\end{equation}
As $\pm\left\langle \omega,\omega_{jk}\right\rangle =\left(  1-\left\vert
\zeta\right\vert ^{2}\right)  ^{1/2},$ for $\omega\in\Omega_{\pm}\left(
\omega_{jk},\delta\right)  ,$ then $\left\langle \omega,\omega_{jk}%
\right\rangle \chi_{jk}\left(  \omega,\omega\right)  \chi_{j}\left(
\omega\right)  \chi_{k}\left(  \omega\right)  =\left(  1-\left\vert
\zeta\right\vert ^{2}\right)  ^{1/2}\chi_{j}\left(  \omega\right)  \chi
_{k}\left(  \omega\right)  .$ Therefore, using these relations in the
expression for $\mathbf{\tilde{h}}_{0,jk}^{\prime}\left(  0,\zeta,\zeta
;\pm\infty\right)  ,$ substituting the result in (\ref{representation207}),
and taking in account that $\sum\limits_{O_{j}\cap O_{k}\neq\varnothing}%
\chi_{j}\left(  \omega\right)  \chi_{k}\left(  \omega\right)  =1,$ we complete
the proof.
\end{proof}

\subsection{Leading diagonal singularity of the kernel of the scattering
matrix.}

Recall that for $\rho=\min\{\rho_{e},\rho_{m}\}>3,$ $s^{\operatorname{int}%
}\left(  \omega,\theta;E\right)  \in C^{0}\left(  \mathbb{S}^{2}%
\times\mathbb{S}^{2}\right)  ,$ where $s^{\operatorname{int}}$ is the kernel
of the operator $S\left(  E\right)  -I$ (see Theorem \ref{representation25}).
Let us now calculate the leading term on the diagonal of
$s^{\operatorname{int}}\left(  \omega,\theta;E\right)  ,$ for $1<\rho<3,$ as
$\omega-\theta\rightarrow0,$ with $E$ fixed$.$ For a fixed $\omega_{0}%
\in\mathbb{S}^{2},$ we take a cut-off function $\Psi_{+}\left(  \omega
,\theta;\omega_{0}\right)  ,$ supported on $\Omega_{+}\left(  \omega
_{0},\delta\right)  \times\Omega_{+}\left(  \omega_{0},\delta\right)  ,$ such
that it is equal to $1$ in $\Omega_{+}\left(  \omega_{0},\delta^{\prime
}\right)  $, for some $\delta^{\prime}>\delta.$ We define $\mathcal{V}%
_{V,A;\omega}^{\left(  E\right)  }\left(  x\right)  :=\frac{\left\vert
E\right\vert }{\nu\left(  E\right)  }V\left(  x\right)  +\left(
\operatorname*{sgn}E\right)  \left\langle \omega,A\left(  x\right)
\right\rangle .$ The following result is similar to Theorem 1.2 of \cite{47}
for the Schr\"{o}dinger equation:

\begin{proposition}
\label{representation231}Let the magnetic potential $A\left(  x\right)  $ and
the electric potential $V\left(  x\right)  $ satisfy the estimates
(\ref{eig31}) and (\ref{eig32}), with $1<\rho<3,$ respectively. Then, for all
fixed $\omega\in\mathbb{S}^{2}$ and $\theta\in\Omega_{+}\left(  \omega
,\delta^{\prime}\right)  ,$ $\omega\neq\theta,$ we have%
\begin{equation}
\left.  \left\vert \left(  s^{\operatorname{int}}\left(  \omega,\theta
;E\right)  -\frac{1}{i}\left(  2\pi\right)  ^{-1/2}\upsilon\left(  E\right)
^{2}\frac{\nu\left(  E\right)  }{\left\vert E\right\vert }\left(
\mathcal{FV}_{V,A;\omega}^{\left(  E\right)  }\right)  \left(  -\nu\left(
E\right)  \tilde{\theta}\right)  P_{\omega}\left(  E\right)  \right)
\right\vert \leq C\left\vert \omega-\theta\right\vert ^{-2+\rho_{1}},\right.
\label{representation102}%
\end{equation}
where $\tilde{\theta}=\theta-\left\langle \theta,\omega\right\rangle \omega,$
$\rho_{1}=2\left(  \rho-1\right)  ,$ if $\rho<2$ and $\rho_{1}=2-\varepsilon,$
with $\varepsilon>0,$ for $\rho=2.$ Here the constant $C$ is independent on
$\omega.$ If $\rho>2,$ then the difference in the L.H.S. of
(\ref{representation102}) is continuous$.$
\end{proposition}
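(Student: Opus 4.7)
The strategy is to apply Theorem \ref{representation179} with the distinguished direction $\omega_0=\omega$, isolate the leading contribution to $s_{\operatorname{sing}}^{(N)}$ by a first-order expansion of the phase factor $e^{-i\Phi^++i\Phi^-}$, and recognize the result as a three-dimensional Fourier transform of $\mathcal{V}_{V,A;\omega}^{(E)}$ via Fubini on $\mathbb{R}^3=\Pi_\omega\oplus\mathbb{R}\omega$. For fixed $\omega$ I would choose $\Psi_+(\cdot,\cdot;\omega)\in C_0^\infty(\mathbb{S}^2\times\mathbb{S}^2)$ equal to $1$ on a neighborhood of $(\omega,\omega)$ contained in $\Omega_+(\omega,\delta')\times\Omega_+(\omega,\delta')$, with $C^p$-seminorms uniformly bounded in $\omega$ (e.g.\ translating a fixed profile by a rotation). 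Taking $N$ sufficiently large, Theorem \ref{representation179} gives $\Psi_+(\omega,\theta;\omega)\,s(\omega,\theta;E)-s_{\operatorname{sing}}^{(N)}(\omega,\theta;E;\omega)\in C^p$ with uniform bounds. Since the kernel of $I$ is supported on $\{\omega=\theta\}$, for $\theta\ne\omega$ we have $s^{\operatorname{int}}(\omega,\theta;E)=s(\omega,\theta;E)$, and the problem reduces to extracting the leading singularity of $s_{\operatorname{sing}}^{(N)}$.

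Writing $a_N^\pm=e^{i\Phi^\pm}w_N^\pm$ with $w_N^\pm=P_\omega(E)+\sum_{j=1}^N|\xi|^{-j}(b_j^\pm+P_\omega(E)c_j^\pm)$, the corrections with $j\ge 1$ produce, by Proposition \ref{eig18}, amplitudes on $\Pi_\omega$ decaying like $\langle y\rangle^{-\rho-j+1}$. Their Fourier transforms on the $2$-plane, controlled by Lemma \ref{representation201}, behave as $|\tilde\theta|^{-(2-(\rho+j-1))}=O(|\omega-\theta|^{\rho+j-3})$, which for $j\ge 1$ is strictly less singular than the target error $|\omega-\theta|^{-(3-\rho)}$ (and than $|\omega-\theta|^{-2+\rho_1}$ whenever $\rho>1$), so they fit inside the error term. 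The principal contribution is therefore
\[
(\operatorname{sgn}E)(2\pi)^{-2}\upsilon(E)^2\int_{\Pi_\omega}e^{i\nu(E)\langle y,\tilde\theta\rangle}\,e^{-i\Phi^+(y,\nu(E)\omega;E)+i\Phi^-(y,\nu(E)\theta;E)}\,P_\omega(E)(\alpha\cdot\omega)P_\theta(E)\,dy,
\]
where I used $y\in\Pi_\omega$ and $\theta-\omega-\tilde\theta\in\mathbb{R}\omega$ to simplify the exterior phase. Expanding $e^{-i\Phi^++i\Phi^-}=1+i(\Phi^--\Phi^+)+R$ with $|R|\lesssim\langle y\rangle^{-2(\rho-1)}$ by (\ref{eig22}), the ``$1$'' piece is exactly an $s_{00}$-type contribution, and by Proposition \ref{representation89} it reconstructs the identity on $\mathcal{H}(E)$ and hence vanishes off the diagonal. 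The linear term, after substituting $\theta\to\omega$ inside (\ref{eig15}) (a Taylor correction of size $O(|\omega-\theta|\langle y\rangle^{-(\rho-1)})$ absorbable into the error) becomes $-i\int_{-\infty}^{\infty}\mathcal{V}_{V,A;\omega}^{(E)}(y+t\omega)\,dt$; integrating against $e^{i\nu(E)\langle y,\tilde\theta\rangle}$ and applying Fubini produces $(2\pi)^{3/2}(\mathcal{F}\mathcal{V}_{V,A;\omega}^{(E)})(-\nu(E)\tilde\theta)$. Finally, (\ref{representation113}) with $\omega_{jk}=\omega$ gives $P_\omega(E)(\alpha\cdot\omega)P_\omega(E)=(\nu(E)/E)P_\omega(E)$, and the prefactor $(\operatorname{sgn}E)\cdot(2\pi)^{-2}\upsilon(E)^2\cdot(-i)(2\pi)^{3/2}\cdot(\nu(E)/E)$ collapses to $\frac{1}{i}(2\pi)^{-1/2}\upsilon(E)^2(\nu(E)/|E|)$, recovering the asserted main term. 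Replacing $P_\theta(E)$ by $P_\omega(E)$ contributes an additional $O(|\omega-\theta|)$ factor which is likewise subleading.

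The remaining task is to bound the quadratic remainder $R$, whose contribution to $s_{\operatorname{sing}}^{(N)}$ is an oscillatory integral on the two-dimensional plane $\Pi_\omega$ of an amplitude bounded by $\langle y\rangle^{-\rho'}$ with $\rho'=2(\rho-1)$. For $1<\rho<2$ one applies Lemma \ref{representation201} on $\mathbb{R}^2$ with exponent $\rho'\in(0,2)$, obtaining $C|\tilde\theta|^{-(2-\rho')}=C|\omega-\theta|^{-2+\rho_1}$; for $\rho>2$ the amplitude is integrable and the integral extends continuously to the diagonal. The main obstacle is the borderline $\rho=2$, where $\rho'=2$ is excluded from Lemma \ref{representation201} and a naive application diverges logarithmically; this is overcome by a dyadic truncation splitting $\Pi_\omega$ into $\{|y|\le|\omega-\theta|^{-1}\}$ (bounded by an $\varepsilon$-loss coming from the logarithm) and $\{|y|>|\omega-\theta|^{-1}\}$ (where repeated integration by parts in the $\tilde\theta$-direction yields arbitrary decay), producing the required $|\omega-\theta|^{-\varepsilon}$ bound. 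Uniformity of the constant $C$ in $\omega$ is preserved throughout because the seminorms of $\Psi_+$, the constants in Proposition \ref{eig18} and Lemma \ref{representation201}, and the uniform estimate in Theorem \ref{representation179} are all rotation-invariant.
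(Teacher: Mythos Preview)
Your proposal is correct and follows essentially the same route as the paper: apply Theorem~\ref{representation179} with $\omega_0=\omega$, expand $e^{-i\Phi^++i\Phi^-}$ to first order (the paper packages this as $\mathbf{h}_N-(\operatorname{sgn}E)P_\omega(\alpha\cdot\omega)P_\theta-h\in\mathcal{S}^{-\rho_1}$ and then estimates $h+i\tfrac{\nu(E)}{|E|}R\,P_\omega$ via the mean-value bound (\ref{representation93})--(\ref{representation92})), control all remainders by Lemma~\ref{representation201}, simplify the matrix factor via (\ref{representation113}), and identify the planar integral with the 3D Fourier transform. Two minor points: your citation of Proposition~\ref{representation89} for the ``$1$'' piece is not quite apt (that result is about the full partition sum), though the needed fact---that the constant-amplitude oscillatory integral over $\Pi_\omega$ vanishes for $\tilde\theta\neq 0$---is elementary; and your dyadic argument at $\rho=2$ is unnecessary, since one may simply weaken the remainder decay from $\langle y\rangle^{-2}$ to $\langle y\rangle^{-(2-\varepsilon)}$ and invoke Lemma~\ref{representation201} directly, which is what the paper does.
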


\begin{proof}
Note that
\begin{equation}
\left\vert \omega-\theta\right\vert ^{2}=2\left(  1-\left\langle \theta
,\omega\right\rangle \right)  =2\frac{\left\vert \tilde{\theta}\right\vert
^{2}}{1+\sqrt{1-\left\vert \tilde{\theta}\right\vert ^{2}}}.
\label{representation99}%
\end{equation}
Let us define $h\left(  y,\omega,\theta;E\right)  :=-i\left(
\operatorname*{sgn}E\right)  \left(  \Phi^{+}\left(  y,\nu\left(  E\right)
\omega;E\right)  -\Phi^{-}\left(  y,\nu\left(  E\right)  \theta;E\right)
\right)  P_{\omega}\left(  E\right)  \left(  \alpha\cdot\omega\right)
P_{\theta}\left(  E\right)  .$ Putting $\omega=\omega_{0}$ in
(\ref{representation252}) and using estimates (\ref{eig22}), (\ref{eig19}) and
(\ref{eig20}) we have $\left(  \mathbf{h}_{N}-\left(  \operatorname*{sgn}%
E\right)  P_{\omega}\left(  E\right)  \left(  \alpha\cdot\omega\right)
P_{\theta}\left(  E\right)  -h\right)  \in\mathcal{S}^{-\rho_{1}}.$ Then,
decomposing $\theta\neq\omega$ as $\theta=\left\langle \theta,\omega
\right\rangle \omega+\tilde{\theta},$ $\tilde{\theta}\in\Pi_{\omega}\ $and
using Lemma \ref{representation201} and (\ref{representation99}) we get
\begin{equation}
\left.
\begin{array}
[c]{c}%
\left\vert \left(  2\pi\right)  ^{-2}\upsilon\left(  E\right)  ^{2}%
\int\limits_{\Pi_{\omega}}e^{i\nu\left(  E\right)  \left\langle y,\tilde
{\theta}\right\rangle }\left(  \mathbf{h}_{N}\left(  y,\omega,\theta
;E;\omega\right)  -\left(  \operatorname*{sgn}E\right)  P_{\omega}\left(
E\right)  \left(  \alpha\cdot\omega\right)  P_{\theta}\left(  E\right)
-h\left(  y,\omega,\theta;E\right)  \right)  dy\right\vert \\
\leq\left\{
\begin{array}
[c]{c}%
C\left\vert \omega-\theta\right\vert ^{-2+\rho_{1}},\text{ }\rho<2,\\
C\left\vert \omega-\theta\right\vert ^{-\varepsilon},\text{ }\varepsilon
>0,\text{ }\rho=2.
\end{array}
\right.
\end{array}
\right.  \label{representation96}%
\end{equation}
Moreover, for $\rho>2,$ as the integral in the L.H.S. of
(\ref{representation96}) is absolutely convergent, it is a continuous function
of $\omega$ and $\theta.$

Let us show that for all $\alpha,$
\begin{equation}
\left\vert \int\limits_{0}^{\infty}\partial_{y}^{\alpha}\left(  \mathcal{V}%
_{V,A;\omega}^{\left(  E\right)  }\left(  y\pm t\omega\right)  -\mathcal{V}%
_{V,A;\theta}^{\left(  E\right)  }\left(  y\pm t\theta\right)  \right)
dt\right\vert \leq C_{\alpha}\left\vert \omega-\theta\right\vert \left\langle
y\right\rangle ^{-\left(  \rho-1\right)  -\left\vert \alpha\right\vert },
\label{representation93}%
\end{equation}
Since $\mathcal{V}_{V,A;\omega}^{\left(  E\right)  }\left(  y\pm
t\theta\right)  -\mathcal{V}_{V,A;\theta}^{\left(  E\right)  }\left(  y\pm
t\theta\right)  =\left(  \operatorname*{sgn}E\right)  \left\langle
\omega-\theta,A\left(  y\pm t\theta\right)  \right\rangle ,$ and $A$ satisfies
the estimate (\ref{eig31}), then it is enough to prove the following relation
\begin{equation}
\left\vert \int\limits_{0}^{\infty}\partial_{y}^{\alpha}\left(  \mathcal{V}%
_{V,A;\omega}^{\left(  E\right)  }\left(  y\pm t\omega\right)  -\mathcal{V}%
_{V,A;\omega}^{\left(  E\right)  }\left(  y\pm t\theta\right)  \right)
dt\right\vert \leq C_{\alpha}\left\vert \omega-\theta\right\vert \left\langle
y\right\rangle ^{-\left(  \rho-1\right)  -\left\vert \alpha\right\vert }.
\label{representation242}%
\end{equation}
First take $\alpha=0.$ Using the mean value theorem we have%
\begin{equation}
\left.  \mathcal{V}_{V,A;\omega}^{\left(  E\right)  }\left(  y\pm
t\omega\right)  -\mathcal{V}_{V,A;\omega}^{\left(  E\right)  }\left(  y\pm
t\theta\right)  =\pm t\left\langle \left(  \left(  \triangledown
\mathcal{V}_{V,A;\omega}^{\left(  E\right)  }\right)  \left(  \pm ct\left(
\theta-\omega\right)  +\left(  y\pm t\omega\right)  \right)  \right)
,\omega-\theta\right\rangle ,\right.  \label{representation94}%
\end{equation}
for some $0\leq c\leq1.$ Estimates (\ref{eig31}) and (\ref{eig32}) for $A$ and
$V$ imply
\begin{equation}
\left.  \left\vert \left(  \triangledown\mathcal{V}_{V,A;\omega}^{\left(
E\right)  }\right)  \left(  \pm ct\left(  \theta-\omega\right)  +\left(  y\pm
t\omega\right)  \right)  \right\vert \leq C\left(  1+\left\vert ct\left(
\theta-\omega\right)  \pm\left(  y\pm t\omega\right)  \right\vert \right)
^{-\rho-1}.\right.  \label{representation95}%
\end{equation}
Let us take $\left\vert \tilde{\theta}\right\vert \leq\sqrt{1-\delta^{2}}$.
Then, for $\delta$ close enough to $1,$ we get $\left\vert ct\left(
\theta-\omega\right)  \pm\left(  y\pm t\omega\right)  \right\vert ^{2}%
=c^{2}t^{2}\left\vert \theta-\omega\right\vert ^{2}\pm2ct\left\vert
y\right\vert \left\langle \hat{y},\theta\right\rangle -2ct^{2}+2ct^{2}%
\left\langle \omega,\theta\right\rangle +\left\vert y\right\vert ^{2}+t^{2}%
~$\ $\geq\left(  1-\sqrt{1-\delta^{2}}\right)  \left\vert y\right\vert
^{2}+t^{2}\left(  1-\eta+2\left(  \delta-1\right)  \right)  \geq c_{1}\left(
\left\vert y\right\vert ^{2}+t^{2}\right)  ,$ for some $c_{1}>0.$ Using this
estimate in (\ref{representation95}) and substituting the resulting inequality
in (\ref{representation94}) we obtain estimate (\ref{representation242}), and
hence, relation (\ref{representation93}), for $\alpha=0$. The proof of
(\ref{representation93}) for $\alpha>0$ is analogous.

From (\ref{representation113}) we have that $\frac{\nu\left(  E\right)
}{\left\vert E\right\vert }P_{\omega}\left(  E\right)  =\left(
\operatorname*{sgn}E\right)  P_{\omega}\left(  E\right)  \left(  \alpha
\cdot\omega\right)  P_{\omega}\left(  E\right)  .$ Then, using $\left\vert
\partial_{y}^{\alpha}(-i\left(  \operatorname*{sgn}E\right)  (\Phi^{+}\left(
y,\nu\left(  E\right)  \omega;E\right)  \right.  $ $\left.  -\Phi^{-}\left(
y,\nu\left(  E\right)  \theta;E\right)  ))P_{\omega}\left(  E\right)  \left(
\alpha\cdot\omega\right)  (P_{\theta}\left(  E\right)  -P_{\omega}\left(
E\right)  )\right\vert \leq C_{\alpha}\left\vert \omega-\theta\right\vert
\left\langle y\right\rangle ^{-\left(  \rho-1\right)  -\left\vert
\alpha\right\vert }$ and (\ref{representation93}) we get, for all $\alpha,$%
\begin{equation}
\left\vert \partial_{y}^{\alpha}\left(  h\left(  y,\omega,\theta;E\right)
-i\frac{\nu\left(  E\right)  }{\left\vert E\right\vert }R\left(
y,\omega;E\right)  P_{\omega}\left(  E\right)  \right)  \right\vert \leq
C_{\alpha}\left\vert \omega-\theta\right\vert \left\langle y\right\rangle
^{-\left(  \rho-1\right)  -\left\vert \alpha\right\vert },
\label{representation92}%
\end{equation}
where%
\begin{equation}
R\left(  y,\omega;E\right)  :=\int\limits_{-\infty}^{\infty}\left(
\mathcal{V}_{V,A;\omega}^{\left(  E\right)  }\left(  y+t\omega\right)
\right)  dt. \label{representation219}%
\end{equation}

Using (\ref{representation92}), Lemma \ref{representation201} and
(\ref{representation99}) we have%
\begin{equation}
\left.  \left\vert \int\nolimits_{\Pi_{\omega}}e^{i\nu\left(  E\right)
\left\langle y,\tilde{\theta}\right\rangle }\left(  h\left(  y,\omega
,\theta;E\right)  +i\frac{\nu\left(  E\right)  }{\left\vert E\right\vert
}R\left(  y,\omega;E\right)  P_{\omega}\left(  E\right)  \right)
dy\right\vert \leq C\left\vert \omega-\theta\right\vert ^{-2+\rho}\right.  .
\label{representation97}%
\end{equation}
Then, relation (\ref{representation102}) follows from Theorem
\ref{representation179}, estimates (\ref{representation96}),
(\ref{representation97}) and equation $-i\left(  2\pi\right)  ^{-2}%
\upsilon\left(  E\right)  ^{2}\int\limits_{\Pi_{\omega}}e^{i\nu\left(
E\right)  \left\langle y,\tilde{\theta}\right\rangle }R\left(  y,\omega
;E\right)  dy$ $=-i\left(  2\pi\right)  ^{-1/2}\upsilon\left(  E\right)
^{2}\left(  \mathcal{FV}_{V,A;\omega}^{\left(  E\right)  }\right)  \left(
-\nu\left(  E\right)  \tilde{\theta}\right)  .$
\end{proof}

\begin{rem}
\rm{Suppose that $V,A\in C^{\infty}$ are such that $V=V_{0}\left\vert
x\right\vert ^{-\rho}\ $and $A=A_{0}\left\vert x\right\vert ^{-\rho},$
$1<\rho<3,$ for $\left\vert x\right\vert \geq R,$ for some $R>0,$ and $V_{0}$
is a real constant and $A_{0}$ is a constant, real vector, satisfying
$V_{0}+\left\langle \omega,A_{0}\right\rangle $ $\neq0,$ for all $\omega\in
S^{2}.$ Since $FV_{V,A;\omega}^{\left(  E\right)  }=F\left(  \mathcal{V}%
_{V,A;\omega}^{\left(  E\right)  }\mathbf{-}\left(  V_{0}+\left\langle
\omega,A_{0}\right\rangle \right)  \left\vert x\right\vert ^{-\rho}\right)
+\left(  V_{0}+\left\langle \omega,A_{0}\right\rangle \right)  F\left(
\left\vert x\right\vert ^{-\rho}\right)  $ and $F\left(  \left\vert
x\right\vert ^{-\rho}\right)  =2^{3-\rho}\pi^{\frac{3}{2}}\frac{\Gamma\left(
\frac{3-\rho}{2}\right)  }{\Gamma\left(  \frac{\rho}{2}\right)  }\left\vert
\xi\right\vert ^{-\left(  3-\rho\right)  }=4\pi\rho\left(  \rho-1\right)
\Gamma\left(  -\rho\right)  \left(  \sin\frac{\pi\rho}{2}\right)  \left\vert
\xi\right\vert ^{-\left(  3-\rho\right)  },$ where $\Gamma$ is the Gamma
function (see \cite{76}), then as in the non-relativistic case \cite{47},
relations (\ref{representation102}) and (\ref{representation99}) imply that
the estimate (\ref{representation83}) is optimal. This implies that the
relation $\left\vert s^{\operatorname{int}}\left(  \omega,\theta;E\right)
\right\vert \leq C\left\vert \omega-\theta\right\vert ^{-3+\rho}$ is the best
possible.}
\end{rem}

\subsection{The scattering cross-section.}

As before let $s^{\operatorname{int}}$ be the kernel of the operator $S\left(
E\right)  -I.$ Let us consider the principal part $\mathcal{G}_{0}\left(
E\right)  ,$ of $S\left(  E\right)  -I.$ We take $N=0$ in the relation
(\ref{representation181}). Then, using the definition (\ref{eig33}) and
relations (\ref{eig42}), (\ref{eig8}) and (\ref{eig9}) we get that the kernel
$\mathbf{g}_{0}^{\operatorname{int}}\left(  \omega,\theta;E\right)  $ of
$\mathcal{G}_{0}\left(  E\right)  $ is given by%
\begin{equation}
\left.  \mathbf{g}_{0}^{\operatorname{int}}\left(  \omega,\theta;E\right)
=\sum\limits_{O_{j}\cap O_{k}\neq\varnothing}\mathbf{g}_{0,jk}%
^{\operatorname{int}}\left(  \omega,\theta;E\right)  ,\right.
\label{representation80}%
\end{equation}
with $\mathbf{g}_{0,jk}^{\operatorname{int}}\left(  \omega,\theta;E\right)
=\left(  2\pi\right)  ^{-2}\upsilon\left(  E\right)  ^{2}\chi_{jk}\left(
\omega,\theta\right)  \chi_{j}\left(  \omega\right)  \chi_{k}\left(
\theta\right)
{\displaystyle\int_{\Pi_{\omega_{jk}}}}
e^{i\nu\left(  E\right)  \left\langle y,\theta-\omega\right\rangle }%
\mathbf{h}_{pr}\left(  y,\omega,\theta;E\right)  dy,$ and $\mathbf{h}%
_{pr}\left(  y,\omega,\theta;E\right)  :=\left(  \operatorname*{sgn}E\right)
$ $\times\left(  e^{-i\Phi^{+}\left(  y,\nu\left(  E\right)  \omega;E\right)
+i\Phi^{-}\left(  y,\nu\left(  E\right)  \theta;E\right)  }-1\right)
P_{\omega}\left(  E\right)  \left(  \alpha\cdot\omega_{jk}\right)  P_{\theta
}\left(  E\right)  .$

If $\rho>3,~$then Theorem \ref{representation25} assures that $\mathbf{g}%
_{N,jk}^{\operatorname{int}}\left(  \omega,\theta;E\right)  $ is a continuous
function of $\omega$ and$\ \theta$. Thus, we can consider the limit of
$\mathbf{g}_{0,jk}^{\operatorname{int}}\left(  \omega,\theta;E\right)  $ as
$\left\vert E\right\vert \rightarrow\infty$ on the diagonal $\omega
=\theta=\omega_{jk}.$ Taking in account (\ref{representation193}) and relation
(\ref{representation80}), and using (\ref{representation251}), we have
\begin{equation}
\lim_{\pm E\rightarrow\infty}\left\Vert \frac{s^{\operatorname{int}}\left(
\omega,\omega;E\right)  }{\upsilon\left(  E\right)  ^{2}}-\left(  2\pi\right)
^{-2}\int\limits_{\Pi_{\omega}}m_{\pm}\left(  y,\omega\right)  dy\right\Vert
_{\mathcal{B}\left(  X^{\pm}\left(  \nu\left(  E\right)  \omega\right)
\right)  }=0, \label{representation79}%
\end{equation}
where $m_{\pm}\left(  y,\omega\right)  =\left(  e^{-i\int_{-\infty}^{\infty
}\left(  V\left(  y+t\omega\right)  \pm\left\langle \omega,A\left(
y+t\omega\right)  \right\rangle \right)  dt}-1\right)  .$ Equality
(\ref{representation79}) was proved in \cite{15} by studying the high-energy
limit of the resolvent.

Now let us prove the following result

\begin{proposition}
\label{representation210}Suppose that $\rho>2.$ Then, the function
$s^{\operatorname{int}}\left(  \omega,\theta;E\right)  +s^{\operatorname{int}%
}\left(  \theta,\omega;E\right)  ^{\ast}$ is continuous on $\mathbb{S}%
^{2}\times\mathbb{S}^{2}.$
\end{proposition}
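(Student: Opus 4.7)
The plan is to exploit the Optical Theorem \eqref{representation103} in order to rewrite
\[
s^{\operatorname{int}}(\omega,\theta;E)+s^{\operatorname{int}}(\theta,\omega;E)^{\ast}=-\int_{\mathbb{S}^{2}}s^{\operatorname{int}}(\eta,\omega;E)^{\ast}\,s^{\operatorname{int}}(\eta,\theta;E)\,d\eta,
\]
and then to prove that, under the hypothesis $\rho>2$, the right-hand side is a continuous function on $\mathbb{S}^{2}\times\mathbb{S}^{2}$. This converts the question of cancellation of diagonal singularities into a convolution-type estimate on the sphere.

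First I would apply Theorem \ref{representation25} to decompose $s^{\operatorname{int}}=\mathbf{g}_{N}+r_{N}$, choosing $N$ large enough so that $r_{N}\in C^{1}(\mathbb{S}^{2}\times\mathbb{S}^{2})$, while the singular part obeys the bound $|\mathbf{g}_{N}(\omega,\theta;E)|\leq C|\omega-\theta|^{-a}$ from \eqref{representation83} with $a:=3-\rho$. The case $\rho\geq 3$ is trivial since $\mathbf{g}_{N}$ is then already continuous, so I focus on $2<\rho<3$, in which case $0<a<1$. Expanding the product under the integral gives four terms. The three of them that contain at least one factor of $r_{N}$ are immediately continuous in $(\omega,\theta)$ by dominated convergence, since $r_{N}$ is bounded continuous while $\eta\mapsto\mathbf{g}_{N}(\eta,\omega;E)$ is in $L^{1}(\mathbb{S}^{2})$ uniformly in $\omega$ (because $a<1$ implies $\int_{\mathbb{S}^{2}}|\eta-\omega|^{-a}\,d\eta$ is uniformly bounded).

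The essential term is
\[
F(\omega,\theta):=\int_{\mathbb{S}^{2}}\mathbf{g}_{N}(\eta,\omega;E)^{\ast}\,\mathbf{g}_{N}(\eta,\theta;E)\,d\eta.
\]
Its integrand is continuous in $(\omega,\theta)$ for $\eta\notin\{\omega,\theta\}$ (the oscillatory integrals defining $\mathbf{g}_{N,jk}^{\operatorname{int}}$ in \eqref{representation26} are smooth off the diagonal, as follows from integrating by parts using the non-stationary phase, or equivalently from $\mathbf{g}_{N}=s^{\operatorname{int}}-r_{N}$ combined with the known smoothness of $s^{\operatorname{int}}$ off the diagonal granted by the limiting absorption principle). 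The arithmetic--geometric mean inequality furnishes the uniform integrable majorant
\[
|\mathbf{g}_{N}(\eta,\omega;E)^{\ast}\,\mathbf{g}_{N}(\eta,\theta;E)|\leq \tfrac{C}{2}\bigl(|\eta-\omega|^{-2a}+|\eta-\theta|^{-2a}\bigr),
\]
which is integrable over $\mathbb{S}^{2}$ uniformly in $(\omega,\theta)$ because $2a<2=\dim\mathbb{S}^{2}$. Continuity of $F$ at a point $(\omega_{0},\theta_{0})$ then follows from a standard Vitali-type splitting: given $\varepsilon>0$, pick small $\delta>0$, cut the sphere into $B(\omega_{0},\delta)\cup B(\theta_{0},\delta)$ and its complement, control the small-ball part by the uniform bound $\int_{B(\omega_{0},\delta)}|\eta-\omega|^{-2a}d\eta\leq C\delta^{2-2a}$ (and similarly for $\theta$), and use uniform convergence of the integrand on the complement as $(\omega,\theta)\to(\omega_{0},\theta_{0})$.

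The main obstacle I anticipate is the uniform control of the small-ball contribution when $\omega$ and $\theta$ are close to each other, since then both singularities concentrate in the same region. However, the AM--GM majorant decouples them, reducing the estimate to the single-singularity bound $\int_{B(\omega_{0},\delta)}|\eta-\omega|^{-2a}d\eta\leq C\delta^{2-2a}$, which is stable under small perturbations of $\omega$ precisely because $2a<2$; no finer analysis is required. Combining the three continuity statements for the four terms yields the proposition.
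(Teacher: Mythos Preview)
Your argument is correct and takes a genuinely different route from the paper. The paper works directly with the explicit principal part $\mathbf{g}_{0,jk}^{\operatorname{int}}$: after reducing to that term (using $\mathbf{h}_{N,jk}-\mathbf{h}_{0,jk}\in\mathcal{S}^{-\rho}$ and $\rho>2$), it computes $\mathbf{g}_{0,jk}^{\operatorname{int}}(\omega,\theta)+\mathbf{g}_{0,jk}^{\operatorname{int}}(\theta,\omega)^{\ast}$ explicitly in terms of the phases $\Phi^{\pm}$, obtaining a $(\cos(\,\cdot\,)-1)$ term whose integral over $\Pi_{\omega_{jk}}$ converges absolutely for $\rho>2$, plus a difference-of-phases term that is controlled by the estimate \eqref{representation93} (a Lipschitz-type bound in $|\omega-\theta|$) together with Lemma~\ref{representation201}. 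In other words, the paper exhibits the cancellation mechanism by hand.

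Your approach is softer: you invoke unitarity via the Optical Theorem \eqref{representation103} and then only need the diagonal singularity bound \eqref{representation83} plus the off-diagonal continuity of $s^{\operatorname{int}}$, both of which are already packaged in Theorems~\ref{representation179} and \ref{representation25}. The AM--GM majorant with exponent $2a=2(3-\rho)<2$ is exactly the right device, and your Vitali-type splitting is the standard way to pass from a uniformly integrable majorant to continuity of the integral. The advantage of your route is robustness: it uses no structural information about the phases $\Phi^{\pm}$ and would apply to any unitary scattering matrix whose off-identity kernel obeys $|\omega-\theta|^{-a}$ with $a<1$. The paper's direct computation, on the other hand, yields more: it identifies the diagonal value explicitly (this is used immediately afterwards to derive the high-energy asymptotics of the cross section $\sigma(\theta;E;u)$), something your argument does not give. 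One small comment: for the boundary case $\rho=3$ the bound \eqref{representation83} is not stated, but you can simply apply it with any $\rho'<3$ close to $3$ and the same argument goes through.
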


\begin{proof}
It follows from estimates (\ref{eig22}), (\ref{eig19}) and (\ref{eig20}), and
definition (\ref{eig33}) that $\mathbf{h}_{N,jk}-\mathbf{h}_{0,jk}%
\in\mathcal{S}^{-\rho}.$ Then, if $\rho>2,$ we have $\chi_{j}\left(
\omega\right)  s\left(  \omega,\theta;E\right)  \chi_{k}\left(  \theta\right)
-\mathbf{g}_{0,jk}^{\operatorname{int}}\left(  \omega,\theta;E\right)  \in
C^{0}\left(  \mathbb{S}^{2}\times\mathbb{S}^{2}\right)  .$ Thus, we only need
to show that the sum $\mathbf{g}_{0,jk}^{\operatorname{int}}\left(
\omega,\theta;E\right)  +\mathbf{g}_{0,jk}^{\operatorname{int}}\left(
\theta,\omega;E\right)  ^{\ast}$ is continuous on $\mathbb{S}^{2}%
\times\mathbb{S}^{2}.$ From the definition of $\mathbf{g}_{0,jk}%
^{\operatorname{int}}$ we have%
\begin{equation}
\left.
\begin{array}
[c]{c}%
\mathbf{g}_{0,jk}^{\operatorname{int}}\left(  \omega,\theta;E\right)
+\mathbf{g}_{0,jk}^{\operatorname{int}}\left(  \theta,\omega;E\right)  ^{\ast
}=\left(  \operatorname*{sgn}E\right)  \left(  2\pi\right)  ^{-2}%
\upsilon\left(  E\right)  ^{2}\chi_{jk}\left(  \omega,\theta\right)  \chi
_{j}\left(  \omega\right)  \chi_{k}\left(  \theta\right) \\
\times%
{\displaystyle\int\limits_{\Pi_{\omega_{jk}}}}
e^{i\nu\left(  E\right)  \left\langle y,\theta-\omega\right\rangle }\left(
e^{-i\Phi^{+}\left(  y,\nu\left(  E\right)  \omega;E\right)  +i\Phi^{-}\left(
y,\nu\left(  E\right)  \theta;E\right)  }+e^{i\Phi^{+}\left(  y,\nu\left(
E\right)  \omega;E\right)  -i\Phi^{-}\left(  y,\nu\left(  E\right)
\theta;E\right)  }-2\right. \\
\left.  +e^{i\Phi^{+}\left(  y,\nu\left(  E\right)  \theta;E\right)
-i\Phi^{-}\left(  y,\nu\left(  E\right)  \omega;E\right)  }-e^{i\Phi
^{+}\left(  y,\nu\left(  E\right)  \omega;E\right)  -i\Phi^{-}\left(
y,\nu\left(  E\right)  \theta;E\right)  }\right)  P_{\omega}\left(  E\right)
\left(  \alpha\cdot\omega_{jk}\right)  P_{\theta}\left(  E\right)  dy.
\end{array}
\right.  \label{representation211}%
\end{equation}
Note that%
\begin{equation}
\left.
\begin{array}
[c]{c}%
{\displaystyle\int\limits_{\Pi_{\omega_{jk}}}}
e^{i\nu\left(  E\right)  \left\langle y,\theta-\omega\right\rangle }\left(
e^{-i\Phi^{+}\left(  y,\nu\left(  E\right)  \omega;E\right)  +i\Phi^{-}\left(
y,\nu\left(  E\right)  \theta;E\right)  }+e^{i\Phi^{+}\left(  y,\nu\left(
E\right)  \omega;E\right)  -i\Phi^{-}\left(  y,\nu\left(  E\right)
\theta;E\right)  }-2\right)  dy\\
=2%
{\displaystyle\int\limits_{\Pi_{\omega_{jk}}}}
e^{i\nu\left(  E\right)  \left\langle y,\theta-\omega\right\rangle }\left(
\cos\left(  \int\limits_{0}^{\infty}\mathcal{V}_{V,A;\omega}^{\left(
E\right)  }\left(  y+t\omega\right)  +\mathcal{V}_{V,A;\theta}^{\left(
E\right)  }\left(  y-t\theta\right)  dt\right)  -1\right)  dy.
\end{array}
\right.  \label{representation111}%
\end{equation}
The R.H.S. of relation (\ref{representation111}) is absolutely convergent if
$\rho>2.$ Thus, to complete the proof, it is enough to show that $%
{\displaystyle\int\limits_{\Pi_{\omega_{jk}}}}
e^{i\nu\left(  E\right)  \left\langle y,\theta-\omega\right\rangle }%
(e^{i\Phi^{+}\left(  y,\nu\left(  E\right)  \theta;E\right)  -i\Phi^{-}\left(
y,\nu\left(  E\right)  \omega;E\right)  }-e^{i\Phi^{+}\left(  y,\nu\left(
E\right)  \omega;E\right)  -i\Phi^{-}\left(  y,\nu\left(  E\right)
\theta;E\right)  })dy$ is continuous. Since $\left\vert \left(  \Phi^{\pm
}\left(  y,\nu\left(  E\right)  \omega;E\right)  \right)  ^{n}\right\vert $
$\leq C_{n}\left\langle y\right\rangle ^{-\left(  \rho-1\right)  n},$ we only
have to prove the continuity of the following integral%
\begin{equation}
i%
{\displaystyle\int\limits_{\Pi_{\omega_{jk}}}}
e^{i\nu\left(  E\right)  \left\langle y,\theta-\omega\right\rangle }\left(
\left(  \Phi^{+}\left(  y,\nu\left(  E\right)  \theta;E\right)  -\Phi
^{-}\left(  y,\nu\left(  E\right)  \omega;E\right)  \right)  -\left(  \Phi
^{+}\left(  y,\nu\left(  E\right)  \omega;E\right)  -\Phi^{-}\left(
y,\nu\left(  E\right)  \theta;E\right)  \right)  \right)  dy.
\label{representation208}%
\end{equation}
Note that estimate (\ref{representation93}) implies $\left\vert \partial
_{y}^{\alpha}\left(  \left(  \Phi^{+}\left(  y,\nu\left(  E\right)
\theta;E\right)  -\Phi^{+}\left(  y,\nu\left(  E\right)  \omega;E\right)
\right)  +\left(  \Phi^{-}\left(  y,\nu\left(  E\right)  \theta;E\right)
-\Phi^{-}\left(  y,\nu\left(  E\right)  \omega;E\right)  \right)  \right)
\right\vert $ $\leq C_{\alpha}\left\vert \omega-\theta\right\vert \left\langle
y\right\rangle ^{-\left(  \rho-1\right)  -\left\vert \alpha\right\vert }.$
Then, it follows from Lemma \ref{representation201} that integral
(\ref{representation208}) is estimated by $C\left\vert \omega-\theta
\right\vert ^{\rho-2},$ and thus, it is continuous.\
\end{proof}

We define the scattering cross-section for a fixed incoming direction $\theta$
and all outgoing directions $\omega$ by the following relation $\sigma\left(
\theta;E;u\right)  =\int_{\mathbb{S}^{2}}\left\vert s^{\operatorname{int}%
}\left(  \omega,\theta;E\right)  u\right\vert ^{2}d\omega,$ for a normalized
initial state $u\in X^{\pm}\left(  \nu\left(  E\right)  \theta\right)  ,$
$\left\vert u\right\vert _{\mathbb{C}^{4}}=1.$ Using (\ref{representation103})
we have
\begin{equation}
\sigma\left(  \theta;E;u\right)  =-\left(  \left(  s^{\operatorname{int}%
}\left(  \theta,\theta;E\right)  +s^{\operatorname{int}}\left(  \theta
,\theta;E\right)  ^{\ast}\right)  u,u\right)  . \label{representation109}%
\end{equation}

The following Lemma is consequence of the relation (\ref{representation109})
and Proposition \ref{representation210}.

\begin{lemma}
\label{representation234}\bigskip The scattering cross-section $\sigma\left(
\theta;E;u\right)  $ is a continuous function of $\theta,$ for $\rho>2.$
Furthermore, the total scattering cross-section, given by the relation
$\int_{\mathbb{S}^{2}}\int_{\mathbb{S}^{2}}\left\vert s^{\operatorname{int}%
}\left(  \omega,\theta;E\right)  u\right\vert ^{2}d\theta d\omega,$ for a
normalized initial state $u\in X^{\pm}\left(  \nu\left(  E\right)
\theta\right)  ,$ $\left\vert u\right\vert _{\mathbb{C}^{4}}=1,$ is finite if
$\rho>2.$
\end{lemma}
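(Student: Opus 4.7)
The plan is to deduce both assertions from two ingredients already at hand: the optical theorem (\ref{representation103}) and the continuity statement of Proposition \ref{representation210}. First I would justify the identity (\ref{representation109}). Starting from (\ref{representation103}) and specializing to $\omega=\theta$, the left-hand side becomes $\int_{\mathbb{S}^{2}}|s^{\operatorname{int}}(\eta,\theta;E)u|^{2}\,d\eta$ once we pair with $u$ on both sides and use that $|\cdot|^{2}$ reproduces an $L^{2}(\mathbb{S}^{2};\mathbb{C}^{4})$-norm; the right-hand side is exactly $-((s^{\operatorname{int}}(\theta,\theta;E)+s^{\operatorname{int}}(\theta,\theta;E)^{\ast})u,u)$. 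This gives the representation (\ref{representation109}) for $\sigma(\theta;E;u)$ as a quadratic form in $u$ whose kernel is the diagonal restriction of the symmetrized scattering amplitude.

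Next, for the continuity claim, I would apply Proposition \ref{representation210}: under the hypothesis $\rho>2$, the function $(\omega,\theta)\mapsto s^{\operatorname{int}}(\omega,\theta;E)+s^{\operatorname{int}}(\theta,\omega;E)^{\ast}$ is continuous on $\mathbb{S}^{2}\times\mathbb{S}^{2}$, and in particular its restriction to the diagonal $\omega=\theta$ is a continuous $\mathbb{C}^{4\times 4}$-valued function of $\theta$. Combined with (\ref{representation109}), this gives at once the continuity of $\theta\mapsto\sigma(\theta;E;u)$ when $u$ is held fixed; when $u=u(\theta)$ is taken as a continuous section of the bundle $\theta\mapsto X^{\pm}(\nu(E)\theta)$ (the natural interpretation since $u(\theta)\in X^{\pm}(\nu(E)\theta)$ is required), joint continuity in $(\theta,u)$ yields continuity of $\sigma(\theta;E;u(\theta))$.

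For the total scattering cross-section, by Fubini the double integral equals $\int_{\mathbb{S}^{2}}\sigma(\theta;E;u(\theta))\,d\theta$. Since $\mathbb{S}^{2}$ is compact and the integrand is continuous (and hence bounded) by the previous step, the integral is finite. The only mildly delicate point — and the place I would expect to spend a sentence justifying — is the measurability/boundedness of $\theta\mapsto u(\theta)$: one needs that the normalization condition $|u(\theta)|_{\mathbb{C}^{4}}=1$ together with $u(\theta)\in X^{\pm}(\nu(E)\theta)$ defines a uniformly bounded family, which is automatic, and that this is compatible with the use of the optical theorem, where the function under the integral in (\ref{representation103}) must be evaluated at a single $u$. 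For the finiteness statement alone this is not really an obstacle, because Fubini applies regardless of continuity in $\theta$: one simply notes that $\sup_{\theta,|u|=1}|((s^{\operatorname{int}}(\theta,\theta;E)+s^{\operatorname{int}}(\theta,\theta;E)^{\ast})u,u)|<\infty$ by continuity of the symmetrized kernel on the compact diagonal, and integrates.
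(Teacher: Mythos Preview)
Your proposal is correct and follows essentially the same approach as the paper: the paper states the Lemma as an immediate consequence of the optical-theorem identity (\ref{representation109}) (which it derives just before the Lemma from (\ref{representation103})) together with Proposition \ref{representation210}. Your write-up is, if anything, more careful than the paper's one-line justification, in particular in addressing the $\theta$-dependence of $u\in X^{\pm}(\nu(E)\theta)$ and the passage from continuity on the diagonal to finiteness of the double integral via compactness of $\mathbb{S}^{2}$.
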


The estimate%
\begin{equation}
\left\vert \partial_{y}^{\alpha}\partial_{\zeta}^{\beta}\partial
_{\zeta^{\prime}}^{\gamma}\left(  \mathbf{\tilde{h}}_{N}^{\operatorname{int}%
}\left(  y,\zeta,\zeta^{\prime};E;\omega_{jk}\right)  -\mathbf{\tilde{h}}%
_{pr}\left(  y,\zeta,\zeta^{\prime};E\right)  \right)  \right\vert \leq
C_{\alpha,\beta,\,\gamma}\nu\left(  E\right)  ^{-1}\left\langle y\right\rangle
^{-\rho-\left\vert \alpha\right\vert }, \label{representation213}%
\end{equation}
where $\mathbf{h}_{N}^{int}$ is defined (\ref{representation253}), and Theorem
\ref{representation179} imply
\begin{equation}
\upsilon\left(  E\right)  ^{-2}\left(  \chi_{j}\left(  \omega\right)
s^{\operatorname{int}}\left(  \omega,\theta;E\right)  \chi_{k}\left(
\theta\right)  -\mathbf{g}_{0,jk}^{\operatorname{int}}\left(  \omega
,\theta;E\right)  \right)  =O\left(  \left\vert E\right\vert ^{-1}\right)
,\text{ as }\left\vert E\right\vert \rightarrow\infty.
\label{representation112}%
\end{equation}
Then, for any $u\in X^{\pm}\left(  \nu\left(  E\right)  \theta\right)  ,$
$\left\vert u\right\vert _{\mathbb{C}^{4}}=1,$ taking $\omega=\omega
_{jk}=\theta$ and using relations (\ref{representation109}),
(\ref{representation112}), (\ref{representation211}), (\ref{representation111}%
), (\ref{representation113}) and equalities $P_{\theta}\left(  E\right)  u=u,$
$\chi_{jk}\left(  \theta,\theta\right)  \chi_{j}\left(  \theta\right)
\chi_{k}\left(  \theta\right)  =\chi_{j}\left(  \theta\right)  \chi_{k}\left(
\theta\right)  $ and $\sum\limits_{O_{j}\cap O_{k}\neq\varnothing}$ $\chi
_{j}\left(  \theta\right)  \chi_{k}\left(  \theta\right)  =1,$ we get $\left(
2\pi\right)  ^{2}\upsilon\left(  E\right)  ^{-2}\sigma\left(  \theta
;E;u\right)  $ $=2\int\limits_{\Pi_{\theta}}\left(  1-\cos\int\limits_{-\infty
}^{\infty}\mathcal{V}_{V,A;\theta}^{\left(  E\right)  }\left(  y+t\theta
\right)  dt\right)  dy+O\left(  \left\vert E\right\vert ^{-1}\right)  ,$ as
$\left\vert E\right\vert \rightarrow\infty.$ A similar result was obtained in
\cite{15} by studying the high-energy limit of the resolvent.

The following result is a consequence of Theorem \ref{representation25} and
Proposition \ref{representation231}

\begin{proposition}
Let the electric potential $V$ satisfy estimate (\ref{eig32}) with some
$\rho_{e}>1$ and the magnetic field $B$ satisfy the estimate
(\ref{basicnotions18}) with $r=\rho_{m}+1,$ $\rho_{m}>1$ and all $d.$ Let $V$
and $B$ be homogeneous functions of order $-\rho_{e}$ and $-\rho_{m}-1$,
respectively, for $\left\vert x\right\vert \geq R,$ for some $R>0,$ and at
least one of them is non-trivial for $\left\vert x\right\vert \geq R.$ Then
the total scattering cross-section is infinite if and only if $\rho\leq2,$
where if both $V$ and $B$ are non-trivial for $\left\vert x\right\vert \geq
R,$ then $\rho=\min\{\rho_{e},\rho_{m}\},$ if $V$ is trivial, $\rho=\rho_{m}$
and if $B$ is trivial, $\rho=\rho_{e}.$
\end{proposition}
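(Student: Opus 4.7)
The ``if'' direction is already contained in Lemma~\ref{representation234}: when $\rho>2$ the total scattering cross-section is finite. The task is to prove the converse, i.e.\ that $\rho\le 2$ forces the total cross-section to be infinite. The strategy is to read off the leading diagonal singularity of $s^{\operatorname{int}}(\omega,\theta;E)$ from Proposition~\ref{representation231} and show that it alone makes the double integral diverge.

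First I would invoke Proposition~\ref{representation231}: for $\theta$ in a cone around $\omega$, the amplitude $s^{\operatorname{int}}(\omega,\theta;E)$ equals $\tfrac{1}{i}(2\pi)^{-1/2}\upsilon(E)^{2}\tfrac{\nu(E)}{|E|}\bigl(\mathcal{F}\mathcal{V}_{V,A;\omega}^{(E)}\bigr)(-\nu(E)\tilde{\theta})P_{\omega}(E)$ modulo a remainder of size $|\omega-\theta|^{-2+\rho_{1}}$, with $\rho_{1}=2(\rho-1)$ for $\rho<2$ and $\rho_{1}=2-\varepsilon$ for $\rho=2$. Since $1<\rho\le 2$, a direct exponent comparison shows that this remainder is of strictly lower order than the putative singularity $|\omega-\theta|^{-(3-\rho)}$ as $\theta\to\omega$.

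Next I would analyse $\mathcal{F}\mathcal{V}_{V,A;\omega}^{(E)}$ near $\xi=0$. Splitting $V=\eta_{0}V_{\mathrm{hom}}+V_{c}$ and, via (\ref{basicnotions19})--(\ref{basicnotions21}), $A=\eta_{0}A_{\mathrm{hom}}+A_{c}$ with $V_{c},A_{c}\in C_{c}^{\infty}$---the homogeneity of degree $-\rho_{m}$ of the tail of $A$ following from the integral formula for $A^{(\mathrm{reg})}$ once $B$ is homogeneous of degree $-\rho_{m}-1$---the compact pieces contribute Schwartz Fourier transforms, and the standard formula for the Fourier transform of a homogeneous distribution on $\mathbb{R}^{3}$ yields
\[
\bigl(\mathcal{F}\mathcal{V}_{V,A;\omega}^{(E)}\bigr)(\xi)=\tfrac{|E|}{\nu(E)}|\xi|^{-(3-\rho_{e})}\tilde{V}_{0}(\hat{\xi})+(\operatorname{sgn}E)|\xi|^{-(3-\rho_{m})}\langle\omega,\tilde{A}_{0}(\hat{\xi})\rangle+(\text{lower order}),
\]
where $\tilde{V}_{0}$ and $\tilde{A}_{0}$ are continuous functions on $\mathbb{S}^{2}$, each nontrivial iff $V$ respectively $B$ is nontrivial outside $|x|\ge R$. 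Setting $\xi=-\nu(E)\tilde{\theta}$, using $|\tilde{\theta}|\sim|\omega-\theta|$ from (\ref{representation99}) and $|P_{\omega}(E)u(\theta)|\to 1$ as $\theta\to\omega$, one obtains $|s^{\operatorname{int}}(\omega,\theta;E)u(\theta)|\ge c|\omega-\theta|^{-(3-\rho)}|\Psi_{\omega}(\widehat{\tilde{\theta}})|$ for $|\omega-\theta|$ small, with $\Psi_{\omega}$ an angular factor restricted to the equatorial circle $\mathbb{S}^{1}_{\omega}\subset\Pi_{\omega}$.

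Polar integration on $\Pi_{\omega}$ then gives
\[
\int_{|\tilde{\theta}|\le\delta}|\tilde{\theta}|^{-2(3-\rho)}|\Psi_{\omega}(\widehat{\tilde{\theta}})|^{2}\,d\tilde{\theta}=\biggl(\int_{0}^{\delta}r^{1-2(3-\rho)}dr\biggr)\biggl(\int_{\mathbb{S}^{1}_{\omega}}|\Psi_{\omega}|^{2}d\hat{\tau}\biggr),
\]
whose radial factor is infinite exactly when $\rho\le 2$; integration over $\omega$ closes the divergence once one knows $\int_{\mathbb{S}^{1}_{\omega}}|\Psi_{\omega}|^{2}d\hat{\tau}>0$ for a positive-measure set of $\omega$. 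This nonvanishing is the main obstacle: when only one of $V,B$ is nontrivial, $\Psi_{\omega}$ is controlled by a single nontrivial continuous function on $\mathbb{S}^{2}$ and a straightforward measure-theoretic argument shows its restriction to the equator of generic $\omega$ is nontrivial; in the combined case $\rho_{e}=\rho_{m}$, a cancellation $\tfrac{|E|}{\nu(E)}\tilde{V}_{0}(\hat{\tau})+(\operatorname{sgn}E)\langle\omega,\tilde{A}_{0}(\hat{\tau})\rangle\equiv 0$ on $\mathbb{S}^{1}_{\omega}$ for a.e.\ $\omega$ would, by comparing two $\omega,\omega'\in\mathbb{S}^{1}_{\hat{\tau}}$ in the good set, force $\tilde{A}_{0}(\hat{\tau})\parallel\hat{\tau}$ and $\tilde{V}_{0}\equiv 0$; but $\tilde{A}_{0}$ radial gives $\mathcal{F}(\operatorname{curl}A_{\mathrm{hom}})(\xi)=i\xi\times\mathcal{F}(A_{\mathrm{hom}})(\xi)=0$, i.e.\ $B\equiv 0$ outside $|x|\ge R$, contradicting the hypothesis.
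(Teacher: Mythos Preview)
Your approach is essentially the paper's: both invoke Proposition~\ref{representation231}, extract the homogeneous Fourier-transform singularity $|\xi|^{-(3-\rho)}$ of $\mathcal{V}_{V,A;\omega}^{(E)}$, and show that this alone makes $\int\!\!\int |s^{\operatorname{int}}|^{2}$ diverge when $\rho\le 2$. The crux in both is the non-vanishing of the angular factor, and both ultimately rely on the identity $\mathcal{F}(\operatorname{curl}A)(\xi)=i\,\xi\times\widehat{A}(\xi)$ to exclude $\tilde{A}_{0}(\hat{\tau})\parallel\hat{\tau}$.

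One organizational point deserves attention. Your ``single leading term'' case is \emph{not} a straightforward measure-theoretic matter when that single term is magnetic (i.e.\ $\rho=\rho_{m}$, whether because $V$ is trivial or because $\rho_{m}<\rho_{e}$): here $\Psi_{\omega}(\hat{\tau})=\langle\omega,\tilde{A}_{0}(\hat{\tau})\rangle$ with $\omega\perp\hat{\tau}$, and the degenerate scenario $\tilde{A}_{0}(\hat{\tau})\parallel\hat{\tau}$ for all $\hat{\tau}$ would make $\Psi_{\omega}\equiv 0$ on every equator. This cannot be ruled out by measure theory; it requires precisely the curl argument you supply only in the $\rho_{e}=\rho_{m}$ case. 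The paper avoids this slip by organizing differently, around $\rho=\rho_{e}$ versus $\rho=\rho_{m}$: when $\rho=\rho_{e}$ it picks $\hat{\xi}$ with $\hat{V}_{h}^{(1)}(\hat{\xi})\neq 0$ and observes that if $\mathbf{\hat{V}}_{h}^{\omega_{1}}(\hat{\xi})=0$ for some $\omega_{1}\perp\hat{\xi}$ then necessarily $\mathbf{\hat{V}}_{h}^{-\omega_{1}}(\hat{\xi})\neq 0$ (a parity trick); when $\rho=\rho_{m}$ it runs the curl argument. Your proof becomes complete once you apply your curl argument to the magnetic-leading case as well, not only to the combined one.
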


\begin{proof}
Note that Lemma \ref{representation234} shows that the total scattering
cross-section is finite if $\rho>2.$ Let the magnetic potential $A$ be defined
by the equalities (\ref{basicnotions19})-(\ref{basicnotions21}). Since $B$ is
homogeneous of order $-\rho_{m}-1$, $A$ is homogeneous of order $-\rho_{m}$.
Thus we get $\mathcal{V}_{V,A;\omega}^{\left(  E\right)  }\left(  x\right)
=\left\vert x\right\vert ^{-\rho}\left(  V_{\operatorname*{ang}}\left(
\hat{x}\right)  +\left(  \operatorname*{sgn}E\right)  \left\langle
\omega,A_{\operatorname*{ang}}\left(  \hat{x}\right)  \right\rangle \right)
+W\left(  x\right)  $ for $\left\vert x\right\vert \geq R$, for some
$V_{\operatorname*{ang}}\in C^{\infty}\left(  \mathbb{S}^{2}\right)  $,
$A_{\operatorname*{ang}}\left(  \hat{x}\right)  \in C^{\infty}\left(
\mathbb{S}^{2};\mathbb{R}^{3}\right)  $ and some $W\left(  x\right)  $
homogeneous of order $\rho_{1}=\max\{\rho_{e},\rho_{m}\}.$ Note that if
$\rho_{e}=\rho_{m},$ $W\left(  x\right)  \equiv0.$ Then we have
\begin{equation}
\left.
\begin{array}
[c]{c}%
\left\vert \left(  \mathcal{FV}_{V,A;\omega}^{\left(  E\right)  }\right)
\left(  -\nu\left(  E\right)  \tilde{\theta}\right)  P_{\omega}\left(
E\right)  u\right\vert =\frac{1}{\left(  2\pi\right)  ^{3/2}}\left\vert
{\displaystyle\int}
\left(  e^{i\nu\left(  E\right)  \left\langle \tilde{\theta},x\right\rangle
}\mathbf{V}_{h}^{\omega}\left(  x\right)  u\right)  dx-%
{\displaystyle\int\limits_{\left\vert x\right\vert \leq R}}
\left(  e^{i\nu\left(  E\right)  \left\langle \tilde{\theta},x\right\rangle
}\mathbf{V}_{h}^{\omega}\left(  x\right)  u\right)  dx\right. \\
\left.  +%
{\displaystyle\int\limits_{\left\vert x\right\vert \geq R}}
\left(  e^{i\nu\left(  E\right)  \left\langle \tilde{\theta},x\right\rangle
}W\left(  x\right)  P_{\omega}\left(  E\right)  u\right)  dx+%
{\displaystyle\int\limits_{\left\vert x\right\vert \leq R}}
\left(  e^{i\nu\left(  E\right)  \left\langle \tilde{\theta},x\right\rangle
}\mathcal{V}_{V,A;\omega}^{\left(  E\right)  }\left(  x\right)  P_{\omega
}\left(  E\right)  u\right)  dx\right\vert ,
\end{array}
\right.  \label{representation232}%
\end{equation}
where $\mathbf{V}_{h}^{\omega}=\left(  V_{h}^{\left(  1\right)  }+\left(
\operatorname*{sgn}E\right)  \left\langle \omega,V_{h}^{\left(  2\right)
}\right\rangle \right)  P_{\omega}\left(  E\right)  $ and $V_{h}^{\left(
1\right)  }\left(  x\right)  =\left\vert x\right\vert ^{-\rho}%
V_{\operatorname*{ang}}\left(  \hat{x}\right)  ,$ $V_{h}^{\left(  2\right)
}\left(  x\right)  =\left\vert x\right\vert ^{-\rho}A_{\operatorname*{ang}%
}\left(  \hat{x}\right)  $ for $x\in\mathbb{R}^{3}.$

Note that if a function $f\left(  x\right)  :=\left\vert x\right\vert ^{-\rho
}f_{\operatorname*{ang}}\left(  \hat{x}\right)  ,$ where
$f_{\operatorname*{ang}}\left(  \hat{x}\right)  $ belongs to $C^{\infty
}\left(  \mathbb{S}^{2}\right)  $ and it is non-trivial, then its Fourier
transform is given by $\hat{f}\left(  \xi\right)  =\left\vert \xi\right\vert
^{-3+\rho}\hat{f}\left(  \hat{\xi}\right)  $ and $\hat{f}\left(  \hat{\xi
}\right)  $ is also a non-trivial, $C^{\infty}\left(  \mathbb{S}^{2}\right)  $
function. This means that $\hat{V}_{h}^{\left(  j\right)  }\left(  \xi\right)
\in C^{\infty}\left(  \mathbb{R}^{3}\backslash\{0\}\right)  $ and $\hat{V}%
_{h}^{\left(  j\right)  }\left(  \xi\right)  =\left\vert \xi\right\vert
^{-3+\rho}\hat{V}_{h}^{\left(  j\right)  }\left(  \hat{\xi}\right)  $ for
$\xi\neq0,$ $j=1,2$, and hence, $\mathbf{\hat{V}}_{h}^{\omega}\left(
\xi\right)  =\left\vert \xi\right\vert ^{-3+\rho}\mathbf{\hat{V}}_{h}^{\omega
}\left(  \hat{\xi}\right)  $.

Suppose that $\rho=\rho_{e}$, what implies that $\hat{V}_{h}^{\left(
1\right)  }\left(  \hat{\xi}\right)  $ is non-trivial. We take $\hat{\xi}$
such that $\hat{V}_{h}^{\left(  1\right)  }\left(  \hat{\xi}\right)  \neq0.$
Let $\omega_{1}$ be orthogonal to $\hat{\xi}$ and suppose that $\mathbf{\hat
{V}}_{h}^{\omega_{1}}\left(  \hat{\xi}\right)  \neq0.$ By continuity we get
$\left\vert \mathbf{\hat{V}}_{h}^{\omega}\left(  \hat{\zeta}\right)
\right\vert >c,$ for some constant $c>0,$ $\left\vert \omega-\omega
_{1}\right\vert <\delta_{1},$ for some $\delta_{1}>0,$ as in Proposition
\ref{representation231}, and $\hat{\zeta}$ such that $\left\langle \hat{\zeta
},\hat{\xi}\right\rangle >1-\varepsilon$ for some $\varepsilon>0.$ Hence,
$\left\vert \mathbf{\hat{V}}_{h}^{\omega}\left(  \zeta\right)  \right\vert
>c\left\vert \zeta\right\vert ^{-3+\rho},$ for $\left\vert \omega-\omega
_{1}\right\vert <\delta_{1},$ $\delta_{1}>0,$ and $\zeta$ such that
$\left\langle \zeta,\xi\right\rangle >\left(  1-\varepsilon\right)  \left\vert
\zeta\right\vert \left\vert \xi\right\vert \mathbf{.}$ Note that there are
$W_{1}$ and $W_{2}$ such that $\mathcal{V}_{V,A;\omega}^{\left(  E\right)
}\left(  x\right)  F\left(  \left\vert x\right\vert \leq R\right)  +W\left(
x\right)  F\left(  \left\vert x\right\vert \geq R\right)  =W_{1}\left(
x\right)  +W_{2}\left(  x\right)  ,$ where $W_{1}$ satisfies (\ref{eig32})
with $\rho_{1},$ and $W_{2}$ is a $C^{\infty}$ function for $\left\vert
x\right\vert \leq R,$ and $W_{2}\equiv0,$ for $\left\vert x\right\vert >R$
($F\left(  \cdot\right)  $ is the characteristic function of the correspondent
set). Observe that, if $\rho_{e}=\rho_{m}$, then we have $W_{1}\left(
x\right)  \equiv0$ and $W_{2}\left(  x\right)  =\mathcal{V}_{V,A;\omega
}^{\left(  E\right)  }\left(  x\right)  $ for $\left\vert x\right\vert \leq
R.$ It follows from Lemma \ref{representation201} that $\left\vert
{\displaystyle\int}
\left(  e^{i\nu\left(  E\right)  \left\langle \tilde{\theta},x\right\rangle
}W_{1}\left(  x\right)  u\right)  dx\right\vert \leq C\left(  \rho_{1}%
-\rho\right)  \left\vert \omega-\theta\right\vert ^{-3+\rho_{2}},$ where
$\rho_{2}=\rho_{1},$ for $\ \rho_{1}<3$, and $\rho_{2}=3-\varepsilon,$
$\varepsilon>0$ for $\rho_{1}\geq3.$ We note that $\left\vert -%
{\displaystyle\int\limits_{\left\vert x\right\vert \leq R}}
\left(  e^{-i\left\langle y,x\right\rangle }\mathbf{V}_{h}^{\omega}\left(
x\right)  u\right)  dx+%
{\displaystyle\int\limits_{\left\vert x\right\vert \leq R}}
\left(  e^{-i\left\langle y,x\right\rangle }W_{2}\left(  x\right)  P_{\omega
}\left(  E\right)  u\right)  dx\right\vert \leq C,$ uniformly in $y$ and
$\omega.$ Moreover, if $\left\vert \omega-\omega_{1}\right\vert <\delta_{1}$
and $-\left\langle \tilde{\theta}/\left\vert \tilde{\theta}\right\vert
,\hat{\xi}\right\rangle >1-\varepsilon$ we have $\frac{1}{\left(  \nu\left(
E\right)  \left\vert \tilde{\theta}\right\vert \right)  ^{3-\rho}}\left\vert
\left\vert \mathbf{\hat{V}}_{h}^{\omega_{1}}\left(  -\tilde{\theta}/\left\vert
\tilde{\theta}\right\vert \right)  \right\vert -C(\left\vert \rho_{e}-\rho
_{m}\right\vert \right.  $ $\left.  \times\left(  \nu\left(  E\right)
\left\vert \tilde{\theta}\right\vert \right)  ^{-3+\rho_{2}}+1)\left(
\nu\left(  E\right)  \left\vert \tilde{\theta}\right\vert \right)  ^{3-\rho
}\right\vert $ $\geq\frac{c}{2\left(  \nu\left(  E\right)  \left\vert
\tilde{\theta}\right\vert \right)  ^{3-\rho}},$ for $\left\vert \tilde{\theta
}\right\vert <\varepsilon_{1}$ and some $\varepsilon_{1}>0$. Then, from
relations (\ref{representation232}) and (\ref{representation102}) we get%
\begin{equation}
\left.
\begin{array}
[c]{c}%
\left\vert s^{\operatorname{int}}\left(  \omega,\theta;E\right)  \right\vert
\geq\left\vert \left(  2\pi\right)  ^{-1/2}\upsilon\left(  E\right)  ^{2}%
\frac{\nu\left(  E\right)  }{\left\vert E\right\vert }\left(  \mathcal{FV}%
_{V,A;\omega}^{\left(  E\right)  }\right)  \left(  -\nu\left(  E\right)
\tilde{\theta}\right)  P_{\omega}\left(  E\right)  \right\vert \\
-\left\vert \left(  s^{\operatorname{int}}\left(  \omega,\theta;E\right)
-\frac{1}{i}\left(  2\pi\right)  ^{-1/2}\upsilon\left(  E\right)  ^{2}%
\frac{\nu\left(  E\right)  }{\left\vert E\right\vert }\left(  \mathcal{FV}%
_{V,A;\omega}^{\left(  E\right)  }\right)  \left(  -\nu\left(  E\right)
\tilde{\theta}\right)  P_{\omega}\left(  E\right)  \right)  \right\vert
\geq\frac{c}{\left(  \nu\left(  E\right)  \left\vert \tilde{\theta}\right\vert
\right)  ^{3-\rho}},
\end{array}
\right.  \label{representation235}%
\end{equation}
for $\left\vert \tilde{\theta}\right\vert <\varepsilon_{2}\leq\varepsilon_{1}$
and some $\varepsilon_{2}>0$. Let $\delta_{1}$ be such that the set
$\Theta_{\omega}:=\{\theta\in\Omega_{+}\left(  \omega,\delta\right)
|-\left\langle \tilde{\theta}/\left\vert \tilde{\theta}\right\vert ,\hat{\xi
}\right\rangle >1-\varepsilon$ and $\left\vert \tilde{\theta}\right\vert
<\varepsilon_{2}\}$ is of positive measure. Then, using
(\ref{representation235}) we obtain $\int\limits_{\mathbb{S}^{2}}%
\int\limits_{\mathbb{S}^{2}}\left\vert s^{\operatorname{int}}\left(
\omega,\theta;E\right)  \right\vert ^{2}d\theta d\omega\geq\frac{c}{\nu\left(
E\right)  ^{6-2\rho}}\int\limits_{\left\vert \omega-\omega_{1}\right\vert
\leq\delta_{1}}\int\limits_{\Theta_{\omega}}\frac{1}{\left\vert \tilde{\theta
}\right\vert ^{6-2\rho}}d\theta d\omega.$ As $\int_{\Theta_{\omega}}\frac
{1}{\left\vert \omega-\theta\right\vert ^{6-2\rho}}d\theta$ is infinite, for
$\rho\leq2,$ then using relation (\ref{representation99}) we conclude that
$\int_{\mathbb{S}^{2}}\int_{\mathbb{S}^{2}}\left\vert s^{\operatorname{int}%
}\left(  \omega,\theta;E\right)  u\right\vert ^{2}d\theta d\omega=\infty$.

Suppose again that $\rho=\rho_{e}$ and $\hat{\xi}$ is such that $\hat{V}%
_{h}^{\left(  1\right)  }\left(  \hat{\xi}\right)  \neq0,$ but $\mathbf{\hat
{V}}_{h}^{\omega_{1}}\left(  \hat{\xi}\right)  =0.$ Noting that $\mathbf{\hat
{V}}_{h}^{-\omega_{1}}\left(  \hat{\xi}\right)  \neq0$ and proceeding
similarly as above we get $\int_{\mathbb{S}^{2}}\int_{\mathbb{S}^{2}%
}\left\vert s^{\operatorname{int}}\left(  \omega,\theta;E\right)  u\right\vert
^{2}d\theta d\omega=\infty$ if $\rho=\rho_{e}$.

Now suppose that $\rho=\rho_{m}$. Let us take $\hat{\xi}_{1}$ such that
$\hat{V}_{h}^{\left(  2\right)  }\left(  \hat{\xi}_{1}\right)  \neq0.$ By
continuity $\left\vert \hat{V}_{h}^{\left(  2\right)  }\left(  \hat{\xi
}\right)  \right\vert >c>0$ for all $\hat{\xi}$ close enough to $\hat{\xi}%
_{1}$. Consider the set $\Psi\subset\mathbb{S}^{2}$ of all $\hat{\xi}$ such
that $\hat{V}_{h}^{\left(  2\right)  }\left(  \hat{\xi}\right)  \neq0.$ We
claim that there is $\omega_{1}$ orthogonal to some $\hat{\xi}\in\Psi$ such
that $\left\langle \omega_{1},\hat{V}_{h}^{\left(  2\right)  }\left(  \hat
{\xi}\right)  \right\rangle \neq0.$ Suppose that this is not true. That is,
for every $\hat{\xi}\in\Psi$, $\left\langle \omega,\hat{V}_{h}^{\left(
2\right)  }\left(  \hat{\xi}\right)  \right\rangle =0,$ for all $\omega$
orthogonal to $\hat{\xi}.$ This implies that $\left\langle \hat{\xi},\hat
{V}_{h}^{\left(  2\right)  }\left(  \hat{\xi}\right)  \right\rangle
=\pm\left\vert \hat{V}_{h}^{\left(  2\right)  }\left(  \hat{\xi}\right)
\right\vert .$ Hence, taking in account that $\hat{V}_{h}^{\left(  2\right)
}\left(  \hat{\xi}\right)  =0,$ for $\hat{\xi}\in\mathbb{S}^{2}\diagdown\Psi,$
we have $\xi\times\hat{V}_{h}^{\left(  2\right)  }\left(  \xi\right)  =0,$ for
all $\xi,$ what implies that $\operatorname{curl}V_{h}^{\left(  2\right)
}=B=0,$ for $\left\vert x\right\vert \geq R.$ This is a contradiction. Then
there is $\omega_{1}$ orthogonal to some $\hat{\xi}$ such that $\left\langle
\omega_{1},\hat{V}_{h}^{\left(  2\right)  }\left(  \hat{\xi}\right)
\right\rangle \neq0.$ Similarly to the case when $\rho=\rho_{e}$ we obtain
$\int_{\mathbb{S}^{2}}\int_{\mathbb{S}^{2}}\left\vert s^{\operatorname{int}%
}\left(  \omega,\theta;E\right)  u\right\vert ^{2}d\theta d\omega=\infty$.
\end{proof}

\subsection{Reconstruction of the electric potential and the magnetic field
from the high energy limit.}

Now we consider a special limit when $\left\vert E\right\vert \rightarrow
\infty$ and $\omega\left(  E\right)  ,\theta\left(  E\right)  \rightarrow
\omega,$ for an arbitrary $\omega\in\mathbb{S}^{2},$ in such way that
$\eta:=\nu\left(  E\right)  \left(  \omega\left(  E\right)  -\theta\left(
E\right)  \right)  \neq0$ is fixed (see \cite{75}). Let us take two families
of vectors $\omega\left(  E\right)  ,\theta\left(  E\right)  \in\mathbb{S}%
^{2}$ with these properties. We obtain the following result

\begin{proposition}
Let the magnetic potential $A\left(  x\right)  $ and the electric potential
$V\left(  x\right)  $ satisfy the estimates (\ref{eig31}) and (\ref{eig32})
respectively. For $\eta\in\mathbb{R}^{3}\backslash\{0\},$ let $\omega\left(
E\right)  ,\theta\left(  E\right)  \in\mathbb{S}^{2}$ be as above. Then, we
have%
\begin{equation}
\left.  \lim_{\substack{\pm E\rightarrow\infty}}\upsilon\left(  E\right)
^{-2}s\left(  \omega\left(  E\right)  ,\theta\left(  E\right)  ;E\right)
=\left(  2\pi\right)  ^{-1}\mathcal{F}\left(  e^{-iR\left(  y,\omega;\pm
\infty\right)  }P_{\omega}^{\pm}\left(  \infty\right)  \right)  \left(
\eta\right)  \right.  , \label{representation114}%
\end{equation}
where $R\left(  y,\omega;\pm\infty\right)  :=\int\limits_{-\infty}^{\infty
}\left(  V\left(  y+t\omega\right)  \pm\left\langle \omega,A\left(
y+t\omega\right)  \right\rangle \right)  dt$ and $P_{\omega}\left(  \pm
\infty\right)  =\frac{1}{2}\left(  I\pm\left(  a\cdot\omega\right)  \right)
.$
\end{proposition}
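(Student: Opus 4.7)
The plan is to reduce the problem to the leading singular approximation $s_{\operatorname{sing}}^{(N)}$ provided by Theorem \ref{representation179}, and then to evaluate its explicit oscillatory-integral representation in the limit. I would fix $\omega_0=\omega$ and select the cutoff $\Psi_+(\cdot,\cdot;\omega)$ so that $\Psi_+\equiv 1$ in a neighborhood of $(\omega,\omega)\in\mathbb{S}^2\times\mathbb{S}^2$. Since $\omega(E),\theta(E)\to\omega$, for $|E|$ large the pair $(\omega(E),\theta(E))$ lies in this neighborhood, and Theorem \ref{representation179} gives
\[
s(\omega(E),\theta(E);E)=s_{\operatorname{sing}}^{(N)}(\omega(E),\theta(E);E;\omega)+r_N(\omega(E),\theta(E);E),
\]
where $|r_N|\le C_N|E|^{-q}$ for $N$ chosen large. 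Because $\upsilon(E)^2\sim|E|$, the normalized remainder $\upsilon(E)^{-2}r_N$ vanishes as $|E|\to\infty$. Moreover, the higher-order pieces of $s_{\operatorname{sing}}^{(N)}$ relative to its $N=0$ truncation are built from the expansions (\ref{eig8})--(\ref{eig9}) and thus carry inverse powers $\nu(E)^{-j}$, $j\ge 1$, which also vanish in the limit; hence it suffices to analyze the $N=0$ integral.

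For $N=0$ we have $(w_1)_0^\pm=0$ and $c_0^\pm=I$, so $a_0^\pm(y,\xi;E)=e^{i\Phi^\pm(y,\xi;E)}P_{\xi/|\xi|}(E)$ and
\[
\mathbf{h}_0(y,\omega(E),\theta(E);E;\omega)=(\operatorname{sgn}E)\,P_{\omega(E)}(E)\,e^{-i\Phi^+(y,\nu(E)\omega(E);E)+i\Phi^-(y,\nu(E)\theta(E);E)}(\alpha\cdot\omega)P_{\theta(E)}(E).
\]
The hypothesis $\eta=\nu(E)(\omega(E)-\theta(E))$ fixed turns the phase in (\ref{representation248}) into $e^{i\nu(E)\langle y,\theta(E)-\omega(E)\rangle}=e^{-i\langle y,\eta\rangle}$. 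Using $|E|/\nu(E)\to 1$ and $\omega(E),\theta(E)\to\omega$ in the closed form (\ref{eig15}), the two halves of the line integrals assemble into $-\Phi^+(y,\nu(E)\omega(E);E)+\Phi^-(y,\nu(E)\theta(E);E)\to-R(y,\omega;\pm\infty)$ in the respective sign cases. Likewise $P_{\omega(E)}(E),P_{\theta(E)}(E)\to P_\omega(\pm\infty)$, and the algebraic identities $(\alpha\cdot\omega)P_\omega(\pm\infty)=\pm P_\omega(\pm\infty)$ together with $P_\omega(\pm\infty)^2=P_\omega(\pm\infty)$ give $P_\omega(\pm\infty)(\alpha\cdot\omega)P_\omega(\pm\infty)=\pm P_\omega(\pm\infty)$. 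Combined with $\operatorname{sgn}E$ the two signs cancel, producing the pointwise limit $\mathbf{h}_0\to e^{-iR(y,\omega;\pm\infty)}P_\omega(\pm\infty)$.

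To interchange the limit with the $\Pi_\omega$-integration I would exploit $\eta\neq 0$ and integrate by parts twice in the $\eta$-direction. Proposition \ref{eig18} together with (\ref{eig22}) gives $|\partial_y^\alpha \mathbf{h}_0|\le C_\alpha\langle y\rangle^{-(\rho-1)|\alpha|}$, and in particular after two integrations by parts the integrand is dominated by $C\langle y\rangle^{-\rho-1}\in L^1(\Pi_\omega)$ (since $\rho>1$), uniformly in $E$. Dominated convergence then yields
\[
\lim_{\pm E\to\infty}\upsilon(E)^{-2}s(\omega(E),\theta(E);E)=(2\pi)^{-2}\int_{\Pi_\omega}e^{-i\langle y,\eta\rangle}e^{-iR(y,\omega;\pm\infty)}P_\omega(\pm\infty)\,dy,
\]
which equals $(2\pi)^{-1}\mathcal{F}(e^{-iR(\cdot,\omega;\pm\infty)}P_\omega^\pm(\infty))(\eta)$ upon interpreting $\mathcal{F}$ as the two-dimensional Fourier transform on $\Pi_\omega\cong\mathbb{R}^2$ with the $(2\pi)^{-1}$ normalization consistent with the paper's convention. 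The main obstacle is precisely the uniformity in $E$ of these integration-by-parts estimates: the functions $\Phi^\pm$ and the corrections $b_j^\pm,c_j^\pm$ depend on $E$ through $|E|/\nu(E)$ and $\nu(E)/E$, which must be kept bounded for $|E|\ge E_0>m$ to conclude that the constants $C_\alpha$ above are independent of $E$, a uniformity that follows by inspecting the proofs of Lemma \ref{eig17} and Proposition \ref{eig18}.
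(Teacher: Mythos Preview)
Your argument is correct and mirrors the paper's proof: reduce to $s_{\operatorname{sing}}^{(0)}$ via Theorem~\ref{representation179} and the $\nu(E)^{-1}$ gain in $\mathbf{h}_N-\mathbf{h}_0$, then integrate by parts along the $\eta$-direction in $\Pi_\omega$ to obtain an absolutely convergent integral and pass to the limit by dominated convergence using the algebraic identity $(\operatorname{sgn}E)P_\omega(\pm\infty)(\alpha\cdot\omega)P_\omega(\pm\infty)=P_\omega(\pm\infty)$. One small slip: the general bound should read $|\partial_y^\alpha\mathbf{h}_0|\le C_\alpha\langle y\rangle^{-(\rho-1)-|\alpha|}$ rather than $\langle y\rangle^{-(\rho-1)|\alpha|}$, but your stated two-derivative bound $\langle y\rangle^{-\rho-1}$ is the correct one and is exactly what the paper uses (estimate~(\ref{representation214}) with $n=2$).
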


\begin{proof}
We follow the proof of Proposition 6.7 of \cite{30} for the Schr\"{o}dinger
equation. For a fixed $\omega\in\mathbb{S}^{2},$ we take a cut-off function
$\Psi_{+}\left(  \omega,\theta;\omega\right)  ,$ supported, as function of
$\theta,$ on $\Omega_{+}\left(  \omega,\delta\right)  ,$ such that it is equal
to $1$ in $\Omega_{+}\left(  \omega,\delta^{\prime}\right)  $, for some
$\delta^{\prime}>\delta.$ Let the first coordinate axis in $\Pi_{\omega}$ be
directed along $\eta.$ Then, integrating by parts in the R.H.S of relation
(\ref{representation248}) (understood as an oscillatory integral), with
respect to $y_{1},$ we get%
\begin{equation}
\left.  s_{\operatorname{sing}}^{(N)}\left(  \omega\left(  E\right)
,\theta\left(  E\right)  ;E;\omega\right)  =\left(  2\pi\right)  ^{-2}%
\upsilon\left(  E\right)  ^{2}\left(  i\left\vert \eta\right\vert \right)
^{-n}\Psi_{+}\left(  \omega\left(  E\right)  ,\theta\left(  E\right)
;\omega\right)
{\displaystyle\int\limits_{\Pi_{\omega}}}
e^{-i\left\langle y,\eta\right\rangle }\partial_{y_{1}}^{n}\mathbf{h}%
_{N}\left(  y,\omega\left(  E\right)  ,\theta\left(  E\right)  ;E;\omega
\right)  dy.\right.  \label{representation194}%
\end{equation}
For $n\geq2$ the integral in the last relation is absolutely convergent, as
\begin{equation}
\left\vert \partial_{y_{1}}^{n}\mathbf{h}_{N}\left(  y,\omega\left(  E\right)
,\theta\left(  E\right)  ;E;\omega\right)  \right\vert \leq C_{n}\left\langle
y\right\rangle ^{-\left(  \rho-1\right)  -n}. \label{representation214}%
\end{equation}
We have $\left\vert \partial_{y}^{\alpha}\partial_{\zeta}^{\beta}%
\partial_{\zeta^{\prime}}^{\gamma}\left(  \tilde{\Psi}_{+}\left(  \zeta\left(
E\right)  ,\zeta^{\prime}\left(  E\right)  ;\omega\right)  \left(
\mathbf{\tilde{h}}_{N}-\mathbf{\tilde{h}}_{0}\right)  \right)  \left(
y,\zeta,\zeta^{\prime};E\right)  \right\vert \leq$ $C_{\alpha,\beta,\,\gamma
}\nu\left(  E\right)  ^{-1}\left\langle y\right\rangle ^{-\rho-\left\vert
\alpha\right\vert },$ for some constants $C_{\alpha,\beta,\,\gamma},$
independent of $\zeta$ and $\zeta^{\prime}.$ Then, Theorem
\ref{representation179} implies that
\begin{equation}
\lim_{\substack{\left\vert E\right\vert \rightarrow\infty}}\upsilon\left(
E\right)  ^{-2}s\left(  \omega\left(  E\right)  ,\theta\left(  E\right)
;E\right)  =\lim\limits_{_{\substack{\left\vert E\right\vert \rightarrow
\infty}}}\upsilon\left(  E\right)  ^{-2}s_{\operatorname{sing}}^{(0)}\left(
\omega\left(  E\right)  ,\theta\left(  E\right)  ;E;\omega\right)  .
\label{representation115}%
\end{equation}
Using equality (\ref{representation251}) we see that%
\begin{equation}
\lim_{\pm E\rightarrow\infty}\left(  \Psi_{+}\left(  \omega\left(  E\right)
,\theta\left(  E\right)  ;\omega\right)  \partial_{y_{1}}^{n}\mathbf{h}%
_{0}\left(  y,\omega\left(  E\right)  ,\theta\left(  E\right)  ;E;\omega
\right)  \right)  =\left(  \partial_{y_{1}}^{n}e^{-iR\left(  y,\omega
;\pm\infty\right)  }\right)  P_{\omega}\left(  \pm\infty\right)  .
\label{representation116}%
\end{equation}
Estimate (\ref{representation214}) allows us to take the limit in
(\ref{representation194}), as $\pm E\rightarrow\infty$. Therefore, equalities
(\ref{representation115}) and (\ref{representation116}) imply%
\begin{equation}
\left.  \lim\limits_{\pm E\rightarrow\infty}\upsilon\left(  E\right)
^{-2}s^{\operatorname{int}}\left(  \omega\left(  E\right)  ,\theta\left(
E\right)  ;E;\omega\right)  =\left(  2\pi\right)  ^{-2}\left(  i\left\vert
\eta\right\vert \right)  ^{-n}%
{\displaystyle\int_{\Pi_{\omega}}}
e^{-i\left\langle y,\eta\right\rangle }\left(  \partial_{y_{1}}^{n}%
e^{-iR\left(  y,\omega;\pm\infty\right)  }\right)  P_{\omega}\left(  \pm
\infty\right)  dy,\right.  \label{representation195}%
\end{equation}
Integrating back by parts in the R.H.S. of \ref{representation195}, we obtain
(\ref{representation114}).
\end{proof}

Let us prove that we can uniquely reconstruct the electric potential $V$ and
the magnetic field $B$ from the limit (\ref{representation195}). The integral
in the R.H.S. of (\ref{representation195}) is, up to a coefficient, the
two-dimensional Fourier transform of $\left(  \partial_{y_{1}}^{n}%
e^{-iR\left(  y,\omega;\pm\infty\right)  }\right)  P_{\omega}^{\pm}\left(
\infty\right)  .$ The matrices $P_{\omega}^{\pm}\left(  \infty\right)  $ are
of the form $\frac{1}{2}%
\begin{pmatrix}
I & \pm\sum_{j=1}^{3}\sigma_{j}\omega_{j}\\
\pm\sum_{j=1}^{3}\sigma_{j}\omega_{j} & I
\end{pmatrix}
.$ Taking, for example, the first component of the matrix $\left(
\partial_{y_{1}}^{n}e^{-iR\left(  y,\omega;\pm\infty\right)  }\right)
P_{\omega}^{\pm}\left(  \infty\right)  $ we recover the function $\left(
\partial_{y_{1}}^{n}e^{-iR\left(  y,\omega;\pm\infty\right)  }\right)  $. Let
us take $y:=\left(  y_{1},y_{2},y_{3}\right)  \in\Pi_{\omega}.$ Since
$\partial_{y_{1}}^{n-1}e^{-iR\left(  y,\omega;\pm\infty\right)  }$ tends to
$0,$ as $\left\vert y\right\vert \rightarrow\infty,$ then $\left(
\partial_{y_{1}}^{n-1}e^{-iR\left(  y,\omega;\pm\infty\right)  }\right)
=-\left(  \int_{0}^{\infty}\partial_{y_{1}}^{n}e^{-iR\left(  y\left(
t\right)  ,\omega;\pm\infty\right)  }dt\right)  ,$ where $y\left(  t\right)
=\left(  y_{1}+t,y_{2},y_{3}\right)  \in\Pi_{\omega}.$ Applying this argument
$n-1$ times we get the function $\partial_{y_{1}}e^{-iR\left(  y,\omega
;\pm\infty\right)  }.$ Since $R\left(  y,\omega;\pm\infty\right)  $ tends to
$0,$ as $\left\vert y\right\vert \rightarrow\infty,$ we have $e^{-iR\left(
y,\omega;\pm\infty\right)  }-1=-\left(  \int_{0}^{\infty}\partial_{y_{1}%
}e^{-iR\left(  y\left(  t\right)  ,\omega;\pm\infty\right)  }dt\right)  .$
Thus, we recover the function $e^{-iR\left(  y,\omega;\pm\infty\right)  }%
\ $from the limit (\ref{representation195}). Since $R\left(  y,\omega
;\pm\infty\right)  $ is a continuous function of $y\in\Pi_{\omega}$ and tends
to $0,$ as $\left\vert y\right\vert \rightarrow\infty,$ we can determine the
function $R\left(  y,\omega;\pm\infty\right)  $ from the function
$e^{-iR\left(  y,\omega;\pm\infty\right)  }.$ Similarly to \cite{79} or
\cite{25}, noting that $R_{e}\left(  \omega,y;V\right)  :=\int\limits_{-\infty
}^{\infty}V\left(  y+t\omega\right)  dt$ is even in $\omega$ and $R_{m}\left(
\omega,y;A\right)  :=\int\limits_{-\infty}^{\infty}\left\langle \omega
,A\left(  y+t\omega\right)  \right\rangle dt$ is odd, we get $R_{e}$ and
$R_{m}$ from the limit (\ref{representation195}), and moreover, inverting the
Radon transform we recover the electric potential $V$ and the magnetic field
$B,$ from the scattering amplitude. We can formulate the obtained results as
the following

\begin{theorem}
\label{representation218}Suppose that the electric potential $V\left(
x\right)  $ and the magnetic field $B\left(  x\right)  $ satisfy the estimates
(\ref{eig32}) and (\ref{basicnotions18}), for all $\alpha$ and $d,$
respectively. Then the scattering amplitude $s\left(  \omega,\theta;E\right)
,$ known in some neighborhood of the diagonal $\omega=\theta,$ for every
$E\geq E_{0}$ or $-E\geq E_{0},$ for some $E_{0}>m,$ uniquely determines the
electric potential $V\left(  x\right)  $ and the magnetic field $B\left(
x\right)  .$ Moreover, one can reconstruct $V\left(  x\right)  $ and $B\left(
x\right)  $ from the high-energy limit (\ref{representation195}).
\end{theorem}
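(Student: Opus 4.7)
The plan is to reconstruct $V$ and $B$ from the high-energy asymptotic formula (\ref{representation195}) by a three-stage procedure: first extract the scalar phase $R(y,\omega;\pm\infty)$ from the matrix-valued Fourier transform, then split $R$ into its parts even and odd in $\omega$ to separate electric and magnetic contributions, and finally invert the resulting X-ray transforms.

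For the first stage, fix $\omega\in\mathbb{S}^{2}$ and any $\eta\in\Pi_{\omega}\setminus\{0\}$, and consider two families $\omega(E),\theta(E)\to\omega$ with $\nu(E)(\omega(E)-\theta(E))=\eta$. Formula (\ref{representation195}) says that the quantity $\upsilon(E)^{-2}s^{\operatorname{int}}(\omega(E),\theta(E);E)$ converges, as $\pm E\to\infty$, to a constant multiple of the two-dimensional Fourier transform on $\Pi_{\omega}$ of $(\partial_{y_{1}}^{n}e^{-iR(y,\omega;\pm\infty)})P_{\omega}(\pm\infty)$. Since the $(1,1)$-entry of $P_{\omega}(\pm\infty)=\tfrac{1}{2}(I\pm\alpha\cdot\omega)$ equals $1/2$, selecting that component isolates the scalar-valued Fourier transform of $\partial_{y_{1}}^{n}e^{-iR(y,\omega;\pm\infty)}$. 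Inverting this two-dimensional Fourier transform on $\Pi_{\omega}$ recovers $\partial_{y_{1}}^{n}e^{-iR(y,\omega;\pm\infty)}$ as a function of $y\in\Pi_{\omega}$; iterated integration in $y_{1}$ from $+\infty$ inward, combined with the decay $R(y,\omega;\pm\infty)\to 0$ as $|y|\to\infty$ guaranteed by (\ref{eig31})--(\ref{eig32}), then yields $e^{-iR(y,\omega;\pm\infty)}-1$ and hence, by continuous determination of the argument at infinity, the scalar function $R(y,\omega;\pm\infty)$ itself.

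For the second stage, observe that
\[
R(y,\omega;\pm\infty)=R_{e}(y,\omega)\pm R_{m}(y,\omega),\qquad R_{e}(y,\omega):=\int_{-\infty}^{\infty}V(y+t\omega)\,dt,\quad R_{m}(y,\omega):=\int_{-\infty}^{\infty}\langle\omega,A(y+t\omega)\rangle\,dt.
\]
The function $R_{e}$ is even in $\omega$, while $R_{m}$ is odd. Carrying out the first stage for both signs of $E$ (at the same direction $\omega$) and adding or subtracting the two results separates $R_{e}$ and $R_{m}$. Alternatively, applying the procedure at $\omega$ and at $-\omega$ and using $R(y,-\omega;\pm\infty)=R_{e}(y,\omega)\mp R_{m}(y,\omega)$ achieves the same separation from the scattering amplitude at a single sign of $E$.

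The third stage inverts the X-ray transforms. The map $V\mapsto R_{e}(\cdot,\omega)$ is the X-ray transform along lines in direction $\omega$, and since we know $R_{e}(\cdot,\omega)$ for all $\omega$ in an open subset of $\mathbb{S}^{2}$ (coming from the neighborhood of the diagonal in which $s$ is assumed known), classical inversion (Helgason, Natterer) recovers $V$ uniquely. For the magnetic part, $R_{m}(\cdot,\omega)$ is the X-ray transform of the $1$-form $A$. Although this transform has a kernel consisting of exact $1$-forms (reflecting gauge invariance), it is a standard fact that the quotient by exact forms is injective, so the curl $B=\operatorname{rot}A$ is determined uniquely by $R_{m}$; explicitly, one can recover each component of $B$ from tangential derivatives of $R_{m}(y,\omega)$ with respect to $y\in\Pi_{\omega}$ as $\omega$ varies. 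The main obstacle, and the place where most care is needed, is precisely this last step: verifying that the injectivity of the X-ray transform on $1$-forms modulo exact forms holds under the smoothness and decay hypotheses (\ref{eig31}), (\ref{basicnotions18}) assumed here, and that only the restriction of $R_{m}$ to a neighborhood of the diagonal (rather than all of $\mathbb{S}^{2}$) suffices. This follows from the analyticity of $R_{e}$ and $R_{m}$ in $\omega$, which permits analytic continuation from the neighborhood on which (\ref{representation195}) is applicable to all of $\mathbb{S}^{2}$, at which point the classical inversion formulas apply and yield $V$ and $B$ uniquely, completing the proof.
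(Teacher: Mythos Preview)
Your argument follows the paper's approach almost exactly: extract the scalar $\partial_{y_1}^n e^{-iR}$ from the $(1,1)$-entry of the matrix limit, integrate back to recover $R(y,\omega;\pm\infty)$, split into the $\omega$-even part $R_e$ and $\omega$-odd part $R_m$, and invert the X-ray (Radon) transform. That part is fine and matches the paper.

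There is, however, a misreading in your final paragraph. The hypothesis ``$s(\omega,\theta;E)$ is known in some neighborhood of the diagonal $\omega=\theta$'' refers to a neighborhood of the full diagonal $\{(\omega,\omega):\omega\in\mathbb{S}^2\}$ inside $\mathbb{S}^2\times\mathbb{S}^2$; it does \emph{not} restrict $\omega$ to a proper open subset of the sphere. For every $\omega\in\mathbb{S}^2$ and every $\eta\in\Pi_\omega\setminus\{0\}$, the families $\omega(E),\theta(E)\to\omega$ with $\nu(E)(\omega(E)-\theta(E))=\eta$ eventually lie inside any such neighborhood (since $|\omega(E)-\theta(E)|\to 0$), so the limit (\ref{representation195}) is available for \emph{all} $\omega$. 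Consequently you obtain $R_e(\cdot,\omega)$ and $R_m(\cdot,\omega)$ for every $\omega\in\mathbb{S}^2$, and the standard inversion applies directly. Your appeal to ``analyticity of $R_e$ and $R_m$ in $\omega$'' is therefore unnecessary---and, as stated, unjustified: under the hypotheses $V,B\in C^\infty$ with symbol-type decay, these X-ray transforms need not be real-analytic in $\omega$. Drop that sentence and the proof is complete.
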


Using the high-energy asymptotics of the resolvent, Ito \cite{15} gave the
relation (\ref{representation114}) and he proved Theorem
\ref{representation218} for smooth electromagnetic potentials with $\rho
>3.$\ Jung \cite{52}, calculating the high-velocity limit for the scattering
operator, by the time-dependent method of \cite{62}, for continuous, Hermitian
matrix valued potentials $\mathbf{V}\left(  x\right)  ,$ satisfying the
conditions (\ref{intro7}) and $\left\Vert VF\left(  \left\vert x\right\vert
\geq R\right)  \right\Vert \in L^{1}\left(  \left[  0;\infty\right)
;dR\right)  ,$ where $F\left(  \left\vert x\right\vert \geq R\right)  $ is the
characteristic function of the set $\left\vert x\right\vert \geq R,$ presents
a reconstruction formula, that allows to uniquely recover the electric
potential and magnetic field from the scattering operator. Also using the
time-dependent method, Ito \cite{57} showed that for potentials $\mathbf{V}$
of the form (\ref{intro4}), satisfying $\left\vert \partial_{x}^{\alpha
}V\left(  x\right)  \right\vert +\left\vert \partial_{x}^{\alpha}A\left(
x\right)  \right\vert \leq C_{\alpha}\left\langle x\right\rangle ^{-\rho},$
$\rho>1,$ $\left\vert \alpha\right\vert \leq1,$ for all $x\in\mathbb{R}^{3},$
the electric potential $V\left(  x\right)  $ and the magnetic field $B\left(
x\right)  $ can be completely reconstructed from the scattering operator. Ito also considered time-dependent potentials in \cite{57}.

\subsection{Inverse problem at fixed energy for homogeneous potentials.}

We follow the approach of \cite{25}. For a fixed $\omega\in\mathbb{S}^{2},$ we
take a cut-off function $\Psi_{+}\left(  \omega,\theta;\omega\right)  ,$
supported, as function of $\theta,$ on $\Omega_{+}\left(  \omega
,\delta\right)  ,$ such that it is equal to $1$ in $\Omega_{+}\left(
\omega,\delta^{\prime}\right)  $, for some $\delta^{\prime}>\delta.$ It is
convenient for us to reformulate Theorem \ref{representation179} in terms of
asymptotic series. Let us rewrite formula (\ref{representation252}) in terms
of powers of the potential $W\left(  x\right)  :=$ $\left(  V\left(  x\right)
,A\left(  x\right)  \right)  .$ Note first that for $\left\vert E\right\vert
>m,$ $e^{i\Phi^{\pm}\left(  x,\xi;E\right)  }=\sum_{j=0}^{\infty}\frac{1}%
{j!}\left(  \pm i%
{\displaystyle\int\limits_{0}^{\infty}}
\left(  \frac{\left\vert E\right\vert }{\left\vert \xi\right\vert }V\left(
x\pm\left(  \operatorname*{sgn}E\right)  t\omega\right)  +\left(
\operatorname*{sgn}E\right)  \left\langle \omega,A\left(  x\pm\left(
\operatorname*{sgn}E\right)  t\omega\right)  \right\rangle \right)  dt\right)
^{j}.$ Introducing this expression in (\ref{eig33}) we can write $a_{N}^{\pm
}\left(  x,\xi;E\right)  $ as an asymptotic expansion $a_{N}^{\pm}\left(
x,\xi;E\right)  \simeq\sum_{m=0}^{\infty}a_{N,m}^{\pm}\left(  x,\xi\right)  ,$
where $a_{N,m}^{\pm}\left(  x,\xi\right)  $ is of order $m$ with respect to
$W\left(  x\right)  $.\ Plugging this expansion in the representation
(\ref{representation252}) for $\mathbf{h}_{N}\left(  y,\theta,\omega
;E\,;\omega\right)  $ and collecting together the terms of the same power with
respect to $W\left(  x\right)  $ we see that $\mathbf{h}_{N}\left(
y,\theta,\omega;E\,;\omega\right)  $ admits the expansion into the asymptotic
series
\begin{equation}
\mathbf{h}_{N}\left(  y,\theta,\omega;E\,;\omega\right)  \simeq\sum
\limits_{n=0}^{\infty}\mathbf{a}_{n}\left(  y,\omega,\theta;E\right)  ,
\label{direct17}%
\end{equation}
where $\mathbf{a}_{n}\left(  y,\omega,\theta;E\right)  \ $is of order $n$ with
respect to $W\left(  x\right)  .$ Note that $\mathbf{a}_{0}\left(
y,\omega,\theta;E\right)  =\left(  \operatorname*{sgn}E\right)  P_{\omega
}\left(  E\right)  \left(  \alpha\cdot\omega\right)  P_{\theta}\left(
E\right)  $ and that the lineal term with respect to $W\left(  x\right)  $ is
given by the relation%

\begin{equation}
\left.
\begin{array}
[c]{c}%
\mathbf{a}_{1}\left(  y,\omega,\theta;E\right)  =\\
-i\left(  \operatorname*{sgn}E\right)  \left(
{\displaystyle\int\limits_{0}^{\infty}}
\left(  \frac{\left\vert E\right\vert }{\left\vert \xi\right\vert }V\left(
y\pm t\omega\right)  \pm\left\langle \omega,A\left(  y\pm t\omega\right)
\right\rangle +\frac{\left\vert E\right\vert }{\left\vert \xi\right\vert
}V\left(  y\mp t\theta\right)  \pm\left\langle \theta,A\left(  y\mp
t\theta\right)  \right\rangle \right)  dt\right)  P_{\omega}\left(  E\right)
\left(  \alpha\cdot\omega\right)  P_{\theta}\left(  E\right)  ,
\end{array}
\right.  \label{direct1}%
\end{equation}
for $\pm E>m,$ up to a term from the class $\mathit{S}^{-\rho}$ (on variable
$y$). Moreover, it follows that $\mathbf{a}_{n}\in\mathit{S}^{-\left(
\rho-1\right)  n}.$

We will recover the asymptotics of the potential $W\left(  x\right)  $ from
the linear part, with respect to $W\left(  x\right)  $, of the symbol of the
operator $S_{\operatorname{sing}}\left(  E\right)  ,$ with kernel
$s_{\operatorname{sing}}^{(N)}.$ We compute the symbol $a\left(
y,\omega;E\right)  $ of $S_{\operatorname{sing}}\left(  E\right)  $ from its
amplitude $\upsilon\left(  E\right)  ^{2}\Psi_{+}\left(  \omega,\theta
;\omega\right)  $ $\times\mathbf{h}_{N}\left(  y,\omega,\theta;E;\omega
\right)  ,$ by making the change of variables $z=-\nu\left(  E\right)  y$ in
the definition (\ref{representation248}) of $\mathbf{h}_{N}$ and applying
(\ref{basicnotions45}). Recall the notation of Remark \ref{representation75}.
We get,
\begin{equation}
\tilde{a}\left(  y,\zeta;E\right)  \simeq\sum_{\beta}\frac{\upsilon\left(
E\right)  ^{2}}{\beta!}\left(  -i\nu\left(  E\right)  \right)  ^{-\left\vert
\beta\right\vert }\left.  \partial_{y}^{\beta}\partial_{\zeta^{\prime}}%
^{\beta}\mathbf{\tilde{h}}_{N}\left(  y,\zeta,\zeta^{\prime};E;\omega\right)
\right\vert _{\zeta^{\prime}=\zeta}, \label{direct16}%
\end{equation}
for $y\in\Pi_{\omega}$ (here we used that $\Psi_{+}\left(  \omega
,\theta;\omega\right)  =1,$ for $\theta\in\Omega_{+}\left(  \omega
,\delta^{\prime}\right)  $)$.$ The explicit formula for the symbol $a\left(
y,\omega;E\right)  $ can be obtained by plugging the expansion (\ref{direct17}%
) in the relation (\ref{direct16}). Note that $s_{\operatorname{sing}}^{(N)}$
is related with $a\left(  y,\omega;E\right)  $ by the expression
$s_{\operatorname{sing}}^{(N)}\left(  \omega,\theta;E\,;\omega\right)
=\left(  2\pi\right)  ^{-2}\int\limits_{\Pi_{\omega}}e^{i\nu\left(  E\right)
\left\langle y,\theta\right\rangle }a\left(  y,\omega;E\right)  dy.$ This
relation together with Theorem \ref{representation179} show that we can
recover the function $a\left(  y,\omega;E\right)  $ from the scattering
amplitude $s\left(  \omega,\theta;E\,\right)  $, up to a function from the
class $\mathit{S}^{-p\left(  N\right)  },$ that is, $a\left(  y,\omega
;E\right)  =\int\limits_{\Pi_{\omega}}e^{i\nu\left(  E\right)  \left\langle
y,\theta\right\rangle }s\left(  \omega,\theta;E\,\right)  \Psi_{+}\left(
\omega,\theta;\omega\right)  d\theta+a_{\operatorname*{reg}}^{\left(
N\right)  }\left(  y,\omega;E\right)  ,$ where $a_{\operatorname*{reg}%
}^{\left(  N\right)  }\in\mathit{S}^{-p\left(  N\right)  }\ $and $p\left(
N\right)  \rightarrow\infty,$ as $N\rightarrow\infty.$

Using the relations (\ref{representation113}) and (\ref{direct1}), we get the
following result for the function $a\left(  y,\omega;E\right)  $:

\begin{proposition}
\label{direct19}Let the magnetic potential $A\left(  x\right)  $ and the
electric potential $V\left(  x\right)  $ satisfy the estimates (\ref{eig31})
and (\ref{eig32}) respectively. Then, the function $a\left(  y,\omega
;E\right)  $ admits the expansion
\begin{equation}
a\left(  y,\omega;E\right)  \simeq\frac{\nu\left(  E\right)  }{\left\vert
E\right\vert }P_{\omega}\left(  E\right)  +\sum_{n=1}^{\infty}h_{n}\left(
y,\omega;E\right)  , \label{direct9}%
\end{equation}
where $h_{n}\left(  y,\omega;E\right)  $ is of order $n$ with respect to
$W\left(  x\right)  $ and $h_{n}\in\mathit{S}^{-\left(  \rho-1\right)  n}.$
Moreover, $(h_{1}+i\frac{\nu\left(  E\right)  }{\left\vert E\right\vert
}R\left(  y,\omega;E\right)  P_{\omega}\left(  E\right)  )\in\mathit{S}%
^{-\rho},$ where $R$ is defined by (\ref{representation219}), and
$(a-\frac{\nu\left(  E\right)  }{\left\vert E\right\vert }P_{\omega}\left(
E\right)  +i\frac{\nu\left(  E\right)  }{\left\vert E\right\vert }R\left(
y,\omega;E\right)  P_{\omega}\left(  E\right)  )\in\mathit{S}^{-\rho
+1-\varepsilon},$ $\varepsilon=\min\{\rho-1,1\}.$
\end{proposition}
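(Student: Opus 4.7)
The plan is to substitute the asymptotic series (\ref{direct17}) of the amplitude $\mathbf h_N$ into the symbol-from-amplitude formula (\ref{direct16}) and to collect contributions of equal order in the potential $W=(V,A)$.

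First I would verify (\ref{direct17}) in detail. Each factor $e^{\pm i\Phi^\pm}$ admits a convergent power-series expansion in $W$ because, by (\ref{eig15}), $\Phi^\pm$ is linear in $(V,A)$; similarly the recursions (\ref{eig12}), (\ref{eig16}) imply that $b_j^\pm$ and $c_j^\pm$ (for $j\ge 1$) decompose into pieces of order $\ge 1$ in $W$. Multiplying these expansions through the definition (\ref{representation252}) of $\mathbf h_N$ and regrouping by total order in $W$ produces (\ref{direct17}); an $n$-fold application of Lemma \ref{eig17} then gives $\mathbf a_n\in\mathit{S}^{-(\rho-1)n}$ as a function of $y$.

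Next I would evaluate (\ref{direct16}) term by term. The leading ($n=0$, $\beta=0$) contribution comes from $\mathbf a_0(y,\omega,\theta;E)=(\operatorname*{sgn} E)P_\omega(E)(\alpha\cdot\omega)P_\theta(E)$; since $\mathbf a_0$ is independent of $y$, all $\beta\ne 0$ derivatives vanish, and at $\theta=\omega$ the identity (\ref{representation113}) reduces the projector sandwich to $P_\omega(E)(\alpha\cdot\omega)P_\omega(E)=\frac{\nu(E)}{E}P_\omega(E)$, producing the stated leading term $\frac{\nu(E)}{|E|}P_\omega(E)$. For the linear piece $h_1$ I would isolate the $\beta=0$, $n=1$ contribution, plug $\theta=\omega$ into (\ref{direct1}), and observe that the two half-line integrals combine into the full-line integral $R(y,\omega;E)$ of (\ref{representation219}); a second application of (\ref{representation113}) then yields $h_1=-i\frac{\nu(E)}{|E|}R(y,\omega;E)P_\omega(E)$ modulo the $\mathit{S}^{-\rho}$ remainder already present in (\ref{direct1}) (arising from the $O(|\xi|^{-1})$ linear pieces of $w_N-P_\omega$ that were absorbed into $\mathbf a_{\ge 2}$ in the regrouping).

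The main technical obstacle is the bookkeeping of remainders and verifying that they lie in the sharp class $\mathit{S}^{-\rho+1-\varepsilon}$ with $\varepsilon=\min\{\rho-1,1\}$. The $|\beta|\ge 1$ contributions from $\mathbf a_1$ lie in $\mathit{S}^{-(\rho-1)-|\beta|}\subset\mathit{S}^{-\rho}$, since each $\partial_y$ gains a factor $\langle y\rangle^{-1}$. For $n=2$ the $\beta=0$ piece $\mathbf a_2|_{\theta=\omega}$ belongs to $\mathit{S}^{-2(\rho-1)}$ and the higher-$\beta$ pieces are strictly smaller; for $n\ge 3$ the terms are absorbed since $\mathit{S}^{-(\rho-1)n}\subset\mathit{S}^{-2(\rho-1)}$ when $\rho>1$. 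Hence the accumulated remainder sits in $\mathit{S}^{-\min\{\rho,\,2(\rho-1)\}}=\mathit{S}^{-(\rho-1+\varepsilon)}=\mathit{S}^{-\rho+1-\varepsilon}$, with the choice of $\varepsilon$ being sharp: for $\rho\le 2$ the binding constraint is the $n=2$ quadratic term giving $\varepsilon=\rho-1$, while for $\rho>2$ the binding constraint is the $\mathit{S}^{-\rho}$ remainder of $h_1$ giving $\varepsilon=1$. This simultaneously produces both refined estimates stated in the proposition.
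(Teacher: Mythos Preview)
Your proposal is correct and follows exactly the approach the paper sketches: the proposition is stated immediately after the setup of (\ref{direct17}) and (\ref{direct16}), with only the remark ``Using the relations (\ref{representation113}) and (\ref{direct1})'' by way of proof, and you have filled in precisely those details. Your bookkeeping of the remainder classes, in particular the identification $\min\{\rho,2(\rho-1)\}=\rho-1+\varepsilon$ with $\varepsilon=\min\{\rho-1,1\}$, is the right computation.

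One small correction to your parenthetical explanation: the $\mathit{S}^{-\rho}$ remainder already flagged in (\ref{direct1}) does not come from terms ``absorbed into $\mathbf a_{\ge 2}$''. The linear-in-$W$ contributions of $b_1^\pm$ and $c_1^\pm$ are of order $1$ in $W$ and hence belong to $\mathbf a_1$ itself; by (\ref{eig19})--(\ref{eig20}) they lie in $\mathit{S}^{-\rho}$ and are precisely the source of that remainder (together with the $|\beta|\ge 1$ pieces you already accounted for). This does not affect your argument, only the attribution.
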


Let the electric potential $V\left(  x\right)  \in C^{\infty}\left(
\mathbb{R}^{3}\right)  $ and the magnetic field $B\left(  x\right)  \in
C^{\infty}\left(  \mathbb{R}^{3}\right)  ,$ with $\operatorname{div}B=0,$
admit the asymptotic expansions%
\begin{equation}
V\left(  x\right)  \simeq\sum_{j=1}^{\infty}V_{j}\left(  x\right)  ,
\label{direct2}%
\end{equation}
and
\begin{equation}
B\left(  x\right)  \simeq\sum_{j=1}^{\infty}B_{j}\left(  x\right)  ,
\label{direct3}%
\end{equation}
respectively, where the functions $V_{j}\left(  x\right)  $ are homogeneous of
order $-\rho_{j}^{\left(  e\right)  },$ with $1<\rho_{j}^{\left(  e\right)
}<\rho_{k}^{\left(  e\right)  },$ and the functions $B_{j}\left(  x\right)  $
are homogeneous of order $-r_{j}^{\left(  m\right)  },$ with $2<r_{j}^{\left(
m\right)  }<r_{k}^{\left(  m\right)  }$ for $k>j.$

It follow from relations (\ref{basicnotions19})-(\ref{basicnotions21}) that
the magnetic field $B(x)$ is homogeneous of order $-r^{\left(  m\right)  }$
$<-2$ if and only if the magnetic potential $A(x)$ is homogeneous of order
$-\rho^{\left(  m\right)  }=-r^{\left(  m\right)  }+1<-1$. Therefore, if $B$
satisfies relation (\ref{direct3}), $A(x),$ defined by (\ref{basicnotions19}%
)-(\ref{basicnotions21}), is an asymptotic sum
\begin{equation}
A\left(  x\right)  \simeq\sum_{j=1}^{\infty}A_{j}\left(  x\right)  ,
\label{direct20}%
\end{equation}
where the functions $A_{j}\left(  x\right)  $ are homogeneous of order
$-\rho_{j}^{\left(  m\right)  }$ with $1<\rho_{j}^{\left(  m\right)  }%
<\rho_{k}^{\left(  m\right)  }$ for $k>j.$

Let $V\left(  x\right)  $ and $B\left(  x\right)  $ be as above. We define the
magnetic potential $A\left(  x\right)  $ by (\ref{basicnotions19}%
)-(\ref{basicnotions21}). Thus, we obtain a potential $\mathbf{V}\left(
x\right)  $ of the form (\ref{intro4}), where $A$ and $V$ satisfy the
estimates (\ref{eig31}) and (\ref{eig32}), respectively. Moreover, $V$ admits
the expansion (\ref{direct2}) and $A$ satisfies the relation (\ref{direct20}).

Actually, adding terms which are equal to zero in (\ref{direct2}) and
(\ref{direct3}) we can suppose that $r_{j}^{\left(  m\right)  }=\rho
_{j}^{\left(  e\right)  }+1.$ Then, expansions (\ref{direct2}) and
(\ref{direct20}) are equivalent to the expansion
\begin{equation}
W\left(  x\right)  \simeq\sum_{j=1}^{\infty}W_{j}\left(  x\right)
\label{direct4}%
\end{equation}
where $W_{j}\left(  x\right)  =\left(  V_{j}\left(  x\right)  ,A_{j}\left(
x\right)  \right)  $ is homogeneous of order $-\rho_{j}=-\rho_{j}^{\left(
m\right)  }=-\rho_{j}^{\left(  e\right)  }.$

Plugging (\ref{direct4}) in the expansion (\ref{direct9}) we get the following
result for the function $a\left(  y,\omega;E\right)  $, analogous to Theorem
3.4 of \cite{25} for the Schr\"{o}dinger equation:

\begin{theorem}
\label{direct7}Suppose that an electric potential $V\left(  x\right)  $ and a
magnetic field $B\left(  x\right)  ,$ with $\operatorname{div}B=0,$ are
$C^{\infty}\left(  \mathbb{R}^{3}\right)  $-functions and that they admit the
asymptotic expansions (\ref{direct2}) and (\ref{direct3}), where $V_{j}$ and
$B_{j}$ are homogeneous functions of orders $-\rho_{j}$ and $-r_{j}=-\rho
_{j}-1,$ respectively, where $1<\rho_{1}<\rho_{2}<\cdots.$ Let the magnetic
potential $A(x)$ be defined by the equalities (\ref{basicnotions19}%
)-(\ref{basicnotions21}). Then, the function $a\left(  y,\omega;E\right)  $
admits the asymptotic expansion
\begin{equation}
a\left(  y,\omega;E\right)  \simeq\frac{\nu\left(  E\right)  }{\left\vert
E\right\vert }P_{\omega}\left(  E\right)  +\sum_{n=1}^{\infty}\sum
_{m=0}^{\infty}\sum_{j_{1},...,j_{n}}h_{n,m;j_{1},...,j_{n}}\left(
y,\omega;E\right)  , \label{direct11}%
\end{equation}
where, for each $k=1,2,...,n,$ $j_{k}\ $takes values from $1$ to $\infty.$ The
functions $h_{n,m;j_{1},...,j_{n}}\left(  y,\omega;E\right)  $ are of order
$n$ with respect to the potential $W\left(  x\right)  ,$ they only depend on
$W_{j_{1}},W_{j_{2}}...,W_{j_{k}},$ and they are homogeneous functions of
order $n-m-\sum_{k=1}^{n}\rho_{j_{k}}$ with respect to the variable $y.$ We
note that $h_{1,0;j}\left(  y,\omega;E\right)  =-i\frac{\nu\left(  E\right)
}{\left\vert E\right\vert }R\left(  y,\omega;E;W_{j}\right)  P_{\omega}\left(
E\right)  $ (here we emphasize the dependence of the function $R\left(
y,\omega;E\right)  $ on $W_{j}$) is homogeneous function of order $1-\rho_{j}$
with respect to $y$.
\end{theorem}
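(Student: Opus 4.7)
The plan is to start from the expansion (\ref{direct9}) of Proposition \ref{direct19}, which gives
\[
a(y,\omega;E) \simeq \frac{\nu(E)}{|E|}P_{\omega}(E) + \sum_{n=1}^{\infty} h_{n}(y,\omega;E),
\]
where $h_{n}\in \mathit{S}^{-(\rho-1)n}$ is of order $n$ in the full potential $W=(V,A)$, and then to substitute the asymptotic expansion $W\simeq\sum_{j}W_{j}$ into each $h_{n}$. The key observation is that each $h_{n}$ arises, through formulas (\ref{direct17})--(\ref{direct16}), as a finite sum of terms each of which is an $n$-linear expression in $W$: more precisely, a product of $n$ line integrals of components of $W$ along the ray $y\pm t\omega$, multiplied by a smooth function of $\omega$ and $E$ (coming from the matrix structure $P_{\omega}(E)(\alpha\cdot\omega)P_{\theta}(E)$ evaluated at $\theta=\omega$), possibly followed by a finite number of $y$-derivatives produced by the amplitude-to-symbol formula (\ref{basicnotions45}).

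The second step is to exploit this multilinear structure. Writing $W\simeq\sum_{j}W_{j}$ inside each $h_{n}$ yields, formally, a sum over multi-indices $(j_{1},\dots,j_{n})$ of contributions $h_{n;j_{1},\dots,j_{n}}(y,\omega;E)$ depending only on $W_{j_{1}},\dots,W_{j_{n}}$. For each such term, if $W_{j_{k}}$ is homogeneous of degree $-\rho_{j_{k}}$, then the line integral $\int_{0}^{\infty}W_{j_{k}}(y\pm t\omega)\,dt$ is a homogeneous function of $y$ of degree $1-\rho_{j_{k}}$; hence the product of $n$ such integrals is homogeneous of degree $n-\sum_{k=1}^{n}\rho_{j_{k}}$. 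The derivatives $\partial_{y}^{\beta}\partial_{\zeta'}^{\beta}$ of (\ref{direct16}) each lower homogeneity in $y$ by one unit (and contribute a factor of $\nu(E)^{-|\beta|}$ absorbed into the $E$-dependent prefactor), so decomposing according to the total number $m=|\beta|$ of $y$-derivatives hitting the product of line integrals yields functions $h_{n,m;j_{1},\dots,j_{n}}(y,\omega;E)$ which are homogeneous of degree $n-m-\sum_{k=1}^{n}\rho_{j_{k}}$ in $y$, as claimed. For the distinguished term, setting $n=1$, $m=0$, and keeping only $W_{j}$, one recovers directly $h_{1,0;j}(y,\omega;E)=-i\frac{\nu(E)}{|E|}R(y,\omega;E;W_{j})P_{\omega}(E)$, which is homogeneous of order $1-\rho_{j}$ by the same integration-increases-homogeneity argument.

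The third and final step is to verify that the formal double series on the right of (\ref{direct11}) is asymptotic to $a(y,\omega;E)$ in the sense of (\ref{basicnotions16})--(\ref{basicnotions17}). To this end one fixes any large $M$, truncates $W\simeq\sum_{j\leq J}W_{j}+W_{(J)}$ with $W_{(J)}\in\mathit{\dot S}^{-\rho_{J+1}}$, and truncates $a\simeq \frac{\nu(E)}{|E|}P_{\omega}(E)+\sum_{n\leq N}h_{n}+r_{N}$ with $r_{N}\in\mathit{S}^{-(\rho-1)(N+1)}$ from Proposition \ref{direct19}. Substituting and expanding by multilinearity produces exactly the $h_{n,m;j_{1},\dots,j_{n}}$ with $n\leq N$, $j_{k}\leq J$, plus remainder terms that are either $n$-linear in $W$ with at least one factor in $\mathit{\dot S}^{-\rho_{J+1}}$, or lie in $\mathit{S}^{-(\rho-1)(N+1)}$. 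Both kinds of remainders lie in $\mathit{\dot S}^{-\sigma}$ for $\sigma$ that tends to $+\infty$ as $N,J\to\infty$, which is precisely the required asymptotic property.

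The main obstacle will be the bookkeeping in the last step: one needs to show that the double sum over $(n,m,j_{1},\dots,j_{n})$, which is not ordered in an obvious way, really is an asymptotic series in the sense of (\ref{basicnotions17}). This is handled by showing that for any prescribed decay $-\sigma$ only finitely many triples $(n,m,(j_{1},\dots,j_{n}))$ give terms with homogeneity $\geq -\sigma$, because each factor $\rho_{j_{k}}>1$ grows with $j_{k}$ and the exponent $n-m-\sum_{k}\rho_{j_{k}}$ forces simultaneously $n$, $m$, and the $j_{k}$ to be bounded; one can then reorder the series so that the partial sums up to any finite level absorb all contributions of a given homogeneity class, matching the definition of $\simeq$ in (\ref{basicnotions16}).
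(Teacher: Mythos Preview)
Your proposal is correct and follows exactly the approach the paper indicates: the paper's entire argument is the single sentence ``Plugging (\ref{direct4}) in the expansion (\ref{direct9}) we get the following result,'' with a reference to the analogous Theorem~3.4 of \cite{25} for the Schr\"odinger case, and you have written out precisely that substitution together with the homogeneity bookkeeping and the verification that the resulting series is asymptotic. One small inaccuracy worth tightening: the $n$-linear terms $h_n$ are not literally products of $n$ line integrals followed by derivatives from (\ref{direct16}) alone---the construction of $b_j^{\pm},c_j^{\pm}$ in Section~5 already interleaves line integrations and $-i\nabla$ operators---but this does not affect your argument, since each line integration still raises $y$-homogeneity by one and each derivative lowers it by one, so the count $n-m-\sum_k\rho_{j_k}$ remains valid with $m$ recording the total number of $y$-derivatives from both sources.
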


Suppose that we know the matrix $-i\frac{\nu\left(  E\right)  }{\left\vert
E\right\vert }R\left(  y,\omega;E;W\right)  P_{\omega}\left(  E\right)  $.
Note that the first diagonal component of the matrix $P_{\omega}\left(
E\right)  $ is equal to $\left(  \frac{1}{2}+\frac{m}{2E}\right)  .$ As
$\left(  \frac{1}{2}+\frac{m}{2E}\right)  \neq0,$ for $\left\vert E\right\vert
>m,$ we recover the function $\int\limits_{-\infty}^{\infty}\mathcal{V}%
_{V,A,\omega}^{\left(  E\right)  }\left(  t\omega+y\right)  dt$ from the first
diagonal component of the matrix $-i\frac{\nu\left(  E\right)  }{\left\vert
E\right\vert }R\left(  y,\omega;E;W\right)  P_{\omega}\left(  E\right)  $.
Therefore, similarly to \cite{79} or \cite{25}, we determine the electric
potential $V$ and the magnetic field $B$ from $R_{e}$ and $R_{m},$
respectively, by using the Radon transform.

Let us define a mapping that sends a potential $W\left(  x\right)  =\left(
V\left(  x\right)  ,A\left(  x\right)  \right)  $ into $a^{\left(  N\right)
}\left(  y,\omega;E\right)  +a_{\operatorname*{reg}}^{\left(  N\right)
}\left(  y,\omega;E\right)  -\frac{\nu\left(  E\right)  }{\left\vert
E\right\vert }P_{\omega}\left(  E\right)  \ $by $T\left(  y,\omega;E;W\right)
=a^{(N)}\left(  y,\omega;E\right)  +a_{\operatorname*{reg}}^{\left(  N\right)
}\left(  y,\omega;E\right)  -\frac{\nu\left(  E\right)  }{\left\vert
E\right\vert }P_{\omega}\left(  E\right)  $ (here we emphasize the dependence
of $a$ on $N$). As $a^{(N)}\left(  y,\omega;E\right)  $ is defined in terms of
the asymptotic expansion (\ref{direct16}), then $T$ is defined only up to a
function from the class $\mathit{S}^{-\infty}.$ We separate the linear part
$-i\frac{\nu\left(  E\right)  }{\left\vert E\right\vert }R\left(
y,\omega;E;W\right)  P_{\omega}\left(  E\right)  $ of $T$, with respect to
$W\left(  x\right)  ,$ and define $Q\left(  y,\omega;E;W\right)  =T\left(
y,\omega;E;W\right)  +i\frac{\nu\left(  E\right)  }{\left\vert E\right\vert
}R\left(  y,\omega;E;W\right)  P_{\omega}\left(  E\right)  .$ We now can
formulate the reconstruction result.

\begin{theorem}
\label{18}Suppose that an electric potential $V\left(  x\right)  $ and a
magnetic field $B\left(  x\right)  ,$ with $\operatorname{div}B=0,$ are
$C^{\infty}\left(  \mathbb{R}^{3}\right)  $-functions and that they admit the
asymptotic expansions (\ref{direct2}) and (\ref{direct3}), where $V_{j}$ and
$B_{j}$ are homogeneous functions of orders $-\rho_{j}$ and $-r_{j}=-\rho
_{j}-1,$ respectively, where $1<\rho_{1}<\rho_{2}<\cdots.$ Let the magnetic
potential $A(x)$ be defined by the equalities (\ref{basicnotions19}%
)-(\ref{basicnotions21}). Then, the kernel $s\left(  \omega,\theta;E\right)  $
of the scattering matrix $S\left(  E\right)  $ for fixed $E\in(-\infty
,m)\cup(m,+\infty)$ in a neighborhood of the diagonal $\omega=\theta$ uniquely
determines each one of $V_{j}\left(  x\right)  $ and $B_{j}\left(  x\right)
$. Moreover, $V_{1}\left(  x\right)  $ and $B_{1}\left(  x\right)  $ can be
reconstructed from the formula $-i\frac{\nu\left(  E\right)  }{\left\vert
E\right\vert }R\left(  y,\omega;E;W_{1}\right)  P_{\omega}\left(  E\right)
=h_{1}\left(  y,\omega;E\right)  ,$ where $h_{1}$ is the term of highest
homogeneous order with respect to $y$ in the expansion (\ref{direct11}). The
functions $V_{j}\left(  x\right)  $ and $B_{j}\left(  x\right)  ,$ for
$j\geq2,$ can be recursively reconstructed from the formula $-i\frac
{\nu\left(  E\right)  }{\left\vert E\right\vert }R\left(  y,\omega
;E;W_{j}\right)  P_{\omega}\left(  E\right)  =\left(  a\left(  y,\omega
;E\right)  -\frac{\nu\left(  E\right)  }{\left\vert E\right\vert }P_{\omega
}\left(  E\right)  -T\left(  y,\omega;E;\sum_{i=1}^{j-1}W_{i}\right)  \right)
^{%
{{}^\circ}%
}.$ Here we denote by $f^{%
{{}^\circ}%
}$, the highest order homogeneous term $f_{k}$ in (\ref{basicnotions16}) that
is not identically zero.
\end{theorem}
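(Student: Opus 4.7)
The plan is to extract $a(y,\omega;E)$ from $s(\omega,\theta;E)$, identify each homogeneous piece of the asymptotic expansion (\ref{direct11}), and then recursively read off each $W_j=(V_j,A_j)$ by Radon inversion. Concretely, the formula displayed just above Proposition \ref{direct19} together with Theorem \ref{representation179} shows that, for every $N$, the function
\[
a^{(N)}(y,\omega;E)=\int_{\Pi_\omega}e^{i\nu(E)\langle y,\theta\rangle}\,s(\omega,\theta;E)\,\Psi_{+}(\omega,\theta;\omega)\,d\theta-a^{(N)}_{\operatorname*{reg}}(y,\omega;E)
\]
is computable from $s(\omega,\theta;E)$ restricted to a neighborhood of the diagonal, up to an error in $\mathit{S}^{-p(N)}$ with $p(N)\to\infty$ as $N\to\infty$. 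Hence $a(y,\omega;E)$ is determined modulo the Schwartz class $\mathit{S}^{-\infty}$, which is enough to detect any nontrivial homogeneous term.

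For the base step I would isolate the part of $a(y,\omega;E)-\tfrac{\nu(E)}{\left\vert E\right\vert}P_\omega(E)$ of highest homogeneous order in $y$. By Theorem \ref{direct7} the homogeneous order of $h_{n,m;j_1,\dots,j_n}$ is $n-m-\sum_{k=1}^n\rho_{j_k}$; since $\rho_j>1$ for every $j$ and $\rho_1<\rho_2<\cdots$, the maximum $1-\rho_1$ is attained only at $(n,m,j_1)=(1,0,1)$, so the leading homogeneous term is exactly
\[
h_{1,0;1}(y,\omega;E)=-i\,\tfrac{\nu(E)}{\left\vert E\right\vert}\,R(y,\omega;E;W_1)\,P_\omega(E).
\]
The $(1,1)$-entry of $P_\omega(E)$ equals $\tfrac12(1+m/E)\neq0$ for $\left\vert E\right\vert>m$, so that entry recovers the scalar $R(y,\omega;E;W_1)$. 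Splitting $R(y,\omega;E;W_1)$ into its even and odd parts in $\omega$, exactly as in \cite{79} and \cite{25}, separates $R_e(\omega,y;V_1)$ from $R_m(\omega,y;A_1)$, and inverting the two-dimensional Radon transform on each plane $\Pi_\omega$ yields $V_1$ and the magnetic field $B_1=\operatorname*{rot}A_1$.

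For the inductive step, assume $W_1,\dots,W_{j-1}$ have been recovered. Evaluating $W\mapsto T(y,\omega;E;W)$ at $\sum_{i=1}^{j-1}W_i$ and subtracting gives
\[
a(y,\omega;E)-\tfrac{\nu(E)}{\left\vert E\right\vert}P_\omega(E)-T\!\left(y,\omega;E;\textstyle\sum_{i=1}^{j-1}W_i\right)\simeq\sum h_{n,m;j_1,\dots,j_n}(y,\omega;E),
\]
where the surviving sum runs over multi-indices having at least one $j_k\geq j$. The crucial homogeneity check is that, within this restricted sum, the top order is $1-\rho_j$ and is attained uniquely at $(n,m,j_1)=(1,0,j)$: for $n=1$ with $m\geq1$ or $j_1>j$ the order strictly drops, and for $n\geq2$, picking any $j_{k_0}\geq j$ and using $\rho_{j_k}\geq\rho_1>1$ for the remaining indices bounds the order by $n-\rho_j-(n-1)\rho_1=1-\rho_j-(n-1)(\rho_1-1)<1-\rho_j$. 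Therefore the leading homogeneous term of the subtracted remainder equals $h_{1,0;j}$, and $W_j$ is recovered exactly as in the base case, with $B_j=\operatorname*{rot}A_j$.

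The main obstacle is precisely the homogeneity bookkeeping in the last paragraph. All other ingredients are direct applications of results already in the paper: the extraction of $a$ from $s$ uses Theorem \ref{representation179} together with the amplitude-to-left-symbol passage (\ref{direct16}); the structure of $a$ as an asymptotic series in powers of $W$ with the stated homogeneities is Theorem \ref{direct7}; and the final Radon inversion is standard and identical to the Schr\"odinger treatment of \cite{25}. The real content added here is the verification that, after subtracting $T(\,\cdot\,;\sum_{i<j}W_i)$, every nonlinear ($n\geq2$) contribution is strictly subleading to $h_{1,0;j}$, which in turn reduces to the elementary inequality $(n-1)(\rho_1-1)>0$ exploited above.
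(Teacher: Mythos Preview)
Your proposal is correct and follows exactly the line the paper intends: the paper's own proof consists of the single sentence ``The proof is analogous to Theorem 4.2 of \cite{25}, for the case of the Schr\"odinger equation,'' and your argument supplies precisely those details---extracting $a$ from $s$ modulo $\mathit{S}^{-\infty}$ via Theorem \ref{representation179} and (\ref{direct16}), then peeling off the homogeneous layers of Theorem \ref{direct7} by the order count $n-m-\sum_k\rho_{j_k}\le 1-\rho_j-(n-1)(\rho_1-1)$, and finishing with Radon inversion as in \cite{25}. The homogeneity bookkeeping you flag as the main obstacle is indeed the only nontrivial step, and your inequality handles it cleanly.
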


\begin{proof}
The proof is analogous to Theorem 4.2 of \cite{25}, for the case of the
Schr\"{o}dinger equation.
\end{proof}

\begin{corollary}
Let $W^{\left(  j\right)  }$ satisfy the assumptions of Theorem \ref{18} for
$j=1,2.$ If $s_{1}\left(  \omega,\theta;E\right)  -s_{2}\left(  \omega
,\theta;E\right)  \in C^{\infty}\left(  \mathbb{S}^{2}\times\mathbb{S}%
^{2}\right)  $ for some $\left\vert E\right\vert >m,$ then $V_{1}-V_{2}$ and
$B_{1}-B_{2}$ belong to the Schwartz class $\mathit{S.}$
\end{corollary}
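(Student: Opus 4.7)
The plan is to extract, from the hypothesis $s_1-s_2\in C^\infty(\mathbb{S}^2\times\mathbb{S}^2)$, that the corresponding amplitudes $a_1(y,\omega;E)-a_2(y,\omega;E)$ lie in $\mathit{\dot{S}}^{-\infty}$ as functions of $y\in\Pi_\omega$, uniformly in $\omega$; then to run the reconstruction of Theorem \ref{18} to conclude that every homogeneous piece $V_k^{(1)}, B_k^{(1)}$ agrees with $V_k^{(2)}, B_k^{(2)}$; and finally to deduce the Schwartz property from the definition of asymptotic expansions (\ref{basicnotions16})--(\ref{basicnotions17}).

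First I would compare the amplitudes using the representation
\[
a_j^{(N)}(y,\omega;E)=\int_{\Pi_\omega}e^{i\nu(E)\langle y,\theta\rangle}s_j(\omega,\theta;E)\Psi_+(\omega,\theta;\omega)\,d\theta + a_{j,\operatorname{reg}}^{(N)}(y,\omega;E),
\]
with $a_{j,\operatorname{reg}}^{(N)}\in\mathit{S}^{-p(N)}$ and $p(N)\to\infty$ as $N\to\infty$, obtained just before Proposition \ref{direct19}. After projecting $\theta$ onto $\Pi_\omega$ (writing $\theta=\sqrt{1-|\zeta|^2}\,\omega+\zeta$) the integral becomes, up to a smooth Jacobian, the inverse Fourier transform on $\Pi_\omega$ of the smooth compactly supported function $(s_1-s_2)(\omega,\cdot;E)\Psi_+$. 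Integration by parts in $\zeta$ yields rapid decay in $y$, uniformly in $\omega\in\mathbb{S}^2$. Combined with $a_{j,\operatorname{reg}}^{(N)}\in\mathit{S}^{-p(N)}$ and the freedom to let $N\to\infty$, this proves $a_1(y,\omega;E)-a_2(y,\omega;E)\in\mathit{\dot{S}}^{-\infty}$; in particular every homogeneous component of its asymptotic expansion in $y$ vanishes.

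Next I would invoke Theorem \ref{direct7} applied to $W^{(1)}$ and $W^{(2)}$. Grouping the expansion (\ref{direct11}) by order of homogeneity in $y$, the piece of highest order in $a_j-\tfrac{\nu(E)}{|E|}P_\omega(E)$ is $h_{1,0;1}(y,\omega;E;W^{(j)})=-i\tfrac{\nu(E)}{|E|}R(y,\omega;E;W_1^{(j)})P_\omega(E)$, homogeneous of order $1-\rho_1$. Since $a_1-a_2\in\mathit{\dot{S}}^{-\infty}$ this top-order piece must coincide for $j=1,2$; looking at the $(1,1)$-entry of $P_\omega(E)$ (which equals $\tfrac12+\tfrac{m}{2E}\neq0$) and separating the even and odd parts in $\omega$ of $R$ recovers $R_e(\omega,y;V_1^{(1)})=R_e(\omega,y;V_1^{(2)})$ and $R_m(\omega,y;A_1^{(1)})=R_m(\omega,y;A_1^{(2)})$ for every $\omega\in\mathbb{S}^2$. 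Inverting the Radon transform as in Theorem \ref{representation218} gives $V_1^{(1)}=V_1^{(2)}$ and $B_1^{(1)}=B_1^{(2)}$. Induction on $k$ using the recursive reconstruction formula of Theorem \ref{18}, which expresses $W_k$ in terms of the homogeneous component of order $1-\rho_k$ of $a$ minus a contribution depending only on $W_1,\ldots,W_{k-1}$, yields $V_k^{(1)}=V_k^{(2)}$ and $B_k^{(1)}=B_k^{(2)}$ for every $k\geq1$.

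Finally, by (\ref{basicnotions16})--(\ref{basicnotions17}), for every $N$ one has $V_1-V_2=(V_1-\sum_{k\leq N}V_k^{(1)})-(V_2-\sum_{k\leq N}V_k^{(2)})\in\mathit{\dot{S}}^{-\rho_{N+1}}$, and similarly for $B_1-B_2$. Since $V_1-V_2, B_1-B_2\in C^\infty(\mathbb{R}^3)$ and lie in $\mathit{\dot{S}}^{-\rho}$ for every $\rho>0$, they belong to the Schwartz class $\mathcal{S}=\mathit{S}^{-\infty}$. The main obstacle is the first step: making rigorous the passage from smoothness of $s_1-s_2$ on $\mathbb{S}^2\times\mathbb{S}^2$, via the approximate symbol $a^{(N)}$ with its regular remainder from Theorem \ref{representation179}, to symbol-class rapid decay of $a_1-a_2$ in $y$ with all derivatives bounded uniformly in $\omega$, which requires keeping careful track of the error bounds $p(N),q(N)$ allowed by that theorem.
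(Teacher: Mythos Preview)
Your proposal is correct and follows precisely the line the paper intends: the Corollary is stated in the paper without proof because it is an immediate consequence of Theorem~\ref{18}, and your three steps spell out exactly that consequence. The key observation you make---that smoothness of $s_1-s_2$ forces $a_1-a_2\in\mathit{\dot S}^{-\infty}$ via the inversion formula just before Proposition~\ref{direct19}, after which the reconstruction of Theorem~\ref{18} yields $V_k^{(1)}=V_k^{(2)}$ and $B_k^{(1)}=B_k^{(2)}$ for every $k$---is the intended mechanism, and your final appeal to (\ref{basicnotions16})--(\ref{basicnotions17}) is the correct way to pass from ``all asymptotic terms agree'' to ``the difference is Schwartz.'' The concern you raise at the end about uniformity in $\omega$ is not a real obstacle: $\Psi_+(\omega,\cdot;\omega)(s_1-s_2)(\omega,\cdot;E)$ is a smooth family over the compact $\mathbb{S}^2$ of smooth compactly supported functions in the $\Pi_\omega$-variable, so its Fourier transform decays rapidly with constants uniform in $\omega$, and the remainder bounds $a_{\operatorname{reg}}^{(N)}\in\mathit{S}^{-p(N)}$ from Theorem~\ref{representation179} are already stated uniformly.
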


\section{Completeness of averaged scattering solutions.}

If the potential $\mathbf{V}$ satisfies Condition \ref{basicnotions26} and
decreases as $\left\vert x\right\vert ^{-\rho},$ when $\left\vert x\right\vert
\rightarrow\infty,$ with $\rho>2,$ then, for all $E\in\{(-\infty
,-m)\cup(m,\infty)\}\backslash\sigma_{p}\left(  H\right)  ,$ the scattering
solution $u_{\pm}\left(  x,\omega;E\right)  $ are defined by
(\ref{representation29}) and they satisfy the asymptotic expansion (\ref{re16}).

\bigskip If $\rho\leq2,$ then relation (\ref{representation29}) makes no
sense. However, similarly to the Schr\"{o}dinger operator case (\cite{37}%
,\cite{38}), we can generalize definition (\ref{representation29}).

We define the \textquotedblleft unperturbed averaged scattering
solutions\textquotedblright\ by $\psi_{0,f}\left(  x;E\right)  :=\int%
_{\mathbb{S}^{2}}e^{i\nu\left(  E\right)  \left\langle \omega,x\right\rangle
}P_{\omega}\left(  E\right)  f\left(  \omega\right)  d\omega,$ for any $f\in
L^{2}\left(  \mathbb{S}^{2};\mathbb{C}^{4}\right)  .$ Note that up to a
coefficient $\psi_{0,f}$ is equal to $\Gamma_{0}^{\ast}\left(  E\right)  f,$
defined by (\ref{basicnotions39}). Then, it follows that $\psi_{0,f}%
\in\mathcal{H}^{1,-s}\left(  \mathbb{R}^{3};\mathbb{C}^{4}\right)  ,$ $s>1/2,$
and $H_{0}\psi_{0,f}=E\psi_{0,f}.$

Let the potential $\mathbf{V}$ satisfy Condition \ref{basicnotions26}. The
\textquotedblleft perturbed averaged\ scattering solutions\textquotedblright%
\ are defined by
\begin{equation}
\psi_{+,f}\left(  x;E\right)  :=[I-R_{+}\left(  E\right)  \mathbf{V}%
]\psi_{0,f},\text{ }E\in\{(-\infty,-m)\cup(m,\infty)\}\backslash\sigma
_{p}\left(  H\right)  ,\text{ }f\in L^{2}\left(  \mathbb{S}^{2};\mathbb{C}%
^{4}\right)  . \label{completeness6}%
\end{equation}
Note that $\psi_{+,f}\in\mathcal{H}^{1,-s}\left(  \mathbb{R}^{3}%
;\mathbb{C}^{4}\right)  ,$ $1/2<s\leq s_{0},$ and $H\psi_{+,f}=E\psi_{+,f}.$
If $\rho>2,$ the formula (\ref{representation29}) holds and we have
$\psi_{+,f}\left(  x;E\right)  =\int_{\mathbb{S}^{2}}\psi^{+}\left(
x,\omega;E\right)  f\left(  \omega\right)  d\omega.$ The last equality
justifies the name averaged scattering solutions.

Using the stationary representation (\ref{basicnotions14}) we can write the
scattering matrix $S\left(  E\right)  $ in terms of the averaged solutions.
For $f,g\in L^{2}\left(  \mathbb{S}^{2};\mathbb{C}^{4}\right)  $ we have
\begin{equation}
\left.  \left(  S\left(  E\right)  f,g\right)  _{\mathcal{H}\left(  E\right)
}=\left(  f,g\right)  _{\mathcal{H}\left(  E\right)  }-i\left(  2\pi\right)
^{-2}\upsilon\left(  E\right)  ^{2}\left(  \mathbf{V}\psi_{+,f},\psi
_{0,g}\right)  _{L^{2}}.\right.  \label{completeness5}%
\end{equation}

For us, the important property of the averaged scattering solutions is that
the set (\ref{completeness6}) is dense on the set of all solutions to the
Dirac equation (\ref{eig1}) in $L^{2}\left(  \Omega;\mathbb{C}^{4}\right)  ,$
where $\Omega$ is a connected open bounded set with smooth boundary
$\partial\Omega$. We present this assertion as:

\begin{theorem}
\label{completeness12}Let $\mathbf{V}\ $satisfies Condition
\ref{basicnotions26} and the following estimate
\begin{equation}
\left\vert \partial_{x}^{\alpha}\mathbf{V}\left(  x\right)  \right\vert \leq
C_{\alpha}\left(  1+\left\vert x\right\vert \right)  ^{-\rho},\text{ }%
\rho>1,\text{ }\left\vert \alpha\right\vert \leq1,\text{ for }x\in
\mathbb{R}^{3}\backslash\Omega, \label{completeness2}%
\end{equation}
where $\Omega$ is a connected open bounded set with smooth boundary
$\partial\Omega$. Then, the set of averaged scattering solutions $\left\{
\psi_{+,f},\right.  $ $\left.  f\in\mathcal{H}\left(  E\right)  \right\}  $ is
strongly dense on the set of all solutions to (\ref{eig1}) in $L^{2}\left(
\Omega;\mathbb{C}^{4}\right)  $ for all fixed $E\in\{(-\infty,-m)\cup
(m,\infty)\}\backslash\sigma_{p}\left(  H\right)  .$
\end{theorem}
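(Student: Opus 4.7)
The plan is to establish the density by a duality argument. Suppose $w\in L^{2}(\Omega;\mathbb{C}^{4})$ satisfies $(H-E)w=0$ in $\Omega$ and $(w,\psi_{+,f})_{L^{2}(\Omega)}=0$ for every $f\in\mathcal{H}(E)$; I aim to deduce $w=0$. Extend $w$ by zero to $\tilde w\in L^{2}(\mathbb{R}^{3};\mathbb{C}^{4})$, which is compactly supported in $\overline\Omega$ and hence belongs to $L^{2}_{s}$ for every $s>0$. Using $\psi_{+,f}=(I-R_{+}(E)\mathbf{V})\psi_{0,f}$, the adjoint relation $R_{+}(E)^{\ast}=R_{-}(E)$, and the Hermitianity of $\mathbf{V}$, the orthogonality rewrites as
\[
\bigl((I-\mathbf{V}R_{-}(E))\tilde w,\,\psi_{0,f}\bigr)_{L^{2}(\mathbb{R}^{3})}=0\qquad\text{for all }f\in L^{2}(\mathbb{S}^{2};\mathbb{C}^{4}).
\]
Because $\psi_{0,f}$ is proportional to $\Gamma_{0}^{\ast}(E)f$, this is equivalent, via the definition (\ref{basicnotions25}), to $\Gamma_{-}(E)\tilde w=0$ in $L^{2}(\mathbb{S}^{2};\mathbb{C}^{4})$.

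Next, I would set $u:=R_{-}(E)\tilde w$, which lies in $\mathcal{H}^{1,-s}$ for $s>1/2$ by the limiting absorption principle, and satisfies $(H-E)u=\tilde w$. Since $\tilde w$ is supported in $\overline\Omega$, this means $(H-E)u=0$ in $\mathbb{R}^{3}\setminus\overline\Omega$. Applying Lemma \ref{re10} to the compactly supported datum $\tilde w$, the vanishing $\Gamma_{-}(E)\tilde w=0$ eliminates the leading radiating term, so that
\[
u(x)=\tilde o(|x|^{-1}),\qquad (\partial_{|x|}u)(x)=\tilde o(|x|^{-1})\qquad\text{as }|x|\to\infty.
\]

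The main technical obstacle is to upgrade these averaged-decay asymptotics, together with $(H-E)u=0$ on $\mathbb{R}^{3}\setminus\overline\Omega$, into $u\equiv0$ on $\mathbb{R}^{3}\setminus\overline\Omega$. I would proceed in two stages. First, a Rellich-type uniqueness theorem for the Dirac operator: iterating the equation via $(H_{0})^{2}=-\Delta+m^{2}$ and using the smoothness and short-range decay (\ref{completeness2}) of $\mathbf{V}$ outside $\Omega$, the problem reduces to the classical Rellich theorem for the Helmholtz operator with short-range perturbations, yielding $u=0$ for $|x|$ large. Second, strong unique continuation for first-order Dirac systems with $C^{1}$ coefficients, applicable in $\mathbb{R}^{3}\setminus\overline\Omega$ thanks to (\ref{completeness2}), propagates this vanishing up to $\partial\Omega$. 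This step is where the hypothesis (\ref{completeness2}) on $\mathbf{V}$ outside $\Omega$ plays its essential role.

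Once $u\equiv 0$ on $\mathbb{R}^{3}\setminus\overline\Omega$, the trace of $u$ on $\partial\Omega$ from the exterior vanishes; continuity of the $\mathcal{H}^{1,-s}_{\mathrm{loc}}$-trace then forces the interior trace $u|_{\partial\Omega}=0$ as well. Inside $\Omega$, $u$ therefore solves $(H-E)u=w$ with zero Dirichlet trace on the smooth boundary $\partial\Omega$. The Dirac Green's identity
\[
(Hu,v)_{L^{2}(\Omega)}-(u,Hv)_{L^{2}(\Omega)}=i\int_{\partial\Omega}u^{\ast}(\alpha\cdot n)\,v\,dS
\]
applied with $v=w$ then gives
\[
\|w\|^{2}_{L^{2}(\Omega)}=(w,(H-E)u)_{L^{2}(\Omega)}=((H-E)w,u)_{L^{2}(\Omega)}+i\int_{\partial\Omega}w^{\ast}(\alpha\cdot n)\,u\,dS=0,
\]
since $(H-E)w=0$ in $\Omega$ and $u|_{\partial\Omega}=0$. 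Hence $w=0$, which by Hahn--Banach yields the desired density. The proof is a Dirac-analogue of Weder's argument for Schrödinger in \cite{58},\cite{26}, the Dirac-specific ingredients being Lemma \ref{re10} for the averaged asymptotics and the Rellich/unique-continuation step for first-order systems.
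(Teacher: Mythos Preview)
Your overall architecture is exactly that of the paper: take $w$ orthogonal to all $\psi_{+,f}$, extend by zero, deduce $\Gamma_{-}(E)\tilde w=0$, set $u$ equal to the boundary value of the resolvent applied to $\tilde w$, show $u\equiv0$ on $\mathbb{R}^{3}\setminus\overline\Omega$, and finish by integration by parts. The last step (your Green's identity) is the same as the paper's approximation of $\psi$ by $C^\infty_0(\Omega)$ functions.

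The substantive difference is in how the exterior vanishing of $u$ is obtained. You invoke Lemma~\ref{re10} to get the averaged decay $u=\tilde o(|x|^{-1})$ and then appeal to ``Rellich plus unique continuation''. The paper does \emph{not} use Lemma~\ref{re10} here. Instead it proves the quantitatively stronger statement $\psi\in L^{2}_{-\sigma}$ for some $\sigma<1/2$: it writes $R_{+}(E)\chi$ through the spectral representation $\mathcal F_{-}$, uses $\Gamma_{-}(E)\chi=0$ together with the H\"older continuity of $\lambda\mapsto\Gamma_{-}(\lambda)$ and Privalov's theorem to control $(\lambda-E)^{-1}(\Gamma_{-}(\lambda)-\Gamma_{-}(E))\chi$ in $L^{p}(I;L^{2}(\mathbb S^{2}))$, and then interpolates the operator $J$ of (\ref{completeness15}) between $L^{1}$ and $L^{2}$ to land in $L^{2}_{-\sigma}$. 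With $\sigma<1/2$ in hand, the paper squares the Dirac equation to the Schr\"odinger-type system (\ref{completeness11}) and applies the Ikebe--Uchiyama result \cite{45}, whose hypothesis is precisely this weighted-$L^{2}$ decay, to conclude $\psi\equiv0$ on $\mathbb{R}^{3}\setminus\overline\Omega$ in one stroke (unique continuation is already built into that argument).

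What each route buys: your use of Lemma~\ref{re10} is cleaner and avoids the Privalov/interpolation machinery, but the condition $\tilde o(|x|^{-1})$ is strictly weaker than $L^{2}_{-\sigma}$, $\sigma<1/2$, so your ``classical Rellich'' step must be stated for a Schr\"odinger system with first-order short-range perturbations $(H_{0}+E)(\mathbf V u)$ under only this averaged decay---that is doable (e.g.\ via $\liminf_{r}\int_{|x|=r}|u|^{2}\,dS=0$, which $\tilde o(|x|^{-1})$ implies), but it is not quite the off-the-shelf statement you cite. The paper's interpolation argument is heavier but lands exactly on the hypothesis of \cite{45}, so nothing further needs to be checked. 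A minor remark: the paper writes $\psi=R_{+}(E)\chi$ rather than your $u=R_{-}(E)\tilde w$; since $\Gamma_{-}(E)\chi=0$ kills the jump $R_{+}(E)-R_{-}(E)$, the two coincide.
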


Let us note that the result of Theorem \ref{completeness12} holds for all
$\left\vert E\right\vert >m,$ if some result on absence of eigenvalues on
$(-\infty,-m)\cup(m,\infty)$ is applied. For example, if $\mathbf{V}\in
L_{\operatorname*{loc}}^{5}\left(  \mathbb{R}^{3}\right)  $ satisfies relation
(\ref{completeness2}), then the result of Theorem \ref{completeness12} remains
true for all $\left\vert E\right\vert >m$ (see Section 2)$.$

\begin{proof}
We proceed as in the proof of Theorem 3.1 of \cite{26} (see also \cite{58})
for the Schr\"{o}dinger case. Let us take a solution $\chi\in L^{2}\left(
\Omega;\mathbb{C}^{4}\right)  $ that is orthogonal to $\psi_{+,f}$ for all
$f\in\mathcal{H}\left(  E\right)  $. Then, as $\psi_{+,f}=\psi_{+,P_{\omega
}\left(  E\right)  f},$ for all $f\in L^{2}\left(  \mathbb{S}^{2}%
;\mathbb{C}^{4}\right)  ,$ it follows that
\begin{equation}
\left(  \chi,\psi_{+,f}\right)  _{L^{2}\left(  \Omega;\mathbb{C}^{4}\right)
}=0,\text{ for }f\in L^{2}\left(  \mathbb{S}^{2};\mathbb{C}^{4}\right)  .
\label{completeness1}%
\end{equation}
We extend $\chi$ by zero to $\mathbb{R}^{3}\backslash\Omega$ and then,
$\chi\in L_{s}^{2}$ for all $s.$ Let us define $\psi:=R_{+}\left(  E\right)
\chi.$ Note that $\psi$ satisfy the equation
\begin{equation}
\left(  H-E\right)  \psi=\chi. \label{completeness10}%
\end{equation}
Suppose that $\psi\in L_{-\sigma}^{2}$ for some $\sigma<1/2.$ Then, as
$\chi=0$ on $\mathbb{R}^{3}\backslash\Omega,$ multiplying
(\ref{completeness10}) from the left side by $H_{0}+E$ we get that $\psi$
satisfies the following equation%
\begin{equation}
-\Delta\psi-\left(  E^{2}-m^{2}\right)  \psi+\left(  H_{0}+E\right)  \left(
\mathbf{V}\psi\right)  =0, \label{completeness11}%
\end{equation}
for $x\in$ $\mathbb{R}^{3}\backslash\overline{\Omega}$. As the principal part
$\Delta\psi$ of (\ref{completeness11}) is diagonal, then, under assumption
(\ref{completeness2}), the proofs of \cite{45} in the case of a scalar
Schr\"{o}dinger equation apply for a system of Schr\"{o}dinger equations
(\ref{completeness11}) and thus, we get that $\psi$ vanishes identically on
$\mathbb{R}^{3}\backslash\overline{\Omega}.$ In particular, $\psi=0$ on
$\partial\Omega$, in the trace sense. Then, as the boundary $\partial\Omega$
is smooth, we can approximate $\psi$ in the norm of $\mathcal{H}^{1}$ by
functions $\psi_{n}\in C_{0}^{\infty}\left(  \Omega\right)  ,$ $n\in
\mathbb{N}.$ Noting that $\left(  \left(  H-E\right)  \psi_{n},\chi\right)
_{L^{2}\left(  \Omega;\mathbb{C}^{4}\right)  }=\left(  \psi_{n},\left(
H-E\right)  \chi\right)  _{L^{2}\left(  \Omega;\mathbb{C}^{4}\right)  }$ for
all $n,$ we prove that $\left\Vert \chi\right\Vert _{L^{2}\left(
\Omega;\mathbb{C}^{4}\right)  }^{2}=\left(  \left(  H-E\right)  \psi
,\chi\right)  _{L^{2}\left(  \Omega;\mathbb{C}^{4}\right)  }=\left(
\psi,\left(  H-E\right)  \chi\right)  _{L^{2}\left(  \Omega;\mathbb{C}%
^{4}\right)  }=0,$ and hence, $\chi=0.$

Thus, to complete the proof we need to show that $\psi\in L_{-\sigma}^{2},$
for some $\sigma<1/2.$ Note that relation (\ref{completeness1}) implies that
$\Gamma_{-}\left(  E\right)  \chi=0$. Moreover, as the operator $\mathcal{F}%
_{-},$ defined by relation (\ref{basicnotions40}), gives a spectral
representation of $H$ and $\Gamma_{-}\left(  \lambda\right)  $ is locally
H\"{o}lder continuous, it follows from the Privalov's theorem that $\psi
=\psi_{1}+\psi_{2},$ where
\begin{equation}
\psi_{1}=\int_{I}\Gamma_{-}^{\ast}\left(  \lambda\right)  \left(  \frac
{1}{\lambda-E}\left(  \Gamma_{-}\left(  \lambda\right)  -\Gamma_{-}\left(
E\right)  \right)  \chi\right)  d\lambda, \label{completeness14}%
\end{equation}
and $\psi_{2}=R\left(  E\right)  E_{H}\left(  \mathbb{R}\backslash I\right)
\chi,$ for some neighborhood $I$ of the point $E.$ Here $E_{H}$ is the
resolution of the identity for $H.$ Note that $\psi_{2}$ is already from
$L^{2}.$

Let us define the operator $J$ by
\begin{equation}
Jg:=\int_{I}\Gamma_{-}^{\ast}\left(  \lambda\right)  g\left(  \lambda\right)
d\lambda. \label{completeness15}%
\end{equation}
Since $\mathcal{F}_{-}$ is unitary from $\mathcal{H}_{ac}$ onto $\mathcal{\hat
{H}}$ and $Jg=E_{H}\left(  I\right)  \mathcal{F}_{-}^{\ast}g,$ the operator
$J$ is bounded from $L^{2}\left(  I;L^{2}\left(  \mathbb{S}^{2};\mathbb{C}%
^{4}\right)  \right)  $ into $L^{2}.$ Moreover, as the operator $\Gamma
_{-}^{\ast}\left(  \lambda\right)  $ is bounded from $L^{2}\left(
\mathbb{S}^{2};\mathbb{C}^{4}\right)  $ into $L_{-s}^{2},$ for $1/2<s\leq
s_{0}$, then $J$ is bounded from $L^{1}\left(  I;L^{2}\left(  \mathbb{S}%
^{2};\mathbb{C}^{4}\right)  \right)  $ into $L_{-s}^{2}.$ Thus, by
interpolation (see, for example, \cite{2}), $J$ is bounded from $L^{p}\left(
I;L^{2}\left(  \mathbb{S}^{2};\mathbb{C}^{4}\right)  \right)  $ into
$L_{-\sigma}^{2},$ with $\sigma=\left(  2/p-1\right)  s$ and $1\leq p\leq2.$

Let us take $s_{1}=\frac{1}{2}+\vartheta,$ $\vartheta<\min\{s_{0}-1/2,1/2\}.$
Note that $\Gamma_{-}\left(  \lambda\right)  $ is locally H\"{o}lder
continuous from $L_{s_{1}}^{2}$ to $L^{2}\left(  \mathbb{S}^{2};\mathbb{C}%
^{4}\right)  $ with exponent $\vartheta.$ Then, as $\chi\in L_{s_{1}}^{2}\ $we
get $\frac{1}{\lambda-E}\left(  \Gamma_{-}\left(  \lambda\right)  -\Gamma
_{-}\left(  E\right)  \right)  \chi\in L^{p}\left(  I;L^{2}\left(
\mathbb{S}^{2};\mathbb{C}^{4}\right)  \right)  ,$ where $p<\frac
{1}{1-\vartheta}.$ Taking $p=\frac{1}{1-\vartheta/2}$ and $s=1/2+\vartheta/2$
we get that $\sigma\ <\frac{1}{2}.$ Using relations (\ref{completeness14}) and
(\ref{completeness15}) we get $\psi_{1}=J\left(  \frac{1}{\lambda-E}\left(
\Gamma_{-}\left(  \lambda\right)  -\Gamma_{-}\left(  E\right)  \right)
\chi\right)  .$ Therefore, we conclude that $\psi\in L_{-\sigma}^{2}$. Theorem
is proved.
\end{proof}

\section{Uniqueness of the electric potential and the magnetic field at fixed
energy.}

In this Section we aim to show that the scattering matrix $S\left(  E\right)
,$ given for some energy $E,$ determines uniquely the electric potential $%
\begin{pmatrix}
V_{+} & 0\\
0 & V_{-}%
\end{pmatrix}
$ and the magnetic field $B\left(  x\right)  =\operatorname*{rot}A\left(
x\right)  .$ The averaged scattering solutions (\ref{completeness6}) result
useful here. With the help of these solutions we can prove that $S\left(
E\right)  $ determines uniquely the Dirichlet to Dirichlet map (Definition
(\ref{completeness9})). Then, supposing that the electric potential $V_{\pm
}\left(  x\right)  $ and the magnetic field $B\left(  x\right)  $ are known
outside some connected open bounded set $\Omega_{E}$ with smooth boundary
$\partial\Omega_{E},$ we show that $V_{\pm}\left(  x\right)  $ and $B\left(
x\right)  $ are uniquely determined everywhere in $\mathbb{R}^{3}.$

Let us consider the free Dirac operator $L_{0,\Omega_{E}}=\left(
\begin{array}
[c]{cc}%
0 & -i\sigma\cdot\nabla\\
-i\sigma\cdot\nabla & 0
\end{array}
\right)  $ on $L^{2}\left(  \Omega_{E};\mathbb{C}^{2}\right)  \times
L^{2}\left(  \Omega_{E};\mathbb{C}^{2}\right)  ,$ where $\Omega_{E}$ is a
connected open bounded set with smooth boundary $\partial\Omega_{E}$ and
$\sigma=\left(  \sigma_{1},\sigma_{2},\sigma_{3}\right)  $ are the Pauli
matrices (\ref{basicnotions22}). $L_{0}$ is a self-adjoint operator on (see
\cite{19}) $D_{L_{0,\Omega_{E}}}:=\{u=\left(  u_{+},u_{-}\right)  \mid
u_{+}\in\mathcal{H}_{0}^{1}\left(  \Omega_{E};\mathbb{C}^{2}\right)  ,u_{-}%
\in\mathcal{H}\left(  \Omega_{E};\mathbb{C}^{2}\right)  \},$ where
$\mathcal{H}_{0}^{1}\left(  \Omega_{E};\mathbb{C}^{2}\right)  $ is the closure
of $C_{0}^{\infty}\left(  \Omega_{E}\right)  $ in the space $\mathcal{H}%
^{1}\left(  \Omega_{E};\mathbb{C}^{2}\right)  $ and $\mathcal{H}\left(
\Omega_{E};\mathbb{C}^{2}\right)  $ is the closure of $\mathcal{H}^{1}\left(
\Omega_{E};\mathbb{C}^{2}\right)  $ in the norm $\left\Vert \cdot\right\Vert
_{\mathcal{H}\left(  \Omega_{E};\mathbb{C}^{2}\right)  }:=\left\Vert \left(
\sigma\cdot\nabla\right)  \cdot\right\Vert _{L^{2}\left(  \Omega
_{E};\mathbb{C}^{2}\right)  }+\left\Vert \cdot\right\Vert _{L^{2}\left(
\Omega_{E};\mathbb{C}^{2}\right)  }.$

Let $\mathbf{V}$ be an Hermitian $4\times4$-matrix-valued function whose
entries belong to $L^{\infty}\left(  \Omega_{E}\right)  .$ Then,
$L_{\mathbf{V,}\Omega_{E}}:=L_{0,\Omega_{E}}+\mathbf{V}$ is self-adjoint on
$D_{L_{0,\Omega_{E}}}$. Consider the following Dirichlet problem%
\begin{equation}
\left\{
\begin{array}
[c]{c}%
\left(  L_{\mathbf{V,}\Omega_{E}}-E\right)  \left(  u_{+},u_{-}\right)
=0,\text{ in }\Omega_{E},\\
\left.  u_{+}\right\vert _{\partial\Omega_{E}}=g\in h\left(  \partial
\Omega_{E}\right)  ,\text{ on }\partial\Omega_{E}.
\end{array}
\right.  \label{completeness4}%
\end{equation}
Here $h\left(  \partial\Omega_{E}\right)  $ is defined as the trace on
$\partial\Omega_{E}$ of $\mathcal{H}\left(  \Omega_{E};\mathbb{C}^{2}\right)
.$ Suppose that $E$ belongs to the resolvent set of $L_{\mathbf{V,\Omega_{E}}%
}$. Then, from Proposition 4.11 of \cite{19} we get that for every $f\in
h\left(  \partial\Omega_{E}\right)  $ there exist an unique solution $\left(
u_{+},u_{-}\right)  \in\mathcal{H}\left(  \Omega_{E};\mathbb{C}^{2}\right)
\times\mathcal{H}\left(  \Omega_{E};\mathbb{C}^{2}\right)  $ of the equation
(\ref{completeness4}).

For any $g\in h\left(  \partial\Omega_{E}\right)  ,$ we define the Dirichlet
to Dirichlet (up-spinor to down-spinor) map by%
\begin{equation}
\Lambda_{\mathbf{V}}g=\left.  u_{-}\right\vert _{\partial\Omega_{E}}\in
h\left(  \partial\Omega_{E}\right)  , \label{completeness9}%
\end{equation}
where $\left(  u_{+},u_{-}\right)  $ is the unique solution of
(\ref{completeness4}).

The uniqueness result of the potential at fixed energy is the following

\begin{theorem}
\label{completeness8}Let the potentials $\mathbf{V}_{j}\left(  x\right)  ,$
$j=1,2,$ be given by%
\begin{equation}
\mathbf{V}_{j}\left(  x\right)  =%
\begin{pmatrix}
V_{+}^{\left(  j\right)  } & \sigma\cdot A^{\left(  j\right)  }\\
\sigma\cdot A^{\left(  j\right)  } & V_{-}^{\left(  j\right)  }%
\end{pmatrix}
\label{completeness17}%
\end{equation}
with real functions $V_{\pm}^{\left(  j\right)  },A_{k}^{\left(  j\right)
}\in C^{\infty}\left(  \mathbb{R}^{3}\right)  ,$ $k=1,2,3,$ such that $V_{\pm
}^{\left(  j\right)  }$ and $A_{k}^{\left(  j\right)  },$ $j=1,2,$ satisfy
(\ref{completeness2}). Let $S_{j}\left(  E\right)  $ be the scattering
matrices corresponding to $\mathbf{V}_{j},$ $j=1,2.$ Suppose that for some
$E\in\left(  -\infty,-m\right)  \cup\left(  m,+\infty\right)  ,$ $S_{1}\left(
E\right)  =S_{2}\left(  E\right)  ,$ and there is a connected open bounded set
$\Omega_{E}$ with smooth boundary $\partial\Omega_{E}$ $,$ such that $E$
belongs to the resolvent set of $L_{\mathbf{V}_{j},\Omega_{E}}$ for both
$j=1,2.$ Let $\mathbf{V}_{1}\left(  x\right)  $ be equal to $\mathbf{V}%
_{2}\left(  x\right)  $ for $x\in\mathbb{R}^{3}\diagdown\Omega_{E}.$ Then we
have that $V_{\pm}^{\left(  1\right)  }\left(  x\right)  =V_{\pm}^{\left(
2\right)  }\left(  x\right)  $ and $\operatorname*{rot}A^{\left(  1\right)
}\left(  x\right)  =\operatorname*{rot}A^{\left(  2\right)  }\left(  x\right)
$ for all $x\in\mathbb{R}^{3}.$
\end{theorem}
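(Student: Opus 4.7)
The strategy is to reduce the problem to the inverse boundary value problem at fixed energy studied by Nakamura--Tsuchida \cite{19}, whose main result asserts that the Dirichlet-to-Dirichlet map $\Lambda_{\mathbf{V}}$ on $\partial\Omega_E$ uniquely determines the potentials $V_{\pm}$ and the magnetic field $B=\operatorname{rot}A$ inside $\Omega_E$ for potentials of the form (\ref{completeness17}). The bridge between the scattering hypothesis and this IBVP statement is the assertion $\Lambda_{\mathbf{V}_1}=\Lambda_{\mathbf{V}_2}$, which I aim to establish using the averaged scattering solutions together with their completeness property (Theorem \ref{completeness12}).

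First I would consider the averaged scattering solutions $\psi_{+,f}^{(j)}$ defined by (\ref{completeness6}) with respect to $\mathbf{V}_j$, $j=1,2$, and study the difference $w_f:=\psi_{+,f}^{(1)}-\psi_{+,f}^{(2)}$. Since $\mathbf{V}_1=\mathbf{V}_2=:\mathbf{V}$ outside $\Omega_E$, $w_f$ satisfies the homogeneous Dirac equation $(H_0+\mathbf{V}-E)w_f=0$ on $\mathbb{R}^3\setminus\overline{\Omega_E}$. Combining (\ref{basicnotions25}) with the stationary formula (\ref{basicnotions14}) gives $S_j(E)-I=-2\pi i\,\Gamma_+^{(j)}(E)\mathbf{V}_j\Gamma_0^*(E)$, and by Lemma \ref{re10} the leading radiating coefficient of $\psi_{+,f}^{(j)}$ at infinity is proportional to $(\Gamma_+^{(j)}(E)\mathbf{V}_j\psi_{0,f})(\pm(\operatorname{sgn}E)\hat x)$, hence only depends on $(S_j(E)-I)f$. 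The hypothesis $S_1(E)=S_2(E)$ therefore forces these leading coefficients to coincide, so $w_f(x)=\tilde o(|x|^{-1})$ as $|x|\to\infty$. Applying the Rellich-type unique continuation result of \cite{45}, invoked in the proof of Theorem \ref{completeness12} on the second-order equation obtained by acting with $(H_0+E)$, I conclude $w_f\equiv 0$ on $\mathbb{R}^3\setminus\overline{\Omega_E}$; in particular both the upper and lower spinor traces of $\psi_{+,f}^{(1)}$ and $\psi_{+,f}^{(2)}$ coincide on $\partial\Omega_E$.

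Since $E$ lies in the resolvent set of $L_{\mathbf{V}_j,\Omega_E}$ and $\psi_{+,f}^{(j)}$ solves $(L_{\mathbf{V}_j,\Omega_E}-E)\psi_{+,f}^{(j)}=0$ in $\Omega_E$, the definition (\ref{completeness9}) of the D-D map yields $\Lambda_{\mathbf{V}_1}g=\Lambda_{\mathbf{V}_2}g$ for every $g$ in the set $\mathcal{B}:=\{\psi_{+,f,+}^{(1)}|_{\partial\Omega_E}:f\in\mathcal{H}(E)\}$. By Theorem \ref{completeness12} the family $\{\psi_{+,f}^{(1)}|_{\Omega_E}\}$ is strongly dense in the $L^2(\Omega_E;\mathbb{C}^4)$-solutions of $(L_{\mathbf{V}_1,\Omega_E}-E)u=0$; combining this density with interior regularity of Dirac solutions and the well-posedness of the Dirichlet problem in the space $\mathcal{H}(\Omega_E;\mathbb{C}^2)\times\mathcal{H}(\Omega_E;\mathbb{C}^2)$ (Proposition~4.11 of \cite{19}), I deduce that $\mathcal{B}$ is dense in $h(\partial\Omega_E)$. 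Since the $\Lambda_{\mathbf{V}_j}$ are continuous on $h(\partial\Omega_E)$, I obtain $\Lambda_{\mathbf{V}_1}=\Lambda_{\mathbf{V}_2}$.

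Finally, applying \cite{19} to the equality $\Lambda_{\mathbf{V}_1}=\Lambda_{\mathbf{V}_2}$ at the fixed energy $E$ yields $V_{\pm}^{(1)}=V_{\pm}^{(2)}$ and $\operatorname{rot}A^{(1)}=\operatorname{rot}A^{(2)}$ throughout $\Omega_E$; combined with the assumption that $\mathbf{V}_1=\mathbf{V}_2$ outside $\Omega_E$ (which forces both $V_{\pm}$ and $A$ to agree there), the conclusion follows on all of $\mathbb{R}^3$. The main technical obstacle that I anticipate lies in the third paragraph: rigorously passing from the $L^2(\Omega_E)$-density of the averaged scattering solutions to the density of their boundary traces in $h(\partial\Omega_E)$, since the trace map is not continuous from $L^2$ alone. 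The argument must exploit interior ellipticity of the Dirac system away from $\partial\Omega_E$ together with the bijectivity of the Dirichlet solution operator $h(\partial\Omega_E)\to\mathcal{H}(\Omega_E;\mathbb{C}^2)\times\mathcal{H}(\Omega_E;\mathbb{C}^2)$ to convert an interior $L^2$-approximation of the solution attached to a prescribed boundary datum $g$ into an approximation of its trace by traces of $\psi_{+,f_n}^{(1)}$.
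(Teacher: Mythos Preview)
Your overall strategy coincides with the paper's: show the averaged scattering solutions for the two potentials agree outside $\Omega_E$, hence on $\partial\Omega_E$, deduce $\Lambda_{\mathbf{V}_1}=\Lambda_{\mathbf{V}_2}$, and invoke Nakamura--Tsuchida \cite{19}. The execution differs in both intermediate steps, and in the second step the paper avoids precisely the obstacle you flag.

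For the exterior vanishing the paper does not use the pointwise asymptotics of Lemma~\ref{re10}. It sets $\psi=\psi_{+,f}^{(2)}-\psi_{+,f}^{(1)}$ and $\eta=\mathbf{V}_1\psi_{+,f}^{(1)}-\mathbf{V}_2\psi_{+,f}^{(2)}$, so that $(H_0-E)\psi=\eta$, and reads the hypothesis $S_1(E)=S_2(E)$ through the identity (\ref{completeness5}) directly as $\Gamma_0(E)\eta=0$, i.e.\ $\hat\eta$ vanishes in trace sense on the sphere $|\xi|=\nu(E)$. Agmon's division theorem (Theorem~3.2 of \cite{4}) then upgrades $\psi$ from $L^2_{-s}$ to $L^2_{s-1}$ with $s-1>-1/2$, which is exactly the weighted-$L^2$ hypothesis fed into \cite{45}. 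Your route via the radiating coefficient is morally equivalent, but to invoke Lemma~\ref{re10} you need $\mathbf{V}_j\psi_{0,f}\in L^2_s$ for some $s>1/2$; under the bare assumption $\rho>1$ this requires first restricting to smooth $f$ (so that $\psi_{0,f}=O(|x|^{-1})$ by stationary phase), and even then you only obtain $w_f=\tilde o(|x|^{-1})$ rather than membership in $L^2_{-\sigma}$, $\sigma<1/2$, so the match with the precise decay hypothesis used for \cite{45} is not immediate.

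The more substantial difference is in deducing $\Lambda_{\mathbf{V}_1}=\Lambda_{\mathbf{V}_2}$. The paper never attempts to show that traces of averaged scattering solutions are dense in $h(\partial\Omega_E)$. Instead it invokes Lemma~2.1 of \cite{19}, the bilinear identity
\[
{}_{h(\partial\Omega_E)}\bigl\langle\overline{u_+^{(2)}},\,(i\sigma\cdot N)(\Lambda_{\mathbf{V}_1}-\Lambda_{\mathbf{V}_2})u_+^{(1)}\bigr\rangle_{h(\partial\Omega_E)^*}
=\int_{\Omega_E}\bigl(u^{(2)},(\mathbf{V}_1-\mathbf{V}_2)u^{(1)}\bigr)\,dx,
\]
valid for arbitrary solutions $u^{(j)}$ of $(L_{\mathbf{V}_j}-E)u^{(j)}=0$. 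Taking $u^{(1)}=\psi_{+,f}^{(1)}$ and using the equality of all traces on $\partial\Omega_E$ makes the left side vanish, hence so does the right side. Since $\mathbf{V}_1-\mathbf{V}_2\in L^\infty(\Omega_E)$, the right-hand integral is continuous in $u^{(1)}$ for the $L^2(\Omega_E)$ topology, so the $L^2$ density furnished by Theorem~\ref{completeness12} applies \emph{directly}: the integral vanishes for every solution $u^{(1)}$, and reading the identity backwards over all $u^{(2)}$ yields $\Lambda_{\mathbf{V}_1}=\Lambda_{\mathbf{V}_2}$. This completely bypasses the trace-density issue. Your proposed workaround would require that every $L^2(\Omega_E)$ solution lies in $\mathcal H(\Omega_E;\mathbb C^2)\times\mathcal H(\Omega_E;\mathbb C^2)$ and that the Dirichlet solution map has an inverse bounded from $L^2(\Omega_E)$ back to $h(\partial\Omega_E)$; neither is immediate, and the bilinear identity makes both unnecessary.
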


\begin{proof}
We follow the proof of \cite{26} for the Schr\"{o}dinger case. Let us first
show that $\psi_{+,f}^{\left(  1\right)  }\left(  x;E\right)  =\psi
_{+,f}^{\left(  2\right)  }\left(  x;E\right)  $\ for $x\in\mathbb{R}%
^{3}\diagdown\Omega_{E},$ $f\in\mathcal{H}\left(  E\right)  ,$ where
$\psi_{+,f}^{\left(  j\right)  }$ are the averaged scattering solutions
corresponding to $\mathbf{V}_{j}$ for $j=1,2.$ Denote$\ \psi:=\psi
_{+,f}^{\left(  2\right)  }-\psi_{+,f}^{\left(  1\right)  }$ and
$\eta:=\mathbf{V}_{1}\psi_{+,f}^{\left(  1\right)  }-\mathbf{V}_{2}\psi
_{+,f}^{\left(  2\right)  }.$ Note that $\psi\in\mathcal{H}^{1,-s},$ $\eta\in
L_{s}^{2},$ $\frac{1}{2}<s\leq s_{0},$ and moreover,
\begin{equation}
\left(  H_{0}-E\right)  \psi=\eta. \label{completeness3}%
\end{equation}
Since $S_{1}\left(  E\right)  =S_{2}\left(  E\right)  ,$ it follows from
(\ref{completeness5}) that $\Gamma_{0}\left(  E\right)  \eta=0.$ This implies
that $\Gamma_{0}\left(  E\right)  \mathcal{F}^{\ast}\hat{\eta}=0.$ Then, as
$\hat{\eta}\in\mathcal{H}^{s}$ and $\Gamma_{0}\left(  E\right)  \mathcal{F}%
^{\ast}$ is bounded from $\mathcal{H}^{s}$ into $L^{2}\left(  \mathbb{S}%
^{2};\mathbb{C}^{4}\right)  ,$ we conclude that $\hat{\eta}\left(  \xi\right)
=0$ in the trace sense on the sphere of radius $\left\vert \xi\right\vert
=\nu\left(  E\right)  .$

Note that $\hat{\psi}=P^{+}\left(  \xi\right)  \hat{\psi}+P^{-}\left(
\xi\right)  \hat{\psi}.$ From (\ref{completeness3}) we get
\begin{equation}
\left(  \pm\sqrt{\xi^{2}+m^{2}}-E\right)  P^{\pm}\left(  \xi\right)  \hat
{\psi}=P^{\pm}\left(  \xi\right)  \hat{\eta}. \label{representation220}%
\end{equation}
For $\pm E>m,$ the function $\frac{1}{\mp\sqrt{\xi^{2}+m^{2}}-E}$ is bounded.
Then, as $P^{\pm}\left(  \xi\right)  \hat{\eta}\in\mathcal{H}^{s},$ we get,
for $\pm E>m,$ $P^{\mp}\left(  \xi\right)  \hat{\psi}\in\mathcal{H}^{s}.$
Moreover, it follows from (\ref{representation220}) that, $P^{\pm}\left(
\xi\right)  \hat{\psi}=\frac{\pm\sqrt{\xi^{2}+m^{2}}+E}{\left\vert
\xi\right\vert ^{2}-\nu\left(  E\right)  ^{2}}P^{\pm}\left(  \xi\right)
\hat{\eta},$ for $\pm E>m.$ As $P^{\pm}\left(  \xi\right)  \hat{\eta}%
\in\mathcal{H}^{s}$ and $P^{\pm}\left(  \xi\right)  \hat{\eta}=0$ in the trace
sense on the sphere of radius $\left\vert \xi\right\vert =\nu\left(  E\right)
,$ then Theorem 3.2 of \cite{4} implies that for $\pm E>m,$ $P^{\pm}\left(
\xi\right)  \hat{\psi}\in\mathcal{H}^{s-1},$ $s>\frac{1}{2}.$ Therefore, we
obtain that $\hat{\psi}\in\mathcal{H}^{s-1}$ and $\psi\in L_{s-1}^{2}.$ Note
that $s-1>-1/2.$ As $\mathbf{V}_{1}\left(  x\right)  =\mathbf{V}_{2}\left(
x\right)  ,$ for $x\in\mathbb{R}^{3}\diagdown\Omega_{E}$ we see that $\left(
H_{0}+\mathbf{V}_{1}\right)  \psi=E\psi,$ for $x\in\mathbb{R}^{3}%
\diagdown\Omega_{E}.$ Thus, similarly to the proof of Theorem
\ref{completeness12}, as $\psi$ satisfies (\ref{completeness11}), for
$x\in\mathbb{R}^{3}\diagdown\Omega_{E},$\ we conclude that $\psi$ is
identically zero for $x\in\mathbb{R}^{3}\diagdown\overline{\Omega_{E}}.$ In
particular we obtain $\psi_{+,f}^{\left(  1\right)  }\left(  x;E\right)
=\psi_{+,f}^{\left(  2\right)  }\left(  x;E\right)  $ in the trace sense on
$\partial\Omega_{E}$

Let $\tau$ be the trace map $\tau:\mathcal{H}\left(  \Omega_{E};\mathbb{C}%
^{2}\right)  \rightarrow h\left(  \partial\Omega_{E}\right)  .$ Note that the
scattering solution $\psi_{+,f}^{\left(  j\right)  }\left(  x;E\right)
\in\mathcal{H}^{1}\left(  \Omega_{E};\mathbb{C}^{4}\right)  ,$ $j=1,2,$ solves
(\ref{completeness4}) with $g=\tau\left(  \left(  \psi_{+,f}^{\left(
j\right)  }\right)  _{+}\right)  $ (here $(u)_{\pm}$ denotes the first or the
last two components of a $\mathbb{C}^{4}$ vector respectively). As $\psi
_{+,f}^{\left(  1\right)  }\left(  x;E\right)  =\psi_{+,f}^{\left(  2\right)
}\left(  x;E\right)  $ in the trace sense on $\partial\Omega_{E},$ we get
\begin{equation}
\left.  \Lambda_{\mathbf{V}_{2}}\left(  \tau\left(  \left(  \psi
_{+,f}^{\left(  1\right)  }\right)  _{+}\right)  \right)  =\Lambda
_{\mathbf{V}_{2}}\left(  \tau\left(  \left(  \psi_{+,f}^{\left(  2\right)
}\right)  _{+}\right)  \right)  =\tau\left(  \left(  \psi_{+,f}^{\left(
2\right)  }\right)  _{-}\right)  =\tau\left(  \left(  \psi_{+,f}^{\left(
1\right)  }\right)  _{-}\right)  =\Lambda_{\mathbf{V}_{1}}\left(  \tau\left(
\left(  \psi_{+,f}^{\left(  1\right)  }\right)  _{+}\right)  \right)
.\right.  \label{completeness13}%
\end{equation}

For any solution $u^{\left(  j\right)  }=\left(  u_{+}^{\left(  j\right)
},u_{-}^{\left(  j\right)  }\right)  \in\mathcal{H}\left(  \Omega
_{E};\mathbb{C}^{2}\right)  \times\mathcal{H}\left(  \Omega_{E};\mathbb{C}%
^{2}\right)  $ of $\left(  L_{\mathbf{V}_{j}}-E\right)  u^{\left(  j\right)
}=0,$ $j=1,2,$ we have (see Lemma 2.1 of \cite{19})%
\begin{equation}
_{h\left(  \partial\Omega_{E}\right)  }\left\langle \overline{u_{+}^{\left(
2\right)  }},\left(  i\sigma\cdot N\right)  \left(  \Lambda_{\mathbf{V}_{1}%
}-\Lambda_{\mathbf{V}_{2}}\right)  u_{+}^{\left(  1\right)  }\right\rangle
\text{ }_{h\left(  \partial\Omega_{E}\right)  ^{\ast}}=\int_{\Omega_{E}%
}\left(  u^{\left(  2\right)  },\left(  \mathbf{V}_{1}-\mathbf{V}_{2}\right)
u^{\left(  1\right)  }\right)  dx, \label{completeness16}%
\end{equation}
where $h\left(  \partial\Omega_{E}\right)  ^{\ast}$ is the dual space to
$h\left(  \partial\Omega_{E}\right)  $ with respect to the duality $_{h\left(
\partial\Omega_{E}\right)  }\left\langle u,\overline{v}\right\rangle
_{h\left(  \partial\Omega_{E}\right)  ^{\ast}}=\int_{\partial\Omega_{E}}%
u\cdot\overline{v}dS$ and $N$ is the unit outer normal vector to
$\partial\Omega_{E}.$ Taking $u^{\left(  1\right)  }=\psi_{+,f}^{\left(
1\right)  }$ in (\ref{completeness16}) and using (\ref{completeness13}) we get
$\int_{\Omega_{E}}\left(  u^{\left(  2\right)  },\left(  \mathbf{V}%
_{1}-\mathbf{V}_{2}\right)  \psi_{+,f}^{\left(  1\right)  }\right)  dx=0$ for
all averaged scattering solutions $\psi_{+,f}^{\left(  1\right)  }.$ Since
these solutions are dense on the set of all solutions to (\ref{completeness4})
(here we used Theorem \ref{completeness12}, observing that $(-\infty
,-m)\cup(m,\infty)\}\cap\sigma_{p}\left(  H\right)  =\varnothing$ if
$\mathbf{V}$ satisfies relation (\ref{basicnotions46})), $\int_{\Omega_{E}%
}(u^{\left(  2\right)  },\left(  \mathbf{V}_{1}-\mathbf{V}_{2}\right)
u^{\left(  1\right)  })dx=0,$ for any solution $u^{\left(  j\right)  }=\left(
u_{+}^{\left(  j\right)  },u_{-}^{\left(  j\right)  }\right)  \in
\mathcal{H}\left(  \Omega_{E};\mathbb{C}^{2}\right)  \times\mathcal{H}\left(
\Omega_{E};\mathbb{C}^{2}\right)  $ of $\left(  L_{\mathbf{V}_{j}}-E\right)
u^{\left(  j\right)  }=0,$ $j=1,2.$ Thus, it follows from relation
(\ref{completeness16}) that $_{h\left(  \Gamma\right)  }\left\langle
\overline{u_{+}^{\left(  2\right)  }},\left(  i\sigma\cdot N\right)  \left(
\Lambda_{\mathbf{V}_{1}}-\Lambda_{\mathbf{V}_{2}}\right)  u_{+}^{\left(
1\right)  }\right\rangle _{h\left(  \Gamma\right)  ^{\ast}}=0$ for any
solution $u^{\left(  j\right)  }=\left(  u_{+}^{\left(  j\right)  }%
,u_{-}^{\left(  j\right)  }\right)  $ of $\left(  L_{\mathbf{V}_{j}}-E\right)
u^{\left(  j\right)  }=0,$ $j=1,2.$ As for any $f_{j}\in h\left(
\Gamma\right)  ,$ $j=1,2,$ there exist an unique solution $u^{\left(
j\right)  }$ of $\left(  L_{\mathbf{V}_{j}}-E\right)  u^{\left(  j\right)
}=0,$ such that $\left.  u_{+}^{\left(  j\right)  }\right\vert _{h\left(
\Gamma\right)  }=f_{j},$ we conclude that $\left(  i\sigma\cdot N\right)
\left(  \Lambda_{\mathbf{V}_{1}}-\Lambda_{\mathbf{V}_{2}}\right)  f_{1}\in
h\left(  \Gamma\right)  ^{\ast}$ is the functional $0,\ $for all $f_{1}\in
h\left(  \Gamma\right)  ,$ and hence, $\Lambda_{\mathbf{V}_{1}}=\Lambda
_{\mathbf{V}_{2}}.$ Since $\mathbf{V}_{1}\left(  x\right)  =\mathbf{V}%
_{2}\left(  x\right)  $ for $x\in\mathbb{R}^{3}\diagdown\Omega_{E}$ and
$\mathbf{V}_{j}\in C^{\infty}\left(  \mathbb{R}^{3}\right)  ,$ $j=1,2,$ we get
that $A_{k}^{\left(  1\right)  }=A_{k}^{\left(  2\right)  },$ for $k=1,2,3,$
to infinite order on $\partial\Omega_{E}.$ Thus, it follows from Theorem 1 of
\cite{19} that $V_{1}\left(  x\right)  =V_{2}\left(  x\right)  ,$
$\operatorname*{rot}A_{1}\left(  x\right)  =\operatorname*{rot}A_{2}\left(
x\right)  $ on $\Omega_{E}.$ Theorem is proved.
\end{proof}

\begin{corollary}
\label{22}Let the potentials $\mathbf{V}_{j}\left(  x\right)  ,$ $j=1,2,$ be
given by (\ref{completeness17}), with real functions $V_{\pm}^{\left(
j\right)  },A_{k}^{\left(  j\right)  }\in C^{\infty}\left(  \mathbb{R}%
^{3}\right)  ,$ $k=1,2,3,$ such that $V_{\pm}^{\left(  j\right)  }$ satisfy
(\ref{completeness2}) and the magnetic fields $B_{j},$ with
$\operatorname{div}B_{j}=0,$ satisfy the estimate (\ref{basicnotions18}) with
$d=1$, $j=1,2$. Suppose that for some $E\in\left(  -\infty,-m\right)
\cup\left(  m,+\infty\right)  ,$ $S_{1}\left(  E\right)  =S_{2}\left(
E\right)  ,$ and there is a connected open bounded set $\Omega_{E}$ with
smooth boundary $\partial\Omega_{E}$ $,$ such that $E$ belongs to the
resolvent set of $L_{\mathbf{V}_{j},\Omega_{E}}$ for both $j=1,2.$ Let
$V_{\pm}^{\left(  1\right)  }=V_{\pm}^{\left(  2\right)  }$ and $B_{1}=B_{2}$
for $x\in\mathbb{R}^{3}\diagdown\Omega_{E}.$ Then we have that $V_{\pm
}^{\left(  1\right)  }=V_{\pm}^{\left(  2\right)  }$ and $B^{\left(  1\right)
}=B^{\left(  2\right)  }$ for all $x\in\mathbb{R}^{3}.$
\end{corollary}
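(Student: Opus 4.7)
The plan is to reduce Corollary \ref{22} to Theorem \ref{completeness8} by means of a gauge transformation that brings the two magnetic potentials into coincidence outside a sufficiently large ball. The hypothesis of Corollary \ref{22} provides equality of the magnetic \emph{fields}, not of the \emph{potentials}, on $\mathbb{R}^3 \setminus \Omega_E$, whereas Theorem \ref{completeness8} requires coincidence of the whole potential matrices $\mathbf{V}_j$; the content of the proof is therefore to exploit the gauge invariance of $S(E)$ recalled in Section 2 to arrange this equality.

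First I would construct magnetic potentials $A^{(j)}$ associated to $B_j$ via the explicit formulas (\ref{basicnotions19})--(\ref{basicnotions21}); these are smooth and satisfy $|\partial^\alpha A^{(j)}(x)| \le C_\alpha \langle x \rangle^{-\rho_m - |\alpha|}$ for $|\alpha|\le 1$, with $\rho_m = r-1 > 1$. Choose $R>0$ with $\overline{\Omega_E} \subset B_R$. Since $B_1 = B_2$ on $\mathbb{R}^3\setminus\Omega_E$, the vector field $A^{(1)}-A^{(2)}$ has zero curl on the simply connected exterior $\mathbb{R}^3\setminus\overline{B_R}$, so there exists $\psi_0\in C^\infty(\mathbb{R}^3\setminus\overline{B_R})$ with $\nabla\psi_0 = A^{(1)}-A^{(2)}$; integration along rays from infinity yields the decay $|\partial^\alpha\psi_0(x)| \le C_\alpha\langle x\rangle^{-(\rho_m-1)-|\alpha|}$ for $|\alpha|\le 1$. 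Extending $\psi_0$ to $\psi\in C^\infty(\mathbb{R}^3)$ by multiplication by a smooth radial cutoff equal to $1$ outside a ball $B_{R_1}$ with $R_1>R$ and vanishing in a neighborhood of $\overline{B_R}$ preserves those decay bounds, which are exactly those required by the gauge-invariance statement of Section 2 with parameter $\rho_m-1>0$. Setting $\tilde A^{(2)}:=A^{(2)}+\nabla\psi$ and letting $\tilde{\mathbf V}_2$ be obtained from $\mathbf V_2$ by replacing $A^{(2)}$ with $\tilde A^{(2)}$, one has $\operatorname{rot}\tilde A^{(2)}=B_2$, $\tilde A^{(2)}=A^{(1)}$ on $\mathbb{R}^3\setminus B_{R_1}$, and, by gauge invariance, the scattering matrix of $\tilde{\mathbf V}_2$ at energy $E$ equals $S_2(E)=S_1(E)$.

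It then remains to apply Theorem \ref{completeness8} to the pair $(\mathbf V_1,\tilde{\mathbf V}_2)$: these coincide on $\mathbb{R}^3\setminus B_{R_1}$ (since $V_\pm^{(1)}=V_\pm^{(2)}$ outside $\Omega_E\subset B_{R_1}$ and $A^{(1)}=\tilde A^{(2)}$ outside $B_{R_1}$), both belong to $C^\infty(\mathbb{R}^3)$ and satisfy (\ref{completeness2}), and they share the same scattering matrix at $E$. Theorem \ref{completeness8} then delivers $V_\pm^{(1)}=V_\pm^{(2)}$ and $\operatorname{rot}A^{(1)}=\operatorname{rot}\tilde A^{(2)}=B_2$ on all of $\mathbb{R}^3$, i.e., $B_1=B_2$ everywhere.

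The hard part will be verifying the remaining hypothesis of Theorem \ref{completeness8}: one needs a connected bounded open set $\tilde\Omega\supset\overline{B_{R_1}}$ with smooth boundary for which $E$ lies in the resolvent set of both $L_{\mathbf V_1,\tilde\Omega}$ and $L_{\tilde{\mathbf V}_2,\tilde\Omega}$, whereas the hypothesis of the corollary only furnishes this for the smaller $\Omega_E$ and the untransformed operators. I would handle this by taking $\tilde\Omega$ to be a ball $B_{R_2}$ with $R_2>R_1$ and invoking the fact that the Dirichlet eigenvalues of $L_{\mathbf V,B_{R_2}}$ form a discrete subset of $\mathbb{R}$ that varies continuously (indeed analytically) with $R_2$; accordingly the set of radii $R_2$ for which $E$ coincides with an eigenvalue of either $L_{\mathbf V_1,B_{R_2}}$ or $L_{\tilde{\mathbf V}_2,B_{R_2}}$ is at most countable, so almost every $R_2>R_1$ provides an admissible $\tilde\Omega$.
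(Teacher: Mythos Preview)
Your proposal is correct, but the paper's argument is considerably shorter because it exploits a feature of the canonical construction that you set up but then do not use. Once both potentials $\tilde A^{(j)}$ are built from $B_j$ via (\ref{basicnotions19})--(\ref{basicnotions21}), one has $\tilde A^{(j)}(x)=A^{(\mathrm{reg})}_j(x)$ for $|x|\ge 1$, and $A^{(\mathrm{reg})}_j(x)$ depends only on the values of $B_j$ along the outward ray $\{sx:s\ge 1\}$. Hence $B_1=B_2$ outside $\Omega_E$ already forces $\tilde A^{(1)}=\tilde A^{(2)}$ outside $\Omega_E$ (more precisely, outside any ball containing $\Omega_E$), so your curl-free difference $A^{(1)}-A^{(2)}$ is in fact identically zero there, and the gauge function $\psi$ you construct is a constant. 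The paper therefore skips the whole $\psi$-step and passes directly from the canonical $\tilde{\mathbf V}_j$ to Theorem~\ref{completeness8}.

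For the resolvent hypothesis the paper also avoids your genericity-in-$R_2$ argument: since on all of $\mathbb R^3$ the original $A^{(j)}$ and the canonical $\tilde A^{(j)}$ differ by the gradient of a smooth function (both have curl $B_j$ and $\mathbb R^3$ is simply connected), the operators $L_{\mathbf V_j,\Omega_E}$ and $L_{\tilde{\mathbf V}_j,\Omega_E}$ are unitarily equivalent via multiplication by $e^{i\phi_j}$, so $E$ lies in the resolvent set of $L_{\tilde{\mathbf V}_j,\Omega_E}$ automatically, and Theorem~\ref{completeness8} can be invoked on the \emph{same} $\Omega_E$. Your approach has the merit of being agnostic to the particular choice of short-range gauge, but it trades this for the extra analytic-dependence argument on the Dirichlet spectrum, which, as you acknowledge, would still need to be justified.
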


\begin{proof}
Let us define the magnetic potentials $\tilde{A}_{j},$ $j=1,2,$ from the
magnetic fields $B_{j},$ $j=1,2,$ by the equalities (\ref{basicnotions19}%
)-(\ref{basicnotions21}). Let $\mathbf{\tilde{V}}_{j}\left(  x\right)  ,$
$j=1,2,$ be the correspondent potentials. Observe that $\tilde{S}_{j}=S_{j}$,
where $\tilde{S}_{j}$ is associated to potential $\mathbf{\tilde{V}}_{j},$
$j=1,2.$ Moreover, as $L_{\mathbf{V}_{j},\Omega_{E}}$ and $L_{\mathbf{\tilde
{V}}_{j},\Omega_{E}}$ are unitary equivalent, $E$ belongs to the resolvent set
of $L_{\mathbf{\tilde{V}}_{j},\Omega_{E}}.$ Since $B_{1}=B_{2}$ for
$x\in\mathbb{R}^{3}\diagdown\Omega_{E},$ then by construction $\tilde{A}%
_{1}=\tilde{A}_{2}$ for $x\in\mathbb{R}^{3}\diagdown\Omega_{E},$ and hence
$\mathbf{\tilde{V}}_{1}\left(  x\right)  =\mathbf{\tilde{V}}_{2}\left(
x\right)  ,$ for $x\in\mathbb{R}^{3}\diagdown\Omega_{E}.$ Applying Theorem
\ref{completeness8} we conclude that $V_{\pm}^{\left(  1\right)  }=V_{\pm
}^{\left(  2\right)  }$ and $B^{\left(  1\right)  }=B^{\left(  2\right)  }$
for all $x\in\mathbb{R}^{3}.$
\end{proof}

If the asymptotic expansions (\ref{direct2}) and (\ref{direct3}) actually
converge, in pointwise sense, for $\left\vert x\right\vert $ large enough,
respectively to $V\left(  x\right)  $ and $B\left(  x\right)  ,$ then
collecting the result of Corollary \ref{22} and the reconstruction result of
Theorem \ref{18} we are able to formulate the following uniqueness result for
the inverse scattering problem:

\begin{theorem}
Let the expansion (\ref{direct2}) for the electric potentials $V_{j}\in
C^{\infty}\left(  \mathbb{R}^{3}\right)  $ and the expansion (\ref{direct3})
for the magnetic fields $B_{j}\in C^{\infty}\left(  \mathbb{R}^{3}\right)  ,$
with $\operatorname{div}B_{j}=0,$ $j=1,2,$ hold. Let the magnetic potentials
$A_{j},$ $j=1,2,$ be defined by the equalities (\ref{basicnotions19}%
)-(\ref{basicnotions21}). Suppose that for some $E\in\left(  -\infty
,-m\right)  \cup\left(  m,+\infty\right)  ,$ $S_{1}\left(  E\right)
=S_{2}\left(  E\right)  ,$ and there is a connected open bounded set
$\Omega_{E}$ with smooth boundary $\partial\Omega_{E}$ $,$ such that $E$
belongs to the resolvent set of $L_{\mathbf{V}_{j},\Omega_{E}}$ for both
$j=1,2.$ Moreover, suppose that the asymptotic expansions (\ref{direct2}) and
(\ref{direct3}), for $V$ and $B$, respectively, actually converge in pointwise
sense for $x\in\mathbb{R}^{3}\diagdown\Omega_{E}.$ Then, we have that
$V_{1}\left(  x\right)  =$ $V_{2}\left(  x\right)  $ and $B_{1}\left(
x\right)  =B_{2}\left(  x\right)  $ for all $x\in\mathbb{R}^{3}.$
\end{theorem}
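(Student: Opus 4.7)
The plan is to combine the two main tools developed in the paper: the reconstruction of the asymptotic expansions from the scattering amplitude at fixed energy (Theorem \ref{18}) and the uniqueness-from-exterior-data result (Corollary \ref{22}). The argument is essentially a two-step syllogism once these results are in place.

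First, I would apply Theorem \ref{18} to both potentials $\mathbf{V}_1$ and $\mathbf{V}_2$ at the same energy $E$. Since $S_1(E) = S_2(E)$, the kernels of the scattering matrices coincide in every neighborhood of the diagonal $\omega=\theta$. By the recursive reconstruction formulas stated in Theorem \ref{18}, the homogeneous components $V_j^{(1)}, V_j^{(2)}$ in the expansion (\ref{direct2}) and the homogeneous components $B_j^{(1)}, B_j^{(2)}$ in the expansion (\ref{direct3}) are determined uniquely by the diagonal singularities of $s(\omega,\theta;E)$. Hence $V_j^{(1)} = V_j^{(2)}$ and $B_j^{(1)} = B_j^{(2)}$ for every $j\geq 1$.

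Next, I would exploit the hypothesis that the asymptotic sums (\ref{direct2}) and (\ref{direct3}) actually converge in the pointwise sense to $V_j$ and $B_j$, respectively, for $x \in \mathbb{R}^3 \setminus \Omega_E$. Term-by-term equality of the homogeneous components, together with pointwise convergence on $\mathbb{R}^3 \setminus \Omega_E$, forces $V_1(x) = V_2(x)$ and $B_1(x) = B_2(x)$ for all $x \in \mathbb{R}^3 \setminus \Omega_E$. At this point the exterior data of the two potentials coincide.

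Finally, I would invoke Corollary \ref{22}, whose hypotheses are now all verified: both $\mathbf{V}_j$ are of the form (\ref{completeness17}) with smooth $V_\pm^{(j)}$ satisfying (\ref{completeness2}), the magnetic fields satisfy (\ref{basicnotions18}) with $d=1$ (this follows from the homogeneity of the leading terms and from (\ref{basicnotions18}) being built into the very definition of a convergent asymptotic expansion), $E$ lies in the resolvent set of $L_{\mathbf{V}_j,\Omega_E}$ for $j=1,2$ by assumption, $S_1(E)=S_2(E)$, and we have just shown that $V_\pm^{(1)} = V_\pm^{(2)}$ and $B_1 = B_2$ on $\mathbb{R}^3 \setminus \Omega_E$. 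Corollary \ref{22} then yields $V_1 = V_2$ and $B_1 = B_2$ everywhere in $\mathbb{R}^3$. There is essentially no hard step in this argument; the whole content sits in Theorem \ref{18} and Corollary \ref{22}. The only point that requires a word of justification is the passage from equality of all homogeneous components to pointwise equality of the limit functions, which is immediate from the pointwise convergence hypothesis.
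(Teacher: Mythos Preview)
Your proposal is correct and follows exactly the same two-step strategy as the paper's own proof: first invoke Theorem \ref{18} to deduce that all homogeneous terms in the asymptotic expansions coincide, then use the pointwise convergence hypothesis to obtain $V_1=V_2$ and $B_1=B_2$ on $\mathbb{R}^3\setminus\Omega_E$, and finally apply Corollary \ref{22} to conclude equality everywhere. The only cosmetic slip is your reference to $V_\pm^{(j)}$ in the last step; in this theorem the potentials are of the form (\ref{intro4}) with a single scalar $V_j$, so one simply takes $V_+^{(j)}=V_-^{(j)}=V_j$ when checking the hypotheses of Corollary \ref{22}.
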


\begin{proof}
As $S_{1}\left(  E\right)  =S_{2}\left(  E\right)  ,$ it follows from the
reconstruction result of Theorem \ref{18} that the asymptotic terms
$V_{k}^{\left(  1\right)  }=V_{k}^{\left(  2\right)  }$ and $B_{k}^{\left(
1\right)  }=B_{k}^{\left(  2\right)  }$ coincide for all $k.$ Moreover, as the
asymptotic expansions (\ref{direct2}) and (\ref{direct3}) converge,
$V_{1}\left(  x\right)  =$ $V_{2}\left(  x\right)  $ and $B_{1}\left(
x\right)  =B_{2}\left(  x\right)  $ for $x\in\mathbb{R}^{3}\diagdown\Omega
_{E}$. Using Corollary \ref{22} we conclude that $V_{1}\left(  x\right)  =$
$V_{2}\left(  x\right)  $ and $B_{1}\left(  x\right)  =B_{2}\left(  x\right)
$ for all $x\in\mathbb{R}^{3}.$
\end{proof}

\end{document}